\documentclass{biometrika}

\usepackage[utf8]{inputenc}
\usepackage[T1]{fontenc}
\usepackage{newtxtext}
\usepackage[subscriptcorrection]{newtxmath}
\usepackage{amsmath}
\usepackage{url}
\usepackage{booktabs}
\usepackage{tabularx}
\usepackage{graphicx}
\graphicspath{{figures/}}
\usepackage[plain,noend]{algorithm2e}  
\DontPrintSemicolon
\usepackage{enumitem}
\usepackage{microtype}
\usepackage[hidelinks]{hyperref}
\makeatletter
\newcounter{paragraph}[subsection]

\newcommand\paragraph{\@startsection{paragraph}{4}{\z@}%
  {2ex \@plus .5ex \@minus .2ex}{-1em}{\normalfont\itshape}}
\makeatother
\providecommand{\singlespacing}{}

\newcommand{\pr}{\mathrm{pr}}
\newcommand{\E}{E}
\newcommand{\fwer}{\mathrm{FWER}}
\newcommand{\anon}{0}

\newcommand{\alttext}[1]{%
  \par\vspace{4pt}{\footnotesize\noindent\textit{Alt text:} #1\par}}

\allowdisplaybreaks
\makeatletter
\let\manuscriptprinthistory\printhistory
\renewcommand{\printhistory}{%
  \ifx\@received\@empty
    \ifx\@revised\@empty\else\manuscriptprinthistory\fi
  \else\manuscriptprinthistory\fi}
\makeatother
\makeatletter
\renewcommand{\ps@plain}{%
  \firstpage{\thepage}\let\@mkboth\@gobbletwo
  \def\@oddhead{}\let\@evenhead\@oddhead
  \def\@oddfoot{}\let\@evenfoot\@oddfoot}
\makeatother
\hypersetup{
  pdftitle={Report resolution in federated multiple testing under family-wise error control},
  pdfauthor={Prasanjit Dubey and Xiaoming Huo},
  pdflang={en-US},
  pdfsubject={Theory and methods for federated multiple testing under strong family-wise error control},
  pdfkeywords={distributed testing, family-wise error rate, federated inference, report resolution, sufficiency}
}
\if1\anon\hypersetup{pdfauthor={}}\fi

\begin{document}

\jname{Biometrika}
\jyear{2026}
\jvol{}
\jnum{}
\markboth{P. Dubey and X. Huo}{Report resolution in federated multiple testing}

\title{Report resolution in federated multiple testing under family-wise
error control}

\if0\anon
\author{PRASANJIT DUBEY}
\affil{H. Milton Stewart School of Industrial and Systems Engineering,
  Georgia Institute of Technology, Atlanta, Georgia 30332, U.S.A.
  \email{pdubey31@gatech.edu}}
\author{\and XIAOMING HUO}
\affil{H. Milton Stewart School of Industrial and Systems Engineering,
  Georgia Institute of Technology, Atlanta, Georgia 30332, U.S.A.
  \email{huo@gatech.edu}}
\fi

\maketitle

\begin{abstract}
Several institutions test one family of hypotheses under
family-wise error control but cannot pool their data, so each site
releases, for each hypothesis, only a report of its own $p$-value.  The
report's resolution is the number of values it can take.
We quantify the power lost to such reports relative to the most
powerful centralized procedure (the oracle), with reports and combining rule chosen
optimally.  A report from which the $p$-value can be recovered loses
no power, so the loss is due to compression, not decentralization.  Under
the stated regularity and power-objective conditions, every finite
resolution loses power, and the optimal loss decays as the inverse square of
the resolution: equal-width intervals attain this order and no partition into
as many cells improves it.  In two-site Beta examples, optimized one-bit
reports are nearly lossless for a single hypothesis but lose several times
as much power for two.  At level $0.05$ with two $\mathrm{Beta}(1,2)$ hypotheses,
optimally combined three-bit equal-width reports retain $99.2\%$ of oracle
power.  We give the power-optimal combining rule for given
reports, and a rule using only their exactly known null distribution,
valid under arbitrary dependence across hypotheses.
\end{abstract}

\begin{keywords}
Combination test; Distributed testing; Family-wise error rate; Federated
inference; Report resolution; Sufficiency.
\end{keywords}

\section{Introduction}\label{sec:intro}

Several institutions test one family of hypotheses and cannot pool their
data.  Each releases a report of its own evidence, and a center must
decide the whole family from those reports while controlling the probability
of any false rejection.
Replication consortia, health-data networks, genetic consortia, and the
federated evaluation of medical-AI models all have this
form~\citep{opensci2015estimating,platt2018sentinel,evangelou2013meta,
zenk2025fets,karargyris2023medperf}.  This paper asks how much power a coarse
report loses relative to the best centralized procedure, and how the center
should read the reports it receives.

There are $K$ hypotheses and $S$ sites; site $s$ computes from its own data
a $p$-value $u_{s,k}$ for each hypothesis $k$.  The $p$-values are
independent, uniform on $[0,1]$ when the null holds, and have a known
decreasing density under the alternative, which may differ across sites.
The alternative-to-null density ratio is called the \emph{likelihood ratio}.
Site $s$ sends the center only $T_s(u_{s,k})$, where the \emph{report map} $T_s$ is fixed in advance and
takes at most $m$ values.  We call $m$ the \emph{resolution} of the
reports.  Each value identifies a \emph{report cell}: the set of $p$-values
mapped to it.  The \emph{threshold report} is
$\mathbf 1\{u_{s,k}\le t_s\}$ for a threshold $t_s$.  The center sees the $SK$
reports and nothing else, and returns a set of rejected hypotheses.  All $K$ hypotheses are
decided at once, which we call \emph{multiplicity}, under strong control of
the family-wise error rate (FWER): whatever the configuration of true and false
hypotheses, the probability of rejecting any true hypothesis is at most the
prespecified level $\alpha\in(0,1)$.

The power $\Pi_K(T,\mathcal R)$ of a report design $T=(T_1,\ldots,T_S)$ and a
center rule $\mathcal R$ is the expected fraction rejected when all $K$
hypotheses are false (the \emph{full alternative}).  Let $\pi^*_K(\alpha)$
denote the centralized oracle power: the largest power attainable from all
$SK$ $p$-values~\citep{rosset2022optimal}.  Define
\[
 \delta_K(m)\;=\;\pi^*_K(\alpha)\;-\;
 \sup_{T}\;\sup_{\mathcal R}\;\Pi_K(T,\mathcal R),
\]
the outer supremum running over report designs of resolution at most $m$
and the inner one over center rules that control the family-wise error rate.
A center holding the $p$-values could form the reports itself, so
$\delta_K(m)\ge0$: it is the power lost to the reports when everything
else is done as well as possible.

The paper answers two questions.  How
does $\delta_K(m)$ depend on the resolution $m$ and on the number of
hypotheses $K$?  And how should the center combine reports of a given
resolution?  \S\ref{sec:setup} gives the full formulation, including
heterogeneous sites, arbitrary measurable report cells, and power averaged
under a prior on the number of false hypotheses; the results state their
conditions within this formulation.

The main results are the five statements below; \S\ref{sec:setup}
and~\S\ref{sec:method} give their formal versions.
Let $S\ge2$.  For $K\ge2$, results (ii) and (iii) hold whenever the power
objective puts positive weight on the full alternative, as the default does.
Other objectives require an additional geometric condition on the centralized
optimum, the \emph{oracle active-kink condition} (OAK).
\S\ref{sec:kink} explains this condition and its role in finite-report loss
(Definition~\ref{def:oak} and Lemma~\ref{lem:oak_auto}).  This condition can fail.
\begin{enumerate}[label=(\roman*),leftmargin=*,itemsep=0.25\baselineskip]
\item \emph{Federation itself loses no power.}  If each $T_s$ is injective,
so that the center can recover the $p$-values, the deficit is zero for every
$K$ and $S$ (Proposition~\ref{prop:distribution}).  Any loss is due to
compression.
\item \emph{Every finite resolution loses power.}  If the alternative
densities are continuous and strictly decreasing, every design whose maps
take finitely many values has power strictly below $\pi^*_K(\alpha)$, and
threshold reports have a positive deficit even when their thresholds are
chosen optimally (Propositions~\ref{prop:distribution}--\ref{prop:converse} and
Theorem~\ref{thm:oracle_kink}(a)).
\item \emph{The loss decays as the inverse square of the resolution, and no
placement of the cells improves the exponent.}  If the alternative densities
are moreover continuously differentiable with strictly negative derivative,
there are constants $0<c<C<\infty$ and $m_0$ such that
\[
 c\,m^{-2}\;\le\;\delta_K(m)\;\le\;C\,m^{-2},\qquad m\ge m_0 .
\]
Equal-width cells attain the upper bound, and the lower bound holds over
every way of dividing $[0,1]$ into $m$ measurable cells, so concentrating
resolution near zero can improve the constant but not the order (Proposition~\ref{prop:converse}(b) and
Theorem~\ref{thm:oracle_kink}(b)).  With $b$ bits per report, $m=2^b$ and the
optimal deficit is therefore of order $4^{-b}$.
\item \emph{The optimal way to read reports of a given resolution.}  Once
the report maps are fixed, the reports have known distributions under the
null and under the modeled alternative, so they pose a centralized multiple
testing problem of their own.  The optimal centralized procedure applied to that problem, with the
likelihood ratios of the reports in place of those of the $p$-values, is a
finite linear program when the reports take finitely many values; it controls the
family-wise error rate at every sample size and is the most powerful rule
that reads only the reports (Theorem~\ref{thm:aggregator}).  We call it the
\emph{model-aware} rule.
\item \emph{A way to read them that needs no model for the alternative.}
The null distribution of the reports depends on the report maps alone.
For interval cells, sum the negative logarithms of the reported cells' upper
edges across sites (Fisher's statistic), calibrate the sum against its exact null
distribution, and test each hypothesis at level $\alpha/K$ (Bonferroni).
This controls the family-wise error rate under arbitrary dependence among
hypotheses, needing only independence across sites within each hypothesis.
At $K=1$, boundary randomization makes this rule optimal when sites have the
same non-increasing alternative density and use a common fixed threshold;
randomization is unnecessary
when the null count tail attains the level exactly
(Propositions~\ref{prop:marginal} and~\ref{prop:marginalK1}).  We call it
the \emph{null-calibrated} rule.
\end{enumerate}

Two design conclusions emerge from the two-site Beta examples in
\S\ref{sec:numerics}.  First, \emph{multiplicity amplifies the cost of
coarse reports}: recoverable $p$-values preserve power, whereas the best
one-bit reports lose several times more power for two hypotheses than for
one.  Second, \emph{a small increase in resolution can preserve nearly all
oracle power}: at $K=S=2$ and $\alpha=0.05$ with $\mathrm{Beta}(1,2)$
alternatives, optimally combined two- and three-bit equal-width reports
retain $97.1\%$ and $99.2\%$, respectively.
Result (iii) establishes the optimal asymptotic order; the
finite-resolution gains depend on the model and the report design.

A coarse report loses power under multiplicity because a family requires the
center to choose how many hypotheses to reject as well as which.  The
optimal centralized rejection region is therefore assembled from one smooth piece for each possible number
of rejections, and the pieces meet at corners.  A report that discloses only
the cell containing a $p$-value replaces the site's evidence by its average
over that cell, and averaging removes the corners.  With a single hypothesis
there is one oracle threshold; multiplicity introduces comparisons among
several rejection counts.  \S\ref{sec:setup} makes this precise
through an equivalent minimization problem, the dual of the centralized program.

Three lines of existing work border this problem, and each leaves open the
question asked here.
Optimal testing under family-wise error control supplies the benchmark.  It
has been developed as maximin stepwise procedures, Bayes and minimax
programs, and exact solutions for two
hypotheses~\citep{lehmann2005optimality,rosenblum2014optimal,
heller2023optimal}; \citet{rosset2022optimal} formulate the exchangeable
problem whose value is $\pi^*_K(\alpha)$, and later work computes it through
identities for elementary symmetric polynomials (sums of products of distinct
likelihood ratios) and block-separable
programs~\citep{dubey2026esp,dubey2026boost}.  Related power-directed closed
tests provide practical procedures~\citep{karmakar2025bottomup}.
Throughout, the center sees the data.  That
literature tells us what $\pi^*_K(\alpha)$ is; it does not ask what a site
must transmit for a center to approach it, and our results do not depend on
which algorithm computes the oracle.

Distributed multiple testing has been studied mainly under the false
discovery rate (FDR), through one-shot aggregation, data-shielded
high-dimensional procedures, and network methods~\citep{su2015knockoffaggregation,
liu2021integrative,ramdas2017qute,pournaderi2023largescale,hu2026federated}.
QuTE develops procedure-specific quantization for the Benjamini--Hochberg
procedure and its network testing rule~\citep{ramdas2022quteexpanded};
\citet{xiang2019quantization} combines quantized $p$-values with a discrete
FDR procedure.  Summary-statistic knockoffs give family-wise control for
feature selection~\citep{yu2024summary}.  Our question is the best strong-FWER
power attainable when each site reports on the same hypotheses using
finitely many values.  For this objective, result (iii) shows that optimizing report-cell
placement cannot improve the inverse-square order attained by equal-width
cells: the lower bound ranges over arbitrary measurable partitions.

Decentralized detection supplies the $m^{-2}$ order and the sufficiency
argument, but for a single global test; what is new here is that the same
order governs the strong-FWER value under multiplicity, with a matching lower
bound over arbitrary cells, together with the numerical increase in loss
under multiplicity.  Related decentralized and federated testing work studies
communication or privacy constraints under error-exponent, divergence, separation, or
minimax-rate criteria~\citep{tenney1981detection,tsitsiklis1993decentralized,
poor1988fine,gray1998quantization,berlinet2006asymptotic,
szabo2023distributed,cai2024federated,pensia2024communication}.  Losslessness under sufficiency
is a Blackwell statement~\citep{blackwell1953equivalent}, and $m^{-2}$ is
the classical high-rate quantization order.  To our knowledge, these results
have not been established for the optimal strong-FWER value under
multiplicity with at most $m$ report values per site.  The protocol
is one-shot, componentwise, and shared across hypotheses; joint,
interactive, and variable-length encoders are outside its scope.

We control the family-wise error rate because the motivating families are
confirmatory and each rejection is acted on individually, as regulated
multi-endpoint analyses require~\citep{lehmann2005generalizations,
goeman2011exploratory,dmitrienko2009multiple,fda2022endpoints}; FDR suits exploratory screening, in which false
rejections are tolerated in
proportion~\citep{benjamini1995controlling,goeman2014multiple}.
Each hypothesis has one status shared by all sites.  Replicability analysis,
which asks at how many sites an effect is present through partial-conjunction
nulls, is a different problem~\citep{benjamini2008screening,
bogomolov2018testing}.

\section{Problem formulation and communication-loss theory}\label{sec:setup}

\subsection{Motivation and model}\label{sec:model}

The problem is to test $K$ hypotheses from $p$-values computed at $S$ sites
that cannot pool their data; the two-group structure assumed below restricts
the distribution of those $p$-values, not the hypotheses themselves.  There
are $K$ hypotheses $H_1,\dots,H_K$, each true or false, and a configuration
vector $h \in \{0,1\}^K$, with $h_k = 1$ indicating that $H_k$ is false, so
that its alternative holds; $\mathcal{H}_0(h) = \{k : h_k = 0\}$ denotes the
true hypotheses. Throughout, $K \ge 1$ and $S \ge 1$ are integers and the level
$\alpha \in (0,1)$ is fixed. Data are collected at $S$ sites: site $s\in\{1,\ldots,S\}$ holds a
$p$-value $u_{s,k} \in [0,1]$ for each hypothesis $k\in\{1,\ldots,K\}$,
computed from data local to that site.  Let $g_s$ be the density of
$u_{s,k}$ when $H_k$ is false, the same for every $k$ by
Assumption~\ref{ass:model} below; because the density of $u_{s,k}$ is one on
$[0,1]$ when $H_k$ is true, $g_s$ is also the alternative-to-null likelihood
ratio, and site $s$ forms the local likelihood ratio
$\Lambda_{s,k}=g_s(u_{s,k})$.
The constraint is \emph{federation} in the data-locality sense:
individual-level data remain at their sites, and the center sees each
site only through the prespecified per-hypothesis reports it discloses.  No
formal privacy guarantee, such as differential privacy, is imposed, and
leakage is not quantified beyond the choice of the report class.

\begin{assumption}[Distributed two-group model]\label{ass:model}
Conditionally on $h$, the variables $\{u_{s,k} : s = 1,\dots,S;\ k =
1,\dots,K\}$ are mutually independent. For every site $s$, $u_{s,k} \sim
\mathrm{Unif}[0,1]$ if $h_k = 0$ and has density $g_s$ if $h_k = 1$, where
each $g_s$ is a
\emph{non-increasing} density on $[0,1]$.
\end{assumption}

\begin{assumption}[Known local alternatives]\label{ass:known}
Each $g_s$ is known to the sites and to the center.
\end{assumption}

\begin{assumption}[Regularity]\label{ass:reg}
Each $g_s$ is continuous and strictly decreasing on $[0,1]$, so $u \mapsto
g_s(u)$ is a bijection onto its range.
\end{assumption}

The assumptions are used unevenly: the model-aware rule needs the two-group
model and the known alternative, which permits exact report-distribution
recalibration; the null-calibrated rule needs neither the known alternative
nor independence across hypotheses; and
regularity, which makes $p$-values and likelihood ratios
information-equivalent, serves the strictness and rate results.
Assumption~\ref{ass:model} supplies independent replication across sites and
an exchangeable two-group structure across hypotheses; heterogeneity enters
through $g_s$, and the common $g_s$ across hypotheses is structural, because
replacing it by $g_{s,k}$ would remove the exchangeability used by the
optimal program.  The supplement gives a result-by-result assumption map;
\S\ref{sec:plugin} handles an independently estimated report distribution,
and \S\ref{sec:marginal} gives the default that needs no alternative model
at all.

Throughout, $\pr_h$ and $\E_h$ denote probability and expectation at
configuration $h$, with $0$ and $1$ abbreviating the constant
configurations: $\pr_0$ and $\E_0$ refer to the global null $h=(0,\ldots,0)$,
where every $u_{s,k}$ is uniform, and $\pr_1$ and $\E_1$ to the full
alternative $h=(1,\ldots,1)$.  Under Assumption~\ref{ass:model},
$\E_h(\Lambda_{s,k}) = \int_0^1 g_s(u)\,\mathrm{d}u = 1$ whenever $h_k = 0$,
and by cross-site independence the product
\begin{equation}\label{eq:aggLR}
\Lambda_k \;=\; \prod_{s=1}^S \Lambda_{s,k} \;=\; \prod_{s=1}^S g_s(u_{s,k})
\end{equation}
is the likelihood ratio of all the evidence about hypothesis $k$, with null
expectation one; we write $\Lambda=(\Lambda_1,\ldots,\Lambda_K)$ for the
vector of these ratios.

\subsection{Federated reports and their resolution}\label{sec:reports}

Each site summarizes each hypothesis from that hypothesis's own local
$p$-value, through a single report map shared across hypotheses.

\begin{definition}[Federated report design and procedure]\label{def:protocol}
A report design fixes, for each site $s\in\{1,\ldots,S\}$, a measurable
\emph{report map} $T_s : [0,1] \to \mathcal{M}_s$ that does not depend on
$k$; here $\mathcal M_s$ is the site's message space, a measurable space
whose elements are the reports the site may send.  The site transmits
$y_{s,k} = T_s(u_{s,k}) \in \mathcal{M}_s$ for every $k\in\{1,\ldots,K\}$,
and we write $\mathbf y =
\{y_{s,k}\}_{s=1,\dots,S;\,k=1,\dots,K}$ for the resulting report array. The
center applies a (randomized) decision rule $\mathcal{R} : \mathbf y \mapsto
\mathcal{R}(\mathbf y,\xi) \subseteq \{1,\dots,K\}$ (\S\ref{sec:criterion}), where
$\xi$ is internal randomization. A federated procedure is the pair comprising
a report design and a decision rule. Transmission is one-shot and the
center's downlink is unrestricted.
\end{definition}

The constrained quantity is the \emph{resolution} of the report maps. Two
classes anchor the spectrum, ordered in the Blackwell
sense~\citep{blackwell1953equivalent}: the coarser report is a measurable
function of the finer one, so no center rule attains higher power from the
coarser report than the optimal rule attains from the finer.

\begin{definition}[Report classes]\label{def:classes}
A site uses a \emph{full report} if $T_s$ is injective on $[0,1]$ with
Borel range in a standard Borel space.  The inverse is then measurable by the
Lusin--Suslin theorem, so the center recovers $u_{s,k}$, equivalently
(Assumption~\ref{ass:reg}) $\Lambda_{s,k}$, and up to this relabeling
$T_s(u) = u$.  Write $\mathcal{T}^{\mathrm{full}}$ for the resulting class of
report designs. A site uses a \emph{threshold report} if $T_s(u) =
\mathbf{1}\{u \le t_s\}$ for a local threshold $t_s \in (0,1]$; write $\mathcal{T}^{\mathrm{thr}}$ for the
resulting class, indexed by the thresholds $t = (t_1,\dots,t_S)$.  A site
uses the \emph{uniform $m$-level report} if
$T_s(u)=\min\{m,1+\lfloor mu\rfloor\}$, which discloses which of $m$
equal-width cells contains $u$; write $\mathcal T^{(m)}$ for the singleton
class in which every site uses it.
\end{definition}

We call a report \emph{sufficient} when the center can recover
$\Lambda_{s,k}$ from it, as it can from a full report; the report array then
carries everything the $p$-value array says about $h$.  Finite-range maps
interpolate between the full and threshold classes; \S\ref{sec:criterion}
gives conditions for strict loss and the rate at which refinement closes it.

The lower bounds below range over $\mathcal T_{\le m}$, the class of all
deterministic componentwise report designs of Definition~\ref{def:protocol}
in which every site map has at most $m$ output values,
$|\operatorname{range}(T_s)|\le m$, for an integer $m\ge1$. The inverse images of the values may be
arbitrary measurable sets and the maps may differ across sites.  The class
uses one map per site, shared across hypotheses, so it excludes joint
cross-hypothesis and interactive encoders.

\subsection{Criterion, benchmark, and the power loss from coarse reports}\label{sec:criterion}

The criterion is finite-sample strong control of the family-wise error rate,
and the objective is average power weighted by a prior on the number of false
hypotheses.  With $\mathcal R=\mathcal R(\mathbf y,\xi)$ the rejection set of
Definition~\ref{def:protocol} and all probabilities taken over the data and
the internal randomization $\xi$, the requirement is
\begin{equation}\label{eq:fwer}
\fwer_h \;=\; \pr_h\{\mathcal{R} \cap \mathcal{H}_0(h) \neq \emptyset\}
\;\le\; \alpha \quad \text{for every } h \in \{0,1\}^K .
\end{equation}
How many hypotheses are false is not known when a procedure must be chosen,
so we do not fix it.  Which ones are false is immaterial, because Assumption~\ref{ass:model} is invariant under
relabeling the hypotheses, so we average over that as well.  For
$\gamma=1,\dots,K$ let
\[
 \Pi_\gamma \;=\; \binom{K}{\gamma}^{-1}\sum_{h:\,|h|=\gamma}\gamma^{-1}\,
 \E_h\bigl|\mathcal R\cap\{k:h_k=1\}\bigr|,
 \qquad |h|=\sum_k h_k,
\]
be the expected fraction of false hypotheses rejected, averaged over the
$\binom K\gamma$ configurations with exactly $\gamma$ false hypotheses.  The
supplement shows that an optimal procedure can be taken to be equivariant
under relabeling of the hypotheses; such a procedure has the same power at
every configuration with $\gamma$ false hypotheses, and $\Pi_\gamma$ is that
common value.  Let $w$ be a probability
distribution on $\{1,\dots,K\}$, the prior on the number of false
hypotheses.  The objective is the \emph{prior-weighted average power}
\begin{equation}\label{eq:objective}
 \Pi_w \;=\; \sum_{\gamma=1}^{K}w(\gamma)\,\Pi_\gamma .
\end{equation}
The point prior, which puts
$w(K)=1$, gives average power at the full alternative,
$K^{-1}\E_{(1,\dots,1)}|\mathcal R|$, the objective of the centralized
theory~\citep{rosset2022optimal,dubey2026esp}; the flat prior
$w(\gamma)=1/K$ gives equal weight to each positive number of false
hypotheses.  The framework allows arbitrary $w$, with each result retaining
its stated conditions (\S\ref{sec:kink} and the supplement).  Except where a
prior is named, reported numbers use the point prior at $\gamma=K$;
\S\ref{sec:numerics} and the supplement recompute the principal quantities
at non-degenerate priors.

The benchmark is the centralized $p$-value oracle, which observes the full
array $\{u_{s,k}\}$. It is not an oracle that pools the patient- or
study-level data from which the local $p$-values were computed.

\begin{proposition}[Centralized oracle]\label{prop:oracle}
Under Assumptions~\ref{ass:model}--\ref{ass:known}, an optimal procedure
satisfying~\eqref{eq:fwer} on the full data can be chosen to depend on the
data only through the aggregate likelihood ratios~\eqref{eq:aggLR} and
auxiliary randomization.  It may be taken to be a likelihood-ratio-ordered
policy for the exchangeable two-group model~\citep{rosset2022optimal}, that
is, a rule that ranks the hypotheses by $\Lambda_k$ and rejects a number of
the top-ranked ones that depends on the ranked values.
Denote its average power by $\pi^{*}_w(\alpha)$.  The supplement derives the
ordered power-maximization program (the primal) and its dual, proves attainment and strong duality for arbitrary
dominated report distributions, and states the regularity needed by the
continuous-distribution coordinate algorithm of~\citet{dubey2026esp} for
the full-alternative objective.
\end{proposition}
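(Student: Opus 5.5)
The proof runs in three steps: sufficiency, symmetrization, and ordering. Throughout, a procedure is a measurable map $\phi:[0,1]^{SK}\to[0,1]^{2^K}$ giving the probability of each rejection set. Both the constraints \eqref{eq:fwer} and the objective \eqref{eq:objective} are linear in $\phi$.

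\emph{Sufficiency.} Under Assumption~\ref{ass:model}, at configuration $h$ the density of the full array with respect to Lebesgue measure on $[0,1]^{SK}$ is $\prod_{k:h_k=1}\Lambda_k$. By the factorization theorem, the vector $\Lambda=(\Lambda_1,\ldots,\Lambda_K)$ of \eqref{eq:aggLR} is sufficient for the family $\{\pr_h\}$.

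Given any procedure $\phi$, define $\tilde\phi=\E(\phi\mid\Lambda)$. This conditional expectation does not depend on $h$, which is what sufficiency means. Then $\tilde\phi$ has the same rejection probabilities for every set under every $\pr_h$. Hence it has the same FWER at every $h$ and the same $\Pi_\gamma$ for every $\gamma$. Implementing $\tilde\phi$ requires only $\Lambda$ and auxiliary randomization $\xi$.

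\emph{Symmetrization.} The model is invariant under permutations $\sigma$ of hypothesis labels, with $h$ mapped to $\sigma h$. The objective averages over configurations with each $|h|$, so it is permutation-invariant. The FWER constraint set is closed under relabeling. Averaging $\tilde\phi$ over all $K!$ relabelings therefore gives an equivariant procedure that is still feasible and has the same objective value, because both are linear.

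\emph{Existence.} Existence of an optimum follows by weak-$*$ compactness of $\{\phi:0\le\phi\le1\}$ in $L^\infty(\lambda)$, where $\lambda$ is Lebesgue measure on $[0,1]^{SK}$, or equivalently on the law of $\Lambda$. The constraints and the objective are integrals against $L^1$ densities, so they are continuous in that topology. This is the attainment statement for dominated distributions, with the dual deferred to the supplement.

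\emph{Ordering.} This is the main obstacle: we must show an optimal equivariant rule can be taken to reject top-ranked hypotheses, which is the likelihood-ratio-ordered policy of \citet{rosset2022optimal}. I would use an exchange argument. Suppose an equivariant optimal rule rejects set $A$ with positive probability on an event where some $j\notin A$ has $\Lambda_j>\Lambda_i$ for some $i\in A$. Replace $A$ by $A'=A\setminus\{i\}\cup\{j\}$ on that event.

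The swap leaves the number of rejections unchanged. For power, the contribution at configuration $h$ is weighted by $\prod_{h_k=1}\Lambda_k$, and averaging over configurations with $|h|=\gamma$ makes the swap weakly increase each $\Pi_\gamma$.

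For the FWER, the swap changes which true nulls can be hit. The cleanest route is to follow \citet{rosset2022optimal}. Write the FWER constraint at each $h$ as a linear constraint in $\phi$. By exchangeability, it reduces to constraints indexed by the number of true nulls and their position in the ranking. Then verify that pushing rejections toward larger $\Lambda_k$ moves probability toward false hypotheses under every $\pr_h$. Concretely, for the true-null set $\mathcal H_0(h)$, the event $\{\text{ranked-top set meets }\mathcal H_0(h)\}$ is stochastically dominated by the corresponding event for the unordered rule. This holds because, given the multiset of values, nulls have stochastically smaller $\Lambda$: the density of $\Lambda_k$ under the alternative is $\Lambda_k$ times its null density.

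If a direct swap breaks feasibility at some $h$, I would instead cite the structure theorem of \citet{rosset2022optimal} on the reduced problem in $\Lambda$, since after the first two steps it is exactly their exchangeable problem. Ties among the $\Lambda_k$ have probability zero under Assumption~\ref{ass:reg}, and otherwise are broken by $\xi$.
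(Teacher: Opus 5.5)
Your proposal is correct and follows the paper's proof. Both use factorization with Rao--Blackwellization onto $\Lambda$, the permutation-symmetrization fact, the strict-likelihood-ratio rearrangement of Theorem~1 of \citet{rosset2022optimal} for the ordering, independent uniforms $\xi_k$ for ties, and weak-$*$ compactness for attainment. The paper applies compactness to the ordered program in Lemma~\ref{lem:ordered_duality} rather than to all procedures, which makes no difference.

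One caution concerns your self-contained swap sketch, which does not lower the error ``under every $\pr_h$''. A pointwise swap can raise the family-wise error rate at a configuration where $j$ is true and $i$ is false. It lowers only the constraint summed over configurations with the same number of true nulls. That is why symmetrization must precede the rearrangement, and why the citation you fall back on, as in the paper, is what actually carries the ordering step. The paper also justifies tie-breaking by $\xi$ through conditional symmetrization within tied blocks, which gives a uniform subset given the count; you assert this without argument.
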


The quantity of interest is the \emph{power deficit} of a report class
$\mathcal T$, a set of report designs $T=(T_1,\ldots,T_S)$ such as
$\mathcal T^{\mathrm{full}}$, $\mathcal T^{\mathrm{thr}}$ or
$\mathcal T_{\le m}$ of \S\ref{sec:reports}:
\begin{equation}\label{eq:deficit}
\delta_w(\mathcal{T}) \;=\; \pi^{*}_w(\alpha) \;-\;
\sup_{T\in\mathcal T}\ \sup_{\mathcal R\text{ satisfying }\eqref{eq:fwer}}\,
\Pi_w(T,\mathcal R)
\;\ge\; 0 ,
\end{equation}
the outer supremum being over designs in $\mathcal T$, the inner one over
center rules satisfying~\eqref{eq:fwer}, and $\Pi_w(T,\mathcal R)$ the
objective~\eqref{eq:objective} of design $T$ and rule $\mathcal R$.  For a
fixed report design $T$, write
\[
d_w(T)=\pi^*_w(\alpha)-\sup_{\mathcal R}\Pi_w(T,\mathcal R),
\]
where the supremum is over valid center rules.  We call $d_w(T)$ the
\emph{fixed-design gap} of $T$, so that
$\delta_w(\mathcal T)=\inf_{T\in\mathcal T}d_w(T)$.  For the point prior we
abbreviate $\Pi_w$, $\pi^*_w(\alpha)$, $\delta_w$ and $d_w$ to $\Pi_K$,
$\pi^*_K(\alpha)$, $\delta_K$ and $d_K$, so that an integer subscript always
names the number of hypotheses under the full alternative and the subscript
$w$ a general prior; in this notation the quantity $\delta_K(m)$ of
\S\ref{sec:intro} is $\delta_K(\mathcal T_{\le m})$.  The distinction matters:
pointwise loss at every finite design need not give a uniform positive bound
over a continuously indexed class.  The optimal \emph{aggregator}, the
center rule that turns a fixed report array into a rejection set, is the
subject of \S\ref{sec:aware} and evaluates $d_w(T)$ at a given design; reaching
$\delta_w(\mathcal T^{\mathrm{thr}})$ requires optimizing the thresholds $t$
as well.

\begin{proposition}[Federation with sufficient reports loses no power]\label{prop:distribution}
Under Assumptions~\ref{ass:model}--\ref{ass:known} for part~(a), and
Assumptions~\ref{ass:model}--\ref{ass:reg} for part~(b):
\begin{itemize}
\item[(a)] full reports lose no power, $\delta_w(\mathcal{T}^{\mathrm{full}})
= 0$ for every $K$ and $S$;
\item[(b)] at $K=S=1$, the threshold $t_1=\alpha$ gives
$\delta_1(\mathcal T^{\mathrm{thr}})=0$.  If $K=1$ and $S\ge2$, every
threshold vector has positive fixed-design gap and compactness gives
$\delta_1(\mathcal T^{\mathrm{thr}})>0$.
\end{itemize}
\end{proposition}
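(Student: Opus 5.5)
For part~(a) we argue by sufficiency and the Blackwell ordering. Fix a full design $T\in\mathcal T^{\mathrm{full}}$. By Definition~\ref{def:classes} each $T_s$ has a measurable inverse on its range, so $u_{s,k}=T_s^{-1}(y_{s,k})$ is a measurable function of the report. Let $\mathcal R^*$ be the centralized oracle rule of Proposition~\ref{prop:oracle}, which attains $\pi^*_w(\alpha)$ and satisfies~\eqref{eq:fwer}. Define the center rule $\mathcal R(\mathbf y,\xi)=\mathcal R^*(\{T_s^{-1}(y_{s,k})\},\xi)$. Under every configuration $h$ this rule has the same joint law with $h$ as $\mathcal R^*$. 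Hence it has the same $\fwer_h$ for every $h$ and the same value of~\eqref{eq:objective}. This gives $d_w(T)=0$, and since $\delta_w\ge0$ by~\eqref{eq:deficit}, we conclude $\delta_w(\mathcal T^{\mathrm{full}})=0$. Only Assumptions~\ref{ass:model}--\ref{ass:known} are used, through Proposition~\ref{prop:oracle}.

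For part~(b) with $K=S=1$, the family-wise error rate reduces to the size of a single test. By Neyman--Pearson the oracle rejects when $\Lambda_1=g_1(u)$ is large. Since $g_1$ is strictly decreasing, this is the event $\{u\le\alpha\}$, which has exact size $\alpha$ and needs no randomization. The threshold report with $t_1=\alpha$ discloses exactly this indicator, and the rule ``reject iff $y=1$'' reproduces the oracle. So $\delta_1(\mathcal T^{\mathrm{thr}})=0$.

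For $K=1$, $S\ge2$ we first show that each fixed-design gap is positive. By Neyman--Pearson and Assumption~\ref{ass:reg}, the oracle region is $\{\sum_s \log g_s(u_s)\ge c\}$, with randomization irrelevant because the law is continuous. Its size is exactly $\alpha\in(0,1)$, so the region is a nontrivial set whose boundary is a strictly monotone curved surface in $[0,1]^S$. Any rule reading threshold reports is measurable with respect to the finite partition of $[0,1]^S$ into boxes $\prod_s\{u_s\le t_s\}$ or $\{u_s>t_s\}$, possibly with randomization within boxes. By strict uniqueness of the Neyman--Pearson test up to null sets, attaining oracle power at level $\alpha$ forces the report rule to coincide a.e.\ with the oracle region. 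That region would then have to be a union of boxes of a product partition. This is impossible: take two coordinates $s\neq s'$ and fix the others. The boundary $\log g_s(u_s)+\log g_{s'}(u_{s'})=c'$ is a strictly decreasing continuous curve. It must cross the interior of the unit square for suitable values of the other coordinates, since the size lies strictly between $0$ and $1$. A strictly decreasing curve with non-degenerate range cannot be a union of axis-parallel box edges taken from a single grid line per axis. This step needs care at degenerate thresholds such as $t_s=1$, where the report is constant. In that case the remaining sites still number at least one, and if only one informative site remains the region $\{u_s\le t_s\}$ is strictly suboptimal, because the oracle uses all $S\ge2$ sites nontrivially. So $d_1(t)>0$ for every $t\in(0,1]^S$.

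The main obstacle is upgrading pointwise positivity to $\inf_t d_1(t)>0$, as the text warns. The threshold domain $(0,1]^S$ is not compact. I would show that $t\mapsto\sup_{\mathcal R}\Pi_1(t,\mathcal R)$ is continuous and extends continuously to $[0,1]^S$. The optimal rule for finitely many report cells is a finite Neyman--Pearson linear program whose cell probabilities under the null and the alternative are continuous in $t$. Its value is therefore continuous, and upper semicontinuity suffices. At boundary points $t_s=0$ the report becomes constant, giving the same strictly suboptimal class as before. Hence $d_1$ is lower semicontinuous and positive on the compact set $[0,1]^S$, so it attains a positive minimum. This yields $\delta_1(\mathcal T^{\mathrm{thr}})>0$.
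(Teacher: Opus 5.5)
Your part (a) and the $K=S=1$ case match the paper. For $K=1$, $S\ge2$ you take a genuinely different route to pointwise strictness. The paper invokes Proposition~\ref{prop:converse}(a). There $d_1(T)$ is bounded below by the conditional-Jensen gap $\E_0\{(\Lambda-t_\alpha)_+-(\bar\Lambda-t_\alpha)_+\}$ at the oracle multiplier, which is strictly positive as soon as some report cell straddles $t_\alpha$. It then shows that the oracle region cannot agree a.e.\ with any finite union of measurable product cells, using the one-dimensional section function $\varphi(u_1)$: this function is continuous and strictly decreasing on an interval, yet would have to be a.e.\ constant on the finitely many membership-pattern atoms. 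You instead use uniqueness of the most powerful level-$\alpha$ test, valid because $\Lambda_1$ is atomless, to reduce directly to the same geometric fact. You then prove that fact only for the threshold grid, using two-coordinate sections. Your route is more elementary and self-contained for $K=1$. The paper's route is quantitative (the same Jensen gap drives the $m^{-2}$ lower bound) and handles arbitrary measurable cells. It also extends to $K\ge2$, where no Neyman--Pearson uniqueness is available and the kink has to come from the dual (Theorem~\ref{thm:oracle_kink}).

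Two steps need tightening, though neither is a real gap. First, almost-everywhere equality of sets is not by itself a statement about boundaries, so phrase the contradiction through sections rather than curves. Fix the other coordinates in a positive-measure set; then for a.e.\ $x$ in an interval, the $u_{s'}$-section of the oracle region at $u_s=x$ is $[0,\theta(x)]$ with $\theta$ strictly decreasing. The section of any union of threshold boxes depends only on whether $x\le t_s$, and is one of $\emptyset$, $[0,t_{s'}]$, $(t_{s'},1]$, $[0,1]$. This version covers degenerate thresholds automatically, so your separate treatment of $t_s=1$ is unnecessary. Second, continuity of the cell probabilities in $t$ does not by itself make a linear program's value continuous. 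As you note, however, only upper semicontinuity of the optimal power is needed. It follows, as in the paper, from compactness of the randomized rules on the fixed binary report space and joint continuity of the objective and the size constraint in the rule and $t$.
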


A sufficient report reproduces the oracle statistic, whereas threshold
reports isolate the power loss from coarse disclosure.  For fixed thresholds
their optimal power is the value of a finite linear program, and the outer
optimization over $t=(t_1,\ldots,t_S)$ is carried out numerically in
\S\ref{sec:numerics}.  The next result resolves the single-hypothesis case:
every finite-range report loses power, and refining the report recovers it
at the inverse-square rate.  Under multiplicity both questions are answered
in \S\ref{sec:kink}, for every finite-range design at once.

\begin{proposition}[Single hypothesis: strict loss and the inverse-square rate]\label{prop:converse}
Suppose Assumptions~\ref{ass:model}--\ref{ass:reg} hold, $K=1$ and $S\ge2$.
\begin{itemize}
\item[(a)] Every report design in which all maps $T_s$ have finite ranges,
$|\mathcal{M}_s| < \infty$, has a strictly positive fixed-design gap
$d_1(T)$.
\item[(b)] If in addition each $g_s$ is continuously differentiable with
$g_s'<0$ on $[0,1)$, there are constants $0<c<C<\infty$ and $m_0<\infty$,
depending on $S$, $\alpha$ and the densities, such that
\[
 c m^{-2}\le \delta_1(\mathcal T_{\le m})
 \le \delta_1(\mathcal T^{(m)})\le C m^{-2},\qquad m\ge m_0.
\]
\end{itemize}
The lower bound permits arbitrary measurable report cells, not only interval
bins, and the upper bound holds for every $K$ and $S$ under the same
smoothness (online supplementary material).  The conditions hold, for
example, for $\mathrm{Beta}(1,\theta)$ alternatives with $\theta\ge2$.
\end{proposition}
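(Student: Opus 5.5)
At $K=1$ the family-wise constraint is the single level-$\alpha$ constraint under the global null, so I would reduce everything to Neyman--Pearson. Under Assumptions~\ref{ass:model}--\ref{ass:reg} the oracle test rejects when $\Lambda_1=\prod_s g_s(u_s)>\kappa$ (randomizing on the boundary, which has null probability zero here), and $\pi^*_1(\alpha)=\E_1\phi^*$. For a fixed design $T$, the reports $y=(T_1(u_1),\ldots,T_S(u_S))$ have likelihood ratio
\[
\bar\Lambda(y)=\prod_s \bar g_s(y_s),\qquad \bar g_s(j)=\E_0\{g_s(u_s)\mid T_s(u_s)=j\},
\]
that is, the cell averages of $g_s$. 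The best rule reading $y$ is the Neyman--Pearson test based on $\bar\Lambda$, and $d_1(T)=\E_1\phi^*-\E_1\bar\phi$. Any report-measurable test $\psi$ of null size $\alpha$ is also a test on the full data. The Neyman--Pearson identity gives the basic representation
\[
\E_1\phi^*-\E_1\psi=\E_0\{(\phi^*-\psi)(\Lambda-\kappa)\}\ge0 .
\]
The integrand is nonnegative, so the gap is the null integral of $|\Lambda-\kappa|$ over the region where $\psi$ disagrees with $\phi^*$.

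For part (a), $\phi^*$ is essentially the indicator of $\{\Lambda>\kappa\}$. With $S\ge2$ and each $g_s$ continuous and strictly decreasing, the boundary $\{\prod_s g_s(u_s)=\kappa\}$ is a strictly monotone hypersurface. It is not a union of products of cells, since a product of finitely many measurable cells cannot equal a set whose section in $u_1$ varies strictly with $u_2$. More concretely, I would pick a cell configuration $y$ such that $\{T=y\}$ has positive null mass on both sides of the boundary. Such a $y$ exists: otherwise $\{\Lambda>\kappa\}$ would be, up to null sets, a finite union of product cells, which makes the boundary level set of a product of one-dimensional functions piecewise constant. On that cell $\psi$ is constant or randomized with a common probability, so it disagrees with $\phi^*$ on a set of positive measure where $|\Lambda-\kappa|>0$, and the gap is strictly positive. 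Here $S\ge2$ is essential. With $S=1$ the single threshold $t=\alpha$ reproduces $\phi^*$, consistent with Proposition~\ref{prop:distribution}(b).

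For the upper bound in (b), I would take the uniform $m$-level design and the rule $\psi$ that rejects on product cells whose $\bar\Lambda$ exceeds a recalibrated cutoff. This rule is within $O(m^{-2})$ of optimal, and the argument is the classical high-rate one. The disagreement region lies within a tube of width $O(1/m)$ around the boundary, where $|\Lambda-\kappa|=O(1/m)$; this uses that the $g_s$ are $C^1$ with bounded derivative. The tube has null mass $O(1/m)$, so the product is $O(m^{-2})$. The size recalibration changes the cutoff by $O(1/m)$ and costs only the same order. One boundary point needs care: $g_s'(1)$ may vanish, but only compactness of $[0,1]$ and boundedness of $g_s'$ are needed here.

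The lower bound over arbitrary measurable cells is the main obstacle. My plan is a local second-moment argument. Fix $T\in\mathcal T_{\le m}$. By the representation above, the gap is at least the minimum over report-measurable $\psi$ of
\[
\E_0\{(\phi^*-\psi)(\Lambda-\kappa)\}.
\]
Restrict to a fixed compact box $B$ in the interior, away from $u_s=1$. On $B$ the boundary is a smooth graph $u_1=\beta(u_2,\ldots,u_S)$ with $|\partial_{u_1}\Lambda|\ge c_0>0$ by $g_s'<0$, and $B$ has positive boundary length. I would discard site $1$'s information, which can only help the center, and vary $u_2$ only through site $2$'s partition. Conditional on site $1$'s cell $A$ and the other coordinates, the report rule must make one decision across every $u_2$ in site $2$'s cell $C$. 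The crossing value $\beta$ moves with $u_2$ at rate bounded below, by the strict monotonicity of $g_2$. So whenever $C$ has diameter $\ge\eta$ and straddles the crossing point, one decision errs on a set of length $\gtrsim\eta$ on which $|\Lambda-\kappa|\gtrsim$ distance to the crossing, contributing $\gtrsim\eta^2$ times the mass of the slice.

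Cells of arbitrary measurable shape can be handled by Lebesgue density and a rearrangement inequality. The one-dimensional loss of a set $C$ is minimized, for given measure, by an interval, so I replace diameter by measure $|C|$. Then the sum over site $2$'s cells is at least
\[
c\sum_j |C_j|^3\ \ge\ c\,m^{-2}\Bigl(\sum_j|C_j|\Bigr)^3
\]
by Hölder, localized to the relevant window. The crossing point sweeps the window as the other coordinates vary, and integrating over them yields $\delta_1(\mathcal T_{\le m})\ge c\,m^{-2}$ uniformly in $T$.

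The delicate points are the straddling requirement, handled by averaging over the crossing location, and the dependence of the optimal rule's cutoff on $T$. For the latter, I would bound the gap by the Lagrangian form at the oracle $\kappa$, which is valid for every $\psi$ of size $\le\alpha$.
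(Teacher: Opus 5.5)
Your part (a) and the lower bound in (b) are the paper's proof written in primal form. At $K=1$ the dual integrand is $(v-\mu)_+$ with $\mu^*=\kappa$, so minimizing your Lagrangian $\E_0\{(\phi^*-\psi)(\Lambda-\kappa)\}$ over report-measurable $\psi$ gives exactly the conditional-Jensen gap $\E_0(\Lambda-\kappa)_+-\E_0(\bar\Lambda-\kappa)_+$. Strictness comes from a straddling cell, found by the same section (membership-pattern) argument. The rate comes from the same chain: reveal all sites but one, average the per-cell hinge regret over the swept threshold, apply the interval rearrangement, and use $\sum_j|A_j|^3\ge|I|^3/m^2$. You keep site~2 coarse where the paper keeps site~1, which is immaterial.

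Two repairs are needed in that step. First, you must \emph{reveal} $u_1$ exactly, not discard it: refining the report sigma-field is what lowers the gap and makes the bound valid. Second, your per-threshold claim that a straddling cell of diameter $\eta$ costs order $\eta^2$ fails when the crossing lies near the cell's edge. The argument therefore rests entirely on the averaged identity $\int_I\Delta_j(r)\,\mathrm dr=\tfrac12\int_{A_j}(x-\bar x_j)^2\,\mathrm dP_X$. That identity needs the localization refinement that guarantees $A_j\subseteq I$ and $\bar x_j\in I$.

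The upper bound is where your route genuinely differs. You exhibit an exactly size-$\alpha$ randomized cutoff test on the uniform cells and control its disagreement with $\phi^*$ through the value tube $\{|\Lambda-\kappa|\le c/m\}$. This classical high-rate argument works at $K=1$. However, the $O(1/m)$ tube mass needs the null density of $\Lambda$ bounded above near $\kappa$, and the $O(1/m)$ shift of the recalibrated cutoff needs it bounded below. Both follow from $\nabla\Lambda\ne0$ on the compact level set $\{\Lambda=\kappa\}$, which uses $g_s'<0$ on $[0,1)$ and $\kappa>\lambda_-$. Boundedness of $g_s'$ alone does not suffice, contrary to your remark.

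The paper instead evaluates the dual at the report problem's own minimizer $\bar\mu$, which removes the recalibration step entirely. Cells not meeting $\{\Lambda=\bar\mu\}$ then contribute exactly zero, because the hinge is affine on each side. The crossing cells are counted combinatorially after the reparametrization $x_s=g_s(u_s)$, which needs only strict monotonicity and $C^1$, with no density or gradient bounds. That route extends verbatim to every $K$ (Lemma~\ref{lem:uniform_upper}), which is the claim in the statement's closing remark. Your single-cutoff construction does not reach that claim: for $K\ge2$ there are $K$ coupled error constraints and no single threshold to recalibrate.
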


\subsection{Strict loss and the sharp rate under multiplicity}\label{sec:kink}

Strictness and the rate both come from corners of the oracle's rejection
region, which averaging within a report cell removes.  By
Proposition~\ref{prop:oracle} the oracle ranks the hypotheses by $\Lambda_k$
and chooses how many of the top-ranked ones to reject, so its rejection
region in the space of $(\Lambda_1,\ldots,\Lambda_K)$ is assembled from one
smooth piece per rejection count, and a componentwise report replaces the
factor $g_s(u_{s,k})$ of $\Lambda_k$ by its null conditional expectation over
the reported cell, which removes the corners.  The dual of the oracle program
makes this quantitative.

The \emph{continuous dual} is an equivalent minimization problem for the
oracle power before any coarsening, so OAK concerns the oracle alone.
Its multipliers $\mu\in[0,\infty)^K$ penalize the error constraints, one
coordinate for each number of true hypotheses.  Its objective is
\[
 \mathcal D(\mu)=\alpha\sum_{j=1}^K\binom Kj\mu_j+\E_0\{F_\mu(\Lambda)\},
 \qquad
 F_\mu(v)=\max\{0,\psi_1^\mu(v),\ldots,\psi_K^\mu(v)\},
\]
where, for a likelihood-ratio vector $v=(v_1,\ldots,v_K)$, the ordered action
branch $\psi_l^\mu(v)=A_l(v)-\sum_{j=1}^K\mu_jB_{j,l}(v)$ corresponds to
rejecting the top $l$ likelihood ratios: $A_l$ is the objective coefficient of
that action and $B_{j,l}$ its coefficient in the error constraint at $j$ true
hypotheses, both polynomial on each fixed-ordering region, as displayed in
the supplement.  The oracle power is
$\pi_w^*(\alpha)=\min_{\mu\ge0}\mathcal D(\mu)$.  Let
$\lambda_-=\prod_s g_s(1)$ and $\lambda_+=\prod_s g_s(0)$ be the smallest and
largest values an aggregate likelihood ratio can take, and write
$\Lambda_{-a}=(\Lambda_k:k\ne a)$ for the coordinates other than $a$.

\begin{definition}[Oracle active-kink condition]\label{def:oak}
The \emph{oracle active-kink condition} (OAK) holds if there exist a
minimizing continuous-dual multiplier $\mu^*$, a coordinate
$a\in\{1,\ldots,K\}$, a measurable set
$\mathcal B\subset\mathbb R_+^{K-1}$ with
$\pr_0(\Lambda_{-a}\in\mathcal B)>0$, and measurable functions
$q:\mathcal B\to(\lambda_-,\lambda_+)$ and $\Delta:\mathcal B\to(0,\infty)$
such that, at almost every $z\in\mathcal B$ under the null distribution of
$\Lambda_{-a}$, the convex piecewise-affine section
\[
 f_z(x)=F_{\mu^*}(z_1,\ldots,z_{a-1},x,z_{a+1},\ldots,z_K)
\]
has an upward derivative jump of at least $\Delta(z)$ at $q(z)$.  Any
$(\mu^*,a,\mathcal B,q,\Delta)$ realizing this is a \emph{witness} for OAK.
\end{definition}

In words, OAK asks the dual integrand,
read along one likelihood-ratio coordinate, to have a kink (an upward jump in
derivative) at an interior point, on a positive-probability set of the
remaining coordinates.  A crossing with unequal slopes in coordinate $a$, at a point
$v^\circ\in(\lambda_-,\lambda_+)^K$ with distinct coordinates, where two
positive branches uniquely dominate and the action above the crossing
includes coordinate $a$ while the action below excludes it, supplies a
witness on a neighborhood.

A kink of this kind is guaranteed whenever the objective charges the
configuration in which every hypothesis is false.

\begin{lemma}[OAK at objectives charging the full alternative]\label{lem:oak_auto}
Suppose Assumptions~\ref{ass:model}--\ref{ass:reg} hold, $K\ge2$ and
$w(K)>0$.  Then OAK holds, and every minimizing continuous-dual multiplier
$\mu^*$ belongs to a witness.
\end{lemma}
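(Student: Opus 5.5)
The plan is to fix an arbitrary minimizer $\mu^*$ of $\mathcal D$ and build a witness for it directly. Each branch $\psi_l^{\mu^*}$ is affine in each coordinate on a fixed-ordering region, so every section $f_z$ is a maximum of finitely many functions that are affine on each ordering interval of $x$. Such a section automatically has an \emph{upward} derivative jump wherever the uniquely maximizing branch changes to one with a different $x$-slope. So the lemma reduces to two things, for $z$ in a set of positive null probability:
\begin{itemize}
\item showing that the active branch changes at an interior point $q(z)\in(\lambda_-,\lambda_+)$;
\item showing that the two branches meeting there have unequal slopes in $x$.
\end{itemize}
I take $a=1$; by exchangeability of $F_{\mu^*}$ under relabeling, this loses no generality.

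First I write out $A_l$ and $B_{j,l}$ explicitly on the ordering region where $x$ is largest and on the region where $x$ is smallest. Under the model, the density of $\Lambda$ at configuration $h$ relative to the global null is $\prod_{k:h_k=1}v_k$. Hence $A_l$ is a combination, with weights $w(\gamma)$, of elementary-symmetric-type products of degree $\gamma$, and the reject-all term carries $w(K)\prod_k v_k$. By contrast, $B_{j,l}$ with $j\ge1$ true hypotheses involves only products of degree at most $K-j<K$.

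Step one uses this degree comparison. For $z$ in a small box near $(\lambda_+,\ldots,\lambda_+)$ and $x$ near $\lambda_+$, the reject-all branch $\psi_K^{\mu^*}$ uniquely dominates. The reason is that $w(K)>0$ and $\lambda_+>1$, because $g_s$ is strictly decreasing with integral one. This action contains coordinate $a$, and its $x$-slope contains the strictly positive term $w(K)K^{-1}\prod_{k\ne a}z_k$ together with the contributions from the other $\gamma$.

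Step two treats the same $z$ with $x$ near $\lambda_-$, where $\lambda_-<1$. Here I want to show that the maximizing action is to reject the top $K-1$, which excludes $a$. I would compare $\psi_K^{\mu^*}$ with $\psi_{K-1}^{\mu^*}$; their difference is the marginal value of adding coordinate $a$. This difference is affine in $x$: it is a power gain proportional to $x$ times products in $z$, minus error terms. One of those error terms, from the configuration with exactly one true hypothesis (namely $a$), does not involve $x$ and carries multiplier $\mu^*_1$.

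If $\mu^*_1=0$, I would derive a contradiction with minimality of $\mathcal D$. Complementary slackness then allows the primal optimum to reject all hypotheses whenever the other branches are dominated. Near the top corner this produces a configuration with one true hypothesis whose FWER is controlled by no active constraint, and dual optimality would then force the oracle power to $1$. That contradicts $\pi^*_w(\alpha)<1$, which I would justify via Proposition~\ref{prop:distribution}(b)-type reasoning, since FWER at level $\alpha<1$ forbids rejecting with probability one.

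With $\mu^*_1>0$, the constant error term makes the difference negative at $x=\lambda_-$ for $z$ in the box. Shrinking the box keeps both endpoint claims, so by continuity and monotonicity in $x$ the active branch switches at some interior $q(z)$. The slopes there differ, because the branch including $a$ has $x$-slope exceeding that of the branch excluding $a$ by the positive $w(K)$ term; this gives $\Delta(z)>0$.

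Step three handles measurability and uniqueness. I would restrict $\mathcal B$ to $z$ with distinct coordinates in the box, which has positive null probability since $\Lambda_{-a}$ has a continuous law with full support on $[\lambda_-,\lambda_+]^{K-1}$ under Assumption~\ref{ass:reg}. I then define $q(z)$ as the first crossing point, a root of an affine function whose coefficients are continuous in $z$, and $\Delta(z)$ as the slope difference, which is measurable and positive. This realizes the local-crossing sufficient condition stated after Definition~\ref{def:oak}. If two branches tie at the crossing, I would shrink $\mathcal B$ to avoid the tie, which has null measure by the polynomial form.

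I expect step two to be the main obstacle. Ruling out the case that reject-all stays active down to $x=\lambda_-$ requires a clean argument that $\mu^*_1>0$, or more generally that some $\mu^*_j$ charging the removal of $a$ is positive. This must hold for \emph{every} minimizer, not just a convenient one. I would first try the direct perturbation argument: if the relevant multipliers vanished, lowering them further is impossible, and the subgradient condition of $\mathcal D$ at $\mu^*$ would assert FWER slack. I would then pair this slack with the strictly positive marginal gain from $w(K)$ to exhibit a direction decreasing $\mathcal D$, contradicting minimality.
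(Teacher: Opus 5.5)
You have a genuine gap. Your outline is sound: on a fixed ordering interval each section is a maximum of affine functions, so an interior point where $\psi_K^{\mu^*}$ stops being maximal is an upward kink. But the two endpoint claims that feed it are unproved, and one is false as stated.

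On the region where $x$ is the smallest coordinate, $\psi_K^{\mu}-\psi_{K-1}^{\mu}=A_K-A_{K-1}-\mu_1\prod_{k\ne a}z_k$, and $A_K-A_{K-1}$ is $x$ times a positive function of $z$. At the point prior the difference is $\prod_{k\ne a}z_k\,(x/K-\mu_1)$.

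\emph{Step one.} That reject-all wins near $x\approx\min_k z_k\approx\lambda_+$ is the nontrivial inequality $K\mu_1^*<\lambda_+$, together with $\psi_K^{\mu^*}>0$ there. Your degree comparison cannot deliver this. Every likelihood ratio lies in the bounded interval $[\lambda_-,\lambda_+]$, while the multipliers are only bounded by $\|\mu^*\|_1\le1/\alpha$, so degrees say nothing and $\lambda_+>1$ is irrelevant.

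\emph{Step two.} Here $\mu_1^*>0$ does not make the difference negative at $x=\lambda_-$. The difference there is $\prod_{k\ne a}z_k(\lambda_-/K-\mu_1^*)$, and the regularity assumption allows $\lambda_->0$ (e.g.\ $g_s(u)=1.1-0.2u$), so you need $\mu_1^*>\lambda_-/K$. Your argument for $\mu_1^*>0$ also fails. A zero multiplier does not leave the one-true-null constraint uncontrolled, since the optimal primal policy still satisfies it. So nothing forces the oracle power to one, and there is no feasible direction that lowers $\mu_1$ below zero. If $\mu_1^*=0$, then $\psi_K^{\mu^*}\ge\psi_{K-1}^{\mu^*}$ everywhere and your crossing does not exist. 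You have not excluded this for general $w$ with $w(K)>0$; nothing in the dual structure forces $\mu_1^*>0$, as the paper's weak-signal example at $w(1)=1$ shows, where the unique minimizer is $(0,\tfrac13,0)$.

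\emph{The crossing itself.} You identify the takeover branch as $\psi_{K-1}$ and invoke the two-positive-branch criterion. That criterion needs positivity at the crossing, which for $K=2$ is the paper's extra hypothesis $2\mu_1^*+\mu_2^*/\mu_1^*<\lambda_+$. This part is repairable: take $q(z)$ to be the point where $\psi_K^{\mu^*}$ ceases to be maximal, whichever branch (including $0$) takes over.

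The paper avoids any a priori sign or size information about $\mu_1^*$ by using the boundary of $\{F_{\mu^*}=0\}$ rather than a branch-to-branch crossing:
\begin{enumerate}
\item $w(K)>0$ makes every branch and every branch difference a nonzero polynomial, so ties are null.
\item Raising $\mu_K$, whose coefficient is $B_{K,l}=1$ for all $l\ge1$, shows $P_0(F_{\mu^*}>0)<1$.
\item Lowering each $\mu_j$, together with $F_0\ge w(K)e_K>0$, shows $P_0(F_{\mu^*}>0)>0$.
\item A zero--one fact for product measures yields a coordinate $a$ and a positive-probability set of $z$ whose sections of $\{F_{\mu^*}>0\}$ have probability strictly between $0$ and $1$.
\item A nonnegative convex piecewise-affine section that leaves zero at an interior point kinks upward there.
\end{enumerate}
To rescue your route you would need an analogous global argument: that for a positive-probability set of $z$, $\psi_K^{\mu^*}$ is maximal on part of the section but not on all of it. That is essentially the paper's Steps~2--3.
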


The kink is easiest to see with one hypothesis.  There
$F_\mu(v)=(v-\mu)_+=\max\{v-\mu,0\}$,
the oracle rejects when $\Lambda$ exceeds the null upper $\alpha$-quantile
$\mu^*$, and $F_{\mu^*}$ bends exactly at that threshold: it is zero where
the oracle accepts and rises where it rejects.  With several hypotheses the
same picture holds along each coordinate.  The set where $F_{\mu^*}$ vanishes
is where the oracle rejects nothing, and its boundary, read along one
likelihood ratio with the others held fixed, is where the kink sits.  The
lemma says that when the objective rewards rejecting all $K$ false
hypotheses, that is, when $w(K)>0$, the oracle can neither reject something
everywhere nor reject nothing everywhere, so that boundary is crossed on a
set of positive probability, and a convex piecewise-linear function that is
zero on one side of a point and positive on the other bends there.  The
supplement gives the proof.  The condition $w(K)>0$ cannot be dropped: with
$K=3$, $w(1)=1$, and weak signals at every site, $\lambda_+\le2\lambda_-$,
as for $g_s(u)=1.1-0.2u$ at two sites, an oracle can ignore the data and
reject all three hypotheses with probability $\alpha$, so no report,
however coarse, loses any power.

A report replaces each likelihood ratio by its average over the reported
cell, and what a report loses is how much this averaging lowers the dual
objective.  Precisely, for a report design $T$ let $\mathcal G$ be the
sigma-field generated by the reports, and let
$\bar\Lambda=\E_0(\Lambda\mid\mathcal G)$ be the vector of cell-averaged
likelihood ratios, which is the likelihood-ratio vector of the reports
themselves.  Because $F_\mu$ is convex in each coordinate and the likelihood
ratios are conditionally independent given the reports, averaging can only
lower it, $\E_0\{F_\mu(\Lambda)\}\ge\E_0\{F_\mu(\bar\Lambda)\}$ for
every $\mu$, and the drop measures what the report loses at that multiplier.
Evaluating the drop at the oracle's multiplier $\mu^*$ gives a lower bound
on the fixed-design gap, and at the report problem's own minimizing
multiplier $\bar\mu$ an upper bound:
\begin{equation}\label{eq:dual_pinch}
 \E_0\{F_{\mu^*}(\Lambda)-F_{\mu^*}(\bar\Lambda)\}
 \le d_w(T)\le
 \E_0\{F_{\bar\mu}(\Lambda)-F_{\bar\mu}(\bar\Lambda)\}.
\end{equation}
We call the left side the \emph{conditional-Jensen lower bound}.  Under the
assumptions below, OAK makes it positive for every finite componentwise
report: on a positive-probability set of remaining likelihood ratios, some
report cell charges both sides of an active kink.  With the additional
smoothness below, the equal-width report cells crossing a dual decision
boundary form a \emph{boundary layer} with probability $O(m^{-1})$ and conditional deviation
$O(m^{-1})$, giving the $O(m^{-2})$ upper bound.  The supplement proves~\eqref{eq:dual_pinch} and a matching
$m^{-2}$ lower bound over arbitrary measurable $m$-cell maps.

\begin{theorem}[Finite-report loss and the optimal componentwise resolution rate]
\label{thm:oracle_kink}
Suppose Assumptions~\ref{ass:model}--\ref{ass:reg} hold, $K\ge2$ and
$S\ge2$, and that either $w(K)>0$ or OAK holds.  Then the following
conclusions hold.
\begin{itemize}
\item[(a)] Every fixed report design whose site maps have finite ranges has
$d_w(T)>0$.  In particular, the entire threshold-report class has
a positive deficit,
$\delta_w(\mathcal T^{\mathrm{thr}})>0$.
\item[(b)] If, in addition, every $g_s\in C^1[0,1]$ and every $g_s'$ is
strictly negative on $[0,1)$, then constants $0<c<C<\infty$ and
$m_0<\infty$ exist such that
\[
 c m^{-2}\le \delta_w(\mathcal T_{\le m})
 \le \delta_w(\mathcal T^{(m)})\le C m^{-2},
 \qquad m\ge m_0.
\]
The constants $c,C,m_0$ may depend on $K,S,\alpha$, the densities, and the
selected OAK witness, but not on $m$.
Thus $m^{-2}$ is the class-optimal resolution rate over all deterministic
one-shot componentwise report maps with at most $m$ values per site, including
arbitrary measurable nonuniform cells, and uniform $m$-level reports attain that rate. For
$m=2^b$, the infimal deficit within this separable fixed-length protocol is
$\Theta(2^{-2b})$ in the number $b$ of bits sent by each site for each
hypothesis.
\end{itemize}
Moreover, for $K=2$ under the point prior, write a minimizing multiplier as
$\mu^*=(\mu_1^*,\mu_2^*)$.  The crossing of the reject-one and reject-two
branches at $2\mu_1^*$ is an explicit OAK witness whenever
\[
\lambda_-<2\mu_1^*<\lambda_+,
 \qquad
 2\mu_1^*+\frac{\mu_2^*}{\mu_1^*}<\lambda_+.
\]
For $S$ homogeneous $\mathrm{Beta}(1,\theta)$ sites with $\theta>1$,
$\lambda_-=0$ and $\lambda_+=\theta^S$.
\end{theorem}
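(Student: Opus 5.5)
The plan is to work throughout with the dual pinch~\eqref{eq:dual_pinch}, taking as given the supplement's strong duality, attainment, and the identification of $\bar\Lambda=\E_0(\Lambda\mid\mathcal G)$ as the likelihood-ratio vector of the reports. When $w(K)>0$, Lemma~\ref{lem:oak_auto} supplies an OAK witness $(\mu^*,a,\mathcal B,q,\Delta)$; otherwise OAK is assumed, so in both cases we fix a witness. For part~(a) we bound the conditional-Jensen lower bound from below. Conditioning on the reports of every coordinate except $a$, and on all other coordinates through independence across hypotheses, reduces the drop to a one-dimensional quantity: the expected Jensen gap of the convex section $f_z$ applied to $\Lambda_a=\prod_s g_s(u_{s,a})$, averaged against its conditional mean given the cells. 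Two further reductions follow. By conditional Jensen in the coordinates $k\ne a$, the section can be taken with $z=\bar\Lambda_{-a}$; if needed, one instead applies Jensen in coordinate $a$ first, holding $\Lambda_{-a}$ exact, since $\E_0 F(\Lambda)\ge \E_0F(\bar\Lambda_a,\Lambda_{-a})\ge\E_0F(\bar\Lambda)$ by the tower property and coordinatewise convexity. The second ordering is the one to use: the middle term has the exact $\Lambda_{-a}$, so the witness set $\mathcal B$ applies directly, and the first inequality alone is already a positive lower bound.

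For fixed $z\in\mathcal B$, the Jensen gap of $f_z$ is at least $\Delta(z)$ times the expected one-sided deviation across the kink. Concretely, if a cell $C$ of the product partition in coordinate $a$ (the cells $T_s^{-1}(\cdot)$ for $s=1,\dots,S$) gives $\Lambda_a$ positive conditional probability on both sides of $q(z)$, then the gap on that cell is at least $\Delta(z)\,\E\{(\Lambda_a-q)_+\mid C\}\,\pr(\Lambda_a<q\mid C)$, up to a constant. With finitely many cells, and since $S\ge2$ together with Assumption~\ref{ass:reg} makes $\Lambda_a$ continuous and non-constant on any positive-measure product cell, I would argue that some cell straddles $q(z)\in(\lambda_-,\lambda_+)$. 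The argument is that the cells cover $[0,1]^S$, the level set $\{\prod g_s=q\}$ is a continuous hypersurface meeting the interior, and a finite partition of a connected set cannot have each cell essentially on one side of that surface. This last point needs care with measurable cells, and I would handle it with Lebesgue density points. Integrating over $\mathcal B$ gives $d_w(T)>0$. For the threshold class, the gap $d_w(T_t)$ is lower semicontinuous in $t$, because the report power is continuous in $t$ on compact $[0,1]^S$, and it is positive at every $t$, including degenerate thresholds. Taking the minimum over the compact set yields $\delta_w(\mathcal T^{\mathrm{thr}})>0$.

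For part~(b), the upper bound uses the right-hand side of~\eqref{eq:dual_pinch} with uniform cells. Rather than working at $\bar\mu$, I would instead bound $\sup$ over a compact range of $\mu$, which is legitimate because $\bar\mu$ stays bounded. $F_\mu$ is piecewise affine in $v$ with $O(1)$ Lipschitz constant, and off the kink set it is affine in each coordinate, where cell averaging produces zero error to first order. On a uniform cell the variation of $g_s(u)$ is $O(m^{-1})$ by the $C^1$ assumption. The drop is therefore bounded by the Lipschitz constant times the deviation, $O(m^{-1})$, times the probability that a cell meets a kink surface. Because $g_s'<0$, that probability is $O(m^{-1})$, by a coarea argument in $u$. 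This gives $Cm^{-2}$. A cross-coordinate correction also appears, since $F$ is not separable; a second-order term in the deviations controls it, and it is also $O(m^{-2})$.

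The lower bound over arbitrary measurable $m$-cell maps is the main obstacle. I would refine the part~(a) estimate quantitatively. For each site $s$ and each $z$, the $\le m$ cells of $T_s$ partition $[0,1]$, so some pressure is unavoidable: by a variance-type inequality, the expected conditional variance of $g_s(u)$ restricted to a neighborhood of the preimage of the kink is at least of order $m^{-2}$. This is the classical high-rate quantization lower bound, $\inf_{\text{partitions}}\E\,\mathrm{Var}(g(U)\mid\text{cell})\gtrsim m^{-2}$ on any interval where $|g'|$ is bounded below, and it holds for arbitrary measurable cells by a rearrangement argument. Two inputs then combine. First, the kink location in $u_{s,a}$ moves with the other sites' values, which gives a positive-density spread of $q$ relative to the cells. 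Second, the Jensen gap near a kink is at least $\Delta\cdot$ deviation $\cdot$ straddle probability. Together these should yield $cm^{-2}$ after integrating over the other sites and over $\mathcal B$, with the constants depending on the witness. The explicit $K=2$ witness is a direct computation: with $\psi_1=v_{(1)}-\mu_1(\ldots)$ and $\psi_2$ the reject-two branch, one checks that they cross at $v_a=2\mu_1^*$ with unequal slopes and are both positive and dominant under the stated inequalities. For the Beta case, $g(u)=\theta(1-u)^{\theta-1}$ gives $\lambda_-=0$ and $\lambda_+=\theta^S$.
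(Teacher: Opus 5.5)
Your overall architecture is the paper's: the dual pinch, Jensen in coordinate $a$ first with $\Lambda_{-a}$ held exact, a straddling own-coordinate cell for (a), compactness for the threshold class, a boundary layer for the upper bound, and a kink-location spread plus a quantization bound for the lower bound.

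\textbf{Part (a): the straddle step.} The step that carries part (a) rests on a false principle. A finite measurable partition of the connected cube can have every cell on one side of $\{\Lambda=q\}$. The two cells $\{\Lambda>q\}$ and $\{\Lambda\le q\}$ do this, and for $S=1$ a single threshold at $g_1^{-1}(q)$ does it componentwise. So connectedness, with or without density points, and non-constancy of $\Lambda$ on each cell cannot produce a straddling cell.

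What does produce one is the product structure together with $S\ge2$. If $\{\Lambda>q\}$ agreed a.e.\ with a finite union $\bigcup_{i\le N}A_i\times B_i$, its section at $u_1$ would depend only on the pattern $(\mathbf 1\{u_1\in A_i\})_i$. It would then be constant on at most $2^N$ atoms of $[0,1]$. But $\varphi_q(u_1)=P_0\{\prod_{s\ge2}g_s(U_s)>q/g_1(u_1)\}$ is strictly decreasing on a nonempty open interval, which is a contradiction (Lemma~\ref{lem:oakk_product_cut}). You need this argument or an equivalent.

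Separately, your per-cell bound $\Delta\,\E\{(\Lambda_a-q)_+\mid C\}\,\pr(\Lambda_a<q\mid C)$ is not valid even up to constants. Put mass $1-p$ at $q-\epsilon$ and $p$ at $q+1$: the Jensen gap of $\Delta(x-q)_+$ is $\Delta\epsilon(1-p)$, while your bound is $\Delta p(1-p)$. The correct statement is $\Delta\min\{\E(X-q)_+,\E(q-X)_+\}$, and that suffices for positivity.

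\textbf{Part (b), upper bound: the boundary-layer count.} This count is not justified by ``a coarea argument in $u$ because $g_s'<0$''. Coarea needs a lower bound on the gradients of the constituent functions on their zero sets, and nothing supplies one. The decision set is made of zero sets of branches, branch differences and rank ties. These can be singular, and a branch difference can vanish identically on an order region, as at the weak-signal multiplier where $\psi_3^\mu\equiv0$; there every cell ``meets'' the surface. You also need the count uniformly in $\mu$.

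The paper needs no regularity. In the coordinates $x_{s,k}=g_s(u_{s,k})$ the grid is still a product partition and every constituent is multilinear. A multilinear function changes sign at most once along each grid line, so at most $d\,2^{d-1}(m+1)^{d-1}$ cells ($d=SK$) are strictly crossed, uniformly in the coefficients. On every other cell a single branch is maximal, so conditional multilinear cancellation gives exactly zero. There is no second-order cross-coordinate term.

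\textbf{Part (b), lower bound: the missing link.} You have the right ingredients but not the link between them; two steps are missing.
\begin{itemize}
\item Refine the report sigma-field to reveal everything except site $1$'s cell for hypothesis $a$. This can only lower the conditional-Jensen gap, and it is what lets you treat $Y=\prod_{s\ge2}g_s(u_{s,a})$ as exact. Only then is the kink in $x=g_1(u_{1,a})$ located at $r=q/Y$, with a change of variable $u_2\mapsto r$ whose Jacobian is bounded below.
\item Use the exact identity $\int_I\Delta_j(r)\,\mathrm dr=\tfrac12\int_{A_j}(x-\bar x_j)^2\,\mathrm dP_X$, where $\Delta_j(r)=\int_{A_j}(x-r)_+\,\mathrm dP_X-p_j(\bar x_j-r)_+$. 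This converts the integrated hinge regret into within-cell squared error.
\end{itemize}
After these two steps, the rearrangement bound $f_-|A_j|^3/12$ and $\sum_j|A_j|^3\ge|I|^3/m^2$ finish the proof for arbitrary measurable cells. ``Deviation times straddle probability'' does not by itself yield that variance.
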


For every objective with $w(K)>0$, in particular at the point prior used
for the displayed models, the theorem holds with no condition beyond its
standing assumptions (Lemma~\ref{lem:oak_auto}).  The explicit witness at
the crossing of the reject-one and
reject-two branches is what the numerical constants use, and the supplement
verifies it without trusting an optimizer.
For $K=S=2$ homogeneous $\mathrm{Beta}(1,\theta)$ sites the two inequalities
above hold in closed form, with no numerical optimizer, on an explicit open
region of $(\alpha,\theta)$ containing $[0.03512,0.05384]\times\{2\}$.  For
the two displayed models, $\theta\in\{2,3\}$ at $\alpha=0.05$, outward
integer arithmetic certifies strict derivative signs on the faces of a box.
Convexity then places every dual minimizer inside it, and the witness
inequalities hold throughout the box.  The supplement supplies these
certificates at both the point and flat priors.

The prior enters the dual only through the objective coefficients
$a_{w,k}=\sum_\gamma w(\gamma)a_{\gamma,k}$, where $a_{\gamma,k}$ is the
per-position objective coefficient for configurations with $\gamma$
alternatives.  Partial sums of $a_{w,k}$ over the top $l$ positions give
$A_l$, up to a combinatorial factor; the error coefficients $B_{j,l}$
are unchanged.  Each $a_{\gamma,k}$ is, up to a combinatorial factor,
$\Lambda_k$ times an elementary symmetric polynomial in the remaining likelihood
ratios~\citep{dubey2026esp}.  Elementary symmetric polynomials are
multilinear, so every fixed-action branch remains affine in each likelihood-ratio
coordinate for any $w$.  Thus the conditional Jensen bounds in~\eqref{eq:dual_pinch},
multilinear cancellation off the dual decision surface, and boundary-layer
count extend to general priors under their stated conditions.

The prior does change the kink locations and the minimizing multiplier.
Lemma~\ref{lem:oak_auto} guarantees OAK whenever $w(K)>0$; the weak-signal
example following it shows that OAK can fail when $w(K)=0$.
\S\ref{sec:numerics} and the supplement track the explicit witness as the
prior moves towards sparse configurations.

The $m^{-2}$ rate arises in a boundary layer around the dual decision
surface; optimizing over nonuniform or disconnected cells cannot improve
the order attained by equal-width cells.  Away
from a dual decision surface, conditional averaging over a report cell leaves the multilinear dual
integrand unchanged; a layer of probability $O(m^{-1})$ crosses that surface
and incurs deviation $O(m^{-1})$.  For the converse, revealing all but one
site's likelihood ratio reduces the OAK gap to a weighted integral of the
loss from averaging the kink function $x\mapsto(x-r)_+$ within a report cell.
Integrating this loss over the kink location $r$ gives half the within-cell squared error,
whose optimum over arbitrary measurable $m$-cell partitions is of order
$m^{-2}$.  Losslessness under sufficiency is a Blackwell
statement~\citep{blackwell1953equivalent}; at $K=1$ strictness parallels data
processing for quantized likelihood-ratio
tests~\citep{tenney1981detection,tsitsiklis1993extremal}, and second-order
fine-quantization loss is classical~\citep{poor1988fine,berlinet2006asymptotic}.

\section{Aggregation under a fixed report distribution}\label{sec:method}

The two rules of this section read the same reports but require different
inputs.  The null report distribution $\tilde\nu_{s,0}$ is obtained by applying
$T_s$ to a uniform $p$-value, so it is known exactly and does not depend on
$g_s$.  Applying $T_s$ to a $p$-value with density $g_s$ gives the alternative
report distribution $\tilde\nu_{s,1}$.  The model-aware rule uses both, enforcing the strong-FWER
constraints under the full induced distribution, and is optimal for fixed maps.  The
null-calibrated marginal rule uses only the first, gaining robustness without
general optimality.

\subsection{Model-aware aggregation: the centralized program under the report distribution}\label{sec:aware}

For fixed report maps the reports form another exchangeable two-group
problem, whose per-hypothesis likelihood ratio is the product over sites of
the report likelihood ratios.  The report likelihood ratio at a site, and its product across sites, are
\[
\ell_s \;=\; \frac{\mathrm{d}\tilde\nu_{s,1}}{\mathrm{d}\tilde\nu_{s,0}}, \qquad
L_k \;=\; \prod_{s=1}^S \ell_s(y_{s,k}),
\]
with $\ell_s(c)=\tilde\nu_{s,1}(\{c\})/\tilde\nu_{s,0}(\{c\})$ for a finite
report value $c$ of positive null mass.
Values outside the null support are immaterial because
$\tilde\nu_{s,1}\ll\tilde\nu_{s,0}$.  Complete pseudocode for threshold
reports is given in the supplementary material.

\begin{theorem}[Finite-sample validity and optimality for fixed reports]\label{thm:aggregator}
Suppose Assumptions~\ref{ass:model}--\ref{ass:known} hold and the sites use
any fixed report maps with the same $T_s$ across hypotheses. Let the center
apply the likelihood-ratio-ordered optimal policy of the report-induced
two-group model with per-coordinate statistic $L_k$, computed at level
$\alpha$ with both likelihood-ratio tie-breaking and any action-boundary
randomization supplied by the primal program. Then:
\begin{itemize}
\item[(a)] it controls the family-wise error rate~\eqref{eq:fwer} in finite
samples, because the program imposes its constraints, at every
configuration, under the true report distribution $(\tilde\nu_0, \tilde\nu_1)$,
$\tilde\nu_j = \otimes_s \tilde\nu_{s,j}$, both of which are known
(Assumption~\ref{ass:known});
\item[(b)] it is most powerful for $\Pi_w$ among all procedures, randomized
or not, satisfying~\eqref{eq:fwer} and measurable with respect to the report
array, for the given reports;
\item[(c)] its nonnegative strong-FWER coefficients have the
prefix-product and tail elementary-symmetric-polynomial form derived in the
supplement; for a finite report distribution, the primal and dual are finite linear
programs, the dual certifies the value, and the primal supplies the
cellwise action probabilities.
\end{itemize}
The same conclusions hold for any common specified dominated pair
$(\nu_0,\nu_1)$ with per-hypothesis independence across $k$, using
$\mathrm d\nu_1/\mathrm d\nu_0$; this is the result for a \emph{working
distribution}, a specified pair that the center treats as the true one, and it
is what the plug-in rule of \S\ref{sec:plugin} applies conditionally on the
training data.
\end{theorem}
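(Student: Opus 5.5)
The plan is to reduce Theorem~\ref{thm:aggregator} to Proposition~\ref{prop:oracle} applied to a new exchangeable two-group model: the one induced by the reports. I first check that this model satisfies the structural hypotheses the centralized theory uses. Under Assumption~\ref{ass:model}, the reports $y_{s,k}=T_s(u_{s,k})$ are, conditionally on $h$, mutually independent over $(s,k)$. Each report has law $\tilde\nu_{s,0}$ if $h_k=0$ and $\tilde\nu_{s,1}$ if $h_k=1$, and this law does not depend on $k$ because $T_s$ is shared across hypotheses. Hence the per-hypothesis vector $(y_{1,k},\ldots,y_{S,k})$ has law $\tilde\nu_0=\otimes_s\tilde\nu_{s,0}$ or $\tilde\nu_1=\otimes_s\tilde\nu_{s,1}$. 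Since $g_s$ is a density, $\tilde\nu_{s,1}\ll\tilde\nu_{s,0}$, and the Radon--Nikodym derivative of $\tilde\nu_1$ with respect to $\tilde\nu_0$ is $L_k=\prod_s\ell_s(y_{s,k})$. Both measures are computable from $T_s$ and $g_s$, which are known by Assumption~\ref{ass:known}. The report problem is therefore an instance of the general dominated-pair problem $(\nu_0,\nu_1)$ in the final sentence of the theorem. I will prove that general version and read off (a)--(c) as a special case.

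For the general pair, I write the strong-FWER problem as an infinite-dimensional linear program. The decision variable is a randomized rule $\phi(\mathbf y)\in[0,1]^{2^K}$, a probability vector over rejection sets. The objective is $\Pi_w$, which is linear in $\phi$. There is one linear constraint per configuration $h$, since each $\fwer_h$ is an expectation of $\phi$ under the product measure $\otimes_k\nu_{h_k}$. Every report-measurable randomized procedure is feasible for this program exactly when it satisfies~\eqref{eq:fwer}.

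The proof then has three steps.
\begin{itemize}
\item[(i)] \emph{Symmetrization.} Averaging a feasible rule over relabelings of the hypotheses preserves feasibility and $\Pi_w$, so an optimal rule may be taken equivariant. This is the supplement's claim, and it uses exchangeability, which holds because $\nu_0$ and $\nu_1$ do not depend on $k$.
\item[(ii)] \emph{Sufficiency.} Conditioning the rule on $(L_1,\ldots,L_K)$ preserves all constraint and objective values, since each is an expectation whose density with respect to $\nu_0^{\otimes K}$ is a function of $L$. A Neyman--Pearson/rearrangement exchange inside each action then shows that among equivariant rules it suffices to reject the top-ranked $L_k$. This yields the likelihood-ratio-ordered form.
\item[(iii)] \emph{Existence and strong duality.} These are quoted from Proposition~\ref{prop:oracle}, whose supplement proves them ``for arbitrary dominated report distributions''. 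The policy the center applies is the primal optimizer, including tie-breaking and boundary randomization.
\end{itemize}

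Part (a) then follows immediately. The optimizer is feasible in a program whose constraints are the exact FWER expressions under the true report law, for every $h$. The bound is exact, not asymptotic, because no approximation enters. Part (b) is optimality of the primal solution over the full feasible set, which contains every report-measurable valid procedure, randomized or not. Step (iii) guarantees the supremum is attained by the stated policy.

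For (c), I expand each FWER constraint and each $\Pi_\gamma$ along the ordering of the $L_k$. Conditioning on which positions are true nulls produces, after summing over configurations with $j$ true hypotheses, prefix products of the $L$'s and elementary symmetric polynomials of the tail. This matches the centralized derivation in~\citet{dubey2026esp}, with $L$ in place of $\Lambda$, and the coefficients are nonnegative because the $L_k$ are. When each $\mathcal M_s$ is finite, $L$ takes finitely many values, so the variables are finitely many cellwise action probabilities and the program is a finite LP. LP duality then certifies the value.

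The main obstacle is step (ii) with atoms. Reports are discrete, so ties among the $L_k$ and atoms at action boundaries are the typical case rather than a null event. The ordered policy must randomize both among tied hypotheses and across actions at boundary cells, and I need to check that the exchange argument still goes through. My first attempt will be to work directly in the finite LP. There, symmetrization over permutations of tied positions is an explicit average, and the ordering claim becomes a cellwise comparison of objective and constraint coefficients. In the finite case this bypasses the general measure-theoretic attainment result.
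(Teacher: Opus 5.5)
Your route is the paper's. You identify the report experiment as an exchangeable two-group model with $\mathrm d\tilde\nu_1/\mathrm d\tilde\nu_0=L$, pass to $(L_1,\ldots,L_K)$ by sufficiency, symmetrize, and rearrange to likelihood-ratio-ordered actions. You then import attainment and strong duality from Lemma~\ref{lem:ordered_duality} and read off the elementary-symmetric-polynomial coefficients.

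The gap is in the step you flag yourself. You handle ties only inside the finite linear program, but (a) and (b) are claimed for \emph{any} fixed report maps, and the working-distribution version for any dominated pair. Consider a map that discloses $u$ exactly on part of $[0,1]$ and lumps the rest into one cell, or a map with countably many cells. Either gives an $L$ with atoms and no finite program, so your plan does not cover it.

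The paper's tie argument needs no finiteness. After symmetrization, the action kernel satisfies $\mathcal K_L(\sigma A)=\mathcal K_L(A)$ for every permutation $\sigma$ that fixes the realized vector $L$. So, conditional on how many hypotheses are taken from a tied block, the selected subset is uniform, and independent $\xi_k$ implement exactly this. The rearrangement of \citet{rosset2022optimal} then only has to remove strict inversions, and the resulting nested top-$l$ class is the primal class $r_l(v)$.

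The same uniform within-tie selection is also what part (a) needs for the ordered program. That program imposes one summed constraint per number $j$ of true hypotheses. This is equivalent to strong control only because the policy is symmetric. With deterministic tie-breaking, say by index, configurations with the same $j$ can have different error rates.

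Two smaller points. First, in (ii), state the content of the exchange explicitly. Moving a rejection from a smaller to a larger likelihood ratio must weakly decrease \emph{every} summed error coefficient, including those at mixed configurations, while not decreasing the objective. This is the arrangement-increasing (pairwise concordance) property, which here is just $\ell'\ge\ell$; equivalently, it is the swap identity $p_A(x-y)e_d(B)\ge0$ in the supplement. A Neyman--Pearson comparison alone does not control the mixed-configuration constraints.

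Second, in (c), nonnegativity of $B_{j,l}=e_{K-j}-p_l\,e_{K-l-j}(\{l+1,\ldots,K\})$ does not follow from $L_k\ge0$ alone, because $B_{j,l}$ is a difference. It holds because every monomial of the subtracted term appears exactly once in $e_{K-j}$. Equivalently, $B_{j,l}$ is a sum, over the configurations that make a false rejection, of the nonnegative densities $\prod_{k\notin\mathcal H_0}v_k$.
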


\begin{remark}[Solving the dual on a coarse report distribution]\label{rem:solver}
Action-boundary randomization is distinct from likelihood-ratio tie-breaking.
Both are included in the exact program.  A deterministic conservative core
may reject only hypotheses assigned conditional rejection probability one;
it remains valid but can lose the expected fraction of randomized
rejections.  The supplement gives the construction and solver diagnostics.
\end{remark}

\begin{remark}[Scope of optimality]\label{rem:scope}
Optimality is conditional on the report maps; optimizing over a resolution class is
the outer problem in~\eqref{eq:deficit}.  Maps may be selected from the known
model or from training data independent of the tested array, provided their
conditional report distributions are known.
\end{remark}

The elementary-symmetric-polynomial identities require the exchangeable
two-group structure, not continuity; primal randomization handles atoms.

\subsection{Null-calibrated marginal aggregation}\label{sec:marginal}

The reports support a family-wise guarantee that needs no model for the
alternative, because the null half of the report distribution is known
without any model.  This matters when the local alternative is unknown or
only estimated, when the $K$ hypotheses are dependent, or when a default
requiring no optimization is wanted.
The guarantee also permits conservative local tests: a null $p$-value $p$
is \emph{superuniform} if $\pr(p\le x)\le x$ for every $x\in[0,1]$.

Fix the report maps and suppose each map is ordered: its cells are intervals
of $[0,1]$, ordered by position from smallest to largest $p$-value.  Write
$\preceq_s$ for this significance order, so $1\preceq_s0$ for threshold
reports; uniform $m$-level and full reports use the usual order, up to the
relabeling in Definition~\ref{def:classes}.  Let
$\hat p_s(c)=\tilde\nu_{s,0}(\{c':c'\preceq_s c\})$ be the null probability
of reporting $c$ or a more significant cell, computable from $T_s$ alone.
The center forms
\begin{equation}\label{eq:fisher}
 W_k \;=\; -\sum_{s=1}^S \log \hat p_s(y_{s,k}),
\end{equation}
Fisher's combination of the reports' own $p$-values.  Let $W^0$ have the
reference distribution obtained by applying the report maps to independent
uniform $p$-values, and put
\[
 \bar F_0(w)=\pr_0(W^0\ge w),\qquad
 p_k^{\mathrm{nc}}=\bar F_0(W_k).
\]
The center rejects $H_k$ when $p_k^{\mathrm{nc}}\le\beta$, where
$\beta=\alpha/K$.  This definition includes all mass at an observed atom.
For finite reports it is equivalently $W_k\ge\tau_\beta$, with
$\tau_\beta$ the smallest point in the support of $W^0$ whose upper tail
is at most $\beta$, or $+\infty$ if there is no such point.  For a threshold report at
threshold $t_s$ this reduces to a weighted count: $\hat p_s$ equals $t_s$ on a local rejection and $1$
otherwise, so $W_k$ increases with every local rejection and its null
distribution is the convolution of $S$ independent two-point laws; under a
common threshold it is a monotone function of the binomial count of local
rejections.  For the full report $\hat p_s=u_{s,k}$ and $2W_k$ is
exactly $\chi^2_{2S}$ under the null.

Holm's step-down correction instead tests the smallest remaining combined
$p$-value at level $\alpha$ divided by the number of remaining hypotheses,
continuing until the first nonrejection~\citep{holm1979simple}.

\begin{proposition}[Null-calibrated marginal aggregation]\label{prop:marginal}
Suppose, in place of the independence across hypotheses in
Assumption~\ref{ass:model}, only that for each $k$ the local $p$-values
$(u_{s,k})_{s=1}^S$ are independent across sites and are superuniform
whenever $h_k=0$, the joint distribution across hypotheses being
otherwise arbitrary.  Let the report maps be ordered and fixed, and let the
center reject $H_k$ when $p_k^{\mathrm{nc}}\le\alpha/K$, with $W_k$
as in~\eqref{eq:fisher}.  Then the family-wise error rate~\eqref{eq:fwer} is
at most $\alpha$ in finite samples at every configuration.  The same bound
holds for any fixed statistic of the report vector that is non-increasing in
each $\hat p_s(y_{s,k})$, in place of~\eqref{eq:fisher} and calibrated against
its own exact null distribution, and for any
multiplicity correction valid under arbitrary dependence, such as Holm's,
in place of the Bonferroni step.  The maps and the statistic may be fixed in
advance or be measurable with respect to a training sigma-field jointly
independent of the entire tested $p$-value array.  No optimality is claimed.
\end{proposition}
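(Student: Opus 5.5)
The proof has two steps. First, each per-hypothesis combined $p$-value $p_k^{\mathrm{nc}}$ is superuniform whenever $h_k=0$, whatever the joint law across hypotheses. Second, Bonferroni (or Holm) converts these marginal guarantees into strong FWER control without using any dependence across $k$.

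\emph{Step 1.} Fix $k$ with $h_k=0$. Let $u_{s,k}$ be superuniform and independent across $s$, and let $T_s$ be ordered with interval cells.

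I would couple each $u_{s,k}$ with a uniform $U_s$, taking the $U_s$ independent across $s$, so that $u_{s,k}\ge U_s$ stochastically. Concretely, superuniformity gives $\pr(u_{s,k}\le x)\le x$, so $u_{s,k}$ stochastically dominates a uniform. Independence across sites lets the coupling be made jointly, for instance with the quantile coupling $u_{s,k}=G_s^{-1}(U_s)\ge U_s$.

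Because the cells are ordered intervals, $u\mapsto\hat p_s(T_s(u))$ is non-decreasing. Here $\hat p_s(c)$ is the null mass of cells at least as significant as $c$, which equals the upper edge of cell $c$ under the uniform law. Hence $\hat p_s(T_s(u_{s,k}))\ge\hat p_s(T_s(U_s))$ for every $s$. Any statistic non-increasing in each $\hat p_s$, in particular $W_k$ in \eqref{eq:fisher}, then satisfies $W_k\le W^0$ pointwise under the coupling, where $W^0$ is the statistic evaluated at the uniforms and so has exactly the reference law.

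Since $\bar F_0$ is non-increasing, $p_k^{\mathrm{nc}}=\bar F_0(W_k)\ge\bar F_0(W^0)$. It remains to show $\pr\{\bar F_0(W^0)\le\beta\}\le\beta$ for the upper-tail function $\bar F_0(w)=\pr(W^0\ge w)$, which includes the atom at $w$. This is the standard fact for discrete or general laws. Let $w^\ast=\inf\{w:\bar F_0(w)\le\beta\}$. Then $\{\bar F_0(W^0)\le\beta\}$ equals either $\{W^0\ge w^\ast\}$ or $\{W^0> w^\ast\}$, depending on whether the infimum is attained. The first case has probability $\bar F_0(w^\ast)\le\beta$. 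In the second, right-limits of $\bar F_0(w)$ at values above $w^\ast$ give probability at most $\beta$. I expect this careful handling of atoms, together with the claim that $\hat p_s\circ T_s$ is monotone for ordered maps, to be the main technical point; the rest is routine.

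\emph{Step 2.} Let $V$ be the event of at least one false rejection. By the union bound, $\pr_h(V)\le\sum_{k\in\mathcal H_0(h)}\pr_h(p_k^{\mathrm{nc}}\le\alpha/K)\le|\mathcal H_0(h)|\alpha/K\le\alpha$. Only the marginal law of each null $p_k^{\mathrm{nc}}$ is used, so arbitrary dependence across hypotheses is allowed.

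For Holm, I would use the standard argument. Let $k_0$ be the first true null rejected. At that stage at most $K-|\mathcal H_0(h)|$ hypotheses have been removed, so the level applied is at most $\alpha/|\mathcal H_0(h)|$. Hence $V\subseteq\bigcup_{k\in\mathcal H_0(h)}\{p_k^{\mathrm{nc}}\le\alpha/|\mathcal H_0(h)|\}$, which has probability at most $\alpha$ by Step 1 and the union bound. The same reasoning covers any correction valid for superuniform marginals under arbitrary dependence.

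\emph{Training data.} If the maps and statistic are measurable with respect to a training sigma-field independent of the tested $p$-value array, I would condition on that sigma-field. Conditionally, the maps and statistic are fixed, and the tested array keeps its distribution, so it still satisfies the hypotheses of Step 1. The conditional FWER bound $\alpha$ then integrates to the unconditional bound.
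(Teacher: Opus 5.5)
Your proposal is correct and follows the paper's proof essentially step for step: an independent quantile coupling with $u_{s,k}\ge U_s$, monotonicity of $u\mapsto\hat p_s\{T_s(u)\}$ for ordered maps (so any coordinatewise non-increasing statistic is dominated by its uniform-reference version), superuniformity of the exact-null tail $p$-value with atoms included, a union bound for Bonferroni or Holm, and conditioning on the independent training sigma-field. Your explicit handling of atoms through $w^\ast$ and your Holm argument supply details that the paper only asserts.
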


Under a true hypothesis, superuniform local $p$-values give scores
stochastically no larger than those from uniform inputs: each ordered
report score is non-increasing in its input $p$-value.  Independence across
sites preserves this ordering for their sum, so $p_k^{\mathrm{nc}}$ is
superuniform.  Within-site correlation of the $K$ endpoints is permitted.
No part of the argument uses $g_s$: the reference tail is a functional
of the report maps alone.  Control therefore survives a misspecified or
independently estimated local alternative, and it is this rule that the data
illustrations of \S\ref{sec:numerics_fets} and~\S\ref{sec:application} use.

\begin{proposition}[Optimal null calibration at a single hypothesis]\label{prop:marginalK1}
Under Assumption~\ref{ass:model}, let $K=1$, let the sites be homogeneous
with a common non-increasing $g$,
and let $T_t$ use the common threshold $t$.  Write $N$ for the number of
local rejections, $N_0\sim\operatorname{Bin}(S,t)$ for its null reference,
and $V\sim\operatorname{Unif}[0,1]$ for a variable independent of the
reports.  The rule that rejects when
\[
 p_t^{\mathrm{rand}}
 =\pr_0(N_0>N)+V\pr_0(N_0=N)\le\alpha
\]
has exact null level $\alpha$ and attains the model-aware optimum for
every fixed $t$.  Its calibration does not use $g$.  Moreover, if the
model-aware optimum over common thresholds is attained at a threshold whose
null count tail equals $\alpha$ exactly, the deterministic null-calibrated
count rule attains that optimum without randomization.
\end{proposition}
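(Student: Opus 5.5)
With $K=1$, the FWER constraint reduces to a single level-$\alpha$ constraint under the null, and the objective is power under the alternative. The proof is therefore a Neyman--Pearson argument on the report-induced experiment. For a fixed common threshold $t$, the report vector is $(y_1,\dots,y_S)$ with $y_s=\mathbf 1\{u_s\le t\}$. Under the null the $y_s$ are i.i.d.\ Bernoulli$(t)$; under the alternative they are i.i.d.\ Bernoulli$(G(t))$, where $G(t)=\int_0^t g$. Since $g$ is non-increasing, $G(t)\ge t$. The report likelihood ratio is
\[
L=\prod_s \ell(y_s)=\Bigl(\tfrac{G(t)}{t}\Bigr)^{N}\Bigl(\tfrac{1-G(t)}{1-t}\Bigr)^{S-N},
\]
so $L$ depends on the reports only through the count $N$, and it is non-decreasing in $N$. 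By Theorem~\ref{thm:aggregator}(b), the model-aware optimum at $t$ is the power of the most powerful level-$\alpha$ test based on the reports. By the Neyman--Pearson lemma, that test is the randomized likelihood-ratio test. Because $L$ is monotone in $N$, one such test is the randomized upper-tail count test: reject when $N>c$, reject with probability $\gamma$ when $N=c$, where $\pr_0(N_0>c)+\gamma\pr_0(N_0=c)=\alpha$. If $G(t)=t$, the reports carry no information. Then every level-$\alpha$ test has power $\alpha$, and the exact-level rule is trivially optimal. This degenerate case I would handle separately.

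Next I will show that the rule in the statement coincides with this count test, up to a null set. The event $\{p^{\mathrm{rand}}_t\le\alpha\}$ depends only on $(N,V)$. For each value $n$, $p^{\mathrm{rand}}$ ranges over the interval $[\pr_0(N_0>n),\pr_0(N_0\ge n)]$, and these intervals tile $[0,1]$ as $n$ decreases. By the standard randomized probability integral transform, $p^{\mathrm{rand}}$ is exactly uniform under the null, so the null level is exactly $\alpha$. The rejection event is therefore $\{N>c\}\cup\{N=c,\ V\le\gamma'\}$ for the critical count $c$ and the fraction $\gamma'$ of the $c$-interval lying below $\alpha$. This reproduces the Neyman--Pearson test with the same $c$, and $\gamma'=\gamma$ is forced by exact size. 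The calibration uses only $\operatorname{Bin}(S,t)$, so it does not use $g$. The step needing most care is the monotone-likelihood-ratio claim, together with the tie structure. When $G(t)>t$, the two ratios satisfy $G(t)/t>1>(1-G(t))/(1-t)$, so $L$ is strictly increasing in $N$. Hence the Neyman--Pearson randomization falls on a single count level $N=c$ and does not split across several $N$ values. For $t=1$ the report is constant and the rule again has power $\alpha$, which is optimal.

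For the final claim, suppose the maximum over $t$ of the model-aware optimum is attained at $t^*$, with $\pr_0(N_0\ge c)=\alpha$ for some integer $c$. Then the Neyman--Pearson test at $t^*$ needs no randomization: $\gamma=0$ at boundary count $c-1$, or equivalently full rejection at $N=c$. The deterministic rule rejects when $\pr_0(N_0\ge N)\le\alpha$, which is $p^{\mathrm{nc}}$ for the common-threshold Fisher statistic, a monotone function of $N$. Its rejection region is exactly $\{N\ge c\}$, so it equals the optimal test at $t^*$ and attains the optimum. The main obstacle throughout is bookkeeping rather than depth: matching the randomized $p$-value definition, with its strict tail plus a fraction of the atom, to the Neyman--Pearson randomization constant, and checking the edge cases $G(t)=t$ and $t=1$.
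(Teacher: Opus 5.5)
Your proposal is correct and follows essentially the paper's own argument: Theorem~\ref{thm:aggregator}(b) reduces the model-aware optimum to the Neyman--Pearson test on the binomial count, whose report likelihood ratio is non-decreasing because $G(t)\ge t$; the randomized tail is uniform because the conditional intervals $[\pr_0(N_0>n),\pr_0(N_0\ge n)]$ tile $[0,1]$; and at an exact-level optimum the boundary mass vanishes, so the deterministic rule's region $\{N\ge c\}$ is itself the optimal test. One small correction: $L$ is strictly increasing only when $t<G(t)<1$, because if $G(t)=1$ (allowed under Assumption~\ref{ass:model} when $g$ vanishes near $1$) then $L=0$ on all of $\{N<S\}$; nothing depends on strictness, though, since weak monotonicity already makes your count test satisfy the Neyman--Pearson sufficiency conditions with cutoff $L(c)$.
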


The report likelihood ratio is non-decreasing in the rejection count, so
the Neyman--Pearson rule is precisely a count test with boundary
randomization.  The deterministic rule can leave level unused at a fixed
threshold; it attains the same power at an exact-level optimum.  In the
Beta models, unanimity (rejecting when every site rejects locally) is optimal
only for the smaller site counts studied
and is eventually strictly inferior to the optimized count rule
(\S\ref{sec:numerics} and the supplement's large-$S$ result).
Under multiplicity, coupling the $K$ coordinates can increase power beyond
the marginal rule, even when each marginal count boundary is randomized.
The deterministic comparison is measured in
\S\ref{sec:numerics}.

An e-value is a nonnegative statistic with expectation at most one under
every permitted null distribution.  Such a construction gives a second route
to Proposition~\ref{prop:marginal}, replacing the exact null tail by Markov's
inequality applied to a product of calibrated local
scores~\citep{vovk2021evalues,ramdas2023game}; on the same
statistic it is never more powerful, and the supplement gives the
comparison.

\subsection{Optimal local thresholds}\label{sec:design}

For threshold reports the design reduces to the thresholds $t =
(t_1,\dots,t_S)$.  Take the unanimity rule.  Its exact null probability is $\prod_s t_s$,
so the rule is null-calibrated at level $\alpha/K$ exactly when $\prod_s t_s
\le\alpha/K$, and its power at the full alternative is then $\prod_s
G_s(t_s)$, where $G_s(t)=\int_0^t g_s(u)\,\mathrm{d}u$.  Unanimity is the
count cutoff $c=S$.

\begin{proposition}[Optimal unanimity thresholds]\label{prop:alloc}
Suppose each $g_s$ is continuous on $(0,1)$ and $G_s(t)>0$ for every
$t\in(0,1)$. Any interior maximizer of
$\prod_s G_s(t_s)$ subject to $\prod_s t_s = \alpha/K$ equalizes the local
elasticity across sites,
\[
\frac{t_s\, g_s(t_s)}{G_s(t_s)} \;=\; \kappa \quad \text{for all } s,
\]
with $\kappa$ fixed by the constraint $\prod_s t_s = \alpha/K$. If in addition $t \mapsto t\, g_s(t)/G_s(t)$ is non-increasing for each
$s$, equivalently $\log G_s(e^x)$ is concave, every such stationary point is
a global maximizer; and if each elasticity is a strictly decreasing bijection
of $(0,1)$ onto $(0,1)$, as for the $\mathrm{Beta}(1,\theta)$ alternatives
with $\theta > 1$, an interior stationary
point exists and is the unique maximizer. Under these concavity conditions,
homogeneous sites $g_s\equiv g$ have the symmetric optimum
$t_s=(\alpha/K)^{1/S}$. Heterogeneous sites may have unequal optimal
thresholds, with a material gain over a uniform split
(\S\ref{sec:numerics}).
\end{proposition}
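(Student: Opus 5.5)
The proof works in logarithmic coordinates, where the problem becomes a separable concave program. Put $x_s=\log t_s\in(-\infty,0)$ and $\phi_s(x)=\log G_s(e^x)$. Maximizing $\prod_s G_s(t_s)$ subject to $\prod_s t_s=\alpha/K$ is then the same as maximizing $\sum_s\phi_s(x_s)$ subject to $\sum_s x_s=\log(\alpha/K)$. Each $\phi_s$ is $C^1$ on $(-\infty,0)$, because $g_s$ is continuous and $G_s>0$. Its derivative is the elasticity,
\[
\phi_s'(x)=\frac{e^x g_s(e^x)}{G_s(e^x)}=\epsilon_s(t),\qquad t=e^x .
\]
For the stationarity claim, an interior maximizer has every $t_s\in(0,1)$. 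The constraint gradient is $(1,\ldots,1)\neq0$, so the Lagrange multiplier rule gives $\phi_s'(x_s)=\kappa$ for all $s$. That is the equal-elasticity condition, and the constraint fixes $\kappa$ through the $t_s$.

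For global optimality, $t\mapsto\epsilon_s(t)$ is non-increasing exactly when $\phi_s'$ is non-increasing in $x$, since $x\mapsto e^x$ is increasing. That in turn is exactly concavity of $\phi_s$, which is the stated equivalence. With each $\phi_s$ concave, the objective $\sum_s\phi_s$ is concave on the affine constraint set. A stationary point $x^*$ satisfies $\phi_s(y_s)\le\phi_s(x_s^*)+\kappa(y_s-x_s^*)$ for every $s$. Summing over $s$ for any feasible $y$ gives $\sum_s\phi_s(y_s)\le\sum_s\phi_s(x_s^*)$, because $\sum_s(y_s-x_s^*)=0$. So every stationary point is a global maximizer. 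Boundary points with some $t_s=1$ need no separate treatment, since the supergradient inequality at $x^*$ extends to $y_s=0$ by continuity.

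For existence and uniqueness, suppose each $\epsilon_s$ is a strictly decreasing bijection $(0,1)\to(0,1)$. Then for each $\kappa\in(0,1)$ there is a unique $t_s(\kappa)=\epsilon_s^{-1}(\kappa)$. It is continuous and strictly decreasing in $\kappa$, with $t_s\to1$ as $\kappa\to0$ and $t_s\to0$ as $\kappa\to1$. Hence $\kappa\mapsto\prod_s t_s(\kappa)$ is a continuous strictly decreasing map onto $(0,1)$. Since $\alpha/K\in(0,1)$, exactly one $\kappa$ meets the constraint, and this gives a unique interior stationary point.

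Uniqueness of the maximizer follows from strict concavity: $\phi_s'$ is strictly decreasing, so $\sum_s\phi_s$ is strictly concave and two maximizers cannot coexist. For $\mathrm{Beta}(1,\theta)$ the hypothesis has to be checked directly. Here $G(t)=1-(1-t)^\theta$ and $\epsilon(t)=\theta t(1-t)^{\theta-1}/\{1-(1-t)^\theta\}$, with limits $1$ at $t\to0$ and $0$ at $t\to1$ when $\theta>1$. Strict monotonicity I would get by writing $\epsilon$ as $\theta s^{\theta-1}(1-s)/(1-s^\theta)$ with $s=1-t$. I would then check that $\log$ of this is increasing in $s$, using the standard inequality $\theta s^{\theta-1}(1-s)\le1-s^\theta$ together with its derivative form. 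This calculation is the step I expect to need the most care. If it resists, a fallback is the mean value representation $(1-s^\theta)/(1-s)=\theta\xi^{\theta-1}$ and a ratio monotonicity argument.

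For homogeneous sites, $\epsilon_s=\epsilon$ for all $s$. The equal-elasticity condition then forces all $t_s$ equal under strict monotonicity, and under mere concavity the symmetric point is stationary and therefore optimal. The constraint then gives $t_s=(\alpha/K)^{1/S}$.
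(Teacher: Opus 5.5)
Your argument follows the paper's own proof. The paper uses the same substitution $x_s=\log t_s$, the same Lagrange stationarity giving equal elasticities, and the same identification of $\phi_s'$ with the elasticity, so that monotone elasticity means concavity. It then argues, as you do, by concavity on the affine constraint set for global optimality, by the continuous decreasing map $\kappa\mapsto\prod_s t_s(\kappa)$ for existence, and by strict concavity for uniqueness and the symmetric homogeneous optimum.

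The one step that fails as you describe it is the $\mathrm{Beta}(1,\theta)$ monotonicity check. The inequality you cite, $\theta s^{\theta-1}(1-s)\le 1-s^\theta$, only gives $\epsilon\le1$. Inside
\[
\frac{\mathrm{d}}{\mathrm{d}s}\log\epsilon=\frac{\theta-1}{s}-\frac{1}{1-s}+\frac{\theta s^{\theta-1}}{1-s^\theta}
\]
it bounds the last term from above, which is the wrong direction for showing the expression is nonnegative.

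The fix is to multiply by $s(1-s)(1-s^\theta)>0$. The expression then collapses to $(\theta-1)-\theta s+s^\theta$. Its positivity on $(0,1)$ for $\theta>1$ is the opposite bound $1-s^\theta<\theta(1-s)$, the tangent-line inequality for the convex function $s^\theta$ at $s=1$. This is exactly the paper's verification: the expression vanishes at $s=1$ and has derivative $\theta(s^{\theta-1}-1)<0$ on $(0,1)$.

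Your fallback can also be made rigorous. Substitute $y=s+u(1-s)$ in $1-s^\theta=\int_s^1\theta y^{\theta-1}\,\mathrm{d}y$. This gives $1/\epsilon(s)=\int_0^1\{1+u(s^{-1}-1)\}^{\theta-1}\,\mathrm{d}u$, which is strictly decreasing in $s$.
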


Under the balance condition a site whose detection probability responds less
sharply to a tightening of its threshold is given the tighter one.  The
model-aware optimal thresholds, which additionally exploit the recalibrated
report distribution, are computed numerically (\S\ref{sec:numerics}).

\subsection{Unknown local alternatives}\label{sec:plugin}

When the local alternatives must be estimated rather than known, only the
alternative half of the report distribution is unknown, and for a coarse
report it is a low-dimensional object: one scalar $G_s(t_s)$ per site for a
threshold report, an $m$-cell multinomial for the uniform $m$-level report.
The supplement proves a finite-sample plug-in correction.  Let the center
form estimates $\hat\nu_{s,1}$ of the alternative report distributions from
training data independent of the tested array, with
$\hat\nu_{s,1}\ll\tilde\nu_{s,0}$ at every site.
Suppose that, with probability at least $1-\zeta$ over the training data,
$\sum_{s=1}^S\mathrm{TV}(\hat\nu_{s,1},\tilde\nu_{s,1})\le\varepsilon$, where
$\mathrm{TV}$ is the total-variation distance.  If
$(K-1)\varepsilon+\zeta\le\alpha$, the program of Theorem~\ref{thm:aggregator},
run under the estimates at level $\alpha-(K-1)\varepsilon-\zeta$, controls
the family-wise error rate~\eqref{eq:fwer} at level $\alpha$ at every
configuration, using no rejections when the adjusted level is zero.
At $K=1$ no deflation is needed.  For threshold reports, held-out pilots of
size $n$ per site and a Dvoretzky--Kiefer--Wolfowitz bound give
$\varepsilon=O(Sn^{-1/2})$, up to logarithmic factors.  The correction
vanishes at the parametric rate, but
its constants can demand large pilots, so at moderate sample sizes the
null-calibrated rule of \S\ref{sec:marginal}, whose critical value never
consults the alternative, is the practical default.

For any fixed report design, both rules are centralized procedures on the
induced report experiment: model-aware aggregation is power-optimal for that
report distribution, whereas the null-calibrated rule is restricted to the
marginal rejection event $\{p_k^{\mathrm{nc}}\le\alpha/K\}$.  Their
fixed-report power difference is therefore exactly the corresponding
centralized optimal-versus-marginal gap.  At $K=1$ with homogeneous threshold
reports, it is entirely due to unused level at the count boundary and is
removed by the randomization in Proposition~\ref{prop:marginalK1}; for
other report designs, the Fisher and likelihood-ratio orderings can also
differ.  This distinction
must not be confused with classwise comparisons in which the two rules select
different report maps; the supplement gives the formal identity, its
single-hypothesis equality case, and its multiplicity consequence.

\section{Numerical study and data illustrations}\label{sec:numerics}

The numerical study targets four claims: finite-report strictness at $K=2$;
the inverse-square resolution rate; the growing benefit of joint aggregation
under multiplicity; and the different robustness scopes of the two rules.
Finite report-distribution calculations are deterministic, with explicit
primal--dual checks on selected report laws.  Outer threshold searches and
continuous-oracle values remain numerical unless an exact-rational
certificate is stated.  \S\ref{sec:numerics_summary} lists the findings,
and the full numerical study is in \S S3 of the online supplementary
material, with computational details and certificates in \S S4.  The two data
illustrations that follow use only the null-calibrated marginal rule, since
their shared cases, endpoints, and selection violate the model-aware
assumptions.

\subsection{Summary of the numerical study}\label{sec:numerics_summary}

Unless stated otherwise, the sites are homogeneous with $\mathrm{Beta}(1,2)$
alternatives, $g(u)=2(1-u)$, $S=2$, $\alpha=0.05$, and the objective is the
point prior.
\begin{enumerate}[label=(\alph*),leftmargin=*,itemsep=0.25\baselineskip]
\item \emph{A single hypothesis loses little.}  At $K=1$ the oracle is the
Neyman--Pearson test on the product likelihood ratio, with power
$\pi^*_1=0.1595$.  The best threshold report rejects when both sites report
significance at $t=\sqrt{0.05}=0.224$ and has power $0.1578$: one optimized
bit per site loses $1.1\%$ of $\pi^*_1$, and $1.9\%$ under
$\mathrm{Beta}(1,3)$.  Exact-rational branch and bound certifies each outer
value within $2\times10^{-4}$.
\item \emph{One bit loses far more under multiplicity.}  At $K=2$ the oracle
value is $\pi^*_2=0.1565$, and the best threshold report, read by the
model-aware rule of \S\ref{sec:aware}, attains $0.1432$: the optimized one-bit
report loses $8.5\%$ of $\pi^*_2$, and $6.2\%$ under
$\mathrm{Beta}(1,3)$.  Optimizing the thresholds for the null-calibrated rule of
\S\ref{sec:marginal} gives losses of $45.8\%$ and $43.3\%$, so recalibrating and
reoptimizing the joint cells recovers $81$--$86\%$ of the marginal deficit;
at $K=1$ the exact-level threshold optima in these examples agree
(Proposition~\ref{prop:marginalK1}).
\item \emph{More sites, and unequal sites.}  At $K=1$ the threshold-report
deficit rises from $1.1\%$ at $S=2$ to $18.4\%$ at $S=10$; the
null-calibrated count rule and the model-aware program agree to six digits
at every $S$, and unanimity is the optimal count cutoff only up to $S=7$.
For $\mathrm{Beta}(1,2)$ and $\mathrm{Beta}(1,4)$ sites, the thresholds
$(0.354,0.141)$ of Proposition~\ref{prop:alloc} gain $5\%$ in power over a
uniform split.
\item \emph{The inverse-square rate is visible at practical resolutions}
(Figure~\ref{fig:theory_inverse_square}(a,b)).  The fitted exponent of the
uniform-report deficit is $2.02$ at $K=1$ over $m\in\{32,\ldots,1024\}$ and
$1.998$ at $K=2$ over resolutions from $8$ to $64$.  Three bits,
$m=8$, leave relative deficits of $0.72\%$ at $K=1$ and $0.81\%$ at $K=2$,
against $2.9\%$ at two bits and $28\%$ at one bit when $K=2$.  An unbounded
one-sided normal example gives an exponent near $1.4$, so the bounded
likelihood-ratio condition is substantive.
\item \emph{The flat prior increases relative deficits in these examples.}
The threshold-report relative deficits become $9.6\%$ and $9.3\%$ for
$\mathrm{Beta}(1,2)$ and $\mathrm{Beta}(1,3)$, against $8.5\%$ and $6.2\%$
at the point prior.  The fitted flat-prior uniform-report exponent is $1.96$.
The explicit witness of Theorem~\ref{thm:oracle_kink} is present at every
prior examined; in the $\mathrm{Beta}(1,2)$ sweep, its mass and slope-jump
bound increase over $w(1)\in\{0.25,0.50,0.75\}$.
Thus the point-prior percentages understate the relative losses in these
two flat-prior comparisons.
\item \emph{Joint aggregation gains more as $K$ grows, and the multiple
depends on the objective.}  Model-aware aggregation of one-bit reports gains
$62$--$69\%$ over marginal aggregation at $K=2$ and $330$--$419\%$ at $K=8$
($0.120$ versus $0.023$ at $S=2$)
under the point prior (Figure~\ref{fig:theory_inverse_square}(c)), and
$34\%$ rising to $157\%$ at $S=2$ under the flat
prior, which weights configurations with no second alternative to pool
with.
\item \emph{The robustness scopes differ.}  In a synthetic study with three
heterogeneous normal-shift sites and $K=2$, the model-aware rule approaches
the oracle power $0.852$ as the report is refined, from $0.717$ at one bit
to $0.797$ at three bits; unused null level and Fisher--Bonferroni
aggregation limit null-calibrated power (Figure~\ref{fig:theory_inverse_square}(d)).
A misspecified alternative raises the
model-aware family-wise error rate to $0.053$--$0.055$, and within-site
dependence across hypotheses raises it to $0.067$--$0.068$; the
null-calibrated rate never exceeds $\alpha$, since its critical value depends
on the null report distribution alone.
\end{enumerate}

\begin{figure}[!tbp]
\centering
\setlength{\figwidth}{\textwidth}
\includegraphics[width=\textwidth]{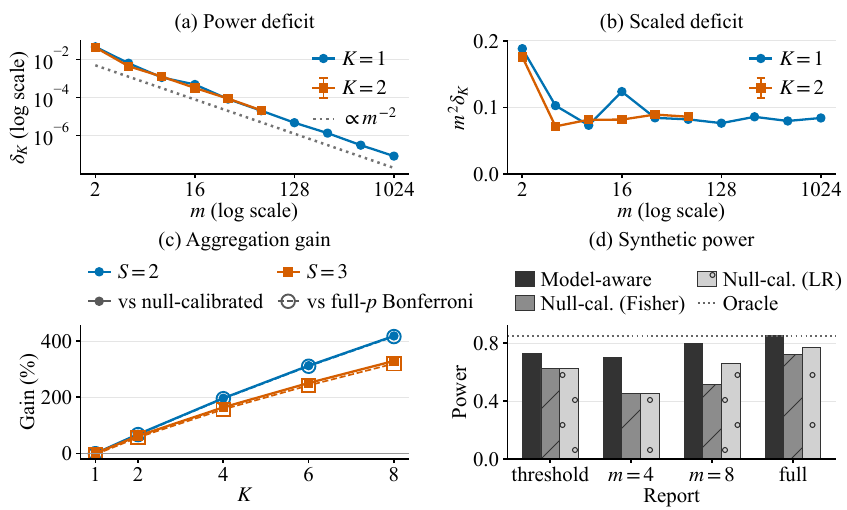}
\caption{Power recovery and aggregation gains at $\alpha=0.05$, using the
full-alternative objective. (a,b) Uniform-report deficits and scaled
deficits for $S=2$ $\mathrm{Beta}(1,2)$ sites and $K=1,2$, with an $m^{-2}$
guide in (a). $K=2$ error bars are finite-dual brackets using a numerical oracle
benchmark. (c) Percentage gains over marginal aggregation for $S=2,3$
$\mathrm{Beta}(1,2)$ sites; methods select separate common binary thresholds.
(d) Power at three normal-shift sites, shifts $(1,1.5,2)$, $K=2$; dotted
line: oracle. Reports: designed binary, uniform $m=4,8$, sufficient/full.
Null-calibrated: Fisher--Bonferroni; Holm: likelihood-ratio tails.
Binary designs differ; Holm reuses null-calibrated reports.
Uniform maps are nested; binary maps are not nested with them. The $m=4$
marginal dip reflects unused null level. Deterministic numerical
diagnostics; see text for qualifications.}
\label{fig:theory_inverse_square}
\alttext{Four panels show inverse-square power deficits, stable scaled
deficits, increasing aggregation gains with multiplicity, and synthetic
power by reporting protocol. The model-aware rule approaches the oracle
under uniform refinement, while the null-calibrated and Holm comparisons
show the role of the aggregation rule.}
\end{figure}

\subsection{Decentralized medical-AI benchmark: FeTS reanalysis}
\label{sec:numerics_fets}

A retrospective reanalysis of the Federated Tumor Segmentation (FeTS) benchmark
shows how report resolution changes the rejection set: at nominal family-wise
level $0.05$, designed one-bit reports
reject five of twelve focal nulls, two-bit equal-width reports ten, and full
$p$-values twelve.  These counts compare protocols for a retrospectively
selected family; the validity conditions are stated below.  The data are
the released case-level metrics of the FeTS 2022
Task~2 challenge, which evaluated brain-tumor segmentation models at
institutions whose images and reference labels stayed
local~\citep{zenk2025fets,karargyris2023medperf}; we reanalyze report
protocols on those metrics rather than recreate the protected evaluation.
The $25$ institutions labeled unseen during training at which every model
below completed evaluation contribute $2{,}178$ cases after the organizers'
quality control.

The family has $K=12$ claims, and their dependence is what makes the
null-calibrated rule the right tool.  Model 10, the only official submission
described as targeting distribution shift, is compared with each of the four
other official versions (8, 11, 12, 54) on Dice score, a measure of
segmentation overlap, for enhancing tumor, tumor core, and whole tumor.  With
$\theta_{s,c,r}=\pr\{\operatorname{Dice}_{10,r}>\operatorname{Dice}_{c,r}\}$
the probability at institution $s$ that model 10 beats comparator $c$ on
region $r$ for a case, the null is
$H_{c,r}=\bigcap_{s=1}^{25}\{\theta_{s,c,r}\le1/2\}$, and each site's
$p$-value is the exact upper binomial tail of its strict-win count.  Nested
regions, shared cases, and repeated comparisons make the twelve claims
dependent in an unrestricted way, so the model-aware program does not apply
and Proposition~\ref{prop:marginal} supplies the fixed-family guarantee.

The center needs nothing but the report maps to calibrate each protocol
exactly.  Each site sends, for each claim, a threshold report at the
unanimity threshold $t=(0.05/12)^{1/25}=0.803$, a uniform $m$-cell report,
or the full $p$-value.  The center reads a report as an upper bound on its
compatible $p$-values, namely $t$ or $1$, the upper edge $c/m$ of the reported
cell $c$, or $p$ itself, and combines the $25$ site values by Fisher's
statistic.  For uniform reports, the same downward-rounded integer scores
are used in the null convolution and in the observed sum; the supplement
specifies the lattice.  The center rejects when the corresponding null-tail
$p$-value is at most $\alpha/K$.

\begin{table}[htbp]
\caption{FeTS reanalysis: rejections among the $12$ retrospectively selected
focal model--region nulls at nominal family-wise level $0.05$.
The primary column corrects over the
$12$ focal claims; the sensitivity column corrects over all $60$ ordered
official-model--region claims and counts the same $12$ directions.}
\label{tab:fets}
\centering
\begin{tabular}{lcc}
\toprule
report & primary $K=12$ & focal count under $K=60$ \\
\midrule
designed threshold & $5$ & $5$ \\
uniform $m=2$ & $8$ & $8$ \\
uniform $m=4$ & $10$ & $8$ \\
uniform $m=8$ & $11$ & $11$ \\
full $p$-value & $12$ & $11$ \\
\bottomrule
\end{tabular}
\end{table}

The observed rejection sets are nested under uniform refinement
(Table~\ref{tab:fets}), with each resolution using the same null-calibration
principle. The two one-bit protocols
also differ: the equal-width binary rule rejects eight claims using a
cutoff of $20$ out of $25$ reports below $1/2$, whereas the designed
threshold rule rejects five using all $25$ reports at or below $0.803$.
Correcting over all $60$ ordered model--region claims instead of the $12$
focal ones keeps all five threshold-report rejections and eleven of the
twelve full-report ones.

The conclusion is a majority-win statement at one or more institutions,
under stated conditions, and not a clinical validation.  Rejecting
$H_{c,r}$ says that model 10 has a strict majority-win Dice advantage over
comparator $c$ on region $r$ at \emph{at least one} included institution,
which a single institution can produce; it says nothing about every
institution, mean Dice, clinical benefit, or deployment safety.  The
guarantee presupposes that the retained cases are i.i.d.\ draws from each
institution's post-quality-control population, that institutions are
independent, that predictions use evaluation data only from the current case,
and that the family was fixed
before the outcomes were seen, whereas this family was chosen
retrospectively.

The supplement's separate full-report partial-conjunction analysis gives
nominal $95\%$ simultaneous lower bounds on the number of included
institutions with $\theta_{s,c,r}>1/2$: all twelve are positive and ten are
at least four (Supplementary Table~S11). Its guarantee requires a fixed
family of $K=12$ claims and valid local $p$-values, permits arbitrary
dependence across institutions and claims, and retains the
retrospective-family qualification. The supplement also gives the
calibrations, provenance, exclusions, complete rejection sets, and the
duplicate-profile and model-10-excluded sensitivities.

\subsection{Replication laboratories: the SSRP under
selection}\label{sec:application}

The Social Sciences Replication
Project~\citep[SSRP;][]{camerer2018evaluating} illustrates how publication
selection limits the conditional guarantee to replication reports.
Each of its $K=21$ studies supplies the original laboratory's $p$-value and
the Stage-1 replication's.  Provided both laboratories supply superuniform
null $p$-values and are independent within each study, the null-calibrated
rule of \S\ref{sec:marginal} controls the family-wise error rate across all
$21$ studies.  Dependence across studies may be arbitrary, accommodating
shared journals, subject pools, and research practices.  The originals fail
the superuniformity condition: each is below $0.05$ because the study was
published as significant.  The unadjusted two-laboratory analyses below
therefore do not carry an unconditional guarantee.  The
certified conclusion uses the Stage-1 replication $p$-values
alone: Bonferroni at $\alpha/K$ rejects six replication-site nulls, provided
those $p$-values are superuniform.

Treating both laboratories' $p$-values as valid and independent within each
study gives an idealized sensitivity analysis of the null-calibrated protocols
on the same $21$ studies.  The one-bit report per laboratory read by unanimity rejects
$12$ studies and the full $p$-values combined by Fisher's statistic $11$;
these are different center rules, not a ranking of resolutions.  The
partial-conjunction null that at least one laboratory has no effect, the
null that formal replication must reject, is rejected for three studies,
Aviezer et al., Hauser et al., and Wilson et al.  The full table,
calibrations, provenance, and selection qualifications are in the
supplement.

\section{Discussion}\label{sec:discussion}

Federated multiple testing under family-wise error control is an
exchangeable two-group problem on the distribution of the reports, and its
deficit against the centralized oracle measures the loss from disclosure
rather than from data locality.  Three facts about that deficit are
established here.  A report from which the local $p$-value can be recovered
loses nothing.  Under the stated assumptions, every finite report loses
power, and the optimal loss decays as the inverse square of the resolution:
equal-width cells attain that order and no placement of the cells improves
the exponent.  Multiplicity adds action boundaries whose averaging can
make the loss substantial, as the two-site examples demonstrate; increasing
the number of sites can also make one-bit loss substantial at $K=1$.
For a consortium, the results quantify how much resolution preserves power:
two- and three-bit equal-width reports retain $97.1\%$ and $99.2\%$ of
oracle power at $K=S=2$ with $\mathrm{Beta}(1,2)$ alternatives, while the
theorem gives the optimal asymptotic order beyond that example.

When the local alternatives are modeled and the hypotheses are independent, the
model-aware rule is the most powerful reading of the reports and controls
the family-wise error rate exactly at every sample size.  When only the null
is trusted, because the alternative is uncertain or the hypotheses are
dependent, calibrating Fisher's combination of the reports against its own
null distribution still controls the family-wise error rate and is the rule
the FeTS reanalysis required.  For homogeneous threshold reports at a
single hypothesis, null calibration is also power-optimal with boundary
randomization, which is unnecessary at the exact-level optima in the
displayed examples.  Within our componentwise strong-FWER model, the lower
bound over arbitrary measurable cells shows that optimizing nonuniform
placement cannot improve the inverse-square exponent.  For
the optimal-multiple-testing literature, the deficit indexes the centralized
optimum by the resolution of the evidence it is fed.

Several extensions would carry the analysis further.  Determining the sharp
constant in the $m^{-2}$ rate and the cell density that minimizes it would
turn the rate into a design rule for nonuniform reports.  Joint encoders
that share a total number of bits
across hypotheses, interactive protocols, and variable-length codes lie outside the
componentwise class and could change the constant or the exponent.  Letting
$K$ grow with the number of sites would show whether the multiplicity
penalty saturates.  Model-aware constraints valid under dependence, and
plug-in corrections sharper than the supplement's total-variation
bound, would widen the regime in which the optimal
rule can be used, and primitive conditions for OAK when $w(K)=0$ remain
open.  Applying the Benjamini--Hochberg step to independent null-calibrated
$p$-values gives an FDR counterpart~\citep{benjamini1995controlling}.
Under arbitrary dependence across hypotheses, the supplement's product
$e$-values instead support the $e$-BH procedure~\citep{wang2022false}.
These routes connect the framework to distributed FDR
methods~\citep{ramdas2017qute,pournaderi2023largescale}.  Replicability
analysis, which asks at how many sites an effect is present, would change
the null family through partial-conjunction
constraints~\citep{benjamini2008screening,bogomolov2018testing}.

\section*{Use of generative AI and AI-assisted technologies.}

We used generative AI tools (ChatGPT, Google Gemini, and Claude) for language
editing and code formatting support only. All data, results and mathematical
derivations are the authors' own work.

\section*{Acknowledgement}

Dubey acknowledges partial support from the Stewart Topper Fellowship at the
Georgia Institute of Technology. Huo was partially supported by the National
Science Foundation under Grant~2229876, the A.~Russell Chandler~III
Professorship, the NIH-sponsored Georgia Clinical and Translational Science
Alliance, and the Georgia Department of Transportation.

\section*{Supplementary material}

The supplementary material contains all proofs, the general-$K$ dual
and primal programs, the full numerical study, the Beta-family and
dual-enclosure certificates, the complete numerical tables and diagnostics,
and full details of the FeTS and SSRP analyses.  In this preprint it follows
the references; its sections, equations, results, figures and tables carry an
S prefix.

\section*{Data availability}

All data analyzed are published summary statistics. The Social Sciences
Replication Project inputs are the original and Stage-1 replication $p$-values
for $21$ studies of \citet{camerer2018evaluating}, extracted from
\texttt{ReplicationSuccess}~1.3.3 by a pinned script that verifies the
upstream checksums. The FeTS analysis uses the released case-level performance
metrics in the Source Data archive (MOESM4) accompanying
\citet{zenk2025fets}, whose SHA-256 digest the analysis script verifies before
use; the protected institutional images and reference labels underlying that
challenge are not publicly available and are not redistributed here. Code and
the small checked-in inputs reproducing every reported number accompany the
submission and will be archived with a digital object identifier before
publication.

\section*{Competing interests}

The authors declare no competing interests.


\clearpage
\setcounter{section}{0}
\setcounter{equation}{0}
\setcounter{table}{0}
\setcounter{figure}{0}
\setcounter{algocf}{0}
\setcounter{theorem}{0}
\setcounter{proposition}{0}
\setcounter{lemma}{0}
\setcounter{corollary}{0}
\setcounter{definition}{0}
\setcounter{assumption}{0}
\setcounter{remark}{0}
\renewcommand{\thetable}{S\arabic{table}}
\renewcommand{\thefigure}{S\arabic{figure}}
\renewcommand{\theequation}{S\arabic{equation}}
\renewcommand{\thesection}{S\arabic{section}}
\renewcommand{\thealgocf}{S\arabic{algocf}}
\renewcommand{\thetheorem}{S\arabic{theorem}}
\renewcommand{\theproposition}{S\arabic{proposition}}
\renewcommand{\thelemma}{S\arabic{lemma}}
\renewcommand{\thecorollary}{S\arabic{corollary}}
\renewcommand{\thedefinition}{S\arabic{definition}}
\renewcommand{\theassumption}{S\arabic{assumption}}
\renewcommand{\theremark}{S\arabic{remark}}
\providecommand{\theHsection}{}\renewcommand{\theHsection}{S.\arabic{section}}
\providecommand{\theHsubsection}{}\renewcommand{\theHsubsection}{S.\arabic{section}.\arabic{subsection}}
\providecommand{\theHequation}{}\renewcommand{\theHequation}{S.\arabic{equation}}
\providecommand{\theHtable}{}\renewcommand{\theHtable}{S.\arabic{table}}
\providecommand{\theHfigure}{}\renewcommand{\theHfigure}{S.\arabic{figure}}
\providecommand{\theHalgocf}{}\renewcommand{\theHalgocf}{S.\arabic{algocf}}
\providecommand{\theHtheorem}{}\renewcommand{\theHtheorem}{S.\arabic{theorem}}
\providecommand{\theHproposition}{}\renewcommand{\theHproposition}{S.\arabic{proposition}}
\providecommand{\theHlemma}{}\renewcommand{\theHlemma}{S.\arabic{lemma}}
\providecommand{\theHcorollary}{}\renewcommand{\theHcorollary}{S.\arabic{corollary}}
\providecommand{\theHdefinition}{}\renewcommand{\theHdefinition}{S.\arabic{definition}}
\providecommand{\theHassumption}{}\renewcommand{\theHassumption}{S.\arabic{assumption}}
\providecommand{\theHremark}{}\renewcommand{\theHremark}{S.\arabic{remark}}
\markboth{P. Dubey and X. Huo}{Supplementary material}

\begin{center}
  {\large\bfseries Supplementary material}\\[8pt]
  {\itshape Report resolution in federated multiple testing under\\
   family-wise error control}\\[6pt]
  \if0\anon\textsc{Prasanjit Dubey} and \textsc{Xiaoming Huo}\fi
\end{center}
\bigskip

This supplement contains the proofs of all results in the main text
(\S\ref{supp:proofs}), the general-$K$ dual computation
(\S\ref{app:dual}), the numerical study summarized in \S\ref{sec:numerics}
of the main text (\S\ref{supp:numerics}), computational certificates and
diagnostics (\S\ref{supp:additional}), and application and sensitivity
analyses (\S\ref{supp:applications}). References to numbered results,
assumptions, equations, sections, figures and tables without an S prefix refer
to the main text; supplementary sections, lemmas, equations, figures and
tables use an S prefix.  In the proofs $P_0$ denotes the global-null
distribution, written $\pr_0$ in the main text.

\section{Proofs}\label{supp:proofs}

\paragraph{Symmetrization fact.}
Any feasible procedure can be made permutation-symmetric without changing
the objective or weakening any constraint; this standard reduction is used
twice.  In any experiment that is
equivariant under permutations of the $K$ hypothesis coordinates, let
$\sigma$ be an independent uniform random permutation, apply a procedure to
the permuted data, and undo $\sigma$ on its rejection set.  At configuration
$h$, the resulting procedure's family-wise error rate is the average of the
original procedure's family-wise error rates over the permuted configurations
$\sigma h$, and is therefore at most $\alpha$ whenever the original procedure
has strong control.  Each $\Pi_\gamma$ of~\eqref{eq:objective} averages over all configurations
with $\gamma$ false hypotheses, a set that every permutation maps onto
itself, so the symmetrized procedure has the same $\Pi_\gamma$ as the
original, and hence the same $\Pi_w$.

\subsection{Centralized-oracle sufficiency and optimality: proof of
Proposition~\ref{prop:oracle}}\label{supp:proof-oracle}
\emph{Step 1 (sufficiency and the induced likelihood ratio).}
Conditionally on $h$, the joint density of the full data with respect to
Lebesgue measure on $[0,1]^{SK}$ is $\prod_{k} \prod_{s} g_s(u_{s,k})^{h_k}$,
which depends on the data only through $(\Lambda_1,\dots,\Lambda_K)$
of~\eqref{eq:aggLR}. By the factorization criterion this vector is sufficient
for $h$, and the conditional distribution of the data given it is free of $h$.
Rao--Blackwellizing any randomized procedure with respect to this vector
therefore preserves its rejection-set distribution at every configuration.
The $\Lambda_k$ are
independent across $k$ with a common null distribution $\nu_0$, the distribution of $\prod_s
g_s(U_s)$ for independent uniforms $U_s$, and a common alternative distribution
$\nu_1$. For any bounded measurable $f$,
\[
\E_{\nu_1}\{f(\Lambda)\} = \int f\Big(\prod_s g_s(u_s)\Big) \prod_s g_s(u_s)\,\mathrm{d}u
= \E_{\nu_0}\{f(\Lambda)\,\Lambda\},
\]
so $\mathrm{d}\nu_1/\mathrm{d}\nu_0(\lambda) = \lambda$. By the
symmetrization fact, it is enough to start from permutation-equivariant
procedures. For $K=1$, the ordered optimum is the ordinary
Neyman--Pearson test.

\emph{Step 2 (ordering and likelihood-ratio ties).}
It is enough to optimize over policies measurable in the likelihood ratios
and independent tie-breakers and ordered by the likelihood-ratio preorder.
Suppose $K\ge2$. The strict-likelihood-ratio
rearrangement in Theorem~1 of~\citet{rosset2022optimal} then removes every inversion between unequal
$\Lambda_k$ values. At an atom, their ordering convention treats equal
likelihood ratios as tied; it does not itself impose a deterministic order
inside the tie. The randomized extension is lossless. A symmetric action
kernel is invariant under every permutation that fixes the realized
likelihood-ratio vector, and, under configuration $h$, the density relative
to the global null is $\prod_k\Lambda_k^{h_k}$, which is unchanged within an
equal-$\Lambda$ block. Conditional symmetrization within each tied block
therefore preserves the objective $\Pi_w$ and strong feasibility and,
conditional on the number selected from that block, selects a uniform
subset. Independent uniforms $\xi_k$ implement exactly this selection; they
are ancillary and do not alter the likelihood ratio.

\emph{Step 3 (attainment and computational scope).}
Lemma~\ref{lem:ordered_duality} of \S\ref{app:dual} proves that the resulting
randomized program attains its optimum and has no duality gap. Hence an optimal policy exists in
this class. For the full-alternative objective $w(K)=1$, when the distribution of $\Lambda_k$
satisfies Assumptions~3--5 of~\citet{dubey2026esp}, the continuous program falls
within their framework. The exact-population version of their coordinate
algorithm converges in objective value to the dual optimum, and every limit
point is optimal without requiring contraction; local linear convergence of the
iterates additionally requires strict complementarity, continuously
differentiable target functions, and a nonsingular active Jacobian. For other
priors, the weak-signal example at the end of \S\ref{supp:proof-oak-auto}
exhibits a nonsmooth dual despite continuous evidence. Those
continuous regularity conditions do not cover the atomic report distributions used
below, so the finite primal and dual programs of \S\ref{app:dual} instead give
matching bounds to the reported numerical tolerance. \hfill$\square$

\subsection{Lossless and lossy report classes: proof of
Proposition~\ref{prop:distribution}}\label{supp:proof-distribution}
(a) With full reports the center observes $\{u_{s,k}\}$ and knows
$(g_s,\alpha)$ by Assumption~\ref{ass:known}, so it forms $\Lambda_k =
\prod_s g_s(u_{s,k})$ of~\eqref{eq:aggLR} and applies the oracle policy of
Proposition~\ref{prop:oracle}, carrying its boundary randomization in
$\xi$; the rejection set has the oracle's distribution, so
$\delta_w(\mathcal{T}^{\mathrm{full}}) = 0$.

(b) For $S = 1$ and $K = 1$, the threshold report $y_{1,1} = \mathbf{1}\{u_{1,1}
\le \alpha\}$ reproduces the Neyman--Pearson region $[0,\alpha]$, which is
most powerful for the simple null against the non-increasing $g_1$, so
$d_1(T) = 0$ and hence $\delta_1(\mathcal T^{\mathrm{thr}})=0$. Multiplicity,
$K\ge2$ with $S\ge2$, is the subject of Theorem~\ref{thm:oracle_kink}.

For $S \ge 2$ and
$K = 1$, the center observes only the orthant
indicator $(\mathbf{1}\{u_{s,1} \le t_s\})_{s=1}^S$, a report of range at
most $2^S$; part (a) of Proposition~\ref{prop:converse}, proved below for any
finite-range report, gives a strictly positive fixed-design gap at every
threshold vector, degenerate faces of the cube $[0,1]^S$ included.  The
class-level claim follows by compactness: the optimal report power is
continuous on the closed cube by the argument given for the threshold-report
class in the proof of Theorem~\ref{thm:oracle_kink} below, which does not use
OAK, and a continuous, strictly positive function on a compact set has a
strictly positive minimum, so $\delta_1(\mathcal{T}^{\mathrm{thr}}) > 0$.
\hfill$\square$

\subsection{Strict loss at a single hypothesis: proof of
Proposition~\ref{prop:converse}(a)}\label{supp:proof-converse}
Throughout this proof, $P_0$ and $\E_0$ denote probability and expectation
under the global-null distribution.
A finite-range report map partitions the evidence for one hypothesis into
$N=\prod_s|\mathcal M_s|<\infty$ cells
$R_i=\prod_s T_s^{-1}(\{c_{s,i}\})$. These are measurable product sets, but need
not be intervals; null-probability-zero cells may be discarded. The full report array therefore partitions
$[0,1]^{SK}$ into profiles $(R_{i_1},\ldots,R_{i_K})$. The center's rejection
probabilities are constant on each such \emph{complete profile}; the disposition
of one hypothesis need not be a function of its own cell alone. On a
per-hypothesis cell $R_i$, the report likelihood ratio is
\[
 \bar\Lambda_k=\E_0(\Lambda_k\mid R_i),
\]
the cell average of the oracle likelihood ratio on the cell $R_i$ of
hypothesis $k$; the computation in Step~1 of \S\ref{supp:proof-graderate},
which uses only independence across sites under $P_0$, verifies this for
arbitrary measurable product partitions.

The loss is strictly positive because the lower side of the
two-sided dual bound~\eqref{eq:dual_pinch} is strictly positive.  The
argument uses the dual integrand $F_\mu=\max(0,\max_l\psi_l^\mu)$; if
$\mu^*$ minimizes the continuous dual, then
\[
 d_w(T)\ \ge\
 \E_0\{F_{\mu^*}(\Lambda)-F_{\mu^*}(\bar\Lambda)\}.
\]
Conditionally on a complete profile, the coordinates $\Lambda_k$ are independent
and have conditional means $\bar\Lambda_k$. Because $F_{\mu^*}$ is a maximum
of functions affine in each coordinate, iterated conditional Jensen shows
that every profile contributes a nonnegative gap. The gap is strict if,
with positive conditional probability over the other coordinates, one
coordinate charges both sides of a genuine positive slope jump of the
corresponding one-dimensional section of $F_{\mu^*}$.

Here $K=1$ and $S\ge2$.  Some report cell must straddle the
oracle threshold $t_\alpha$ with positive probability, which makes the gap
strict.  In this case $F_{\mu}(v) = (v - \mu)_+$ and
$\mu^{*} = t_\alpha$, the null $(1-\alpha)$-quantile of $\Lambda_1$, whose
distribution is atomless (level sets of $\prod_s g_s$ are Lebesgue-null by strict
monotonicity in each coordinate); the kink occurs at $t_\alpha$ and the oracle
region is $\mathcal W = \{\Lambda_1 \ge t_\alpha\}$ with
$P_0(\mathcal W) = \alpha \in
(0,1)$. It suffices to find a cell whose conditional distribution charges both
$\{\Lambda_1 > t_\alpha\}$ and $\{\Lambda_1 < t_\alpha\}$; if none does,
$\mathcal W$ agrees up to a $P_0$-null set with a finite union of cells. Sections in
the first coordinate rule this out, through two elementary observations.

First, \emph{full support of the residual product}: each factor
$g_s(U_s)$, $s\ge2$, is a continuous strictly decreasing function of a
uniform variable, so its distribution function is continuous and strictly
increasing on the interval $(g_s(1),g_s(0))$.
After taking logarithms, the distribution
of $\log\prod_{s\ge2}g_s(U_s)$ is a convolution of independent atomless
distributions each charging every open subinterval of an interval support, and such
a convolution is again atomless and charges every open subinterval of its
interval support. Hence the distribution of $\prod_{s\ge2}g_s(U_s)$ is atomless
and charges every open subinterval of its range. Let $\varphi(u_1)$ be the
$(S-1)$-dimensional Lebesgue measure of $\{(u_2,\dots,u_S) : \prod_{s \ge 2}
g_s(u_s) \ge t_\alpha/g_1(u_1)\}$. Since $t_\alpha/g_1(\cdot)$ is
continuous and strictly increasing, $\varphi$ is continuous,
non-increasing, and strictly decreasing on $J = \{u_1 : \varphi(u_1) \in
(0,1)\}$, an open interval, non-empty because $\int_0^1 \varphi = \alpha \in
(0,1)$.

Second, \emph{sections of a finite union of product cells}: write the
candidate union as $\mathcal W'=\bigcup_{i=1}^N(A_i\times B_i)$ with
$A_i\subseteq[0,1]$
and $B_i\subseteq[0,1]^{S-1}$ measurable. The section of $\mathcal W'$ at $u_1$ is
$\bigcup_{i:\,u_1\in A_i}B_i$, a function of the membership pattern
$(\mathbf 1\{u_1\in A_i\})_{i\le N}\in\{0,1\}^N$ alone; the patterns
partition $[0,1]$ into at most $2^N$ measurable atoms, on each of which
the section of $\mathcal W'$ is literally constant. If $\mathcal W=\mathcal W'$
up to a $P_0$-null
set, Fubini's theorem gives that for almost every $u_1$ the section of $\mathcal W$
equals that constant section up to an $(S-1)$-dimensional null set, so
$\varphi$ is almost everywhere constant on each pattern atom. Some atom
meets $J$ in positive measure, and constancy there contradicts the
injectivity of the strictly decreasing $\varphi$ on $J$. Hence some cell
straddles the kink and $d_1(T)>0$ without any additional nondegeneracy
condition. \hfill$\square$

\subsection{The inverse-square rate: proof of
Proposition~\ref{prop:converse}(b)}\label{supp:proof-graderate}
The upper bound is proved for every $K$ and $S$, because
Theorem~\ref{thm:oracle_kink}(b) uses it in that generality; we state it
separately.

\begin{lemma}[Uniform reports: an inverse-square upper bound for every $K$]\label{lem:uniform_upper}
Suppose Assumptions~\ref{ass:model}--\ref{ass:reg} hold and each $g_s$ is
continuously differentiable on $[0,1]$, so that $g_s\le c_{+,s}=g_s(0)<\infty$.
Then for every $K\ge1$, $S\ge1$ and prior $w$ there is a constant $C$,
depending on $K$, $S$, $\alpha$ and the densities, such that
$\delta_w(\mathcal T^{(m)})\le Cm^{-2}$ for all $m\ge1$.
\end{lemma}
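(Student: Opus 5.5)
The plan is to bound the fixed-design gap of the uniform design $T=\mathcal T^{(m)}$ directly, through the upper half of the dual pinch~\eqref{eq:dual_pinch}:
\[
 d_w(T)\le \E_0\{F_{\bar\mu}(\Lambda)-F_{\bar\mu}(\bar\Lambda)\},
\]
where $\bar\mu$ minimizes the report-problem dual and $\bar\Lambda=\E_0(\Lambda\mid\mathcal G)$ is the vector of cell-averaged likelihood ratios. Since $\delta_w(\mathcal T^{(m)})=d_w(T)$ for this singleton class, it suffices to bound the right side by $Cm^{-2}$ uniformly in $m$. The identification of $\bar\Lambda$ with the report likelihood ratio comes from the cell-average computation cited in the proof of Proposition~\ref{prop:converse}(a); it uses only cross-site independence under $P_0$.

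\emph{Step 1: an a priori bound on the multiplier.} The report dual has the same form as the continuous dual, with $F_\mu\ge0$. Hence $\alpha\sum_j\binom Kj\bar\mu_j\le\bar{\mathcal D}(\bar\mu)\le\bar{\mathcal D}(0)\le\max_l\sup A_l$. The last quantity is bounded by a constant depending on $K$ and $\lambda_+=\prod_s g_s(0)$, because $A_l$ is a polynomial in likelihood ratios bounded by $\lambda_+$. So every $\bar\mu$ lies in a compact box $M$ independent of $m$. Consequently the branch coefficients of $\psi_l^\mu$, viewed as functions of any single site factor $x=g_s(u_{s,k})$ with all other factors in $[g_s(1),g_s(0)]$, have slopes bounded by a constant $L$ uniformly over $\mu\in M$ and over the other coordinates.

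\emph{Step 2: telescoping over the $SK$ site factors.} I will replace the factors $g_s(u_{s,k})$ by their cell averages one at a time, writing $F_{\bar\mu}(\Lambda)-F_{\bar\mu}(\bar\Lambda)$ as a sum of $SK$ increments. At each step I condition on all other (raw or already averaged) factors and on the cell of $u_{s,k}$. The integrand is then a maximum of at most $K+1$ functions affine in $x=g_s(u_{s,k})$, since the elementary-symmetric-polynomial coefficients are multilinear. It is therefore convex and piecewise affine in $x$ with at most $K$ kinks, each with slope jump at most $2L$. For such a function, the Jensen gap over a cell is zero unless the cell's $x$-range contains a kink, and it is at most $2L\cdot\operatorname{osc}_{\text{cell}}(g_s)$ per kink when one is contained. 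Because $g_s\in C^1$, an equal-width cell of length $1/m$ gives $\operatorname{osc}\le\|g_s'\|_\infty/m$. The kinks are fixed given the conditioning, and the map $x\mapsto u$ is monotone, so at most $K$ of the $m$ cells in $u_{s,k}$ contain a kink. Under $P_0$, where $u_{s,k}$ is uniform and independent of the conditioning, this event has probability at most $K/m$. Each increment is thus at most $2LK\|g_s'\|_\infty m^{-2}$, and summing over the $SK$ steps gives the bound with $C=2LSK^2\max_s\|g_s'\|_\infty$.

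\emph{Main obstacle.} I expect the delicate part to be Step 2's bookkeeping rather than the estimates. First, after partial averaging the function must remain a maximum of functions affine in the current factor; this holds because averaging is applied to a variable the branches are affine in, and the maximum is taken after, not before. Second, the conditioning must be legitimate: under $P_0$ the cells of different factors are independent, and within a cell the conditional law of $u_{s,k}$ is uniform and independent of everything else. I would also check that the slope bound $L$ does not blow up near the ordering boundaries where the ranked positions change. This follows because each $\psi_l^\mu$ is continuous and is piecewise polynomial with finitely many pieces on $[\lambda_-,\lambda_+]^K$. A kink created by a reordering is simply counted among at most finitely many additional kinks, which changes only the constant.
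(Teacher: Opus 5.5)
Your proof is correct, but it handles the boundary layer differently from the paper. Both arguments start from the upper side of~\eqref{eq:dual_pinch} at the report multiplier $\bar\mu$, with an $m$-free bound on $\bar\mu$. The paper gets $\|\bar\mu\|_1\le1/\alpha$ from $\mathcal D_{\bar\Lambda}(0)=\E_0A_K(\bar\Lambda)=1$; your cruder bound by $\sup A_K$ works equally well.

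From there the paper argues globally in all $SK$ coordinates $x_{s,k}=g_s(u_{s,k})$:
\begin{itemize}
\item A product cell on which no constituent polynomial of the decision surface $\mathcal V$ changes sign carries a single maximizing multilinear branch. It contributes nothing, by conditional independence across hypotheses.
\item The crossed cells are counted by a grid-crossing lemma for multilinear polynomials. This gives $O(m^{SK-1})$ cells of total mass $O(m^{-1})$, each contributing $O(m^{-1})$ through a Lipschitz bound.
\end{itemize}
Your telescoping over the $SK$ site factors instead reduces each increment to a one-dimensional gap of a piecewise-affine function with boundedly many breakpoints. The boundary layer is then just the $O(1)$ cells, out of $m$, that contain a breakpoint. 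This avoids the grid-crossing lemma and the zero-corner and single-branch bookkeeping, and gives an explicit constant. It also applies unchanged to any interval partition of mesh $O(1/m)$. In exchange, the paper's route exhibits the global layer around $\mathcal V$, which is how its main-text heuristic and the matching lower bound are phrased.

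One statement needs repair. The ordered branches $\psi_l^\mu$ are affine in a coordinate only within a fixed rank region. So a section of $F_\mu$ is not a maximum of $K+1$ affine functions of $x$, and it can have more than $K$ kinks. You can fix this in either of two ways.
\begin{itemize}
\item Use the fixed-set representation $F_\mu=\max\{0,\max_R\psi_R^\mu\}$ from Step~5 of the paper's proof of Proposition~\ref{prop:converse}(b), which the supplement extends to general $w$. It gives convexity and at most $2^K-1$ breakpoints.
\item Note that your increment bound needs no convexity. $F_\mu$ is continuous, and each section has at most $K^2+K-1$ breakpoints, rank changes included. On a cell of $x$-length $\ell$ with slopes bounded by $L$, one has $|\E h(X)-h(\E X)|\le 2L\ell$ directly.
\end{itemize}
Either way, only your constant changes.
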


Steps~1--4 below prove the lemma, whose case $K=1$ is the upper bound of
Proposition~\ref{prop:converse}(b); Steps~5 and~6 prove the lower bound of
the proposition, first for the uniform report and then over
$\mathcal T_{\le m}$.
Write $\mathcal G$ for the sigma-field generated by the uniform $m$-level
report, with
\[
T_s(u)=\min\{m,1+\lfloor mu\rfloor\}.
\]
Its atoms, apart from null endpoints, are the product cells of
$P_0$-measure $m^{-SK}$, where $P_0=\mathrm{Unif}[0,1]^{SK}$.

\emph{Step 1 (the report likelihood ratio is a conditional expectation).}
Throughout this proof, $\Lambda_k=\prod_s g_s(u_{s,k})$ is the oracle
per-hypothesis likelihood ratio of~\eqref{eq:aggLR} and
$\bar\Lambda_k=\prod_s\ell_s(y_{s,k})$ is the report likelihood ratio, the
main text's $L_k$, where
$\ell_s=\mathrm{d}\tilde\nu_{s,1}/\mathrm{d}\tilde\nu_{s,0}$. Write
$C_s(c)=T_s^{-1}(\{c\})$ for the cell corresponding to report value $c$.
Since the $u_{s,k}$ are independent across $s$ under $P_0$ and each cell has
width $1/m$,
\[
\E_0[\Lambda_k \mid \mathcal G] = \prod_s \E_0[g_s(u_{s,k}) \mid y_{s,k}]
= \prod_s \frac{\int_{C_s(y_{s,k})} g_s(u)\,\mathrm du}{1/m}
= \prod_s \ell_s(y_{s,k}) = \bar\Lambda_k.
\]

\emph{Step 2 (dual comparison at one multiplier).} By
Lemma~\ref{lem:ordered_duality} of \S\ref{app:dual}, together with the
ordering reductions in
Proposition~\ref{prop:oracle} and Theorem~\ref{thm:aggregator}, the optimal
level-$\alpha$
average power on a two-group distribution with per-hypothesis likelihood ratio $\ell$ is
$\min_{\mu \ge 0} \mathcal D_\ell(\mu)$, where
$\mathcal D_\ell(\mu) = \mathcal C(\mu) + \E_0[F_\mu(\ell)]$,
$\mathcal C(\mu) = \sum_{j=1}^K \binom{K}{j}\alpha\mu_j$, and $F_\mu = \max(0, \max_l
\psi_l^\mu)$ is the dual integrand, a finite maximum of functions each affine in
$\mu$ and multilinear in the likelihood-ratio vector. Hence $\pi^{*}_w =
\min_\mu \mathcal D_\Lambda(\mu)$ and the model-aware report power is
$\min_\mu \mathcal D_{\bar\Lambda}(\mu)$. Both minima are attained; let $\bar\mu$ attain
the latter. As $\mathcal C$ is common to both,
\[
\delta_w(\mathcal T^{(m)}) = \min_\mu \mathcal D_\Lambda(\mu) - \min_\mu \mathcal D_{\bar\Lambda}(\mu)
\le \mathcal D_\Lambda(\bar\mu) - \mathcal D_{\bar\Lambda}(\bar\mu)
= \E_0\big[F_{\bar\mu}(\Lambda) - F_{\bar\mu}(\bar\Lambda)\big].
\]
The multiplier bound is uniform in $m$. Since $F_{\bar\mu}\ge0$ and
$\mathcal C(\mu)\ge\alpha\|\mu\|_1$,
\[
\begin{aligned}
\alpha\|\bar\mu\|_1
&\le \mathcal D_{\bar\Lambda}(\bar\mu) \le \mathcal D_{\bar\Lambda}(0)\\
&= \E_0[A_K(\bar\Lambda)] = \E_0[A_K(\Lambda)] = 1.
\end{aligned}
\]
Here $F_0=A_K$ because the objective coefficients are nonnegative.
$A_K$ is a nonnegative combination of monomials in distinct coordinates.
The $\bar\Lambda_k = \E_0[\Lambda_k\mid\mathcal G]$ are independent across $k$
with $\E_0[\bar\Lambda_k] = \E_0[\Lambda_k] = 1$, so each monomial has the
same null expectation under $\bar\Lambda$ as under $\Lambda$.
Finally, $\E_0[A_K(\Lambda)]=\mathcal D_\Lambda(0)=1$ at every prior by
the proof of Lemma~\ref{lem:ordered_duality}; at the point prior,
$A_K(\bar\Lambda)=\prod_k\bar\Lambda_k$.
As $\|\bar\mu\|_1\le1/\alpha$ and $\bar\Lambda_k \le \prod_s c_{+,s}$, the integrand
$F_{\bar\mu}$ is then Lipschitz in the likelihood-ratio vector with a constant
$C_{\mathrm{Lip}}$ depending only on $K,S,\alpha$ and the upper bounds $c_{+,s}$.

\emph{Step 3 (multilinear cancellation off the decision surface).} Conditioning
on $\mathcal G$, and as $F_{\bar\mu}(\bar\Lambda)$ is $\mathcal G$-measurable,
\[
\delta_w(\mathcal T^{(m)}) \le \E_0\big[\,\E_0[F_{\bar\mu}(\Lambda)\mid\mathcal G] -
F_{\bar\mu}(\bar\Lambda)\,\big].
\]
$F_{\bar\mu}$ is multilinear in $(\Lambda_1,\dots,\Lambda_K)$ on each region cut out by the
finitely many surfaces $\{\psi_l^{\bar\mu} = 0\}$, $\{\psi_l^{\bar\mu} =
\psi_{l'}^{\bar\mu}\}$, and the order-statistic ties $\{\Lambda_k = \Lambda_{k'}\}$; write
$\mathcal V$ for their union, the decision surface. On a cell whose
likelihood-ratio support avoids $\mathcal V$, $F_{\bar\mu}$ is one fixed
multilinear function there, and since the $\Lambda_k$ are conditionally independent
across $k$ with $\E_0[\Lambda_k\mid\mathcal G] = \bar\Lambda_k$ (Step 1),
$\E_0[\prod_{k\in I} \Lambda_k \mid \mathcal G] = \prod_{k\in I}\bar\Lambda_k$ for every
index set $I$, whence $\E_0[F_{\bar\mu}(\Lambda)\mid\mathcal G] = F_{\bar\mu}(\bar\Lambda)$:
the cell contributes zero.

\emph{Step 4 (boundary layer).} It remains to treat cells meeting $\mathcal V$;
the argument below shows that only cells strictly crossing a constituent
surface can contribute. On such a cell, the $C^1$ smoothness of the $g_s$ and
the width $1/m$ give $\|\Lambda - \bar\Lambda\|\le C_{\mathrm{dev}}/m$ pointwise, so the
conditional Jensen gap is at most $C_{\mathrm{Lip}}C_{\mathrm{dev}}/m$.
The crossing cells are counted by an explicit lemma, with no recourse to
regularity of $\mathcal V$. Work in the coordinates $x_{s,k} =
g_s(u_{s,k})$, a strictly monotone reparametrization of each coordinate
separately (Assumption~\ref{ass:reg}). Under this reparametrization, the report grid
remains a product partition into $m^{SK}$ boxes and every constituent
function of $\mathcal V$ (each branch $\psi_l$ on a fixed order region,
each branch difference and each order-statistic tie $\Lambda_k - \Lambda_{k'}$) becomes
a \emph{multilinear} polynomial: $\Lambda_k = \prod_s x_{s,k}$, so every
$x_{s,k}$ appears with degree at most one.

\emph{Lemma (grid-crossing count).} Let $p$ be multilinear in $d$
variables and partition a box of $\mathbb{R}^d$ into $m^d$ product cells
of arbitrary, possibly unequal, widths.
Then at most $d\, 2^{d-1} (m+1)^{d-1}$ cells contain both a point with
$p > 0$ and a point with $p < 0$.

\emph{Proof.} A multilinear function is affine in each variable
separately, so its maximum and minimum over a cell are attained at
corners. Zero-valued corners require a small precaution. There are only
finitely many cells on which $p$ takes both signs. If there are any, choose
\[
0<\varepsilon<\min_C\{\max_C p,-\min_C p\},
\]
where the minimum is over the finitely many crossed cells $C$, and choose
$\varepsilon$ different from $p$ at every grid vertex. Then every cell
crossed by $p$ is also crossed by the multilinear polynomial
$p-\varepsilon$, and the latter is nonzero at every grid vertex. Its crossed
cells therefore have corners
of both strict signs. Walking through the cell's corner hypercube from a
positive corner to a negative corner gives an axis-parallel grid edge whose
endpoints have opposite signs. Along any axis-parallel line of grid
vertices, $p-\varepsilon$ is affine in the varying coordinate, so its
vertex signs change at most once. There are at most
$d(m+1)^{d-1}$ such lines, and each edge borders at most $2^{d-1}$ cells.
This proves the claim. \hfill$\lozenge$

The constituent family is finite with cardinality depending on $K$ alone:
for each of the $K!$ order regions it comprises the $K+1$ branches
$\psi_l^{\bar\mu}$, their pairwise differences, and the
$K(K-1)/2$ order-statistic ties $\Lambda_k-\Lambda_{k'}$.
The multiplier $\bar\mu$
enters only through the coefficients of these multilinear polynomials,
never their number or degrees, and the lemma's count holds uniformly over
all multilinear $p$, hence uniformly in $\bar\mu$ and in $m$.

A cell on which
no constituent polynomial takes both positive and negative values admits one
maximizing branch throughout. First, sign constancy of the order-tie
polynomials permits a fixed weak ordering $\sigma$ of the likelihood ratios on
the whole cell. In the resulting transformed $x$-cell, which has non-empty
interior because every $g_s$ is strictly decreasing, write the order-specific
branches as $0,\psi_1^{\bar\mu,\sigma},\ldots,\psi_K^{\bar\mu,\sigma}$. Choose an
interior point $x^\circ$ outside the zero sets of all pairwise branch
differences that are not identically zero on the cell; such a point exists
because the family is finite and a nonzero polynomial has a zero set with
empty interior. Let $\psi$ be a branch maximal at $x^\circ$. For every other
branch $\psi'$, either $\psi-\psi'$ vanishes identically on the cell or
$\psi(x^\circ)-\psi'(x^\circ)>0$. Since $\psi-\psi'$ does not take both signs
on the cell,
the latter case implies $\psi-\psi'\geq0$ throughout. Thus $\psi$ is maximal on the
whole cell (an identically tied branch is harmless), so $F_{\bar\mu}$ is one
multilinear polynomial there and conditional multilinear cancellation makes
the cell's contribution zero.

Every cell that can contribute
is strictly crossed by at least one constituent polynomial, so by
the lemma there are $O(m^{SK-1})$ of them, of total $P_0$-measure at most
$C_{\mathrm{cross}}/m$. Therefore
\[
\delta_w(\mathcal T^{(m)})
\le \frac{C_{\mathrm{cross}}}{m}
     \frac{C_{\mathrm{Lip}}C_{\mathrm{dev}}}{m}
= \frac{C}{m^2},
\]
where $C$ may depend on $K,S,\alpha$ and the densities.
Boundedness of the likelihood ratio enters through the Lipschitz constant
$C_{\mathrm{Lip}}$ and, with the $C^1$ smoothness, through the within-cell deviation
$C_{\mathrm{dev}}$; both fail for the one-sided normal, whose likelihood ratio and its
derivative diverge as $u \to 0$, and there the exponent is observed to fall
below $2$.  This proves Lemma~\ref{lem:uniform_upper}.

\emph{Step 5 (matching lower bound at $K=1$).} The comparison of Step~2 has a
mirror: if $\mu^{*}$ minimizes the oracle dual $\mathcal D_\Lambda$ then, since $\bar\mu$
minimizes $\mathcal D_{\bar\Lambda}$,
\[
\delta_w(\mathcal T^{(m)}) \;=\; \mathcal D_\Lambda(\mu^{*}) - \mathcal D_{\bar\Lambda}(\bar\mu)
\;\ge\; \mathcal D_\Lambda(\mu^{*}) - \mathcal D_{\bar\Lambda}(\mu^{*})
\;=\; \E_0\big[F_{\mu^{*}}(\Lambda) - F_{\mu^{*}}(\bar\Lambda)\big] \;\ge\; 0.
\]
The dual integrand is a pointwise maximum, over rejection sets, of functions of $(\Lambda_1,\dots,\Lambda_K)$ that are multilinear, hence affine in each coordinate, and the
$\Lambda_k$ are conditionally independent given $\mathcal G$ with $\E_0[\Lambda_k \mid
\mathcal G] = \bar\Lambda_k$ (Step~1), so the one-coordinate conditional Jensen
inequality, iterated over the coordinates, gives $\E_0[F_{\mu}(\Lambda) \mid
\mathcal G] \ge F_{\mu}(\bar\Lambda)$ for every $\mu \ge 0$. The deficit of a fixed
report is thus bounded above and below by the same conditional-Jensen gap
evaluated at the two multipliers, and cells avoiding the decision surface
contribute zero to either side (Step~3).

For completeness, the maximum-over-rejection-sets representation also makes
the asserted coordinatewise convexity self-contained. For a fixed rejection
set $R$ of size $l$, put $p_R=\prod_{i\in R}\Lambda_i$, and write
$e_d$ for the elementary symmetric polynomial of degree $d$ in the
indicated likelihood ratios, with $e_0=1$ and $e_d=0$ for $d<0$ or $d$
larger than their number. In particular, $e_K(\Lambda)=\prod_k\Lambda_k$.
The dual branch of $R$ is, at the point
prior (\S\ref{supp:proof-priorfree} extends the representation to every
$w$),
\[
 \psi_R^\mu(\Lambda)=\frac lK e_K(\Lambda)
 -\sum_{j=1}^K\mu_j
 \{e_{K-j}(\Lambda)-p_R e_{K-l-j}(R^c)\}.
\]
If $y\in R$, $x\notin R$, and $x\ge y$, write
$A=R\setminus\{y\}$ and $B=R^c\setminus\{x\}$. Replacing $y$ by $x$ changes
the $R$-dependent term by
\[
 p_Ax\{e_d(B)+y e_{d-1}(B)\}
 -p_Ay\{e_d(B)+x e_{d-1}(B)\}
 =p_A(x-y)e_d(B)\ge0,
 \]
for every relevant degree $d=K-l-j$. Thus each nonnegative weighted sum of
these terms is maximized by taking $R$ to be the top $l$ coordinates. Hence
the ordered branch $\psi_l^\mu$ is the maximum of $\psi_R^\mu$ over all
$|R|=l$, and $F_\mu$ is the maximum of zero and all the fixed-set branches.
Every $\psi_R^\mu$ is multilinear, so this representation proves that
$F_\mu$ is convex in each coordinate separately.

At $K=1$ the gap contributed by a report cell is the smaller of its two
one-sided hinge integrals about the oracle threshold.  Here $F_\mu(\ell)=(\ell-\mu)_+$
and the oracle multiplier $\mu^*=t_\alpha$ does not depend on $m$; on a
report cell $C$, the gap is
\[
\begin{split}
&\E_0[(\Lambda-\mu^*)_+;C]
 -\big\{\E_0(\Lambda-\mu^*;C)\big\}_+ \\
&\qquad={}
\min\!\left\{\E_0[(\Lambda-\mu^*)_+;C],
                 \E_0[(\mu^*-\Lambda)_+;C]\right\}.
\end{split}
\]

In the notation of Lemma~\ref{lem:oakk_grid_loss} below, whose proof does
not depend on this proposition, the cell gap is
$\min\{A_C(\mu^*),B_C(\mu^*)\}$.  The distribution of $\Lambda_1$ is
atomless with interval support charging every open subinterval (as in the
proof of Proposition~\ref{prop:converse}(a)), so
$P_0(\Lambda_1>\mu^*)=\alpha\in(0,1)$ forces $\mu^*$ strictly inside the open
range $(\prod_sg_s(1),\prod_sg_s(0))$ of $\Lambda_1$.
Lemma~\ref{lem:oakk_grid_loss} with $Q=\{\mu^*\}$ therefore gives
\[
\delta_1(\mathcal T^{(m)})
\ge\sum_{C}\min\{A_C(\mu^*),B_C(\mu^*)\}
\ge c_Qm^{-2},\qquad m\ge m_Q,
\]
where $c_Q$ and $m_Q$ are the constants of that lemma and may depend on $S$,
$\alpha$, and the densities, but not on $m$.

\emph{Step 6 (the lower bound after optimizing over all reports with at most
$m$ values per site).}  We now prove the stronger classwise statement.  Fix an arbitrary
$T\in\mathcal T_{\le m}$ and let $\mathcal G$ be its complete report
sigma-field.  The identity $\bar\Lambda=\E_0(\Lambda\mid\mathcal G)$ and the lower side
of the two-sided dual bound in Step~5 remain valid for arbitrary measurable report
cells.  Write $q=\mu^*=t_\alpha$.  As shown in Step~5, $q$ lies strictly
inside the range of $\Lambda=\prod_sg_s(U_s)$.  The continuous strictly decreasing
map $t\mapsto\prod_sg_s(t)$ therefore has a point $t_0\in(0,1)$ satisfying
$\prod_sg_s(t_0)=q$.

Near the oracle threshold, the site-1 likelihood ratio can be treated as a
variable with a null density bounded below, against a change of variable
with bounded Jacobian.  Put $X=g_1(U_1)$ and $Y=\prod_{s=2}^Sg_s(U_s)$, and write
$\tilde u=(u_3,\ldots,u_S)$ and $g_{3:S}(\tilde u)=\prod_{s=3}^Sg_s(u_s)$.
Choose a compact interval
$I$ about $x_0=g_1(t_0)$, a compact interval $J_2$ about $t_0$, and, when
$S>2$, a compact box $\mathcal N$ about $(t_0,\ldots,t_0)$ for
$(U_3,\ldots,U_S)$.  They can be chosen sufficiently small that, for every
$\tilde u\in\mathcal N$, the strictly increasing change of variable
\begin{equation}\label{eq:anym_change}
 r=\frac{q}{g_{3:S}(\tilde u)\,g_2(u_2)}
\end{equation}
maps $J_2$ over $I$.  The convention for $S=2$ is $g_{3:S}\equiv1$ and
$|\mathcal N|=1$.
Compactness and $g_2'<0$ give a constant $c_6>0$ such that throughout
these neighborhoods
\begin{equation}\label{eq:anym_jacobian}
 Y\left|\frac{\mathrm du_2}{\mathrm dr}\right|
 =\frac{g_{3:S}(\tilde u)^2g_2(u_2)^3}{q|g_2'(u_2)|}\ge c_6.
\end{equation}
The null density $f_X(x)=1/|g_1'\{g_1^{-1}(x)\}|$ is bounded below by some
$f_->0$ on $I$.

Refining the report sigma-field can only shrink the dual Jensen gap, so it
suffices to bound the gap after revealing everything except site 1's cell
on a neighborhood of the threshold.  Refine $\mathcal G$ to a sigma-field
$\mathcal H$ in which $U_2,\ldots,U_S$ are fully measurable and, at site 1, $X$ is exactly
measurable on $\{X\notin I\}$ but only the pair
$(\mathbf1\{X\in I\},T_1(U_1))$ on $\{X\in I\}$.  Thus
$\mathcal G\subseteq\mathcal H$.  If
$\widetilde\Lambda=\E_0(\Lambda\mid\mathcal H)$, conditional Jensen gives
\[
 \E_0F_{\mu^*}(\widetilde\Lambda)
 \ge \E_0F_{\mu^*}(\bar\Lambda),
\]
so the report's dual Jensen gap is no smaller than the gap after this
refinement.

After the refinement the gap is at least a constant times the integrated
hinge regret of site 1's cells, which is of order $m^{-2}$ for any
partition into at most $m$ cells.  For each non-null output value $j$ of
$T_1$, let
\[
 A_j=I\cap\{x:T_1(g_1^{-1}(x))=j\},\quad
 p_j=P_0(X\in A_j),\quad
 \bar x_j=\E_0(X\mid X\in A_j),
\]
omitting null cells, write $P_X$ for the null distribution of $X$, and define
the unnormalized hinge regret
\[
 \Delta_j(r)=\int_{A_j}(x-r)_+\,\mathrm dP_X(x)
       -p_j(\bar x_j-r)_+.
\]
On $A_j$ the refinement replaces $X$ by $\bar x_j$ and leaves $Y$ exact;
outside $I$ it leaves $\Lambda$ exact.  Consequently, restricting the integral to
$\tilde u\in\mathcal N$, changing $u_2$ to $r$ by~\eqref{eq:anym_change}, and
using~\eqref{eq:anym_jacobian} yield
\[
 \E_0\{F_{\mu^*}(\Lambda)-F_{\mu^*}(\widetilde\Lambda)\}
 \ge |\mathcal N|c_6\sum_j\int_I \Delta_j(r)\,\mathrm dr.
\]
Because $A_j\subseteq I$ and $\bar x_j\in I$, direct integration gives the
exact identity
\begin{equation}\label{eq:anym_hinge_variance}
 \int_I\Delta_j(r)\,\mathrm dr
 =\frac12\int_{A_j}(x-\bar x_j)^2\,\mathrm dP_X(x).
\end{equation}
If $|A_j|$ is the Lebesgue measure of $A_j$, then
\[
 \int_{A_j}(x-\bar x_j)^2\,\mathrm dP_X(x)
 \ge f_-\inf_{c\in\mathbb R}\int_{A_j}(x-c)^2\,\mathrm dx
 \ge \frac{f_-|A_j|^3}{12}.
\]
The last inequality is the one-dimensional rearrangement fact that, among
measurable sets of length $|A_j|$, the second moment about its best center is
minimized by an interval centered there.  The at most $m$ sets $A_j$
partition $I$ up to null sets, so convexity of $x^3$ gives
\[
 \sum_j|A_j|^3\ge\frac{|I|^3}{m^2}.
\]
Together with~\eqref{eq:anym_hinge_variance}, this proves, uniformly over
every $T\in\mathcal T_{\le m}$,
\[
 d_1(T)\ge \frac{|\mathcal N|c_6 f_-|I|^3}{24m^2}.
\]
Taking the infimum over $T$, and using the uniform design as a member of
$\mathcal T_{\le m}$ for the upper bound, proves the classwise claim of
Proposition~\ref{prop:converse}(b).  Notice that the argument did not require
the report cells $A_j$ to be intervals. \hfill$\square$

\subsection{OAK at objectives charging the full alternative: proof of
Lemma~\ref{lem:oak_auto}}\label{supp:proof-oak-auto}

Throughout, $P_0$ is the global-null law of
$\Lambda=(\Lambda_1,\ldots,\Lambda_K)$, $\mathcal D$ is the continuous dual
of \S\ref{app:dual}, and $\psi_l^\mu=A_l-\sum_{j=1}^K\mu_jB_{j,l}$ are its
branches, with $B_{j,l}\ge0$ throughout and $B_{K,l}=1$ for $l\ge1$.  Under
$P_0$ the coordinates of $\Lambda$ are independent, each distributed as
$\prod_s g_s(U_s)$ for independent uniform $U_s$; by
Assumption~\ref{ass:reg} each factor is atomless and charges every open
subinterval of $(g_s(1),g_s(0))$, so each coordinate is atomless and charges
every open subinterval of $(\lambda_-,\lambda_+)$. Two facts about such a
product law are used.

First, the zero set of a nonzero polynomial $p$ in $K$ variables is
$P_0$-null. The claim is immediate for $K=1$. For $K>1$, regard $p(x,z)$
as a polynomial in $x$ and choose a nonzero coefficient polynomial in $z$;
its zero set is null by induction. Outside this exceptional set,
$x\mapsto p(x,z)$ has finitely many roots. The atomless coordinate law
assigns zero mass to them, so independence and Fubini's theorem complete
the argument. Second, a measurable set $R$ whose indicator is, for
each coordinate $a$, $P_0$-almost surely a function of the other
coordinates has $P_0(R)\in\{0,1\}$: integrating out the coordinates one at
a time shows that the indicator agrees almost surely with a constant.

\emph{Step 1: branch ties are null.}  Fix $\mu\ge0$ and $l\ge1$.  On the
ordering region $\{v_1>\cdots>v_K\}$ each $B_{j,l}$ is a polynomial of
degree $K-j\le K-1$, while $A_l=(K!)^{-1}\sum_{i\le l}a_{w,i}$ contains the
degree-$K$ term $(K!)^{-1}\,l\,w(K)\,(K-1)!\,e_K=(l\,w(K)/K)\,e_K$, because
$a_{K,k}=(K-1)!\,e_K$ for every $k$ and every other $a_{\gamma,k}$ is
homogeneous of degree $\gamma<K$ (\S\ref{supp:proof-priorfree}).  Since
$w(K)>0$, $\psi_l^\mu$ is a nonzero polynomial on the region, as is every
difference $\psi_l^\mu-\psi_{l'}^\mu$ for $l\ne l'$, whose degree-$K$ term is
$\{(l-l')w(K)/K\}\,e_K$, and by symmetry on every ordering region; rank ties
are zero sets of the polynomials $v_i-v_k$.  By the first fact,
$P_0\{\psi_l^\mu(\Lambda)=0\}=0$,
$P_0\{\psi_l^\mu(\Lambda)=\psi_{l'}^\mu(\Lambda)\}=0$ and
$P_0\{\Lambda_i=\Lambda_k\}=0$.

\emph{Step 2: the rejection region is neither null nor full.}  Fix a
minimizer $\mu^*$ and write $F=F_{\mu^*}$, $\Psi=\max_l\psi_l^{\mu^*}$ and
$R=\{F>0\}=\{\Psi>0\}$.  If $P_0(R)=1$ then, since $B_{K,l}=1$,
$F_{\mu^*+\epsilon\mathbf e_K}=(\Psi-\epsilon)_+$ for $\epsilon>0$, where
$\mathbf e_j$ denotes the $j$th unit vector of $\mathbb R^K$, and
\[
 \mathcal D(\mu^*+\epsilon\mathbf e_K)-\mathcal D(\mu^*)
 =\epsilon\alpha-\E_0\min(\Psi,\epsilon)
 \le\epsilon\{\alpha-P_0(\Psi\ge\epsilon)\},
\]
which is negative for small $\epsilon$ because
$P_0(\Psi\ge\epsilon)\to P_0(\Psi>0)=1>\alpha$, contradicting minimality.
If $P_0(R)=0$ then, by Step~1, $\psi_l^{\mu^*}(\Lambda)<0$ for every $l$
almost surely, so for almost every $v$ and every $j$,
$F_{\mu^*-\epsilon\mathbf e_j}(v)=0$ for all small $\epsilon>0$; the branches are
bounded on the support, so dominated convergence gives
$\mathcal D(\mu^*-\epsilon\mathbf e_j)=\mathcal D(\mu^*)-\epsilon\alpha\binom Kj+o(\epsilon)$,
and minimality forces $\mu^*_j=0$ for every $j$.  But at $\mu^*=0$,
$F_0\ge A_K\ge w(K)\,e_K>0$ on the support, so $P_0(R)=1$, a
contradiction.  Hence $0<P_0(R)<1$.

\emph{Step 3: a coordinate section crosses the boundary.}  By the second
fact there are a coordinate $a$ and a measurable set $\mathcal B$ of values
$z=\Lambda_{-a}$ with $P_0(\Lambda_{-a}\in\mathcal B)>0$ such that, for
every $z\in\mathcal B$, the section $\{x:(x,z)\in R\}$ has
$\Lambda_a$-probability strictly between $0$ and $1$.

\emph{Step 4: the crossing is a kink.}  For $z\in\mathcal B$ the section
$f_z(x)=F(z_1,\ldots,x,\ldots,z_K)$ is nonnegative, continuous and
piecewise affine in $x\in(\lambda_-,\lambda_+)$ with finitely many pieces,
which change only at the rank ties $x=z_k$ and at branch crossings.  Its
zero set $N_z$ is a finite union of closed intervals and has positive
$\Lambda_a$-probability, so it contains a nondegenerate component because
the law is atomless; its complement in $(\lambda_-,\lambda_+)$ also has
positive probability, so some nondegenerate component $[c,d]$ of $N_z$ has
$d<\lambda_+$ or $c>\lambda_-$.  Put $q(z)=d$ in the first case and
$q(z)=c$ otherwise.  Then $q(z)\in(\lambda_-,\lambda_+)$, $f_z$ vanishes on
one side of $q(z)$ and is positive on a punctured neighborhood on the
other, so the one-sided derivatives at $q(z)$ are $0$ on the first side and
the nonzero slope of the active branch on the second, in the direction that
makes $f_z$ positive; either way the derivative jumps upward by some
$\Delta(z)>0$.  Both $q$ and $\Delta$ can be taken Borel.  For rationals
$r<s$ and $t$ in $(\lambda_-,\lambda_+)$ put
$Z_{r,s}=\bigcap_{y\in[r,s]\cap\mathbb Q}\{z:F(y,z)=0\}$ and
$P_t=\{z:F(t,z)>0\}$, Borel sets because $F$ is continuous, with
$z\in Z_{r,s}$ exactly when $f_z$ vanishes on $[r,s]$.  Let $\mathcal B_1$ be
the Borel set of $z\in\mathcal B$ for which $z\in Z_{r,s}\cap P_t$ for some
$r<s<t$, and on it let $q(z)$ be the supremum of such $s$: this is the right
endpoint of the rightmost nondegenerate component of $N_z$ that has a point
of positivity to its right, so $q(z)$ is interior and $f_z>0$ on a punctured
right neighborhood.  On $\mathcal B\setminus\mathcal B_1$ every point of
positivity lies to the left of every nondegenerate component, and $q(z)$ is
the infimum of $r$ over rationals $t<r<s$ with $z\in Z_{r,s}\cap P_t$, the
left endpoint of the leftmost nondegenerate component.  In either case
$\{q>\rho\}$ or $\{q<\rho\}$, for every real $\rho$, is a countable union of
the sets $Z_{r,s}\cap P_t$,
so $q$ is Borel, and $\Delta(z)=\lim_n n\,F(q(z)\pm1/n,z)$, with the sign
pointing to the positive side, is a limit of Borel functions.  Thus $(\mu^*,a,\mathcal B,q,\Delta)$ is an
OAK witness. \hfill$\square$

\paragraph{The condition $w(K)>0$ cannot be dropped.}
Let $K=3$, $w(1)=1$ and $S\ge2$, with sites whose likelihood ratios satisfy
$\lambda_+\le2\lambda_-$; for instance $g_s(u)=1.1-0.2u$ at two sites
gives $(\lambda_-,\lambda_+)=(0.81,1.21)$.  From \S\ref{app:dual},
$A_l=\tfrac13\sum_{i\le l}v_{(i)}$, $B_{3,l}=1$, $B_{2,1}=e_1-v_{(1)}$,
$B_{2,2}=B_{2,3}=e_1$ and $B_{1,3}=e_2$.  At $\mu=(0,\tfrac13,0)$,
\[
 \psi_1=\tfrac13\{v_{(1)}-v_{(2)}-v_{(3)}\},\qquad
 \psi_2=-\tfrac13v_{(3)},\qquad
 \psi_3=0 .
\]
On the support $v_{(1)}\le\lambda_+\le2\lambda_-\le v_{(2)}+v_{(3)}$, so
$F_\mu\equiv0$ and $\mathcal D(\mu)=\alpha\binom32\cdot\tfrac13=\alpha$.
The data-free rule that rejects all three hypotheses with probability
$\alpha$ is feasible, with family-wise error rate exactly $\alpha$ at every
configuration with a true hypothesis, and has power $\alpha$ at every
configuration with a false one, so by weak duality it is optimal and
$\pi^*_w(\alpha)=\alpha$.  It uses no report, so every report class,
threshold reports included, has zero deficit.  OAK fails outright, because
the minimizer is unique: since $\E_0e_1=\E_0e_2=3$, every $\mu\ge0$ has
$\mathcal D(\mu)\ge\alpha s+\E_0\psi_3^\mu=1-(1-\alpha)s$ with
$s=3\mu_1+3\mu_2+\mu_3$, as well as $\mathcal D(\mu)\ge\alpha s$, so
$\mathcal D(\mu)=\alpha$ forces $s=1$ and $F_\mu=\psi_3^\mu=0$ almost surely,
hence $\psi_3^\mu\equiv0$ and $\mu=(0,\tfrac13,0)$, at which $F_\mu$ has no
kink. Along the ray $\mu=(0,x,0)$, $x\ge0$, the same branch comparison gives
$\mathcal D(0,x,0)=3\alpha x+(1-3x)_+$. Thus continuous evidence can produce
a nonsmooth dual at this prior; the policy choosing the largest rejection
count among maximizing actions has a family-wise error rate that jumps from
one to zero immediately to the right of $x=1/3$.
The data-free mixture above recovers a valid optimum. The kink that
Lemma~\ref{lem:oak_auto} guarantees is therefore tied to the objective
charging the full alternative: when it does not, and the likelihood ratios
are flat enough, the dual integrand can vanish throughout the support,
with rejecting all and rejecting none tied everywhere.

\subsection{Active-kink strictness and rate: proof of
Theorem~\ref{thm:oracle_kink}}\label{supp:proof-oracle-kink}

By Lemma~\ref{lem:oak_auto}, $w(K)>0$ implies OAK, so OAK is assumed
throughout.  The proof rests on three lemmas: the non-axis-aligned-boundary argument of
the proof of Proposition~\ref{prop:converse}(a), restated so that it applies
at every interior likelihood-ratio threshold; a hinge inequality for convex
piecewise-affine sections; and a grid-loss lower bound.  Throughout this
proof, $P_0$ and $\E_0$ denote probability and expectation under the
global-null distribution.  Write
\[
 \Lambda(u)=\prod_{s=1}^S g_s(u_s),\qquad
 \lambda_-=\prod_{s=1}^Sg_s(1),\qquad
 \lambda_+=\prod_{s=1}^Sg_s(0).
\]
For hypothesis $k$, put $\Lambda_k=\Lambda(u_{\cdot,k})$, and for a coordinate
$a\in\{1,\ldots,K\}$ write $\Lambda_{-a}=(\Lambda_k)_{k\ne a}$.

\begin{lemma}[No finite product partition resolves an interior cut]
\label{lem:oakk_product_cut}
Suppose $S\ge2$ and Assumption~\ref{ass:reg} holds.  Let the finite sitewise
maps $T_s$ partition $[0,1]^S$ into the positive-null-probability product
cells
\[
 D_c=\prod_{s=1}^S T_s^{-1}(\{c_s\}).
\]
For every $q\in(\lambda_-,\lambda_+)$, at least one cell $D_c$ satisfies
\begin{equation}\label{eq:oakk_straddle}
 P_0\{\Lambda<q\mid D_c\}>0,\qquad
 P_0\{\Lambda>q\mid D_c\}>0.
\end{equation}
\end{lemma}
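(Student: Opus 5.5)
We argue by contradiction, reusing the two observations from the proof of Proposition~\ref{prop:converse}(a), now applied at an arbitrary interior threshold $q$ rather than the oracle threshold $t_\alpha$. Suppose no cell satisfies \eqref{eq:oakk_straddle}. Level sets of $\Lambda$ are Lebesgue-null, since $\Lambda$ is strictly monotone in each coordinate. Hence every positive-probability cell lies, up to a null set, inside $\{\Lambda<q\}$ or inside $\{\Lambda>q\}$. It follows that $\mathcal W_q=\{\Lambda>q\}$ agrees up to a $P_0$-null set with a finite union $\mathcal W'=\bigcup_i(A_i\times B_i)$ of product cells, where $A_i\subseteq[0,1]$ are site-1 cells and $B_i\subseteq[0,1]^{S-1}$ are products of the remaining sites' cells.

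\emph{Step 1 (section measure).} Define
\[
 \varphi_q(u_1)=\bigl|\{(u_2,\ldots,u_S):\textstyle\prod_{s\ge2}g_s(u_s)>q/g_1(u_1)\}\bigr|.
\]
By the full-support argument of Proposition~\ref{prop:converse}(a), the law of $Y=\prod_{s\ge2}g_s(U_s)$ is atomless and charges every open subinterval of $(\prod_{s\ge2}g_s(1),\prod_{s\ge2}g_s(0))$. This needs $S\ge2$. The map $u_1\mapsto q/g_1(u_1)$ is continuous and strictly increasing. Therefore $\varphi_q$ is continuous and non-increasing, and it is strictly decreasing on the set $J=\{u_1:\varphi_q(u_1)\in(0,1)\}$.

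\emph{Step 2 ($J$ is nonempty).} This is the only new ingredient, since $\int\varphi_q$ need not equal $\alpha$ here. Because $q\in(\lambda_-,\lambda_+)$, continuity of $t\mapsto\prod_s g_s(t)$ gives $t_0\in(0,1)$ with $\prod_s g_s(t_0)=q$. Then $q/g_1(t_0)=\prod_{s\ge2}g_s(t_0)$ is strictly interior to the range of $Y$. Hence $\varphi_q(t_0)\in(0,1)$, and $J$ is a nonempty open interval by continuity.

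\emph{Step 3 (pattern atoms).} The section of $\mathcal W'$ at $u_1$ depends only on the membership pattern $(\mathbf 1\{u_1\in A_i\})_i$. There are finitely many patterns, and they partition $[0,1]$ into measurable atoms. By Fubini, $\varphi_q$ is almost everywhere constant on each atom. Some atom meets $J$ in positive Lebesgue measure. But $\varphi_q$ is strictly decreasing on $J$, so it is injective there and cannot be a.e.\ constant on a positive-measure set. This is a contradiction, so some cell straddles $q$.

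The main obstacle is Step 2: guaranteeing that the threshold is genuinely interior in the site-1 section. The point $t_0$ with all coordinates equal handles this cleanly. Positivity of the conditional probabilities in \eqref{eq:oakk_straddle}, rather than mere nonemptiness, then follows directly from the null-set statement.
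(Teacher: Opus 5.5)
Your proof is correct and follows the paper's own argument essentially step for step: the null boundary $\{\Lambda=q\}$, the reduction of $\mathcal W_q$ to a finite union of product sets, the section function $\varphi_q$ that is strictly decreasing on $J_q$, and the contradiction from its almost-everywhere constancy on the membership-pattern atoms. Your Step~2 chooses $t_0$ with $\prod_s g_s(t_0)=q$ to show $\varphi_q(t_0)\in(0,1)$. This spells out what the paper states in one clause, namely that $J_q$ is nonempty because $q$ is interior to the range of $\Lambda$. The one detail you leave implicit is the convention $q/0=+\infty$, needed when $g_1(1)=0$.
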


\begin{proof}
The boundary $\{\Lambda=q\}$ has null probability: after fixing
$(u_2,\ldots,u_S)$, strict monotonicity of $g_1$ leaves at most one possible
$u_1$, and Fubini's theorem applies.  If no cell
satisfied~\eqref{eq:oakk_straddle}, the upper set
$\mathcal W_q=\{u:\Lambda(u)>q\}$ would agree up to a null set with a union of
finitely many report cells, hence with a finite union
$\mathcal W'=\bigcup_{i=1}^N(A_i\times B_i)$, where
$A_i\subseteq[0,1]$ and $B_i\subseteq[0,1]^{S-1}$ are measurable.

The conditional probability of the oracle region given site 1's value is
continuous and strictly decreasing on a nonempty open interval.  Let
$Y=\prod_{s=2}^Sg_s(U_s)$ for independent uniform $U_s$ and define
\[
 \varphi_q(u_1)
 =P_0\!\left\{Y>\frac{q}{g_1(u_1)}\right\}.
\]
Here and below, $q/0=+\infty$.
The distribution of $Y$ is atomless and charges every open subinterval of its interval
support.  To see atomlessness, condition on all but one factor: the remaining
product is a positive constant almost surely times a continuous strictly
monotone transform of a uniform variable.  For full interval support, any
interior product value has a preimage in $(0,1)^{S-1}$, and a sufficiently
small open product neighborhood of that preimage is mapped inside any given
open interval containing the value and has positive Lebesgue measure.  Since
$q/g_1(u_1)$ is continuous and strictly increasing, $\varphi_q$ is continuous
and strictly decreasing on
$J_q=\{u_1:0<\varphi_q(u_1)<1\}$.  This is a nonempty open interval because
$q$ is strictly inside the range of $\Lambda$.

The section of $\mathcal W'$ at $u_1$ is
$\bigcup_{i:u_1\in A_i}B_i$, so it is determined by the membership vector
$(\mathbf 1\{u_1\in A_i\})_{i=1}^N$.  Those vectors partition $[0,1]$ into
at most $2^N$ measurable sets, and the section of $\mathcal W'$ is literally constant
on each one.  If $\mathcal W_q=\mathcal W'$ almost everywhere, Fubini's theorem makes
$\varphi_q$ almost everywhere constant on each membership set.  Some such
set meets $J_q$ in positive measure, contradicting the injectivity of the
strictly decreasing $\varphi_q$ on $J_q$.  Thus a cell
satisfying~\eqref{eq:oakk_straddle} must exist.
\end{proof}

\begin{lemma}[Extracting a hinge from a positive kink]
\label{lem:oakk_hinge}
Let $f:\mathbb R\to\mathbb R$ be convex and piecewise affine.  If its right
derivative minus its left derivative at $q$ is at least $\Delta_*>0$, then, for
every integrable scalar random variable $X$,
\begin{equation}\label{eq:oakk_hinge}
 \E f(X)-f(\E X)
 \ge \Delta_*\left[\E(X-q)_+-(\E X-q)_+\right]
 =\Delta_*\min\{\E(X-q)_+,\E(q-X)_+\}.
\end{equation}
In particular, the Jensen gap is strictly positive if the distribution of $X$
charges both sides of $q$.
\end{lemma}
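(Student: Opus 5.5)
The plan is to decompose $f$ into an affine part plus a hinge at $q$ plus a convex remainder, so that the Jensen gap of $f$ dominates $\Delta_*$ times the Jensen gap of the hinge. Write $f'_\pm(q)$ for the one-sided derivatives at $q$ and set
\[
 h(x)=f(x)-\Delta_*(x-q)_+ .
\]
Away from $q$, $h$ differs from $f$ by an affine function, so it is convex and piecewise affine on each side of $q$. At $q$ its derivative jump is $f'_+(q)-f'_-(q)-\Delta_*\ge0$. A piecewise-affine function whose derivative jumps are all nonnegative is convex, so $h$ is convex on $\mathbb R$. The right derivative of $f$ is nondecreasing, so the only place convexity could fail is $q$, and the jump there is nonnegative.

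Next I will apply ordinary Jensen to $h$: since $X$ is integrable, $\E h(X)\ge h(\E X)$. Adding back the hinge term gives
\[
 \E f(X)-f(\E X)\ge\Delta_*\bigl[\E(X-q)_+-(\E X-q)_+\bigr],
\]
which is the first inequality in~\eqref{eq:oakk_hinge}. One integrability point needs a little care: $\E f(X)$ might be $+\infty$ if $f$ has linear growth and $X$ is only integrable. Because $f$ is piecewise affine it has at most linear growth, so $f(X)$ is integrable and every term is finite. If the piecewise-affine structure allows infinitely many pieces, linear growth still follows from convexity together with finiteness of the slopes at any two points. Since the section functions used in the paper have finitely many pieces, I will simply assume finitely many.

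For the equality with the minimum, I will use $(x-q)_+-(q-x)_+=x-q$. Taking expectations gives $\E(X-q)_+-\E(q-X)_+=\E X-q$. If $\E X\ge q$, then
\[
 \E(X-q)_+-(\E X-q)=\E(q-X)_+ ,
\]
and this is at most $\E(X-q)_+$. If instead $\E X<q$, the bracket equals $\E(X-q)_+$, which is then at most $\E(q-X)_+$. In both cases the bracket is the minimum of the two one-sided hinge expectations. For the final claim, note that if $X$ charges both $\{X>q\}$ and $\{X<q\}$, then both one-sided expectations are strictly positive, so the minimum is positive and the gap is strictly positive.

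I expect no real obstacle; the only thing to check carefully is the convexity of $h$ at $q$.
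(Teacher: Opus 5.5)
Your argument is correct and is the paper's proof: subtract $\Delta_*(x-q)_+$, note that the remainder keeps a nonnegative slope jump at $q$ and so stays convex, apply Jensen to it, and identify the bracket as $\min\{\E(X-q)_+,\E(q-X)_+\}$ using $\E X-q=\E(X-q)_+-\E(q-X)_+$.

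One side remark is false: infinitely many pieces do not force linear growth, as the piecewise-linear interpolant of $x^2$ at the integers shows. Nothing depends on it, because a convex $f$ is bounded below by an affine function, so $\E f(X)$ is well defined in $(-\infty,\infty]$ and the inequality holds trivially when it is infinite.
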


\begin{proof}
The slopes of a convex piecewise-affine function are non-decreasing.  Removing
$\Delta_*(x-q)_+$ reduces the upward slope jump at $q$ by $\Delta_*$ and leaves
all other slope jumps unchanged, so
$\tilde f(x)=f(x)-\Delta_*(x-q)_+$ remains convex.  Jensen's inequality for
$\tilde f$ gives
the inequality in~\eqref{eq:oakk_hinge}.  For the equality, put
$a_+=\E(X-q)_+$ and $a_-=\E(q-X)_+$.  Since $\E X-q=a_+-a_-$,
$a_+-(a_+-a_-)_+=\min(a_+,a_-)$.
\end{proof}

\begin{lemma}[Uniform product-grid hinge loss]
\label{lem:oakk_grid_loss}
Suppose $S\ge2$, every $g_s\in C^1[0,1]$, and
$g_s'(u)<0$ on $[0,1)$.  Let $\mathcal C_m$ be the uniform product
partition of $[0,1]^S$ into cells of side length $1/m$.  For a cell $C$ and
$q\in(\lambda_-,\lambda_+)$, put
\[
 A_C(q)=\int_C\{\Lambda(u)-q\}_+\,\mathrm du,\qquad
 B_C(q)=\int_C\{q-\Lambda(u)\}_+\,\mathrm du.
\]
For every compact interval $Q\subset(\lambda_-,\lambda_+)$, there are
$c_Q>0$ and $m_Q<\infty$ such that
\begin{equation}\label{eq:oakk_gridlemma}
 \inf_{q\in Q}\sum_{C\in\mathcal C_m}
       \min\{A_C(q),B_C(q)\}
 \ge c_Qm^{-2},\qquad m\ge m_Q.
\end{equation}
\end{lemma}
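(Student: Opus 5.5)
The plan is to mirror Step~6 of the proof of Proposition~\ref{prop:converse}(b), but to make every constant uniform in $q\in Q$, and to work at the level of the uniform grid itself rather than after a refinement.

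\emph{Step 1: a uniform local geometry.} Since $Q$ is compact in $(\lambda_-,\lambda_+)$ and $t\mapsto\prod_sg_s(t)$ is a continuous strictly decreasing bijection onto $[\lambda_-,\lambda_+]$, each $q\in Q$ has a unique $t_0(q)\in(0,1)$ with $\prod_sg_s(t_0(q))=q$. The map $q\mapsto t_0(q)$ is continuous, so $\{t_0(q):q\in Q\}$ lies in a compact $[\tau_1,\tau_2]\subset(0,1)$.

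Choose $\eta>0$ so small that the box $N=[\tau_1-\eta,\tau_2+\eta]^S$ lies inside $(0,1)^S$. On $N$ each $|g_s'|$ is bounded below by some $\gamma>0$ and above by $\Gamma$, and each $g_s$ is bounded between positive constants. Consequently $\Lambda$ is $C^1$ on $N$, with $|\partial_{u_1}\Lambda|\ge\gamma'>0$ and $\|\nabla\Lambda\|\le\Gamma'$.

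\emph{Step 2: reduction to one-dimensional hinge regret along $u_1$.} Fix $q$ and a cell $C=I_1\times C'$ contained in $N$, where $I_1$ has length $1/m$. By Lemma~\ref{lem:oakk_hinge}-type reasoning, or directly, $\min\{A_C,B_C\}$ is the Jensen gap of $(\cdot-q)_+$ over $C$. It is at least the gap after revealing $u_2,\dots,u_S$ exactly, because conditioning on more information only decreases the conditional-Jensen average. That refined gap is
\[
\int_{C'}\min\{a(u'),b(u')\}\,\mathrm du',
\]
where $a$ and $b$ are the one-dimensional hinge integrals over $u_1\in I_1$.

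For fixed $u'$, $x=\Lambda(u_1,u')$ is monotone in $u_1$ with slope of size between $\gamma'$ and $\Gamma'$. Hence, whenever the level $q$ cuts $I_1$ at a relative position $\theta\in[1/4,3/4]$, both one-sided integrals are at least $c\,m^{-2}$. Both are of order (slope)$\times$(length)$^2$.

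\emph{Step 3: counting the cut cells.} It remains to show that, for $m\ge m_Q$, the set of $u'$ (ranging over the union of columns $C'$) for which the level curve $\{\Lambda=q\}$ passes through the middle half of the grid interval in $u_1$ has Lebesgue measure at least $c'>0$, uniformly in $q$. Write the level set inside $N$ as a $C^1$ graph $u_1=\phi_q(u')$ over a neighborhood $U$ of $(t_0,\dots,t_0)$ of radius comparable to $\eta$, using the implicit function theorem with uniform bounds. The needed event is
\[
\{m\phi_q(u')\bmod 1\in[1/4,3/4]\}.
\]

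This is the step I expect to be the main obstacle, because $\phi_q$ could align with the grid. I would handle it by averaging over the second coordinate. Since $g_2'<0$, we have $|\partial_{u_2}\phi_q|\ge\kappa>0$ uniformly on $U$. So along each line in $u_2$ of length $\rho$, the function $m\phi_q$ sweeps an interval of length at least $m\kappa\rho$, with derivative bounded above and below. For $m\kappa\rho\ge 4$, the fractional part then lies in $[1/4,3/4]$ for at least a fixed fraction (say $1/8$) of the $u_2$-length. Fubini then gives measure at least $c'|U|$. Cells partially exiting $N$ are discarded, which only costs a boundary term.

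\emph{Step 4: assembly.} Summing the per-$u'$ bound $c\,m^{-2}$ over the cut set of measure $c'$ gives
\[
\sum_C\min\{A_C,B_C\}\ge c\,c'\,m^{-2}.
\]
All constants depend only on $Q$ and the densities, so taking the infimum over $q\in Q$ yields \eqref{eq:oakk_gridlemma} with $m_Q$ fixed by the condition $m\kappa\rho\ge4$ and by requiring that the cells fit inside $N$.
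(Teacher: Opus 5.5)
Your proof is correct, but its key step differs from the paper's. The paper first handles $S=2$ by a deterministic pigeonhole argument. The level curve $u_2=\vartheta_q(u_1)$ has slope between $-b_1$ and $-b_0$, so in each of the $\asymp m$ grid columns it meets at most $\lceil b_1\rceil+2$ cells. Hence one cell contains a piece of the curve whose horizontal span and vertical drop are both $\asymp m^{-1}$. The paper then fits two rectangles of area $\asymp m^{-2}$ in that cell, on opposite sides of the curve and at distance $\asymp m^{-1}$ from it, which gives $\min\{A_C,B_C\}\gtrsim m^{-3}$ per column. The case $S>2$ is reduced to this one by disintegrating over $(u_3,\dots,u_S)$ using $\min(\int f_1,\int f_2)\ge\int\min(f_1,f_2)$. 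Uniformity over $Q$ comes from a finite cover by implicit-function neighborhoods.

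You instead disintegrate over all of $u'=(u_2,\dots,u_S)$ at once, which is the same inequality phrased as conditional Jensen under refinement. This turns the per-line bound into a trivial one-dimensional hinge estimate. You defeat grid alignment by averaging rather than by pigeonhole: the transversality bound $|\partial_{u_2}\phi_q|\ge\kappa$ makes the relative crossing position $\{m\phi_q(u')\}$ sweep many periods as $u_2$ varies. It therefore lies in $[1/4,3/4]$ on a set of $u'$ whose measure is bounded below.

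Both arguments use the same two transversality facts, namely that $\partial_1\Lambda$ and $\partial_2\Lambda$ are bounded away from zero.
\begin{itemize}
\item The paper's version produces an explicit good cell in every column, with no fractional-part counting.
\item Your version is dimension-free as written. It gets uniformity in $q$ directly from the diagonal points $t_0(q)$ and a single box $N$, with no covering. It is also exactly the averaging mechanism of Step~6, specialized to the uniform grid.
\end{itemize}

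One small correction. Along a $u_2$-line of length $\rho$, suppose you only know $|\partial_{u_2}(m\phi_q)|\in[m\kappa,m\bar\kappa]$. Then the guaranteed fraction of the line on which $\{m\phi_q\}\in[1/4,3/4]$ is about $\kappa/(4\bar\kappa)$ once $m\kappa\rho\ge4$. It is not a universal $1/8$, unless you also invoke uniform continuity of $\partial_{u_2}\phi_q$. Since any fixed positive fraction suffices, this does not affect the conclusion.
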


\begin{proof}
For $S=2$ the level sets near any interior threshold are uniformly
well-behaved graphs, with slopes and normal derivatives bounded away from
zero and infinity.  Fix $q_0\in(\lambda_-,\lambda_+)$.  The continuous product map
$(u_1,u_2)\mapsto g_1(u_1)g_2(u_2)$ maps the open square onto the interior
of its range, so there is an interior point on the level set
$g_1(u_1)g_2(u_2)=q_0$.  Both partial derivatives are strictly negative
there.  The implicit-function theorem, followed by shrinking the
neighborhood, gives an open interval $J$ about $q_0$, a fixed horizontal
interval $I$, and graphs
\[
 u_2=\vartheta_q(u_1),\qquad q\in J,\quad u_1\in I,
\]
representing the level sets, together with constants
$0<b_0\le b_1<\infty$ and $d_0>0$ such that
\begin{equation}\label{eq:oakk_geometry_bounds}
 b_0\le-\vartheta_q'(u_1)\le b_1,\qquad
 |\partial_2\Lambda(u_1,u_2)|\ge d_0
\end{equation}
throughout a fixed compact rectangle containing these graphs.  All constants
are uniform over $q$ in a smaller compact subinterval of $J$.

At least $c_Im$ complete grid columns lie under $I$
for all large $m$.  In one such column, a graph
satisfying~\eqref{eq:oakk_geometry_bounds} meets at most
$M=\lceil b_1\rceil+2$ grid cells.  One of those cells therefore contains a
graph segment with horizontal span
$\Delta_x\ge 1/(Mm)$ and vertical drop
\[
 b_0\Delta_x\le\Delta_y\le b_1\Delta_x.
\]
Let $(x_0,y_0)$ and $(x_1,y_1)$ be its left and right endpoints, so
$y_0-y_1=\Delta_y$, and set
$\Delta_0=\Delta_y/(4b_1)$.  Then $\Delta_0\le\Delta_x/4$.  Over
$[x_0,x_0+\Delta_0]$ the graph drops by at most $\Delta_y/4$, and over
$[x_1-\Delta_0,x_1]$ it rises backward from $y_1$ by at most $\Delta_y/4$.
Consequently the selected grid cell contains the two rectangles
\[
\begin{split}
 &[x_0,x_0+\Delta_0]\times[y_1,y_1+\Delta_y/2],\\
 &[x_1-\Delta_0,x_1]\times[y_1+\Delta_y/2,y_0],
\end{split}
\]
one strictly below and one strictly above the graph.  Each has area
$\Delta_0\Delta_y/2=\Delta_y^2/(8b_1)\ge c_1m^{-2}$, and every point in
either rectangle is at vertical distance at least $\Delta_y/4\ge c_2m^{-1}$
from the graph.  Integrating $\partial_2\Lambda$ vertically and
using~\eqref{eq:oakk_geometry_bounds} gives
$|\Lambda-q|\ge d_0c_2m^{-1}$ on the respective rectangles.  Hence the
selected cell contributes at least $c_3m^{-3}$ to each of $A_C(q)$ and
$B_C(q)$.
Selected cells from different columns are distinct, and summing over at
least $c_Im$ columns gives a lower bound $c_4m^{-2}$, uniformly
for $q$ in the chosen neighborhood of $q_0$.

For $S>2$ the two-dimensional construction applies with one set of
constants after conditioning on the remaining sites.  Choose an interior
point $u^0=(u_1^0,\ldots,u_S^0)$ with $\Lambda(u^0)=q_0$, write
$\tilde u=(u_3,\ldots,u_S)$, and put
\[
 g_{3:S}(\tilde u)=\prod_{s=3}^Sg_s(u_s),\qquad
 \tilde q(q,\tilde u)=q/g_{3:S}(\tilde u).
\]
There are a neighborhood $J$ of $q_0$ and a compact box $\mathcal N$ of
positive volume around $\tilde u^0$ such that $g_{3:S}(\tilde u)\ge c_0>0$
and every $\tilde q(q,\tilde u)$,
$(q,\tilde u)\in J\times\mathcal N$, lies in one compact interior
subinterval of the range
of $g_1g_2$.  The two-dimensional construction just proved therefore has
one set of constants for all $(q,\tilde u)\in J\times\mathcal N$.

Disintegrating over the remaining sites gives the lower bound
$c_0|\mathcal N|c_4m^{-2}$ uniformly over $q\in J$.  For a full cell
$C=C_{12}\times C_{\tilde u}$, disintegration gives
\[
\begin{split}
 A_C(q)&=\int_{C_{\tilde u}}g_{3:S}(\tilde u)
  \int_{C_{12}}\{g_1(u_1)g_2(u_2)-\tilde q(q,\tilde u)\}_+
   \,\mathrm du_{12}\,\mathrm d\tilde u,\\
 B_C(q)&=\int_{C_{\tilde u}}g_{3:S}(\tilde u)
  \int_{C_{12}}\{\tilde q(q,\tilde u)-g_1(u_1)g_2(u_2)\}_+
   \,\mathrm du_{12}\,\mathrm d\tilde u.
\end{split}
\]
For nonnegative functions $f_1$ and $f_2$,
$\min(\int f_1,\int f_2)\ge\int\min(f_1,f_2)$.  Summing first over $C_{12}$,
then over $C_{\tilde u}$, and restricting the $\tilde u$-integral to
$\mathcal N$ yields
$c_0|\mathcal N|c_4m^{-2}$, uniformly over $q\in J$.

The compact interval $Q$ has a finite cover by neighborhoods $J$
of the preceding kind.  Taking the minimum of their positive constants and
the maximum of their finite starting resolutions
proves~\eqref{eq:oakk_gridlemma}.
\end{proof}

\paragraph{Why one transverse branch crossing implies OAK.}
Suppose that at
$v^\circ\in(\lambda_-,\lambda_+)^K$ with distinct coordinates two positive
ordered action branches tie, strictly dominate every other branch, and have a
strictly positive difference between their slopes in coordinate $a$.
Distinctness fixes the likelihood-ratio ordering on a neighborhood.
On that neighborhood every action branch is affine in $v_a$, and the
difference of the two selected branches has the form
$\Delta(z)v_a+c(z)$, where $z=v_{-a}$ and $\Delta(z^\circ)>0$.  Continuity gives
$\Delta(z)>0$ nearby, and their crossing is
$q(z)=-c(z)/\Delta(z)$, a continuous function with
$q(z^\circ)=v_a^\circ$.  Strict positivity and strict dominance over the
remaining branches persist after shrinking the neighborhood.  The null distribution
of $\Lambda_{-a}$ charges every open subset of its product support, so a compact
positive-probability subset supplies $\mathcal B$; the slope jump
$\Delta(z)$ is positive there.  This proves the sufficient
single-point formulation stated before Theorem~\ref{thm:oracle_kink}.

\paragraph{Two-sided dual bound.}
For an arbitrary fixed report design, let $\mathcal G$ be its report
sigma-field, put $\bar\Lambda_k=\E_0(\Lambda_k\mid\mathcal G)$, let $\mu^*$ minimize the
continuous-oracle dual, and let $\bar\mu$ minimize the report-distribution dual.
The argument in Steps~2 and~5 of
\S\ref{supp:proof-graderate} gives
\begin{equation}\label{eq:oakk_pinch}
 \E_0\{F_{\mu^*}(\Lambda)-F_{\mu^*}(\bar\Lambda)\}
 \le d_w(T)
 \le
 \E_0\{F_{\bar\mu}(\Lambda)-F_{\bar\mu}(\bar\Lambda)\}.
\end{equation}
Each expectation in~\eqref{eq:oakk_pinch} is nonnegative by iterated
coordinatewise conditional Jensen.

\paragraph{Part (a): every finite-range componentwise report is strictly lossy.}
Use the coordinate $a$, set $\mathcal B$, thresholds $q(z)$ and jumps $\Delta(z)$ in
OAK, and write $X=\Lambda_a$, $Z=\Lambda_{-a}$.  For every fixed $z\in \mathcal B$ outside the
null exceptional set, Lemma~\ref{lem:oakk_product_cut} gives at least one
own-coordinate report cell
$D$ for which the conditional distribution of $X$ charges both sides of $q(z)$.
There are only finitely many such cells.  The measurable sets
\[
 \mathcal B_D=\left\{z\in \mathcal B:
 P_0\{X<q(z)\mid D\}>0,\quad
 P_0\{X>q(z)\mid D\}>0\right\}
\]
cover $\mathcal B$ up to a null set, so one $\mathcal B_D$ has positive $P_0$-probability.
Measurability follows because the two conditional probabilities are the
conditional distribution function of $X$ evaluated at the measurable
threshold $q(z)$ and its complementary strict tail.

Some report profile for the other coordinates carries positive probability
of the witness set, and on it the conditional Jensen gap is strict.  The
inverse-image events of the finitely many report profiles for $Z$
partition the event $\{Z\in \mathcal B_D\}$.  Choose one such profile, $D_{-a}$, for
which $P_0\{Z\in \mathcal B_D,D_{-a}\}>0$, equivalently
$P_0(Z\in \mathcal B_D\mid D_{-a})>0$.  Let
$\bar x=\E_0(X\mid D)$ and
$\bar z_k=\E_0(\Lambda_k\mid D_k)$.  Conditional independence of the likelihood
ratios inside the complete profile gives
\begin{align}
 &\E_0\{F_{\mu^*}(\Lambda)\mid D,D_{-a}\}
       -F_{\mu^*}(\bar x,\bar z)\notag\\
 &\quad=
 \E_{Z\mid D_{-a}}\!
 \left[\E_{X\mid D}\{f_Z(X)\}-f_Z(\bar x)\right]\notag\\
 &\qquad+
 \left[
 \E_{Z\mid D_{-a}}\{F_{\mu^*}(\bar x,Z)\}
       -F_{\mu^*}(\bar x,\bar z)\right].
 \label{eq:oakk_profile}
\end{align}
The second bracket is nonnegative by iterated coordinatewise Jensen.  The
first integrand is nonnegative for every $z$, and
Lemma~\ref{lem:oakk_hinge} makes it strictly
positive on $\mathcal B_D$, because the conditional distribution given $D$ charges both sides
of the positive kink.  Thus~\eqref{eq:oakk_profile} is strictly positive.
All other complete profiles contribute nonnegatively, so the left side
of~\eqref{eq:oakk_pinch} is strictly positive and $d_w(T)>0$.

\paragraph{The threshold-report class.}
The class-level gap follows from pointwise loss because the optimal report
power is continuous on the compact threshold cube.  Work on the
compact threshold cube $[0,1]^S$, including degenerate faces, and formulate
the finite report problem on the fixed binary report space.  A randomized
rule is a point in a fixed compact product of simplices indexed by binary
report arrays and rejection sets.  At every configuration, each report-array
probability is a product of factors among
\[
 t_s,\quad 1-t_s,\quad G_s(t_s),\quad 1-G_s(t_s),
\]
and is continuous in $t$.  Hence the objective $\Pi_w$ and every
FWER constraint are continuous jointly in the rule and $t$.

The optimal report power $\Pi^\star(t)$ is continuous in the thresholds:
upper semicontinuity by compactness, lower semicontinuity by mixing with
the no-rejection rule.  To prove upper semicontinuity, for
any $t_n\to t$ first pass to a subsequence, still indexed by $n$, along which
$\Pi^\star(t_n)\to\limsup_j \Pi^\star(t_j)$, and choose an optimal rule $r_n$
at each
$t_n$. Compactness gives a further subsequence with $r_n\to r$; closedness
of the constraints makes $r$ feasible at $t$, and continuity gives
$\limsup_n\Pi^\star(t_n)\le \Pi^\star(t)$. Conversely, mix an optimal rule at $t$ with the
no-rejection rule, with respective weights $1-\eta$ and $\eta$.  Every FWER
constraint of the mixture is at most $(1-\eta)\alpha$, so the finite family
of constraints has common slack $\eta\alpha$.  The mixture remains feasible
throughout a neighborhood of $t$, and its power converges to
$(1-\eta)\Pi^\star(t)$.  Therefore
$\liminf_n\Pi^\star(t_n)\ge(1-\eta)\Pi^\star(t)$; letting $\eta\downarrow0$
proves continuity of $\Pi^\star$.

Every point of the closed cube is a finite-range componentwise design, so
the result just proved gives
$\pi^*_w(\alpha)-\Pi^\star(t)>0$ everywhere.  A continuous positive function on a
compact set has a positive minimum.  The infimum over the original threshold
class is no smaller, proving
$\delta_w(\mathcal T^{\mathrm{thr}})>0$.

\paragraph{Part (b): the lower inverse-square bound.}
Because $q(z)$ lies in the open interval
$(\lambda_-,\lambda_+)$ and $\Delta(z)>0$ almost everywhere on $\mathcal B$, a
countable exhaustion supplies a measurable $\mathcal B_0\subseteq \mathcal B$ of positive
$P_0$-probability, a compact interval
$Q\subset(\lambda_-,\lambda_+)$, and $\Delta_*>0$ such that
\begin{equation}\label{eq:oakk_uniform_subset}
 q(z)\in Q,\qquad \Delta(z)\ge\Delta_*,\qquad z\in \mathcal B_0.
\end{equation}
For example, take the first positive-probability member of the increasing
union on which the distance of $q(z)$ from both endpoints and $\Delta(z)$
are at least $1/n$.

On the witness set the own-coordinate gap of the uniform design is at least
$\Delta_*c_Qm^{-2}$.  For an own-hypothesis grid cell $C$, write
$\bar\Lambda_C=|C|^{-1}\int_C\Lambda(u)\,\mathrm du$.
Lemma~\ref{lem:oakk_hinge} and~\eqref{eq:oakk_uniform_subset} give, for
$z\in \mathcal B_0$,
\begin{align}
 &\int_C f_z\{\Lambda(u)\}\,\mathrm du-|C|f_z(\bar\Lambda_C)\notag\\
 &\qquad\ge
 \Delta_*\min\left\{
   \int_C\{\Lambda(u)-q(z)\}_+\,\mathrm du,\,
   \int_C\{q(z)-\Lambda(u)\}_+\,\mathrm du
 \right\}.
 \label{eq:oakk_cell_lower}
\end{align}
Summing~\eqref{eq:oakk_cell_lower} over $C$ and applying
Lemma~\ref{lem:oakk_grid_loss} yields
\begin{equation}\label{eq:oakk_own_lower}
 \E_0\{f_z(\Lambda_a)-f_z(\bar\Lambda_a)\}
 \ge \Delta_* c_Qm^{-2},\qquad z\in \mathcal B_0,
\end{equation}
where the expectation is over the evidence for coordinate $a$.

Independence across hypotheses permits the decomposition
\begin{align}
 &\E_0F_{\mu^*}(\Lambda)-\E_0F_{\mu^*}(\bar\Lambda)\notag\\
 &=\E_0\!\left[
 F_{\mu^*}(\Lambda_a,\Lambda_{-a})-F_{\mu^*}(\bar\Lambda_a,\Lambda_{-a})\right]\notag\\
 &\quad+\E_0\!\left[
 F_{\mu^*}(\bar\Lambda_a,\Lambda_{-a})
       -F_{\mu^*}(\bar\Lambda_a,\bar\Lambda_{-a})\right].
 \label{eq:oakk_decomposition}
\end{align}
In the decomposition~\eqref{eq:oakk_decomposition}, the second term is
nonnegative by conditioning on all report cells and
applying coordinatewise Jensen successively to the conditionally independent
coordinates $\Lambda_k$, $k\ne a$.  Integrating~\eqref{eq:oakk_own_lower} over
$z\in \mathcal B_0$ therefore gives
\[
 \E_0\{F_{\mu^*}(\Lambda)-F_{\mu^*}(\bar\Lambda)\}
 \ge P_0(\Lambda_{-a}\in \mathcal B_0)\,\Delta_* c_Qm^{-2}.
\]
The lower side of~\eqref{eq:oakk_pinch} proves
$\delta_w(\mathcal T^{(m)})\ge cm^{-2}$.  The upper bound is
Lemma~\ref{lem:uniform_upper}, so the rate is $\Theta(m^{-2})$.
Substitution of $m=2^b$ gives the stated bit form.

\paragraph{The lower bound over all componentwise reports with at most $m$ values per site.}
Fix an arbitrary $T\in\mathcal T_{\le m}$ and let $\mathcal G$ be its
complete report sigma-field.  We strengthen the preceding grid argument by
using the arbitrary-cell argument of Step~6 of
\S\ref{supp:proof-graderate}.  By a further countable refinement of the
set in~\eqref{eq:oakk_uniform_subset}, there exist a positive-probability set
$\mathcal B_*\subseteq \mathcal B_0$ and a sufficiently small compact interval $Q_*$ in the
interior likelihood-ratio range such that
\[
 q(z)\in Q_*,\qquad \Delta(z)\ge\Delta_*,\qquad z\in \mathcal B_*.
\]
To justify `sufficiently small', choose a point $q_0$ in the support of the
push-forward of $P_0(\,\cdot\mid \Lambda_{-a}\in \mathcal B_0)$ by $q$ and then a compact
neighborhood $Q_*$ of $q_0$ whose inverse image has positive probability;
the neighborhoods used below are stable after shrinking $Q_*$.

The site-1 localization of Step~6 applies uniformly over the witness
thresholds $q\in Q_*$.  Choose $t_0\in(0,1)$ with $\prod_sg_s(t_0)=q_0$, put
$X=g_1(U_{1,a})$ and $Y=\prod_{s=2}^Sg_s(U_{s,a})$, and select compact
neighborhoods $I$ of $g_1(t_0)$, $J_2$ of $t_0$, and $\mathcal N$ of
$(t_0,\ldots,t_0)$ for $\tilde U_a=(U_{3,a},\ldots,U_{S,a})$.  They may be
chosen so that, for every
$q\in Q_*$ and $\tilde u_a\in\mathcal N$, the change of variable
\[
 r=\frac{q}{g_{3:S}(\tilde u_a)\,g_2(u_2)}
\]
maps $J_2$ over $I$.  Uniformly over these compact sets,
\begin{equation}\label{eq:oakk_anym_jacobian}
 Y\left|\frac{\mathrm du_2}{\mathrm dr}\right|
 =\frac{g_{3:S}(\tilde u_a)^2g_2(u_2)^3}{q|g_2'(u_2)|}\ge c_6>0,
\end{equation}
and the null density of $X$ is bounded below by $f_->0$ on $I$.

Refining the report sigma-field to reveal everything except site 1's cell
for hypothesis $a$ can only lower the gap, so it suffices to bound the
refined gap.  Refine $\mathcal G$ to $\mathcal H$ by adjoining the
following variables:
all raw uniforms for hypotheses $k\ne a$; for hypothesis $a$,
$U_{2,a},\ldots,U_{S,a}$; and the exact value of $X$ off $I$, with only
$(\mathbf1\{X\in I\},T_1(U_{1,a}))$ retained on $I$.  Thus
$\mathcal G\subseteq\mathcal H$. Let
$\widetilde\Lambda=\E_0(\Lambda\mid\mathcal H)$ and
$\bar\Lambda=\E_0(\Lambda\mid\mathcal G)$.  Conditional on $\mathcal G$, the
coordinates of $\widetilde\Lambda$ remain independent: each $\widetilde\Lambda_k$ is a
function only of the raw evidence block for hypothesis $k$, while both the
original reports and the refinement act separately on those blocks. Iterated
coordinatewise
Jensen therefore gives
\[
 \E_0\{F_{\mu^*}(\widetilde\Lambda)\mid\mathcal G\}
 \ge F_{\mu^*}\{\E_0(\widetilde\Lambda\mid\mathcal G)\}
 =F_{\mu^*}(\bar\Lambda).
\]
Consequently the lower side of the two-sided dual bound is at least
$\E_0\{F_{\mu^*}(\Lambda)-F_{\mu^*}(\widetilde\Lambda)\}$.

The refined gap is at least a constant times $m^{-2}$ for every
$T\in\mathcal T_{\le m}$, which with the uniform design's upper bound gives
the classwise rate.  Let $A_j,p_j,\bar x_j$ and $\Delta_j(r)$ be the site-1
cell intersections and hinge regrets defined in Step~6, with at most $m$
non-null cells.  For
$z\in \mathcal B_*$, Lemma~\ref{lem:oakk_hinge} applied to
$x\mapsto f_z(Yx)$ gives a hinge coefficient
at least $\Delta(z)Y\ge\Delta_* Y$ and threshold $q(z)/Y$.  Restricting to
$z\in \mathcal B_*$, $\tilde u_a\in\mathcal N$, and $u_{2,a}\in J_2$, and then
using~\eqref{eq:oakk_anym_jacobian} gives
\begin{align*}
 \E_0\{F_{\mu^*}(\Lambda)-F_{\mu^*}(\widetilde\Lambda)\}
 &\ge \Delta_* P_0(\Lambda_{-a}\in \mathcal B_*)|\mathcal N|c_6
       \sum_j\int_I\Delta_j(r)\,\mathrm dr\\
 &\ge \frac{\Delta_* P_0(\Lambda_{-a}\in \mathcal B_*)|\mathcal N|c_6
                  f_-|I|^3}{24m^2},
\end{align*}
where the second inequality is exactly~\eqref{eq:anym_hinge_variance} and
the arbitrary-measurable-set
rearrangement bound following it.  The constant is independent of the
chosen report maps. Hence every $T\in\mathcal T_{\le m}$ has
$d_w(T)\ge cm^{-2}$, and taking the infimum gives
$\delta_w(\mathcal T_{\le m})\ge cm^{-2}$.  Since the uniform design belongs
to $\mathcal T_{\le m}$ and Lemma~\ref{lem:uniform_upper} supplies its
$O(m^{-2})$ upper bound, the claimed classwise rate follows.

\paragraph{The $K=2$ sufficient condition.}
The two inequalities place the crossing of the two nonzero branches at an
interior threshold $q=2\mu_1^*$ with positive common value, which supplies
an OAK witness.  Let $\mu^*=(\mu_1^*,\mu_2^*)$ minimize the continuous dual at
the point prior.  With $x<z$,
the two nonzero action branches are
\[
 \psi_1=(z/2-\mu_1^*)x-\mu_2^*,\qquad
 \psi_2=(z-\mu_1^*)x-\mu_1^*z-\mu_2^*.
\]
They cross at $q=2\mu_1^*$.  At the crossing their common value is
$\mu_1^*(z-2\mu_1^*)-\mu_2^*$, which is positive exactly when
\[
 z>z_{\min}:=2\mu_1^*+\frac{\mu_2^*}{\mu_1^*}.
\]
The slope of $\psi_2$ in $x$ exceeds that of $\psi_1$ by $z/2>0$.
Dual feasibility gives $\mu_2^*\ge0$, so
$z_{\min}\ge q>\lambda_-$; the second displayed inequality in
Theorem~\ref{thm:oracle_kink} gives $z_{\min}<\lambda_+$. Choose a compact interval
$\mathcal B\subset(z_{\min},\lambda_+)$.  The null distribution of $\Lambda_2$ charges every open
subinterval of $(\lambda_-,\lambda_+)$, so $P_0(\Lambda_2\in \mathcal B)>0$.  For every
$z\in \mathcal B$, the crossing at $q$ is positive and separated from the rank tie
$x=z$; immediately below it the unique active branch rejects only the
coordinate with likelihood ratio $z$, and immediately above it the unique
active branch rejects both coordinates.  Thus OAK holds with
$\Delta(z)=z/2$.  For homogeneous $\mathrm{Beta}(1,\theta)$ sites,
$g_s(1)=0$ and $g_s(0)=\theta$, giving
$(\lambda_-,\lambda_+)=(0,\theta^S)$. \hfill$\square$

\paragraph{Primitive definitions for the Beta-family certificate.}
For two homogeneous $\mathrm{Beta}(1,\theta)$ sites, the aggregate likelihood
ratio's null law has tail probability $p_\theta$, tail first moment
$M_\theta$, and density $f_\theta$:
\[
 r_\theta(c)=\left(\frac{c}{\theta^2}\right)^{1/(\theta-1)},\qquad
 p_\theta(c)=1-r_\theta(c)+r_\theta(c)\log r_\theta(c),
\]
and
\[
 M_\theta(c)=1-r_\theta(c)^\theta
                 \{1-\theta\log r_\theta(c)\},\qquad
 f_\theta(c)=\frac{-r_\theta(c)\log r_\theta(c)}{(\theta-1)c},
\]
for $0<c<\theta^2$, with $p_\theta(c)=M_\theta(c)=1$ for $c\le0$ and
$p_\theta(c)=M_\theta(c)=0$ for $c\ge\theta^2$. For $\theta^2>53/15$, let
\[
 a_\theta=\frac{8}{5(\theta^2-2)},\qquad
 r_0=1+\sqrt{9/5},
\]
$z_1(x)=2+8/(5x)$ and $z_2(x)=(x+4/5)/(x-1)$, and define
\begin{align*}
 B_2(\theta)={}&2\int_{a_\theta}^{2}f_\theta(x)p_\theta\{z_1(x)\}\,\mathrm dx
 +2\int_2^{r_0}f_\theta(x)p_\theta\{z_2(x)\}\,\mathrm dx
 +p_\theta(r_0)^2,\\
 B_1(\theta)={}&2\int_{a_\theta}^{2}x f_\theta(x)p_\theta\{z_1(x)\}\,\mathrm dx\\
 &+2\int_2^{r_0}f_\theta(x)
   [x p_\theta\{z_2(x)\}+M_\theta\{z_2(x)\}]\,\mathrm dx
 +2p_\theta(r_0)M_\theta(r_0).
\end{align*}
These are one-dimensional null expectations, not optimizer outputs. Define
the primitive parameter region
\begin{equation}\label{eq:beta_primitive_region}
 \mathcal U=\left\{(\alpha,\theta):
 \begin{array}{l}
 \theta^2>53/15,\\[-2pt]
 \max[p_\theta(3),B_2(\theta)]<\alpha\\[-2pt]
 \hfill{}<\min[2p_\theta(3)-p_\theta(3)^2,B_1(\theta)/2]
 \end{array}\right\}.
\end{equation}

\subsection{The Beta-family active-kink certificate: proof of
Corollary~\ref{cor:beta_oak}}\label{supp:proof-beta-oak}

The two $K=2$ inequalities of Theorem~\ref{thm:oracle_kink} hold
analytically, with no numerical optimizer, on an open region of homogeneous
Beta models.

\begin{corollary}[An optimizer-free multiplicity family]\label{cor:beta_oak}
Suppose Assumptions~\ref{ass:model}--\ref{ass:reg} hold with $K=S=2$, $w(2)=1$, and
$g_1=g_2=g_\theta$, where $g_\theta(u)=\theta(1-u)^{\theta-1}$ and
$\theta>1$, and let
$\mathcal U$ be the explicit primitive region~\eqref{eq:beta_primitive_region}.
For every $(\alpha,\theta)\in\mathcal U$, the two $K=2$ inequalities of
Theorem~\ref{thm:oracle_kink} hold at every minimizing continuous-dual
multiplier, so the crossing at $2\mu_1^*$ is an OAK witness.  Consequently every fixed finite-range
componentwise report design is strictly lossy and the
threshold-report class has a positive deficit. The set $\mathcal U$ is open
and contains
\[
 [0.03512,0.05384]\times\{2\}.
\]
If $(\alpha,\theta)\in\mathcal U$ also has $\theta\ge2$, then
\[
 \delta_2(\mathcal T_{\le m})=\Theta(m^{-2}),\qquad
 \delta_2(\mathcal T^{(m)})=\Theta(m^{-2}).
\]
For $(\alpha,\theta)\in\mathcal U$ with $1<\theta<2$, the strict-loss
conclusions remain valid, but the displayed rate is not asserted because
$g_\theta'$ is unbounded at one endpoint.
\end{corollary}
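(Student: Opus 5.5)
The plan is to locate every minimizing multiplier $\mu^*=(\mu_1^*,\mu_2^*)$ of the $K=2$ point-prior continuous dual using only one-dimensional null expectations of $\Lambda$. Each of the two inequalities of Theorem~\ref{thm:oracle_kink} then becomes a primitive inequality on $(\alpha,\theta)$. Once they hold at every minimizer, the $K=2$ sufficient condition proved above supplies an OAK witness with $\Delta(z)=z/2$. Theorem~\ref{thm:oracle_kink}(a) then gives strict loss for every finite-range design and a positive threshold-class deficit. When $\theta\ge2$, $g_\theta\in C^1[0,1]$ with $g_\theta'<0$ on $[0,1)$, so part (b) gives both $\Theta(m^{-2})$ statements. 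For $1<\theta<2$ only the strict-loss conclusions follow, since the smoothness hypothesis fails.

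\emph{Step 1: formulas.} I would first verify the closed forms $p_\theta,M_\theta,f_\theta$. Write $\Lambda=\theta^2V^{\theta-1}$ with $V=(1-U_1)(1-U_2)$ and $P(V\le v)=v-v\log v$. This gives $P_0(\Lambda>c)=p_\theta(c)$, the tail first moment $M_\theta(c)=\E_0\{\Lambda;\Lambda>c\}$, and the density $f_\theta$.

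\emph{Step 2: derivatives of the dual.} Next I would write $\mathcal D(\mu)=\alpha(2\mu_1+\mu_2)+\E_0F_\mu(\Lambda_1,\Lambda_2)$ explicitly. The branches are $0$, $\psi_1=(z/2-\mu_1)x-\mu_2$ and $\psi_2=(z-\mu_1)x-\mu_1 z-\mu_2$, for $x\le z$. The partial derivatives are
\[
\partial_{\mu_2}\mathcal D=\alpha-P_0(\text{reject}\ge1),\qquad \partial_{\mu_1}\mathcal D=2\alpha-\E_0\{B_{1,\text{act}}\},
\]
and they are differentiable because ties are null, as in Step~1 of Lemma~\ref{lem:oak_auto}. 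Since $\mathcal D$ is convex, every minimizer lies in a box $\Box$ if the outward directional derivatives are strictly positive on the faces of $\Box$. I would use the box whose corner is $\mu_1^*=4/5$, $\mu_2^*\approx 8/25$, so that $2\mu_1=8/5$ and $\mu_2/\mu_1=2/5$. The thresholds $z_1(x)=2+8/(5x)$ and $z_2(x)=(x+4/5)/(x-1)$, $a_\theta$ and $r_0$ are exactly the acceptance boundaries of the policy at such corner multipliers. Then $B_2(\theta)$ and $B_1(\theta)/2$ are the rejection probability and the normalized error coefficient at the corner, and $p_\theta(3)$ and $2p_\theta(3)-p_\theta(3)^2$ are single and union tail probabilities at the other face. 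The two sides of the inequalities defining $\mathcal U$ are thus the sign conditions $\partial\mathcal D<0$ or $>0$ on opposite faces. Convexity plus monotonicity of the acceptance region in $\mu$ extends each corner sign to its whole face.

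\emph{Step 3: the witness inequalities.} On the box, both inequalities hold uniformly: $0<2\mu_1^*\le 8/5<\theta^2$, and $2\mu_1^*+\mu_2^*/\mu_1^*\le 2+\text{const}<\theta^2$, which is where $\theta^2>53/15$ enters. Openness of $\mathcal U$ follows from continuity in $(\alpha,\theta)$ of $p_\theta$, $M_\theta$, $B_1$ and $B_2$, by dominated convergence. Containment of $[0.03512,0.05384]\times\{2\}$ is a finite evaluation at $\theta=2$, where $r_2(c)=c/4$ and the integrals are elementary, checked with outward rounding.

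\emph{Main obstacle.} The hard step is Step 2: confirming that the primitive face conditions really bracket every minimizer, and that monotonicity lets a corner evaluation control an entire face. I would first try to prove that $\partial_{\mu_2}\mathcal D$ and $\partial_{\mu_1}\mathcal D$ are monotone in each coordinate of $\mu$, since the acceptance region grows with $\mu$. If that fails, I would fall back on bounding the face derivatives by the corner quantities through inclusion of rejection regions.
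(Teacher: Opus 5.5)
Your architecture is the paper's: closed-form null tails of $\Lambda$, the dual gradient as constraint slack, enclosure of every minimizer by strict inward face signs of the convex dual, monotonicity to spread corner signs along faces, a witness check on the box, continuity for openness, a certified evaluation at $\theta=2$, and then parts (a) and (b) of Theorem~\ref{thm:oracle_kink}. The gap is the box itself, and Step~3 fails as written.

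At a multiplier $\mu$, $\psi_1>0$ exactly when $z>2\mu_1+2\mu_2/x$, and $\psi_2>\psi_1$ exactly when $x>2\mu_1$. Also $\psi_2>0$ exactly when $z>(\mu_1x+\mu_2)/(x-\mu_1)$, for $x>\mu_1$. Hence $z_1$, $z_2$, $a_\theta$ and $r_0$ are the action boundaries at $\mu^{\mathrm L}=(1,4/5)$, not at $(4/5,8/25)$. At your corner the action-one boundary would be $z>8/5+16/(25x)$, and the corner expectations would not be $B_1(\theta)$ and $B_2(\theta)$. The quantities $p_\theta(3)$ and $2p_\theta(3)-p_\theta(3)^2$ arise at $\mu^{\mathrm R}=(3/2,0)$: there action one means exactly one ratio exceeds $3$, and action two means both do. The enclosing box is therefore $[1,3/2]\times[0,4/5]$, with $\mu^{\mathrm L}$ and $\mu^{\mathrm R}$ its upper-left and lower-right corners. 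Your bound $2\mu_1^*\le 8/5$ is false on $\mathcal U$: $(0.05,2)\in\mathcal U$, and there every minimizer has $\mu_1^*\in[1.253,1.258]$ by Corollary~\ref{cor:enclosure_beta}. Nor does ``$2+\text{const}$'' produce the constant $53/15$.

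The correct check is that every interior point of the box has $0<2\mu_1<3<\theta^2$ and $2\mu_1+\mu_2/\mu_1<2\mu_1+(4/5)/\mu_1\le 3+8/15=53/15<\theta^2$, because $2\mu_1+(4/5)/\mu_1$ increases on $[1,3/2]$. The binding point is the upper-right corner $(3/2,4/5)$, which is not a sign corner. The condition $\theta^2>53/15$ also secures $\theta^2>3$ for the right-corner enumeration, and $a_\theta<2$.

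Your stated obstacle is mild, because only cross-monotonicity is needed. Raising $\mu_2$ lowers both nonzero branches equally, so it only moves mass to rejecting nothing; hence $\partial_{\mu_1}\mathcal D$ is non-decreasing in $\mu_2$. Raising $\mu_1$ lowers both nonzero branches, so the rejection event shrinks; hence $\partial_{\mu_2}\mathcal D$ is non-decreasing in $\mu_1$. This yields the four face signs:
\begin{itemize}
\item[] $\partial_{\mu_1}\mathcal D<0$ on the face $\mu_1=1$, from $\mu^{\mathrm L}$;
\item[] $\partial_{\mu_1}\mathcal D>0$ on $\mu_1=3/2$, from $\mu^{\mathrm R}$;
\item[] $\partial_{\mu_2}\mathcal D<0$ on $\mu_2=0$, from $\mu^{\mathrm R}$;
\item[] $\partial_{\mu_2}\mathcal D>0$ on $\mu_2=4/5$, from $\mu^{\mathrm L}$.
\end{itemize}

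Finally, the segment's endpoints are exactly the certified bounds $B_2(2)<0.03512$ and $B_1(2)/2>0.05384$. What places the closed segment inside $\mathcal U$ is therefore the strictness of rigorous enclosures, not an approximate evaluation of the integrals. The paper obtains these enclosures from monotone finite-partition sums with certified logarithm bounds.
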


\paragraph{Step 1: Tail formulas and dual equations.}
For two homogeneous $\mathrm{Beta}(1,\theta)$ sites under the global null,
\[
 \Lambda_\theta=\theta^2(U_1U_2)^{\theta-1},\qquad
 U_1,U_2\stackrel{\mathrm{iid}}{\sim}\mathrm{Unif}(0,1).
\]
Direct integration of the product-uniform density $-\log v$ identifies the
primitive functions defined above: for $0<c<\theta^2$,
\begin{equation}\label{eq:betatheta_tail_moments}
 p_\theta(c)=P_0(\Lambda_\theta\ge c),\qquad
 M_\theta(c)=\E_0\{\Lambda_\theta\mathbf1(\Lambda_\theta\ge c)\},
\end{equation}
and $f_\theta$ is the null density of $\Lambda_\theta$.

Let
$\mathcal D_{\alpha,\theta}(\mu_1,\mu_2)$ be the continuous $K=2$ dual
objective in \S\ref{app:dual}, and order two independent likelihood ratios as
$x=\min(\Lambda_1,\Lambda_2)\le z=\max(\Lambda_1,\Lambda_2)$. Its nonzero branches are
\[
 \psi_1=\left(\frac z2-\mu_1\right)x-\mu_2,
 \qquad
 \psi_2=(z-\mu_1)x-\mu_1z-\mu_2.
\]
Except on null branch boundaries, let $l_\mu(x,z)\in\{0,1,2\}$ denote
the maximizing action, and put
\[
 A_1^\mu=x\mathbf1\{l_\mu=1\}
 +(x+z)\mathbf1\{l_\mu=2\},
 \qquad A_2^\mu=\mathbf1\{l_\mu>0\}.
\]
Writing $\nabla_j(\mu)=\partial_j\mathcal D_{\alpha,\theta}(\mu)$,
dominated differentiation gives
\begin{equation}\label{eq:betatheta_dual_gradient}
 \nabla_1(\mu)=2\alpha-\E_0A_1^\mu,
 \qquad \nabla_2(\mu)=\alpha-\E_0A_2^\mu.
\end{equation}
For fixed $\mu_1$, $\nabla_1$ is non-decreasing in $\mu_2$, because increasing
$\mu_2$ lowers both nonzero branches equally and can only replace a nonzero
action by zero. For fixed $\mu_2$, $\nabla_2$ is non-decreasing in $\mu_1$,
because both nonzero branches decrease and the event of a nonzero action
shrinks.

\paragraph{Step 2: Inward dual signs and an interior minimizer.}
Consider the fixed rectangle
\[
 \mathcal Q=[1,3/2]\times[0,4/5].
\]
At $\mu^{\mathrm R}=(3/2,0)$, whenever $\theta^2>3$, action enumeration gives
\begin{equation}\label{eq:betatheta_right_expectations}
 \E_0A_1^{\mu^{\mathrm R}}=2p_\theta(3),\qquad
 \E_0A_2^{\mu^{\mathrm R}}=2p_\theta(3)-p_\theta(3)^2.
\end{equation}
At $\mu^{\mathrm L}=(1,4/5)$, action one is active exactly when
$a_\theta<x<2$ and $z>z_1(x)$; action two is active when $2<x<r_0$ and
$z>z_2(x)$, and whenever $r_0<x\le z<\theta^2$. Direct integration of these
regions gives
\begin{equation}\label{eq:betatheta_left_expectations}
 \E_0A_1^{\mu^{\mathrm L}}=B_1(\theta),\qquad
 \E_0A_2^{\mu^{\mathrm L}}=B_2(\theta),
\end{equation}
with $B_1$ and $B_2$ as defined above.

If $(\alpha,\theta)\in\mathcal U$,
equations~\eqref{eq:betatheta_dual_gradient},
\eqref{eq:betatheta_right_expectations},
and~\eqref{eq:betatheta_left_expectations} give
\[
 \nabla_1(\mu^{\mathrm L})<0,\qquad \nabla_1(\mu^{\mathrm R})>0,\qquad
 \nabla_2(\mu^{\mathrm R})<0,\qquad \nabla_2(\mu^{\mathrm L})>0.
\]
The monotonicities extend these to strict inward signs on all four faces of
$\mathcal Q$. The convex continuous dual attains its minimum on
$\mathcal Q$; the face signs exclude the boundary, so a restricted minimizer
lies in the interior. It is a local, hence global, minimizer on
$\mathbb R_+^2$. Any other global minimizer would join it by a segment of
minimizers; the strict face signs prevent that segment from leaving the
interior of $\mathcal Q$. Every interior point of $\mathcal Q$ satisfies
\[
 0<2\mu_1<3<\theta^2,
 \qquad
 2\mu_1+\frac{\mu_2}{\mu_1}
 <3+\frac{4/5}{3/2}=\frac{53}{15}<\theta^2.
\]
Here $2\mu_1+(4/5)/\mu_1$ is increasing on $[1,3/2]$.
The $K=2$ sufficient-condition calculation in the final paragraph of the
proof of Theorem~\ref{thm:oracle_kink} therefore establishes OAK
throughout $\mathcal U$, without locating the minimizing multiplier
numerically.

\paragraph{Step 3: Openness of the certified parameter region.}
Throughout $\{\theta:\theta^2>53/15\}$, the functions
in~\eqref{eq:betatheta_tail_moments}, $a_\theta$, and the integrals $B_1,B_2$
are continuous. Indeed, locally around any such $\theta_0$, the moving lower
limit $a_\theta$ is bounded away from zero and all displayed integrands admit
an integrable bound, so dominated convergence applies; equivalently, one may
couple all $\Lambda_\theta$ through the same uniforms and use the fact that the
fixed-corner branch boundaries have null probability. Hence $\mathcal U$ is
open. We next certify the claimed line segment inside
$\mathcal U$ using exact rational arithmetic at $\theta=2$.

\paragraph{Step 4a: Right-corner bounds at $\theta=2$.}
Under the global null, each site likelihood ratio
$2(1-U_s)$ is uniform on $(0,2)$, so the per-hypothesis likelihood ratio
$\Lambda=4(1-U_1)(1-U_2)$ has support $(0,4)$. For $0<c<4$, direct integration
gives
\begin{equation}\label{eq:beta12_tail_moments}
\begin{aligned}
 p(c)&:=P_0(\Lambda\ge c)
 =1-\frac c4\left\{1+\log\frac4c\right\},\\
 M(c)&:=\E_0\{\Lambda\mathbf 1(\Lambda\ge c)\}
 =1-\left(\frac c4\right)^2\left\{1+2\log\frac4c\right\}.
\end{aligned}
\end{equation}
In particular, $\Lambda$ has a continuous density on $(0,4)$ and $\E_0\Lambda=1$.
First consider $(\mu_1,\mu_2)=(3/2,0)$ and put $p=p(3)$. If exactly one of
$\Lambda_1,\Lambda_2$ exceeds $3$, action one is active and its coefficient in
$A_1^\mu$ is the smaller likelihood ratio; if both exceed $3$, action two is
active and its coefficient is their sum. Consequently,
\[
 \E_0A_1^{(3/2,0)}
 =2p\{1-M(3)\}+2pM(3)=2p,
 \qquad
 \E_0A_2^{(3/2,0)}=2p-p^2.
\]
Equation~\eqref{eq:beta12_tail_moments} gives the rigorous enclosure
$0.0342<p(3)<0.0343$.

\paragraph{Step 4b: Left-corner finite-partition bounds.}
It remains to bound the expectations at $(1,4/5)$. Since
$\psi_2-\psi_1=z(x/2-1)$, the action regions stated above apply with
$a_2=4/5$: action one is active exactly when $4/5<x<2$ and $z>z_1(x)$.
Action two is active when $2<x<r_0$ and $z>z_2(x)$, and whenever
$r_0<x\le z<4$. The elementary bounds $2.34<r_0<2.35$ follow by squaring.

The following finite partition gives bounds separated comfortably from zero.
Let
\[
\begin{aligned}
 \mathcal I_1={}&\{(1,1.2),(1.2,1.4),(1.4,1.6),
                   (1.6,1.8),(1.8,2)\},\\
 \mathcal I_2={}&\{(2,2.1),(2.1,2.2),(2.2,2.3),(2.3,2.34)\},\\
 \mathcal J_1={}&\{(0.8,1)\}\cup\mathcal I_1.
\end{aligned}
\]
For $(a,b)\in\mathcal I_1$, the event
$\{x\in[a,b],z\ge z_1(a)\}$ is contained in the action-one region and has
ordered-pair probability $2\{p(a)-p(b)\}p\{z_1(a)\}$; on it
$A_1^\mu\ge a$. The analogous event with $z_2(a)$ for
$(a,b)\in\mathcal I_2$ is contained in the action-two region and has
$A_1^\mu\ge a+z_2(a)$. Finally, the event that both likelihood ratios
exceed $2.35$ is contained in the action-two region. These disjoint subevents give
\begin{align*}
 \E_0A_1^{(1,4/5)}\ge{}&
 \sum_{(a,b)\in\mathcal I_1}
  2a\{p(a)-p(b)\}p\{z_1(a)\}\\
 &+\sum_{(a,b)\in\mathcal I_2}
  2\{a+z_2(a)\}\{p(a)-p(b)\}p\{z_2(a)\}\\
 &+2p(2.35)M(2.35)>0.01648+0.03366+0.05754=0.10768.
\end{align*}
Conversely, on $x\in[a,b]$ the action-one event is contained in
$\{z>z_1(b)\}$, and the action-two event is contained in
$\{z>z_2(b)\}$. The event $\{x\ge2.34\}$ covers the omitted upper tail.
Thus
\begin{align*}
 \E_0A_2^{(1,4/5)}\le{}&
 \sum_{(a,b)\in\mathcal J_1}
  2\{p(a)-p(b)\}p\{z_1(b)\}\\
 &+\sum_{(a,b)\in\mathcal I_2}
  2\{p(a)-p(b)\}p\{z_2(b)\}
 +p(2.34)^2\\
 &<0.01625+0.00859+0.01028=0.03512.
\end{align*}
Thus $B_1(2)>0.10768$ and $B_2(2)<0.03512$.

\paragraph{Step 4c: Exact logarithm enclosures.}
All decimal bounds above are rigorous rational enclosures, not floating-point
quadrature. Indeed, every argument of the logarithm
in~\eqref{eq:beta12_tail_moments} is rational, and for rational $q\ge1$, with
$s=(q-1)/(q+1)$,
\[
 2\sum_{n=0}^{N}\frac{s^{2n+1}}{2n+1}
 \le\log q\le
 2\sum_{n=0}^{N}\frac{s^{2n+1}}{2n+1}
 +\frac{2s^{2N+3}}{(2N+3)(1-s^2)}.
\]
Taking $N=24$ and substituting in the displayed finite sums gives the stated
outward-rounded rational bounds. The accompanying exact-rational script
\texttt{oak\_\allowbreak beta12\_\allowbreak certificate.py} reproduces
these enclosures.

\paragraph{Step 5: Segment inclusion and the rate conclusion.}
Now take $0.03512\le\alpha\le0.05384$. The two lower inequalities defining
$\mathcal U$ follow from
\[
 p_2(3)=p(3)<0.0343<0.03512\le\alpha,
 \qquad B_2(2)<0.03512\le\alpha.
\]
The two upper inequalities follow strictly from
\[
 \alpha\le0.05384<2(0.0342)-(0.0342)^2
 <2p_2(3)-p_2(3)^2
\]
and
$\alpha\le0.05384<B_1(2)/2$; the final inequality is strict because the
bound on $B_1(2)$ is strict. Thus
\[
 [0.03512,0.05384]\times\{2\}\subset\mathcal U.
\]
Since $\mathcal U$ is open and this segment is compact, $\mathcal U$
contains a genuine two-dimensional neighborhood of it. The strict-loss
conclusions follow from part~(a) of Theorem~\ref{thm:oracle_kink}. If
$\theta\ge2$, then $g_\theta\in C^1[0,1]$ and
$g_\theta'(u)<0$ on $[0,1)$, so part~(b) gives the classwise and uniform
$\Theta(m^{-2})$ rates. For $1<\theta<2$, the derivative diverges at
$u=1$, which is why no rate is asserted there. \hfill$\square$

\subsection{Enclosure of the oracle dual optima:
Proposition~\ref{prop:enclosure} and
Corollary~\ref{cor:enclosure_beta}}\label{supp:proof-enclosure}

Wherever the dual can be evaluated, the witness inequalities can be checked
over a computable region guaranteed to contain \emph{every} dual minimizer,
rather than at a multiplier that a numerical routine returns.

\begin{proposition}[Enclosure of the oracle dual optima]\label{prop:enclosure}
Let $\mathcal D$ be the continuous dual objective of \S\ref{app:dual} and put $U=1/\alpha$.  For each coordinate $j$ let
$\underline\mu_j(\mu_{-j})$ and $\overline\mu_j(\mu_{-j})$ be the left and
right endpoints of
the minimizer set of $\mu_j\mapsto \mathcal D(\mu_j;\mu_{-j})$ on $[0,U]$,
the set being an interval because $\mathcal D$ is convex; here $\mu_{-j}$
denotes the coordinates of $\mu$ other than $j$.  Ties between maximizing
actions are permitted: the endpoint characterization uses the one-sided
derivatives of $\mathcal D$, or equivalently the error rates of the smallest
and largest maximizing actions, as specified below.  Then
\begin{itemize}
\item[(a)] every minimizer of $\mathcal D$ lies in $[0,U]^K$;
\item[(b)] $\underline\mu$ and $\overline\mu$ are antitone, that is
$\mu_{-j}\le\mu_{-j}'$
implies $\underline\mu_j(\mu_{-j}')\le\underline\mu_j(\mu_{-j})$ and
$\overline\mu_j(\mu_{-j}')\le\overline\mu_j(\mu_{-j})$;
\item[(c)] with $\underline\mu^{(0)}=0$ and $\overline\mu^{(0)}=U\mathbf 1$,
$\underline\mu^{(n+1)}=\underline\mu(\overline\mu^{(n)})$ and
$\overline\mu^{(n+1)}=\overline\mu(\underline\mu^{(n)})$, the iterates satisfy
$\underline\mu^{(n)}\le\underline\mu^{(n+1)}
\le\overline\mu^{(n+1)}\le\overline\mu^{(n)}$ and
\[
 \underline\mu^{(n)}\;\le\;\mu^*\;\le\;\overline\mu^{(n)}\qquad
 \text{for every $n$ and every minimizer $\mu^*$ of $\mathcal D$.}
\]
\end{itemize}
Consequently a condition holding at every
$\mu\in[\underline\mu^{(n)},\overline\mu^{(n)}]$ holds at a
genuine minimizing multiplier, with no appeal to the global optimality of a
numerically located candidate.
\end{proposition}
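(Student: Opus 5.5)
Part~(a) follows from a bound on the multipliers together with optimality. Step~2 of the proof of Lemma~\ref{lem:uniform_upper} gives
\[
\alpha\|\mu\|_1\le\mathcal C(\mu)\le\mathcal D(\mu)
\]
at every $\mu$, because $F_\mu\ge0$. At the origin, $\mathcal D(0)=\E_0A_K(\Lambda)=1$. Any minimizer $\mu^*$ therefore satisfies $\alpha\|\mu^*\|_1\le\mathcal D(\mu^*)\le\mathcal D(0)=1$, so every coordinate is at most $\|\mu^*\|_1\le U$. This places every minimizer in $[0,U]^K$.

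For part~(b), I would prove that $\mathcal D$ has decreasing differences, i.e.\ is submodular on the lattice $\mathbb R_+^K$, and then invoke Topkis' monotone comparative statics in its antitone form. The linear term $\mathcal C$ is modular, so submodularity reduces to that of $\mu\mapsto\E_0F_\mu(\Lambda)$. I would characterize the one-sided partial derivatives by envelope reasoning. The right derivative of $\mathcal D$ in $\mu_j$ is $\alpha\binom Kj-\E_0 B_{j,l^-_\mu}(\Lambda)$, where $l^-_\mu$ is a maximizing action with the smallest $B_j$-coefficient; the left derivative uses the largest. This matches the statement's remark that the endpoints use the error rates of the smallest and largest maximizing actions. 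The key monotonicity claim is that increasing $\mu_i$ for $i\ne j$ can only move the pointwise maximizer in $\{0,1,\dots,K\}$ toward actions with smaller error coefficients $B_{j,l}$. If that holds, the one-sided derivatives in $\mu_j$ are nondecreasing in $\mu_{-j}$. For a convex one-dimensional function, the endpoints of the argmin interval are the points where the right derivative becomes $\ge0$ and the left derivative becomes $>0$. Derivatives nondecreasing in $\mu_{-j}$ therefore push both endpoints down.

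Part~(c) is then the standard sandwich induction for antitone best responses. Every minimizer $\mu^*$ is a coordinatewise minimizer, so $\underline\mu_j(\mu^*_{-j})\le\mu^*_j\le\overline\mu_j(\mu^*_{-j})$. Suppose $\underline\mu^{(n)}\le\mu^*\le\overline\mu^{(n)}$. Antitonicity gives
\[
\underline\mu^{(n+1)}_j=\underline\mu_j(\overline\mu^{(n)}_{-j})\le\underline\mu_j(\mu^*_{-j})\le\mu^*_j,
\]
and symmetrically for the upper iterate. The base case is part~(a). Monotonicity of the iterates follows by the same induction starting from $0\le\underline\mu^{(1)}$ and $\overline\mu^{(1)}\le U\mathbf1$; I must also check that $\underline\mu\le\overline\mu$ at comparable arguments. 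The final "consequently" statement is then immediate.

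The main obstacle is the single-crossing claim in~(b). The branches are $\psi_l^\mu=A_l-\sum_i\mu_iB_{i,l}$. Raising $\mu_i$ favors actions with small $B_{i,l}$, but I need this to coincide with favoring small $B_{j,l}$ for every other $j$. That requires the coefficients $B_{i,l}$, which are tail elementary-symmetric polynomials of the ordered likelihood ratios, to be comonotone in $l$, and in fact nondecreasing in the rejection count $l$ for each $i$. I would first try to verify from the explicit formulas of \S\ref{app:dual} that rejecting more of the top-ranked hypotheses never lowers any error coefficient, and that the zero action has all $B_{\cdot,0}=0$. If that holds, the actions are totally ordered by the vector $B_{\cdot,l}$. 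The maximizer correspondence is then monotone (decreasing in $l$) as any $\mu_i$ rises, by a standard increasing-differences argument in $(l,-\mu_i)$. The decreasing-differences property of $\E_0F_\mu$ follows by integrating pointwise.
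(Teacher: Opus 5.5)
Your proposal is correct and is essentially the paper's proof. The paper also proves (a) through $\alpha\|\mu^*\|_1\le\mathcal D(\mu^*)\le\mathcal D(0)=1$, proves (b) through one-sided $\mu_j$-derivatives given by the smallest and largest maximizing actions together with $B_{j,l}$ being nondecreasing in $l$, and proves (c) by the same sandwich induction. The monotonicity of $B_{j,l}$ is immediate from its derivation in \S\ref{app:dual}: it sums, over configurations, the event that a true hypothesis lies among the top $l$.

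One slip in terminology: you show that the $\mu_j$-derivatives are nondecreasing in $\mu_{-j}$. That property is increasing differences (supermodularity) of $\mathcal D$, and it is what makes the argmin antitone; a submodular $\mathcal D$ would give an isotone argmin, so your label should be reversed.
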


\begin{corollary}[The explicit witness at every dual optimum]
\label{cor:enclosure_beta}
Let $K=S=2$, $w(2)=1$, $\alpha=0.05$ and $g_1=g_2=g_\theta$.  Every
minimizer $\mu^*$ lies in the interior of the following rational box:
\[
\begin{aligned}
\theta=2:&\quad \mu_1^*\in[1.253,\,1.258],\quad \mu_2^*\in[0.222,\,0.229],\\
\theta=3:&\quad \mu_1^*\in[2.116,\,2.122],\quad \mu_2^*\in[0.150,\,0.157].
\end{aligned}
\]
The $K=2$ witness inequalities of Theorem~\ref{thm:oracle_kink} hold at
every point of both boxes, with margins against $\lambda_+=\theta^2$ of at
least $1.30$ at $\theta=2$ and $4.68$ at $\theta=3$.  Hence the witness inequalities hold for
both models, and with them the strict-loss conclusion of
Theorem~\ref{thm:oracle_kink}(a) and, since $\theta\ge2$, the rate conclusion
of Theorem~\ref{thm:oracle_kink}(b), without conditioning on a numerically
located minimizer.
\end{corollary}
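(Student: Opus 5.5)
The plan is to reuse the face-sign argument of Step~2 in the proof of Corollary~\ref{cor:beta_oak}, with the fixed rectangle $\mathcal Q$ replaced by the two narrow rational boxes. Proposition~\ref{prop:enclosure} then serves only as a consistency check, not as the engine. For each $\theta\in\{2,3\}$ write the box as $[\underline a,\overline a]\times[\underline b,\overline b]$. Because $\Lambda_\theta$ is atomless, branch ties are null, so the dual $\mathcal D_{0.05,\theta}$ is differentiable. Its gradient is~\eqref{eq:betatheta_dual_gradient}, and two monotonicities are already established there: $\nabla_1$ is non-decreasing in $\mu_2$, and $\nabla_2$ is non-decreasing in $\mu_1$. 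Convexity adds that each $\nabla_j$ is non-decreasing in $\mu_j$.

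Strict inward signs on all four faces then reduce to strict signs at two corners. For the face $\mu_1=\underline a$ we need $\nabla_1<0$, and the worst case is $\mu_2=\overline b$. For the face $\mu_2=\overline b$ we need $\nabla_2>0$, and the worst case is $\mu_1=\underline a$. Both therefore reduce to the corner $(\underline a,\overline b)$, where we need
\[
 \E_0A_1^\mu>0.1,\qquad \E_0A_2^\mu<0.05 .
\]
Symmetrically, the faces $\mu_1=\overline a$ (needing $\nabla_1>0$) and $\mu_2=\underline b$ (needing $\nabla_2<0$) reduce to the corner $(\overline a,\underline b)$, where we need
\[
 \E_0A_1^\mu<0.1,\qquad \E_0A_2^\mu>0.05 .
\]
Given these four strict inequalities, the argument of Step~2 goes through verbatim. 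The continuous convex dual attains its minimum on the box, and the strict face signs exclude the boundary. An interior local minimizer is global on $\mathbb R_+^2$. Any second global minimizer would be joined to it by a segment of minimizers, and the face signs forbid that segment from leaving the open box. Hence every minimizer lies in the interior.

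The corner expectations are computed as in Steps~4a--4c. At each corner, enumerate the action regions in the ordered coordinates $x\le z$. The branch difference is $\psi_2-\psi_1=z(x/2-\mu_1)$, so the region switches at $x=2\mu_1$. The zero-crossings of $\psi_1$ and $\psi_2$ give thresholds of the form $z_1(x)=(\mu_2+\mu_1x)/(x/2)$ and $z_2(x)=(\mu_2+\mu_1x)/(x-\mu_1)$. Partition the $x$-range into rational subintervals. On each one, bracket the region by the thresholds evaluated at the subinterval endpoints, which gives lower and upper bounds on $\E_0A_1^\mu$ and $\E_0A_2^\mu$. These bounds are expressed through $p_\theta$ and $M_\theta$. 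At $\theta=2$ use~\eqref{eq:beta12_tail_moments}. At $\theta=3$ use the closed forms with $r_3(c)=(c/9)^{1/2}$. In both cases $p_\theta$ and $M_\theta$ are evaluated at rational arguments using the outward-rounded series for $\log$ of Step~4c. For $\theta=3$ the square root is handled by choosing partition points so that $c/9$ is a rational square, or else by rational interval bounds on the root.

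Finally, check the witness inequalities on the boxes. The map $2\mu_1+\mu_2/\mu_1$ is increasing in $\mu_2$, and also in $\mu_1$ because $2-\mu_2/\mu_1^2>0$ on the boxes, so its maximum is at $(\overline a,\overline b)$. For $\theta=2$ this gives about $2.516+0.182<2.70$, a margin of at least $1.30$ below $4$. For $\theta=3$ it gives about $4.244+0.074<4.32$, a margin of at least $4.68$ below $9$. Also $2\mu_1>0=\lambda_-$ on both boxes. The $K=2$ sufficient-condition paragraph then supplies an OAK witness at every minimizer. Since $g_\theta\in C^1$ with $g_\theta'<0$ on $[0,1)$ for $\theta\ge2$, Theorem~\ref{thm:oracle_kink}(a,b) yields both the strict-loss and the rate conclusions.

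The main obstacle is the corner enclosures. The boxes have width only about $0.005$, so the gradients at the corners are close to zero, and the partition must be fine enough that the rigorous brackets on $\E_0A_j^\mu$ still exclude $2\alpha=0.1$ and $\alpha=0.05$. I would first try uniform partitions of a few hundred cells, refined adaptively near $x=2\mu_1$ and near the tail where $z_1$ and $z_2$ blow up. If the margins are still too thin, I would fall back on running the antitone iteration of Proposition~\ref{prop:enclosure}(c) with the same rational one-sided derivative bounds, which certifies a nested enclosure contained in the stated boxes.
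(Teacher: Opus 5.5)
Your proposal is correct and follows the paper's own proof essentially step for step. You certify strict inward signs of $\nabla_1,\nabla_2$ only at the corners $(\underline a,\overline b)$ and $(\overline a,\underline b)$, extend them to the four faces by the monotonicity of $\nabla_1$ in $\mu_2$ and of $\nabla_2$ in $\mu_1$, enclose the corner expectations by binned threshold bands through $p_\theta,M_\theta$ with outward-rounded logarithms and square roots, and check the witness inequalities at the worst corner of each box. The only adjustment is quantitative: the certified corner margins go down to about $4.6\times10^{-5}$ (the lower bound on $\nabla_2$ at $(\underline a,\overline b)$ for $\theta=2$). The paper resolves these with $4096$ bins, so your initial guess of a few hundred cells is probably too coarse. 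Your fallback through Proposition~\ref{prop:enclosure}(c) would not ease this, because certifying that its iterates land inside the stated boxes reduces to the same corner sign checks.
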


The proof uses coercivity and convexity of the dual and the fact that its
cumulative error coefficients are non-decreasing in the number of rejections.
Throughout,
$\mathcal D(\mu)=\sum_{j=1}^K\binom Kj\alpha\mu_j+\E_0\{F_\mu(\Lambda)\}$
is the continuous oracle dual objective of \S\ref{app:dual},
$F_\mu=\max\{0,\psi_1^\mu,\dots,\psi_K^\mu\}$, and
$\psi_l^\mu=A_l-\sum_j\mu_jB_{j,l}$, with $A_0=B_{j,0}=0$.
Let $l^-(\mu,\Lambda)$ and $l^+(\mu,\Lambda)$ be the smallest and largest
maximizers over $l\in\{0,\ldots,K\}$.  Write $\fwer_j^-(\mu)$ and
$\fwer_j^+(\mu)$ for their family-wise error rates at a configuration with
$j$ true hypotheses, using the same independent tie ordering of likelihood
ratios for both policies.  Symmetry makes these rates the same at every
such configuration, and $\fwer_j^-\le\fwer_j^+$.

\emph{Part (a): the global box.}  Each $\psi_l^\mu$ is affine in $\mu$ and
$F_\mu\ge0$, so
$\mathcal D(\mu)\ge\alpha\sum_j\binom Kj\mu_j\ge\alpha\|\mu\|_1$.  At
$\mu=0$ every $\psi_l^0=(l/K)e_K(\Lambda)\le e_K(\Lambda)$ with equality at
$l=K$, so $F_0=e_K(\Lambda)$ and $\mathcal D(0)=\E_0\prod_k\Lambda_k=1$ by
cross-hypothesis independence and $\E_0\Lambda_k=1$; at a general prior
$\mathcal D(0)=1$ by the same computation in the proof of
Lemma~\ref{lem:ordered_duality}.  Any minimizer therefore
satisfies $\alpha\|\mu^*\|_1\le \mathcal D(\mu^*)\le \mathcal D(0)=1$, that is
$\|\mu^*\|_1\le1/\alpha=U$, and in particular $\mu^*\in[0,U]^K$.

\emph{Convexity and the coordinate minimizer set.}  $\mathcal D$ is a sum of a
linear
form and the expectation of a pointwise maximum of affine functions of $\mu$,
hence convex; the expectation is finite on $[0,U]^K$ because
$0\le F_\mu\le A_K$ and $\E_0A_K=1$ at every prior.  For fixed $\mu_{-j}$ the section
$\mu_j\mapsto \mathcal D(\mu_j;\mu_{-j})$ is convex on $[0,U]$, so its
minimizer set is
a nonempty compact interval; write $\underline\mu_j(\mu_{-j})$ and
$\overline\mu_j(\mu_{-j})$
for its endpoints.  The right derivative of a maximum of affine branches
selects the largest active slope, here $-B_{j,l^-}$, and the left derivative
selects the smallest, $-B_{j,l^+}$.  Since
$0\le B_{j,l}\le e_{K-j}(\Lambda)$ is integrable, dominated convergence gives
\begin{equation}\label{eq:enclosure_derivatives}
 \partial_{j,+}\mathcal D(\mu)
 =\binom Kj\{\alpha-\fwer_j^-(\mu)\},\qquad
 \partial_{j,-}\mathcal D(\mu)
 =\binom Kj\{\alpha-\fwer_j^+(\mu)\}.
\end{equation}
By~\eqref{eq:enclosure_derivatives}, an interior point minimizes the section exactly when
$\fwer_j^-(\mu)\le\alpha\le\fwer_j^+(\mu)$; at the two endpoints of
$[0,U]$ only the corresponding inward derivative is required.  More explicitly,
\begin{equation}\label{eq:enclosure_endpoints}
\begin{aligned}
 \underline\mu_j(\mu_{-j})
 &=\min\bigl(\{x\in[0,U]:\fwer_j^-(x;\mu_{-j})\le\alpha\}\cup\{U\}\bigr),\\
 \overline\mu_j(\mu_{-j})
 &=\max\bigl(\{x\in[0,U]:\fwer_j^+(x;\mu_{-j})\ge\alpha\}\cup\{0\}\bigr).
\end{aligned}
\end{equation}
These extrema exist: the right derivative of a finite convex function is
right-continuous, the left derivative is left-continuous, and both are
non-decreasing along the section.  In particular,
$\underline\mu_j=0$ exactly when $\fwer_j^-(0;\mu_{-j})\le\alpha$.
Formula~\eqref{eq:enclosure_endpoints} covers jumps as well as flat
minimizer intervals, without imposing a zero-probability tie condition.

\emph{Part (b): antitonicity.}  Fix $\mu\le\mu'$ componentwise.  The objective term $A_l$ cancels from the
difference
$\psi^\mu_l-\psi^{\mu'}_l=\sum_j(\mu_j'-\mu_j)B_{j,l}(\Lambda)$, which is
non-decreasing in $l$ because each $B_{j,l}$ has nonnegative increments,
the error contributions of successive rejection positions.
Hence $\psi^{\mu}_l$ has decreasing differences in
$(l,\mu)$ and both extremal maximizers are non-increasing in $\mu$:
$l^\pm(\mu',\Lambda)\le l^\pm(\mu,\Lambda)$ pointwise.  To see this
directly, if a maximizing action at $\mu'$ exceeded the corresponding
extremal maximizer at $\mu$, optimality at the two multipliers and the
non-decreasing branch difference would force both actions to tie at both
multipliers, contradicting extremality.  Using a common likelihood-ratio
tie ordering therefore gives nested rejection sets for each convention and
$\fwer_j^\pm(\mu')\le\fwer_j^\pm(\mu)$ for every $j$.
Consequently, raising $\mu_{-j}$ expands the sublevel set of $\fwer_j^-$
and contracts the superlevel set of $\fwer_j^+$ in
\eqref{eq:enclosure_endpoints}.  Both endpoints move left, proving that
$\underline\mu_j$ and $\overline\mu_j$ are antitone in $\mu_{-j}$.

\emph{Part (c): the sandwich.}  Let $\mu^*$ be any minimizer of
$\mathcal D$.  Since
$\mu^*_j$ minimizes the $j$th section given $\mu^*_{-j}$, it lies in that
section's minimizer interval, so
\begin{equation}\label{eq:enclosure_fixedpoint}
 \underline\mu_j(\mu^*_{-j})\;\le\;\mu^*_j\;\le\;\overline\mu_j(\mu^*_{-j}),
 \qquad j=1,\dots,K.
\end{equation}
Argue by induction that
$\underline\mu^{(n)}\le\mu^*\le\overline\mu^{(n)}$.  The case $n=0$ is
part~(a).  Assuming it at $n$, antitonicity
and~\eqref{eq:enclosure_fixedpoint} give, coordinatewise,
\[
 \underline\mu^{(n+1)}_j=\underline\mu_j(\overline\mu^{(n)}_{-j})
 \le\underline\mu_j(\mu^*_{-j})\le\mu^*_j
 \le\overline\mu_j(\mu^*_{-j})\le\overline\mu_j(\underline\mu^{(n)}_{-j})
 =\overline\mu^{(n+1)}_j .
\]
Monotonicity of the two sequences follows from the same inequalities applied
to $\underline\mu^{(n)}\le\underline\mu^{(n+1)}$ and
$\overline\mu^{(n+1)}\le\overline\mu^{(n)}$, which hold at $n=0$
because $\underline\mu^{(0)}=0$ and $\overline\mu^{(0)}=U\mathbf 1$ are the
extreme points of the
box, and propagate by antitonicity.  The two endpoint maps cannot be
interchanged: $\underline\mu_j(\underline\mu^{(n)}_{-j})
\ge\underline\mu_j(\mu^*_{-j})$ does not imply
$\underline\mu_j(\underline\mu^{(n)}_{-j})\ge\mu^*_j$ unless the coordinate
minimizer is unique,
which we do not assume.  Finally $\mathcal D$ is convex, so every stationary
point is a global minimizer and the enclosure contains the entire optimal set, not one
local basin.  Any condition valid at every point of
$[\underline\mu^{(n)},\overline\mu^{(n)}]$
therefore holds at a genuine minimizing multiplier. \hfill$\square$

\medskip
\noindent\emph{Proof of Corollary~\ref{cor:enclosure_beta}.}
We certify the box directly by inward derivative signs.  The same calculation
also certifies the flat-prior boxes in \S\ref{supp:prior_numerics}.
Write $a=w(1)/2$, $b=w(2)/2$, $(u,v)=(\mu_1,\mu_2)$ and $c=\theta^2$.
All four boxes have $u>a$.  For ordered ratios $x\le z$, the branches are
\[
 \psi_1=(a+bx)z-ux-v,\qquad
 \psi_2=(a+2bx-u)z+(a-u)x-v.
\]
Put $t_1=(ux+v)/(a+bx)$, $t_2=\{(u-a)x+v\}/(a+2bx-u)$ and
$t_{12}=ax/(u-bx)$.  At fixed $x<u/b$, action one is optimal on
$z\in(\max\{x,t_1,t_{12}\},c)$; it is absent for $x\ge u/b$.
Action two is absent for $x\le(u-a)/(2b)$, occupies
$(\max\{x,t_2\},\min\{c,t_{12}\})$ when $(u-a)/(2b)<x<u/b$,
and occupies $(\max\{x,t_2\},c)$ when $x\ge u/b$.
Empty intervals contribute zero; ties and denominator endpoints have null
probability zero.

Partition $[0,c]$ at its interior points $u/b$ and $(u-a)/(2b)$, then divide
each segment $[s,t]$ into $\lceil n(t-s)/c\rceil$ equal rational bins.
The thresholds are monotone on each segment, so their limiting endpoint values,
clipped at $c$, give an inner and outer band for each action throughout a bin.
The tail functions $p_\theta,M_\theta$ defined above enclose the probability
$p_l$ and first moment $m_l$ of each band.  If a bin has null probability
$P$ and first moment $M$, its contributions to the error coefficients
$C_1=x\mathbf1\{l\ge1\}+z\mathbf1\{l=2\}$ and
$C_2=\mathbf1\{l\ge1\}$ are enclosed by the interval expressions
\[
 2\{M(p_1+p_2)+Pm_2\},\qquad 2P(p_1+p_2),
\]
respectively, using the ordered null density $2f_\theta(x)f_\theta(z)$ for $x<z$.
Summing gives outward bounds on
$\nabla\mathcal D=(2\alpha,\alpha)-\E_0(C_1,C_2)$.
Moment calculations round outwards to multiples of $2^{-96}$.
Square roots are enclosed by integer square roots; logarithms are reduced
to arguments in $[1,2]$ and bounded by the series and remainder in Step~4c
above, with $24$ terms.  Thus the bounds include partition, transcendental,
and rounding errors.

For a box $[L,H]$, let $g^-_j=\partial_j\mathcal D(L_1,H_2)$ and
$g^+_j=\partial_j\mathcal D(H_1,L_2)$.  The resulting rational bounds are
shown below in units of $10^{-4}$:
\[
\begin{array}{ccr|rrrr}
 \theta&w(1)&n& g^-_1\text{ (upper)}&g^+_1\text{ (lower)}&
 g^+_2\text{ (upper)}&g^-_2\text{ (lower)}\\ \hline
 2&0&4096&-2.969&1.272&-1.614&0.456\\
 3&0&4096&-0.574&1.471&-0.656&1.676\\
 2&1/2&1024&-2.557&3.348&-9.864&10.127\\
 3&1/2&1024&-2.751&2.745&-4.494&4.225
\end{array}
\]
By the monotonicity proved in part~(b), these strict signs extend to the
left, right, bottom and top faces, respectively.  A minimizer restricted to
the compact box must therefore be interior, hence global by convexity.
Any other global minimizer outside the interior would join it by a segment
of minimizers meeting a face, contradicting the strict sign there.  This
places \emph{every} dual minimizer in the interior.

Finally, $\lambda_-=0$, $\lambda_+=c$ and the point-prior witness requires
$0<2\mu_1<c$ and $2\mu_1+\mu_2/\mu_1<c$.
The latter expression increases in $\mu_2$ and is convex in $\mu_1>0$,
so its maximum over the box is
$\max\{2L_1+H_2/L_1,2H_1+H_2/H_1\}$.
Exact rational substitution gives margins at least
$(2.506,1.484,1.3019)$ for $\theta=2$ and
$(4.232,4.756,4.6820)$ for $\theta=3$, in that order.
The script \texttt{oak\_\allowbreak verified\_\allowbreak enclosure.py}
reproduces all bounds.  The explicit witness and, since $\theta\ge2$,
the strict-loss and rate conclusions follow. \hfill$\square$

\subsection{The general framework under an arbitrary
prior}\label{supp:proof-priorfree}

The general framework uses the prior-weighted objective
$\Pi_w=\sum_\gamma w(\gamma)\Pi_\gamma$ of~\eqref{eq:objective}, with each
result retaining its stated hypotheses. This subsection explains which
arguments transfer to arbitrary $w$. Automatic OAK still requires $w(K)>0$;
the explicit point-prior Beta results and the imported convergence guarantee
retain their stated objective scopes.

By \S\ref{app:dual}, $\Pi_w$ and each $\fwer_\gamma$ are linear in the policy
with coefficient functions
$a_{w,k}=\sum_\gamma w(\gamma)a_{\gamma,k}$ and $b_{\gamma,k}$, where,
writing $v_k=\Lambda_k$,
\[
\begin{aligned}
 a_{\gamma,k}(v)&=(\gamma-1)!\,(K-\gamma)!\;v_k\,e_{\gamma-1}(v_{-k}),\\
 b_{\gamma,k}(v)&=\gamma!\,(K-\gamma)!\;\Bigl(\prod_{j<k}v_j\Bigr)
   e_{\gamma-k+1}(v_{k+1},\dots,v_K),
\end{aligned}
\]
the form derived in \S\ref{app:dual} and agreeing with
\citet[Lemma~1]{dubey2026esp}.  Here $\gamma$ counts alternatives, as it does
for $\Pi_\gamma$ and $w$ throughout, so $b_{\gamma,k}$ belongs to the
family-wise constraint at $K-\gamma$ true hypotheses and its cumulative form in
\S\ref{app:dual} is
$\sum_{k\le l}b_{\gamma,k}=\gamma!\,(K-\gamma)!\,B_{K-\gamma,l}$.
Three observations do the work.

\emph{(i) The constraint side is untouched.}  The prior appears only in
$a_{w,k}$.  The feasible set of the ordered primal, the strong-FWER
coefficients $b_{\gamma,k}$, their nonnegativity, and hence the monotonicity
used in Proposition~\ref{prop:enclosure}, are all independent of $w$.

\emph{(ii) Every section stays convex and piecewise affine.}  The
fixed-set representation in Step~5 of
\S\ref{supp:proof-graderate} extends to
every $w$.  For a rejection set $R$, read $a_{\gamma,k}$ with $k$ as a
coordinate label rather than a rank; the objective term of the branch of
$R$ is then $\frac1{K!}\sum_{k\in R}a_{w,k}(v)$, and replacing $y\in R$ by $x\notin R$
with $x\ge y$ changes it by
$\frac1{K!}\sum_\gamma w(\gamma)(\gamma-1)!\,(K-\gamma)!\,(x-y)\,e_{\gamma-1}(v_{-x,-y})\ge0$,
because $e_{\gamma-1}(v_{-x})=e_{\gamma-1}(v_{-x,-y})+y\,e_{\gamma-2}(v_{-x,-y})$.
The constraint term does not involve $w$, and its contribution to the
branch increases under the same swap, as shown there.  Hence for every $w$
the ordered branch $\psi^{\mu,w}_l$ is the maximum of the fixed-set
branches over $|R|=l$, each of which is multilinear in $v$ because
elementary symmetric polynomials are multilinear and $e_{\gamma-1}(v_{-k})$
does not involve $v_k$, and
$F^w_\mu=\max\{0,\psi^{\mu,w}_1,\dots,\psi^{\mu,w}_K\}$ is a maximum of
finitely many coordinatewise-affine functions.  Consequently every section
$x\mapsto F^w_\mu(z_1,\dots,x,\dots,z_K)$ is convex and piecewise affine,
which is the only property of $F_\mu$ used in
Lemma~\ref{lem:oakk_grid_loss} and the surrounding arguments.

\emph{(iii) The three mechanisms transfer verbatim.}  Losslessness of
sufficient reports (Proposition~\ref{prop:distribution}(a)) is a Blackwell
sufficiency statement about the experiment and does not mention the objective.
The two-sided dual bound~\eqref{eq:dual_pinch} follows from iterated coordinatewise
conditional Jensen, which needs exactly the convex sections supplied by~(ii).
The upper bound of Lemma~\ref{lem:uniform_upper} uses that $F_\mu$ is
reproduced exactly by conditional averaging off the oracle decision surface, a
consequence of multilinearity and conditional independence, both preserved
by~(ii), together with the $O(m^{-1})$ boundary mass and $O(m^{-1})$
within-cell deviation, whose orders are unchanged for each fixed prior. The lower
bound reduces the OAK gap to integrated one-coordinate hinge regrets, each
equal to half a within-cell squared-error distortion; the hinge arises from a
kink of a convex piecewise-affine section, again supplied by~(ii).

The kink locations and sizes, and the minimizing multiplier, depend on $w$;
OAK concerns the corresponding dual at a minimizer.
Lemma~\ref{lem:oak_auto} guarantees a kink whenever $w(K)>0$, the example
in \S\ref{supp:proof-oak-auto} shows that none need exist when $w(K)=0$,
and the explicit witness that fixes the constants must be located afresh at
each prior, as is done at the point prior in
Corollary~\ref{cor:enclosure_beta} and at non-degenerate priors in
\S\ref{supp:prior_numerics}.

Three further results deserve explicit mention because they are prior-free for
a different reason.  Proposition~\ref{prop:largeS} concerns $K=1$, where the
prior is degenerate and $\Pi_w=\Pi_1$ identically.  In
Proposition~\ref{prop:alloc} and Lemma~\ref{lem:marginalcentral} the
null-calibrated rule rejects $H_k$ on the event
$\{p_\phi(\mathbf y_{\cdot,k})\le\alpha/K\}$, where $\phi$ is the fixed
combining statistic of \S\ref{sec:marginal} applied to the reports for
hypothesis $k$ and $p_\phi(x)=\pr_{X_0\sim\tilde\nu_0}\{\phi(X_0)\ge\phi(x)\}$
is its exact-null tail $p$-value.  That event depends only on the
reports for hypothesis $k$, so each false hypothesis has rejection
probability $\pr_{\tilde\nu_1}(p_\phi\le\alpha/K)$ under any
configuration.  Thus $\Pi_\gamma$ is the same for every $\gamma$ and hence
$\Pi_w$ does not depend on $w$ at all.  The null-calibrated power, its
optimal thresholds, and the elasticity condition are therefore literally
unchanged, and the fixed-report
gap identity of Proposition~\ref{prop:reduction} continues to hold by
subtraction, with only its model-aware term moving with $w$.

\subsection{The prior-weighted dual at \texorpdfstring{$K=2$}{K=2}, and its
active-kink witness}\label{supp:prior_numerics}

Under a general prior the two nonzero branches remain affine in the larger
likelihood ratio, with slopes and intercepts that depend on $w$.  For $K=2$
with $S=2$ homogeneous $\mathrm{Beta}(1,\theta)$ sites, sort the
per-hypothesis likelihood ratios as $x=\min(\Lambda_1,\Lambda_2)$ and
$z=\max(\Lambda_1,\Lambda_2)$.  Since
$\Pi_1=\tfrac12\E_0\{\sum_{i\le l}v_{(i)}\}$ and
$\Pi_2=\tfrac12\E_0(l\,e_2)$, the objective coefficient of the policy
rejecting the top $l$ is
\[
 A_1=\tfrac12\{w(1)z+w(2)xz\},\qquad
 A_2=\tfrac12\{w(1)(x+z)+2w(2)xz\},
\]
and the family-wise error bracket of \S\ref{app:dual} gives
$\psi_1=A_1-\mu_1x-\mu_2$ and $\psi_2=A_2-\mu_1(x+z)-\mu_2$.  Both branches
are affine in $z$, with
\[
\begin{aligned}
 \text{slope}_1&=\tfrac12\{w(1)+w(2)x\}, &\quad
 \text{intercept}_1&=-\mu_1x-\mu_2,\\
 \text{slope}_2&=\tfrac12 w(1)+w(2)x-\mu_1, &\quad
 \text{intercept}_2&=\tfrac12 w(1)x-\mu_1x-\mu_2 .
\end{aligned}
\]
At the point prior $w=(0,1)$ these reduce to $\text{slope}_1=x/2$,
$\text{slope}_2=x-\mu_1$ and the common intercept $-\mu_1x-\mu_2$, which is
the integrand used for every point-prior value reported in this paper.  The
script \texttt{prior\_weighted\_K2.py} implements the display above and, run
with \texttt{--validate}, reproduces the published point-prior oracle values
$0.1564610031$ and $0.2869773539$ to $4.5\times10^{-11}$ and
$7.1\times10^{-14}$ respectively, recovering the same minimizers, which checks the general-prior
implementation.

Unlike the point prior, a non-degenerate $w$, with $w(1),w(2)>0$, makes the branch dominance depend on
$z$ as well as $x$, so the inner integral is taken against the true upper
envelope $\max(0,\psi_1,\psi_2)$ rather than a single preselected branch.  The
branches cross at $z^\star=\{w(1)x/2\}/\{\mu_1-w(2)x/2\}$, which lies in
$(0,\infty)$ exactly when $x<2\mu_1/w(2)$, and the slope jump there is
$|w(2)x/2-\mu_1|$.  The general-prior analogue of the explicit witness of
Theorem~\ref{thm:oracle_kink} requires a positive-null-mass set of $x$ on which $z^\star$ is interior, both
branches are positive at $z^\star$, and the jump is bounded away from zero.
Table~\ref{tab:prior} reports that scan.

\medskip
\noindent\emph{From a scan to a certificate.}  Proposition~\ref{prop:enclosure}
applies to the prior-weighted dual unchanged, since its proof uses only
nonnegativity of the $b_{\gamma,k}$.  What does not carry over is the $K=2$
sufficient condition of Theorem~\ref{thm:oracle_kink}, which is derived from
the full-alternative branch algebra; for a general $w$ the condition must be
imposed directly.  Fix an enclosure $[\text{lo},\text{hi}]$ and a window
$x\in[x_a,x_b]$.  Each requirement has a monotone worst case in $\mu$, so one
corner suffices for the whole box:
\begin{itemize}
\item the crossing exists for every $\mu_1\ge\text{lo}_1$ exactly when
 $x_b<2\,\text{lo}_1/w(2)$, the qualifying region growing with $\mu_1$;
\item the slope jump $\mu_1-w(2)x/2$ is smallest at $\mu_1=\text{lo}_1$,
 $x=x_b$;
\item $z^\star$ increases in $x$ and decreases in $\mu_1$, so $z^\star<\lambda_+$
 is worst at $(\text{lo}_1,x_b)$ while $z^\star>x$, equivalent to
 $\mu_1<\{w(1)+w(2)x\}/2$, is worst at $(\text{hi}_1,x_a)$;
\item $\psi_1(z^\star)$ decreases in both multipliers, so positivity is worst
 at $(\text{hi}_1,\text{hi}_2)$; its minimum over the entire $x$ window
 is obtained analytically as follows.
\end{itemize}
For the last requirement, put $a=w(1)/2$, $b=w(2)/2$ and
$M=\text{hi}_1$.  The crossing value at the worst corner is
\[
 f(x)=\frac{ax(a+bx)}{M-bx}-Mx-\text{hi}_2,\qquad
 f'(x)=(a+M)\left\{\frac{aM}{(M-bx)^2}-1\right\}.
\]
On the qualifying window $M-bx>0$, and
$f''(x)=2abM(a+M)/(M-bx)^3>0$.  The minimum is therefore attained at
$x_0=(M-\sqrt{aM})/b$ if $x_0\in[x_a,x_b]$, and otherwise at the nearer
endpoint.  This verifies positivity throughout the window, rather than
only at sampled points.
These requirements are opposed: the crossing exists only for small $x$ but
lies above $x$, and on the positive part of the envelope, only for large $x$,
so the admissible window is an interior band that must be located rather than
guessed.  At the flat prior, the integer-arithmetic face bounds in
\S\ref{supp:proof-enclosure} place every minimizer inside
$[0.710,0.718]\times[0.840,0.856]$ for $\theta=2$ and
$[1.055,1.067]\times[1.535,1.557]$ for $\theta=3$.  Choose the rational
windows $[2.18,2.26]$ and $[3.55,3.75]$, respectively; each has positive null
mass.  Exact rational substitution gives lower bounds
$(0.580,0.145,0.103,0.077,0.083)$ and
$(0.470,0.1175,1.021,0.0705,0.279)$ for the five margins, in the order
listed above.  In both cases $f'(x_a)>0$, so $f$ is minimized at $x_a$.
Thus the explicit witness holds at every dual minimizer for both models.
The script \texttt{oak\_\allowbreak verified\_\allowbreak enclosure.py}
reproduces the face bounds and witness margins using outward integer
arithmetic throughout.

The prior-weighted finite report-distribution program used for the recomputed deficits
of \S\ref{sec:numerics_prior} is the general-$K$ engine of \S\ref{app:dual}
with one substitution: the objective term $(l/K)e_K$ is replaced by
$A_l=(K!)^{-1}\sum_{i\le l}a_{w,i}$, the family-wise error coefficients being
free of $w$.  The implementation (\texttt{fwer\_dual.py}, argument
\texttt{prior}) defaults to the point prior, and that default path is what
produced every other number in this paper; its self-tests, including the
primal--dual agreement checks at every strong-FWER configuration for $K$ up to
five, are unchanged.  Recomputing Table~\ref{tab:k2} through the new code path
at the point prior returns relative deficits of $8.55\%$ and $6.21\%$ with
optimizing thresholds $(0.297,0.320)$ and $(0.247,0.247)$, reproducing the
published $8.5\%$ and $6.2\%$, which checks that the prior-weighted
comparison is made on identical terms.

\subsection{Fixed-report optimal aggregation: proof of
Theorem~\ref{thm:aggregator}}\label{supp:proof-aggregator}
Write $\tilde\nu_0 = \otimes_s \tilde\nu_{s,0}$ and $\tilde\nu_1 = \otimes_s
\tilde\nu_{s,1}$ for the null and alternative distributions of the per-hypothesis
report vector $\mathbf y_{\cdot,k} = (y_{1,k},\dots,y_{S,k})$. Under
Assumptions~\ref{ass:model}--\ref{ass:known} these are the true distributions, since
the report map is known, the local $p$-values are uniform under a null
coordinate, and they have densities $g_s$ under an alternative coordinate.

\emph{Step 1 (sufficiency).} Conditional on $h$, the report array has distribution
$\bigotimes_{k=1}^K\tilde\nu_{h_k}$, so its likelihood ratio against the
global-null product distribution is $\prod_k L_k^{h_k}$ with
$L_k=(\mathrm{d}\tilde\nu_1/\mathrm{d}\tilde\nu_0)(\mathbf y_{\cdot,k})
=\prod_s\ell_s(y_{s,k})$. Hence
$(L_1,\dots,L_K)$ is sufficient for the family $\{P_h : h \in \{0,1\}^K\}$,
so by the Rao--Blackwell reduction for tests any randomized procedure may be
replaced by one measurable in it with the same power and the same
family-wise error at \emph{every} configuration.

\emph{Step 2 (two-group reduction).} The $L_k$ are independent across $k$,
each with the null distribution $\lambda_0$ or the alternative distribution $\lambda_1$
according to $h_k$. For bounded $\phi$,
\[
\E_{\lambda_1}\{\phi(L)\}
=\int \phi\{L(x)\}\,\mathrm{d}\tilde\nu_1(x)
=\int \phi\{L(x)\}\,L(x)\,\mathrm{d}\tilde\nu_0(x)
= \E_{\lambda_0}\{\phi(L)\,L\},
\]
so $\mathrm{d}\lambda_1/\mathrm{d}\lambda_0(L)=L$. As $L$ may have atoms,
append independent $\xi_k\sim\mathrm{Unif}[0,1]$. Conditional on a general
configuration $h$, the pairs $(L_k,\xi_k)$ are independent with the
$h_k$-indexed distribution; they are not identically distributed when $h$ contains
both true and false hypotheses. The family is permutation equivariant under
simultaneous permutation of the pairs and the coordinates of $h$. Under
the global null and the full alternative, respectively, the pairs are
i.i.d.\ with distribution $\lambda_0\otimes\mathrm{Unif}$ and
$\lambda_1\otimes\mathrm{Unif}$. The $\xi_k$ are ancillary and provide an
almost-sure ordering within likelihood-ratio ties.

\emph{Step 3 (optimality via arrangement-increasing structure).} The
reduction of~\citet{rosset2022optimal} to symmetric policies ordered by
likelihood ratio uses permutation equivariance and an
\emph{arrangement-increasing} two-group family~\citep{hollander1977functions,marshall2011inequalities}; these are also the
structural hypotheses used by~\citet{dubey2026esp}. We verify them for the
augmented statistics $(L_k,\xi_k)$.  The symmetrization fact at the start of
\S\ref{supp:proofs} applies to the report experiment: every feasible
report-measurable procedure can first be made symmetric with the same
$\Pi_w$ and strong family-wise control. For $K=1$, the result is the
ordinary Neyman--Pearson test on $L$, so suppose $K\ge2$ in the rearrangement
argument that follows.

Use the common dominating measure
$\lambda=\lambda_0\otimes\mathrm{Leb}$ for the augmented pairs. By Step~2, the
two one-coordinate densities against $\lambda$ are
\[
f_0(\ell,u)=1,\qquad f_1(\ell,u)=\ell
\quad\text{for $\lambda$-almost every $(\ell,u)$}.
\]
Thus the density at configuration $h$ is
$\prod_k f_{h_k}(L_k,\xi_k)$, and simultaneous coordinate permutation gives
the permutation equivariance noted in Step~2. On the likelihood-ratio
preorder, arrangement increasingness reduces to the pairwise concordance
inequality: for $\ell\le\ell'$,
\[
f_0(\ell,u)f_1(\ell',u')
\ge f_0(\ell',u')f_1(\ell,u),
\]
which is simply $\ell'\ge\ell$. The rearrangement argument of
\citet[Theorem~1]{rosset2022optimal}, starting from the symmetric procedure,
therefore removes strict likelihood-ratio inversions without decreasing
power or weakening feasibility.

Its equality convention requires tied
coordinates to receive equal treatment, so one further implementation
argument is needed when $L$ has atoms. Let $\mathcal K_L$ denote the conditional
action kernel of the symmetric, likelihood-ratio-ordered procedure just
constructed. Symmetry gives
$\mathcal K_{\sigma L}(\sigma A)=\mathcal K_L(A)$ for every coordinate
permutation $\sigma$
and rejection set $A$. For a realized vector $L$, let $\Sigma_L$ be the finite
group of permutations within its equal-value blocks. If $\sigma\in\Sigma_L$, then
$\sigma L=L$, and hence $\mathcal K_L(\sigma A)=\mathcal K_L(A)$. Thus, conditional on the
number selected from any tied block, the selected subset is uniform.
Independent $\xi_k$ supply exactly this uniform order, while a separate
auxiliary uniform variable realizes any mixture among rejection counts.
Consequently the policy rejects
a nested top-$l$ set in the likelihood-ratio preorder, with a uniformly
selected subset if the cutoff splits a tie. This randomized extension
handles atoms; it does not rely on any claim that the ancillary variable
removes ties among the likelihood ratios. Its nested action class is precisely the class
parameterized by the primal variables $r_l(v)$ of \S\ref{app:dual}.

Lemma~\ref{lem:ordered_duality} supplies attainment and strong duality for
that randomized program. The reduction then gives a $\Pi_w$-maximal ordered
procedure under strong
family-wise control, proving~(b). The program imposes the family-wise
constraints at every configuration under the true distribution. At a mixed
configuration the constraint involves both $\lambda_0$ and $\lambda_1$,
because whether a true hypothesis is rejected depends on its rank relative to
the alternative coordinates. Both distributions are known
(Assumption~\ref{ass:known}) and enter the constraints. The resulting
family-wise error rate is at most $\alpha$ at every configuration in finite
samples, proving~(a).

\emph{Step 4 (elementary-symmetric-polynomial structure).} The
error-constraint coefficients $B_{j,l}$ are derived from first principles
in \S\ref{app:dual}: conditional on a configuration with $j$ true hypotheses,
the density of the sorted tuple relative to the global null is the product
of the alternative coordinates' likelihood ratios. Write $p_l$ for the
product of the top $l$ likelihood ratios. Summing over configurations,
with and without a true hypothesis among the top $l$, yields
$B_{j,l}=e_{K-j}-p_l\,e_{K-l-j}(\{l+1,\ldots,K\})$ together with termwise
non-negativity. That derivation uses only (a) permutation equivariance,
(b) the two-group factorization, and (c) the per-coordinate density-ratio
identity of Step~2, all verified here for $(L_k,\xi_k)$ with $L_k$ in
place of $\Lambda_k$. For a finite report distribution the
coefficients enter the finite primal and dual linear programs displayed in
\S\ref{app:dual}. We use those programs directly; no continuous-distribution
monotonicity result or coordinate
root-finding claim is needed for the atomic objective, which is nonsmooth at
branch ties. The reduction from Step~2
onward starts from a specified dominated pair $(\nu_0,\nu_1)$, its likelihood
ratio, and conditionally independent replication of that pair across
hypotheses. Thus the same argument proves the theorem's stated extension
beyond report distributions generated by monotone $p$-value densities.
\hfill$\square$

\subsection{Null-calibrated marginal control: proof of
Proposition~\ref{prop:marginal}}\label{supp:proof-marginal}

Fix $k$ and a
configuration $h$ with $h_k=0$.  Write $T_s$ for site $s$'s
report map, assumed ordered: its cells are intervals of $[0,1]$, with
$c'\preceq_s c$ denoting their order from smaller to larger $p$-values.
In particular, binary threshold reports have $1\preceq_s0$, independently
of their numerical labels. Write
$\hat p_s(c)=\tilde\nu_{s,0}(\{c':c'\preceq_s c\})$ for the null probability that the
site reports $c$ or a more significant value, a quantity determined by $T_s$
alone because $\tilde\nu_{s,0}$ is the push-forward of $\mathrm{Unif}[0,1]$
through $T_s$, whatever $g_s$ may be. For threshold reports this gives
$\hat p_s(1)=t_s$ and $\hat p_s(0)=1$.

\emph{Step 1: each transmitted report is superuniform on the $p$-value
scale.}  For an ordered map the cell containing $u$ has
$\hat p_s\{T_s(u)\}\ge u$ pointwise, because the cell's own null mass
together with that of every more significant cell is at least the null mass
of $[0,u]$, which is $u$.  Hence, if $u_{s,k}$ is superuniform under $H_k$,
\[
 \pr_h\bigl[\hat p_s\{T_s(u_{s,k})\}\le x\bigr]
 \;\le\;\pr_h(u_{s,k}\le x)\;\le\;x,
 \qquad x\in[0,1].
\]
When the null $p$-value is exactly uniform the first inequality is an
equality on the attainable values of $\hat p_s$, and the report $p$-value has
distribution function equal to $x$ at every attainable value $x$.

\emph{Step 2: the combined statistic has a known null distribution.}  Let
$W_k=-\sum_s\log\hat p_s(y_{s,k})$ as in~\eqref{eq:fisher}, and let $W^0$
be the same statistic evaluated at independent uniform inputs $U_s$.
Superuniformity permits an inverse-distribution coupling with
$u_{s,k}\ge U_s$ almost surely.  The function
$u\mapsto\hat p_s\{T_s(u)\}$ is non-decreasing because the map is ordered,
so under this coupling each actual site score is at most its uniform-reference
score.  Cross-site independence allows these couplings to be taken
independently, giving
\[
 \pr_h(W_k\ge \tau)\;\le\;\pr_0(W^0\ge \tau)\qquad\text{for every }\tau.
\]
The reference distribution does not depend on $k$.  It is a finite
convolution of the $S$ known cell
score distributions and is computable from the report maps alone; for the
full report, $2W^0\sim\chi^2_{2S}$ by the classical Fisher
identity~\citep{fisher1932statistical}.  Write
$\bar F_0(w)=\pr_0(W^0\ge w)$ and
$p_k^{\mathrm{nc}}=\bar F_0(W_k)$.  The reference tail $p$-value
$\bar F_0(W^0)$ is superuniform, including when the reference law has
atoms.  Since $\bar F_0$ is non-increasing and $W_k$ is stochastically
dominated by $W^0$, $p_k^{\mathrm{nc}}$ is superuniform as well.

\emph{Step 3: Bonferroni.}  At every $h$,
\[
 \fwer_h\;\le\;\sum_{k\in\mathcal H_0(h)}\pr_h\bigl(p_k^{\mathrm{nc}}\le\alpha/K\bigr)
 \;\le\;|\mathcal H_0(h)|\,\frac{\alpha}{K}\;\le\;\alpha ,
\]
with no assumption on the joint distribution of $(W_1,\dots,W_K)$, so
arbitrary dependence across hypotheses is permitted.  The same union bound licenses any multiplicity
correction valid under arbitrary dependence in place of Bonferroni, Holm's
step-down in particular~\citep{holm1979simple}, since these consume only
marginally valid $p$-values.

\emph{Step 4: an arbitrary fixed statistic, and training.}  Replace the
Fisher statistic by a fixed measurable statistic $\phi$ and use its reference
tail $p$-value
$p_\phi(x)=\pr_{X_0\sim\tilde\nu_0}\{\phi(X_0)\ge\phi(x)\}$.
For exactly uniform null inputs no monotonicity of $\phi$ is needed.
For merely superuniform inputs, requiring $\phi$ to be non-increasing in
each implied site $p$-value preserves the coupling in Step 2.  In both cases
rejecting when $p_\phi\le\alpha/K$ gives Step 3 verbatim.
Finally, let $\mathcal F_{\mathrm{tr}}$
contain all training data and design-generating randomness and suppose it is
jointly independent of the entire tested $p$-value array.  Conditioning on
$\mathcal F_{\mathrm{tr}}$ fixes the maps, the statistic, and its reference
tail, and leaves the sitewise inputs independent for each $k$; the three
steps apply verbatim, and the tower property gives the unconditional bound.
Nowhere does any $g_s$ appear.
\hfill$\square$

\subsection{Optimal null calibration at a single hypothesis: proof of
Proposition~\ref{prop:marginalK1}}\label{supp:proof-marginalK1}

At a fixed common threshold, the report likelihood ratio orders the data
by the local-rejection count.  Randomized null calibration therefore attains
the model-aware value, and conservative null calibration differs only by
the boundary-randomization mass.  Let $K=1$, let the $S$ sites be homogeneous with common
non-increasing alternative density $g$ and distribution function $G$, and
fix a common threshold $t\in(0,1)$.  The report array is
$(\mathbf 1\{u_s\le t\})_{s\le S}$, whose distribution depends only on the
count $N=\#\{s:u_s\le t\}$: $N\sim\mathrm{Bin}(S,t)$ under the null and
$\mathrm{Bin}(S,G(t))$ under the alternative.  The report likelihood ratio
at $N=r$ is
\[
 \ell(r)=\left\{\frac{G(t)}{t}\right\}^{r}
 \left\{\frac{1-G(t)}{1-t}\right\}^{S-r},
\]
which is non-decreasing in $r$ because $G(t)\ge t$.  It is strictly
increasing when $t<G(t)<1$; if $G(t)=1$, all alternative mass is at $N=S$,
and if $g\equiv1$ the likelihood ratio is constant.  At $K=1$ the strong-FWER
constraint~\eqref{eq:fwer} is the single size constraint, so by
Theorem~\ref{thm:aggregator} the model-aware rule is the Neyman--Pearson test
on this distribution: it rejects for large $r$, with randomization at the boundary.
Let $N'\sim\mathrm{Bin}(S,t)$ and let $V\sim\mathrm{Unif}[0,1]$ be independent
of the reports.  Conditional on $N=r$, the randomized tail
\[
 p_t^{\mathrm{rand}}
 =\pr_0(N'>N)+V\pr_0(N'=N)
\]
is uniform over the interval
$[\pr_0(N'>r),\pr_0(N'\ge r)]$.  These intervals partition $[0,1]$, and
their lengths equal the null probabilities of the corresponding counts.
Consequently $p_t^{\mathrm{rand}}$ is uniform under the null, and rejecting
when $p_t^{\mathrm{rand}}\le\alpha$ is exactly the level-$\alpha$
Neyman--Pearson count rule.  This construction uses only $S$, $t$ and
$\alpha$, so it attains model-aware power for every homogeneous
non-increasing $g$ without using $g$ in calibration.  If $g\equiv1$,
every exact-level test has power $\alpha$, and the same conclusion holds.
At the endpoint $t=1$, $N=S$ under both hypotheses, so
$p_t^{\mathrm{rand}}=V$ and the optimal power is again $\alpha$; the
conservative rule never rejects.

The conservative null-calibrated rule rejects $\{N\ge c\}$ for the smallest
$c\in\{1,\ldots,S+1\}$ with
$\pr_0(N\ge c)\le\alpha$, which is the same nested family of regions without
the boundary randomization.  Hence, at each fixed $t$, the model-aware power
exceeds the null-calibrated power by exactly the randomized boundary mass,
\[
 \Pi^{\mathrm{aw}}(t)-\Pi^{\mathrm{nc}}(t)
 =\frac{\alpha-\pr_0(N\ge c)}{\pr_0(N=c-1)}\,\pr_1(N=c-1)\;\ge\;0 .
\]
This identity also covers $c=S+1$, when the conservative rule never rejects.
If the outer maximum of $\Pi^{\mathrm{aw}}(t)$ over $t$ is attained at a
threshold where $\pr_0(N\ge c)=\alpha$, that boundary mass is zero and the two
suprema agree.  This is what the homogeneous $\mathrm{Beta}(1,2)$ searches
of \S\ref{sec:numerics_ext} find at every $2\le S\le10$: the model-aware
outer optimum lies on a kink where the count
tail is exactly $\alpha$, and the two powers agree to six digits.  The
unanimity rule is the special case $c=S$, feasible when $t^S\le\alpha$, and it
is the constrained optimum only while $c=S$ is the count cutoff selected by
the null calibration.  For the $\mathrm{Beta}(1,\theta)$ families of
Proposition~\ref{prop:largeS}, unanimity is strictly suboptimal for all
sufficiently large $S$.
\hfill$\square$

\subsection{Optimal unanimity thresholds: proof of
Proposition~\ref{prop:alloc}}\label{supp:proof-allocation}
The unanimity rule has exact null probability $\prod_s t_s$ and
full-alternative power $\prod_s G_s(t_s)$ (\S\ref{sec:design}), so the
design problem is the constrained maximization below.
Work in $x_s=\log t_s$ and maximize
\[
\mathcal L(x)=\sum_s\log G_s(e^{x_s})
\quad\text{subject to}\quad
\sum_s x_s=\log(\alpha/K).
\]
The Lagrangian stationarity condition at an interior maximizer is
\[
\frac{t_s g_s(t_s)}{G_s(t_s)}=\kappa,\qquad s=1,\ldots,S,
\]
and the product constraint determines $\kappa$. If
$\varphi_s(x)=\log G_s(e^x)$ and $t=e^x$, then
\[
\varphi_s'(x)=\frac{t g_s(t)}{G_s(t)}.
\]
Thus $\varphi_s$ is concave exactly when this elasticity is non-increasing
in $t$. Under that condition, $\mathcal L$ is concave on the affine constraint
set, so every stationary point is a global maximizer.

For the $\mathrm{Beta}(1,\theta)$ alternatives with $\theta>1$ the
elasticity is a strictly decreasing bijection of $(0,1)$ onto $(0,1)$.  For
$g(u)=\theta(1-u)^{\theta-1}$,
\[
\frac{t g(t)}{G(t)}
=\frac{\theta t(1-t)^{\theta-1}}{1-(1-t)^\theta}.
\]
Writing $v=1-t$, the derivative has the required sign precisely when
\[
(\theta-1)-\theta v+v^\theta\ge0.
\]
Indeed, this expression vanishes at $v=1$ and has derivative
$\theta(v^{\theta-1}-1)\le0$ on $(0,1)$, so the elasticity is
non-increasing for $\theta\ge1$ and strictly decreasing for $\theta>1$.
For $\theta>1$ it is a strictly decreasing bijection from $(0,1)$ onto
$(0,1)$.

The one-sided normal-shift family used in Table~\ref{tab:kS2} has the same
property. Write $\phi$ and $\Phi$ for the standard normal density and
distribution function. For shift $\theta>0$, put $z=\Phi^{-1}(t)$ and
$r(z)=\phi(z)/\Phi(z)$, the lower-tail inverse Mills ratio. Since
$G(t)=\Phi(z+\theta)$ and $g(t)=\phi(z+\theta)/\phi(z)$, its elasticity is
\[
\frac{t g(t)}{G(t)}=\frac{r(z+\theta)}{r(z)}.
\]
Writing $a(z)=(\log r)'(z)=-z-r(z)$ gives
$a'(z)=-\{1-zr(z)-r(z)^2\}<0$; the expression in braces is the variance of
a standard normal variable conditional on being at most $z$. Therefore the
derivative with respect to $z$ of the logarithm of the displayed elasticity
is $a(z+\theta)-a(z)<0$. The usual Mills-ratio endpoint limits give values
one and zero as $t\downarrow0$ and $t\uparrow1$, respectively. Thus the
elasticity is again a strictly decreasing bijection from $(0,1)$ onto
$(0,1)$, which justifies the symmetric unanimity optimum in the homogeneous
normal row of Table~\ref{tab:kS2}.

Whenever every site's elasticity has this property, choose the common
value $\kappa$ so that $\prod_s t_s(\kappa)=\alpha/K$; the product varies
continuously from one to zero as $\kappa$ varies from zero to one. Strict
concavity makes this interior stationary point the unique global maximizer.
This includes the $\mathrm{Beta}(1,2)$ and $\mathrm{Beta}(1,3)$ cases. If
$g_s\equiv g$, symmetry and uniqueness give
$t_s=(\alpha/K)^{1/S}$. Under non-strict concavity, the common threshold is
a symmetric maximizer but need not be unique. Without the monotonicity
hypothesis, the displayed equality of elasticities remains only a
first-order condition. \hfill$\square$

\subsection{Plug-in strong-FWER control under estimated local
alternatives}\label{supp:proof-plugin}
\S\ref{sec:plugin} of the main text summarizes the following result.

\begin{proposition}[Plug-in model-aware aggregation]\label{prop:plugin}
Suppose Assumption~\ref{ass:model} holds. Fix report maps as in
Theorem~\ref{thm:aggregator} and let the center
replace the alternative report distributions by probability-distribution estimates
$\hat\nu_{s,1}\ll\tilde\nu_{s,0}$ measurable with respect to a training
sigma-field jointly independent of the entire tested $p$-value array; the report
maps and thresholds may also be chosen using that training sigma-field.
Write $\mathrm{TV}(P,Q)=\sup_A|P(A)-Q(A)|$. For $\varepsilon\ge0$ and
$\zeta\in[0,1]$, suppose that, with probability at least $1-\zeta$ over the
training data,
$\sum_{s=1}^S\mathrm{TV}(\hat\nu_{s,1},\tilde\nu_{s,1})\le\varepsilon$, and that
$(K-1)\varepsilon+\zeta\le\alpha$. If the center runs the program of
Theorem~\ref{thm:aggregator} under $(\tilde\nu_0,\hat\nu_1)$, where
$\hat\nu_1=\otimes_s\hat\nu_{s,1}$, at level
$\alpha-(K-1)\varepsilon-\zeta$, with the convention that it uses the
identically no-rejection policy when this level is zero,
then the true family-wise error rate~\eqref{eq:fwer} is at most $\alpha$, in
finite samples, at every configuration. At $K = 1$ no deflation is needed:
the plug-in test is valid at level $\alpha$ for any such estimate.  For
threshold reports, a uniform Dvoretzky--Kiefer--Wolfowitz bound gives
$\varepsilon=O(Sn^{-1/2})$, up to logarithmic factors, from independent
held-out pilots of size $n$ per site; the exact bound and multilevel
extension are given below.
\end{proposition}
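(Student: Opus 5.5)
We compare the procedure run under the working pair $(\tilde\nu_0,\hat\nu_1)$ with what happens under the true pair $(\tilde\nu_0,\tilde\nu_1)$, conditionally on the training sigma-field, and then integrate out the training data. Condition on $\mathcal F_{\mathrm{tr}}$. This fixes the report maps, the estimates $\hat\nu_{s,1}$ and hence the center's rule $\mathcal R$. Independence of the training data from the tested array leaves the report array distributed as $\bigotimes_k\tilde\nu_{h_k}$ at configuration $h$.

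By the working-distribution clause of Theorem~\ref{thm:aggregator}, the rule satisfies
\[
\hat{\pr}_h\{\mathcal R\cap\mathcal H_0(h)\neq\emptyset\}\le\alpha'=\alpha-(K-1)\varepsilon-\zeta
\]
at every $h$. Here $\hat{\pr}_h$ is the law in which each false coordinate has reports drawn from $\hat\nu_1=\otimes_s\hat\nu_{s,1}$ and each true coordinate from $\tilde\nu_0$. When $\alpha'=0$ the no-rejection convention makes the bound trivial.

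The key step is a total-variation transfer. The event $\{\mathcal R\cap\mathcal H_0(h)\ne\emptyset\}$ is measurable in the report array and the center's internal randomization $\xi$. Its probability under the true and working laws therefore differs by at most the total-variation distance between $\bigotimes_k\tilde\nu_{h_k}$ and its working counterpart. That counterpart agrees on the true coordinates and replaces $\tilde\nu_1$ by $\hat\nu_1$ on the $|h|$ false ones.

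Subadditivity of total variation over independent product components gives the bound $|h|\,\mathrm{TV}(\hat\nu_1,\tilde\nu_1)$. Applying subadditivity again across sites gives $\mathrm{TV}(\hat\nu_1,\tilde\nu_1)\le\sum_s\mathrm{TV}(\hat\nu_{s,1},\tilde\nu_{s,1})\le\varepsilon$ on the good training event. At any configuration with a true hypothesis, $|h|\le K-1$, so the true FWER is at most $\alpha'+(K-1)\varepsilon$. At the full alternative, $\mathcal H_0(h)$ is empty and there is nothing to control. Taking expectations over the training data, and charging at most $\zeta$ for the bad event, yields FWER at most $\alpha'+(K-1)\varepsilon+\zeta=\alpha$.

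At $K=1$ the only constraint is at $h=0$, where the law $\tilde\nu_0$ is known exactly. No alternative coordinate enters, so no deflation is needed and $\zeta$ is not charged.

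For the threshold-report bound, $\tilde\nu_{s,1}$ is Bernoulli with parameter $G_s(t_s)$, so its total-variation distance from an estimate is $|\hat G_s(t_s)-G_s(t_s)|$. Estimating it by the empirical distribution function of $n$ held-out alternative pilot $p$-values, uniformly in $t_s$ so that data-chosen thresholds are covered, the Dvoretzky--Kiefer--Wolfowitz inequality gives deviation $\sqrt{\log(2S/\zeta)/(2n)}$ simultaneously over sites with probability $1-\zeta$. A union bound over the $S$ sites then gives $\varepsilon=S\sqrt{\log(2S/\zeta)/(2n)}$. For an $m$-cell report the same argument applies with total variation bounded by $m$ times the uniform CDF deviation, which is the multilevel extension.

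The main obstacle is the transfer step. I must check carefully that the working program's constraint at a mixed configuration is exactly the probability under the working product law. Theorem~\ref{thm:aggregator}'s working-pair clause provides this. The randomization $\xi$ must also be independent of the reports, so that total variation bounds the difference of probabilities for a randomized event; this holds by conditioning on $\xi$ and averaging.
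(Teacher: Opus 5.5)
Your proposal is correct and follows the paper's proof essentially step for step. Both arguments condition on the training sigma-field, obtain validity under the working pair $(\tilde\nu_0,\hat\nu_1)$ from Theorem~\ref{thm:aggregator}, and transfer to the true law by subadditivity of total variation over the at most $K-1$ alternative product factors; both then charge $\zeta$ for the bad training event, treat $K=1$ separately, and apply Dvoretzky--Kiefer--Wolfowitz with a union bound over sites.

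For the multilevel extension, the paper also covers cells fixed in advance or chosen on a separate split. There, Hoeffding with a union bound over the $2^m$ cell unions gives the bound $\varepsilon_s=[\{m\log 2+\log(2S/\zeta)\}/(2n_s)]^{1/2}$, which is sharper in $m$. Your $m$-times-CDF-deviation bound holds only for interval cells, and it is the one that covers partitions chosen from the same pilot.
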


\emph{Proof.}
\emph{Step 1 (conditioning and validity under the working distribution).}
Condition on the training sigma-field. The report maps, thresholds, and
$\hat\nu_1 = \otimes_s \hat\nu_{s,1}$ are then fixed. Joint independence of
that sigma-field from the entire tested $p$-value array leaves its distribution
unchanged under the conditioning; consequently, the conditional report distribution
is the product distribution induced by the now-fixed report maps. Write
$\alpha'' = \alpha -
(K-1)\varepsilon - \zeta$. By assumption, $\alpha''\ge0$. If
$\alpha''=0$, the proposition's convention uses the identically
no-rejection policy, so validity is immediate. Hence suppose
$\alpha''>0$. By assumption,
$\hat\nu_{s,1}\ll\tilde\nu_{s,0}$ at every site, so
$\hat\nu_1\ll\tilde\nu_0$. Thus Theorem~\ref{thm:aggregator}(a), applied with
$(\tilde\nu_0,\hat\nu_1)$ as the working two-group distribution, gives, for the resulting procedure
$\hat{\mathcal{R}}$ and every configuration $h$,
$Q_h\{\hat{\mathcal{R}} \cap \mathcal{H}_0(h) \neq \emptyset\} \le \alpha''$,
where $Q_h$ is the working distribution of the report array at $h$: null coordinates
carry the exact per-hypothesis distribution $\tilde\nu_0 = \otimes_s \tilde\nu_{s,0}$,
alternative coordinates the estimated $\hat\nu_1$.

\emph{Step 2 (total-variation transfer to the true distribution).}
Put $A_h=\{\hat{\mathcal R}\cap\mathcal H_0(h)\ne\varnothing\}$. This event is measurable in
the report array and the center's internal randomization, whose distribution is
common to the true and working measures, so with $P_h$ the true distribution,
\[
\begin{aligned}
\big|P_h(A_h)-Q_h(A_h)\big|
\;&\le\; \mathrm{TV}(P_h, Q_h) \\
\;&\le\; \sum_{k :\, h_k = 1}
\mathrm{TV}\Big(\bigotimes_s \tilde\nu_{s,1},\ \bigotimes_s \hat\nu_{s,1}\Big)
\;\le\; |h| \sum_{s=1}^S \mathrm{TV}(\hat\nu_{s,1}, \tilde\nu_{s,1}),
\end{aligned}
\]
where $|h|=\sum_k h_k$. The second line follows from the
telescoping bound for products of measures, the null coordinates
contributing zero exactly. Whenever $\mathcal{H}_0(h) \neq \emptyset$ we have
$|h| \le K-1$, so on the good estimation event $\fwer_h(\hat{\mathcal{R}})
\le \alpha'' + (K-1)\varepsilon = \alpha - \zeta$; integrating over the
estimation data and bounding the complementary event's contribution by
$\zeta$ gives $\fwer_h \le \alpha$ at every $h$. At $K = 1$ the single
constraint, at $h = 0$, involves no alternative coordinate, so the transfer
term vanishes identically, no deflation is needed, and validity holds
at level $\alpha$ for any such estimate.

\emph{Step 3 (threshold reports with pilot-selected thresholds).}
For the threshold report,
$\mathrm{TV}(\hat\nu_{s,1}, \tilde\nu_{s,1}) = |\hat G_s(t_s) - G_s(t_s)|$
with $\hat G_s$ the empirical distribution function of the $n_s$ pilot
draws.
The Dvoretzky--Kiefer--Wolfowitz bound on
$\sup_t|\hat G_s(t)-G_s(t)|$, followed by a union bound over sites,
bounds the total variation sum by
$\varepsilon=\sum_{s=1}^S\{\log(2S/\zeta)/(2n_s)\}^{1/2}$
with probability at least $1-\zeta$ for $0<\zeta<1$, even for thresholds
chosen from the same pilots.

\emph{Step 4 (finite-range and continuous reports).}
Write $\varepsilon_s$ for a bound on site $s$'s total-variation error, and
take $\varepsilon=\sum_s\varepsilon_s$ when the bounds hold simultaneously.
For the $m$-level report with cells fixed in advance or chosen on a
separate split, the total variation is the maximum of
$|\hat\nu_{s,1}(A) - \tilde\nu_{s,1}(A)|$ over the $2^m$ cell unions $A$,
and Hoeffding with a union bound over these and the sites gives
$\varepsilon_s = [\{m\log 2 + \log(2S/\zeta)\}/(2n_s)]^{1/2}$.
Interval
partitions chosen from the same pilot are covered by
Dvoretzky--Kiefer--Wolfowitz: on its uniform event, any union of cells has
discrepancy at most
$m\sup_t|\hat G_s(t)-G_s(t)|$, giving
$\varepsilon_s\le m\{\log(2S/\zeta)/(2n_s)\}^{1/2}$.
An empirical distribution attached to an uncoarsened continuous report is
typically singular with respect to the continuous null distribution; that case is
excluded by the stated absolute-continuity condition unless smoothing or a
common dominated model is supplied. \hfill$\square$

\section{The \texorpdfstring{$K$}{K}-hypothesis dual computation}\label{app:dual}

The ordered program is stated for an abstract exchangeable two-group
experiment whose likelihood ratio is the observation itself, which covers
both the $p$-value array and any induced report distribution.  Let
$V_1,\ldots,V_K$ be i.i.d.\ under a null distribution $\lambda_0$, with
alternative distribution $\lambda_1\ll\lambda_0$ satisfying
$\mathrm d\lambda_1/\mathrm d\lambda_0(v)=v$. On finite support, write
the corresponding masses as $q_0(v)$ and $q_1(v)=v q_0(v)$. Use auxiliary
tie-breakers to sort a realized tuple as
$v_{(1)}\ge\cdots\ge v_{(K)}$. Write $e_d(A)$ for the
elementary symmetric polynomial of degree $d$ in the values indexed by
$A$, set $e_0(A)=1$ and $e_d(A)=0$ for $d<0$ or $d>|A|$, abbreviate
$e_d=e_d(\{1,\ldots,K\})$, and put $p_l=\prod_{i=1}^l v_{(i)}$.

The primal makes the action randomization explicit. For each sorted
tuple $v$, let $r_l(v)$ be the conditional probability of rejecting exactly
the top $l$ hypotheses, after averaging over internal randomization (including
the auxiliary tie-breakers), for $l=0,\ldots,K$. Two families of coefficients
appear: a lowercase letter carries a
single hypothesis position and the matching uppercase letter its cumulative
sum over the top $l$ positions, while $a$ names an objective coefficient and
$b$ a family-wise error coefficient. From the per-position $a_{\gamma,k}$ and
$b_{\gamma,k}$ of \S\ref{supp:proof-priorfree}, put
\begin{equation}\label{eq:cumulative_coeffs}
 A_l=\frac{1}{K!}\sum_{i\le l}a_{w,i},
 \qquad
 B_{j,l}(v)=e_{K-j}-p_l e_{K-l-j}(\{l+1,\ldots,K\}),
 \qquad
 \alpha_j=\binom Kj\alpha,
\end{equation}
for $j=1,\ldots,K$ and $l\ge1$, with
$a_{w,k}=\sum_\gamma w(\gamma)a_{\gamma,k}$ the prior-mixed objective
coefficients and $A_0=B_{j,0}=0$. The two indices of $B_{j,l}$ are the number
of true hypotheses and the number of rejections, and
$\sum_{k\le l}b_{\gamma,k}=\gamma!\,(K-\gamma)!\,B_{K-\gamma,l}$, the index
$\gamma=K-j$ there counting alternatives. Under the point prior at
$\gamma=K$, $A_l=(l/K)e_K$, because $a_{K,k}=(K-1)!\,e_K$ does not depend
on $k$.

\emph{Derivation of the coefficients.} Fix a configuration $h$ with true
hypothesis set $\mathcal H_0$, $|\mathcal H_0|=j$. By the two-group factorization
and the per-coordinate density-ratio identity
$\mathrm d\lambda_1/\mathrm d\lambda_0=v$, the density of the tuple at
configuration $h$ relative to the global-null product distribution is
$\prod_{k\notin\mathcal H_0}v_k$. Rejecting the top $l$ sorted coordinates
makes a false rejection exactly when $\mathcal H_0$ meets the top $l$
positions. Summing the density over all $\binom Kj$ configurations gives
$\sum_{|\mathcal H_0|=j}\prod_{k\notin\mathcal H_0}v_k=e_{K-j}$, the
elementary symmetric polynomial in all $K$ sorted values, since the
products run over all $(K-j)$-subsets. The configurations making no false
rejection are those with $\mathcal H_0\subseteq\{l+1,\ldots,K\}$; for such
a configuration the top $l$ coordinates are all alternative, contributing
the prefix product $p_l$, and the remaining $K-l-j$ alternative
coordinates form a subset of the bottom, so these configurations
contribute $p_l\,e_{K-l-j}(\{l+1,\ldots,K\})$. Subtracting yields
$B_{j,l}$. Non-negativity holds termwise: every monomial of
$p_l\,e_{K-l-j}(\{l+1,\ldots,K\})$ is the monomial of $e_{K-j}$ indexed by
a $(K-j)$-subset containing the top $l$ positions, and each such monomial
appears exactly once in each sum.

The derivation uses only permutation
equivariance, the two-group factorization, and the density-ratio identity;
it requires neither continuity of the evidence distribution nor
Assumption~\ref{ass:reg}. The resulting coefficient structure is that of
\citet[Lemma~1]{dubey2026esp}, rederived here for a general dominated
report distribution.

The ordered-action primal, with its pointwise constraints
understood under the sorted null distribution, is
\begin{equation}\label{eq:ordered_primal}
\begin{aligned}
\text{maximize}_{\{r_l(v)\}}\quad
 &\E_0\!\left[\sum_{l=1}^K r_l(V)A_l(V)\right] \\
\text{subject to}\quad
 &\E_0\!\left[\sum_{l=1}^K r_l(V)B_{j,l}(V)\right]
   \le \alpha_j, &&j=1,\ldots,K,\\
 &\sum_{l=0}^K r_l(v)=1,\qquad r_l(v)\ge0,
 &&\text{almost surely}.
\end{aligned}
\end{equation}
Permutation equivariance makes all configurations with $j$ true hypotheses have the
same family-wise error, so the summed constraint above is equivalent to the
individual strong-control constraints. At the point prior the objective is
$\E_0[\sum_l r_l(V)(l/K)e_K(V)]$, the density relative to the global null
under the full alternative being $e_K(V)$.

\begin{lemma}[Attainment and strong duality for randomized ordered policies]
\label{lem:ordered_duality}
For every dominated two-group distribution just described and every
$\alpha\in(0,1)$, the randomized ordered-policy
program~\eqref{eq:ordered_primal} attains its maximum and has value
\[
\Pi_w
=\min_{\mu\ge0}\left\{
\sum_{j=1}^K\alpha_j\mu_j
+\E_0\max\!\left(0,\max_{1\le l\le K}\psi_l(V)\right)
\right\},
\qquad
\psi_l=A_l-\sum_{j=1}^K\mu_jB_{j,l},
\]
in the notation of~\eqref{eq:cumulative_coeffs}; $\psi_l$ is the dual branch
of the policy that rejects the $l$ largest likelihood ratios. A general prior
changes $A_l$ and nothing else in the display, the family-wise error bracket
being free of $w$.
The dual minimum is attained, and every minimizing multiplier satisfies
$\|\mu\|_1\le1/\alpha$. Whenever the symmetrization and
likelihood-ratio-ordering reduction applies, as established for the
centralized oracle in Proposition~\ref{prop:oracle} and for fixed reports in
Theorem~\ref{thm:aggregator}, this value is the optimum over all randomized
procedures satisfying~\eqref{eq:fwer}, not merely over ordered policies.
\end{lemma}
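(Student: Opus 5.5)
The plan is to treat \eqref{eq:ordered_primal} as a linear program over a convex set of measurable simplex-valued functions with only finitely many ($K$) linear inequality constraints. Then Lagrangian duality with a Slater point gives zero gap and dual attainment, and weak-$*$ compactness gives primal attainment.

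\emph{Integrability and weak duality.} First I record that every coefficient is in $L^1(P_0)$. We have $0\le B_{j,l}\le e_{K-j}$ termwise, as shown in the derivation above, and $\E_0 e_d=\binom Kd$ because the $V_k$ are independent with $\E_0V_k=\int v\,\mathrm d\lambda_0=\lambda_1(\mathbb R)=1$. The $A_l$ are nonnegative and nondecreasing in $l$. Moreover, $\E_0A_K=1$ at every prior, because $A_K$ is the objective density of the reject-everything action, whose power $\Pi_\gamma$ equals $1$ for every $\gamma$. Hence objective and constraints are finite for every admissible $r$.

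For feasible $r$ and $\mu\ge0$, adding $\sum_j\mu_j\{\alpha_j-\E_0\sum_l r_lB_{j,l}\}\ge0$ to the objective gives
\[
\E_0\sum_l r_lA_l\le\sum_j\alpha_j\mu_j+\E_0\sum_l r_l\psi_l\le \mathcal D(\mu),
\]
since $\sum_l r_l\psi_l\le\max(0,\max_l\psi_l)$ pointwise with $\psi_0=0$. This is weak duality.

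\emph{Strong duality and dual attainment.} The policy $r_0\equiv1$ is a strict Slater point, since every constraint value is $0<\alpha_j$, and the primal value is at most $\E_0A_K=1<\infty$. The Lagrange duality theorem for convex programs with finitely many inequality constraints (e.g.\ Luenberger's) then gives
\[
\sup_r\E_0\sum_l r_lA_l=\min_{\mu\ge0}\Bigl\{\sum_j\alpha_j\mu_j+\sup_r\E_0\sum_l r_l\psi_l\Bigr\},
\]
with the minimum attained. It remains to identify the inner supremum with $\E_0\max(0,\max_l\psi_l)$. I will do this by choosing measurably, for each $v$, the smallest maximizing index $l$ and putting all mass on it; the branches are finitely many measurable functions, so this selection is measurable.

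\emph{Primal attainment.} The set of admissible $r=(r_0,\dots,r_K)$ is a closed convex subset of the unit ball of $L^\infty(P_0)^{K+1}$, hence weak-$*$ compact by Banach--Alaoglu, since $L^\infty=(L^1)^*$. The objective and each constraint pair $r$ against an $L^1$ coefficient, so they are weak-$*$ continuous. The feasible set is therefore compact, and the objective attains its maximum on it.

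\emph{Multiplier bound and the prior.} Because $F_\mu\ge0$,
\[
\mathcal D(\mu)\ge\sum_j\binom Kj\alpha\mu_j\ge\alpha\|\mu\|_1 .
\]
At $\mu=0$ the monotonicity of $A_l$ gives $F_0=A_K$, so $\mathcal D(0)=1$. Any minimizer therefore satisfies $\alpha\|\mu\|_1\le\mathcal D(\mu)\le\mathcal D(0)=1$, which is the bound $\|\mu\|_1\le1/\alpha$. The prior enters only through $a_{w,k}$ and hence only through $A_l$; the $B_{j,l}$ are derived without reference to $w$. This gives the remark about general priors.

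\emph{Extension to all procedures.} By the symmetrization fact, every configuration with $j$ true hypotheses has the same family-wise error rate for a symmetric procedure. The summed constraint at level $\alpha_j=\binom Kj\alpha$ is therefore equivalent to strong control. Where the ordering reduction of Proposition~\ref{prop:oracle} or Theorem~\ref{thm:aggregator} applies, every valid procedure is dominated by a randomized ordered one, so the value computed here is the optimum over all valid procedures.

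I expect the main obstacle to be strong duality in this infinite-dimensional setting. Slater's condition together with finitely many constraints should suffice, but the technical point is verifying that the perturbation function is finite and concave near $\alpha$. That verification relies on the integrability bounds above, in particular on $\E_0A_K=1$ holding for arbitrary (possibly atomic) dominated distributions.
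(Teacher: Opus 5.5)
Your proposal is correct and follows the paper's proof essentially step for step: weak-$*$ compactness via Banach--Alaoglu for primal attainment; zero duality gap and dual attainment from a Slater point (the no-rejection policy) plus finite-dimensional separation; measurable selection of the smallest maximizing branch; and the multiplier bound from $\mathcal D(\mu)\ge\alpha\|\mu\|_1$ with $\mathcal D(0)=\E_0A_K=1$. The separation step is where you differ slightly: you cite Luenberger's Lagrange duality theorem, whereas the paper writes out the supporting-hyperplane argument on the closed moment image. One point to tighten: Banach--Alaoglu gives compactness only if $\mathcal P$ is weak-$*$ closed, not merely norm closed. It is, because it is cut out by pairings against $L^1$ functions.
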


\begin{proof}
Attainment follows from weak-* compactness of the ordered randomized
policies and weak-* continuity of the objective and error functionals.  On
the global-null probability space of the sorted likelihood-ratio tuple, let
\[
 \mathcal P=\left\{r=(r_0,\ldots,r_K)\in
 (L^\infty)^{K+1}:r_l\ge0,\ \sum_{l=0}^K r_l=1\ \text{a.s.}\right\}
\]
be the set of ordered randomized policies. It is weak-* compact: it is a
weak-* closed subset of the product of
$K+1$ unit balls of $L^\infty$, which is weak-* compact by the
Banach--Alaoglu theorem. The coefficient functions $A_l$ and $B_{j,l}$ are
in $L^1$: by the coefficient derivation above, each can be expanded as a
finite sum of nonnegative likelihood-ratio monomials, and every such
monomial has finite null expectation. Therefore the objective and error
functionals
\[
 \Pi(r)=\E_0\sum_l r_lA_l,\qquad
 \fwer_j(r)=\E_0\sum_l r_lB_{j,l}
\]
are weak-* continuous. The feasible subset
$\{r\in\mathcal P:\fwer_j(r)\le\alpha_j,\ j=1,\ldots,K\}$ is
consequently weak-* compact, so its maximum $\Pi^*$ is attained.

For completeness, strong duality follows from finite-dimensional separation
of the compact convex moment image. Write
$\boldsymbol\alpha=(\alpha_1,\ldots,\alpha_K)$ and
$\fwer(r)=(\fwer_1(r),\ldots,\fwer_K(r))$, and let
\[
 \mathcal Z=\{(\fwer(r),\Pi(r)):r\in\mathcal P\}
 +(\mathbb R_+^K\times\mathbb R_-),
 \qquad \mathbb R_-=(-\infty,0].
\]
The set $\mathcal Z$ is closed and convex because the moment image is compact
and convex and the added cone is closed and convex. If $r^*$ is a primal
maximizer, then
$(\boldsymbol\alpha,\Pi^*)
 =(\fwer(r^*),\Pi^*)+(\boldsymbol\alpha-\fwer(r^*),0)\in\mathcal Z$; but
$(\boldsymbol\alpha,\Pi^*+\varepsilon)\notin\mathcal Z$ for every
$\varepsilon>0$, since such
membership would produce a feasible kernel with objective greater than
$\Pi^*$. Thus $(\boldsymbol\alpha,\Pi^*)$ is a boundary point.
A supporting hyperplane supplies a nonzero vector $(-\mu,\eta)$ such that
\[
 -\mu^\top(x-\boldsymbol\alpha)+\eta(y-\Pi^*)\le0,\qquad (x,y)\in\mathcal Z.
\]
Closure of $\mathcal Z$ under coordinatewise increases in $x$ and decreases
in $y$ forces $\mu\ge0$ and $\eta\ge0$. The no-rejection kernel has
$\fwer_j=0<\alpha_j$ for every $j$, so $\eta=0$ would force $\mu=0$,
contradicting
the nonzero supporting normal. Hence $\eta>0$, and normalize it to one.
Taking $(x,y)=(\fwer(r),\Pi(r))$ yields
\[
 \sum_j\mu_j\alpha_j+\sup_{r\in\mathcal P}
 \Big\{\Pi(r)-\sum_j\mu_j\fwer_j(r)\Big\}\le\Pi^*.
\]
The reverse inequality is weak duality. Pointwise maximization over the
simplex $(r_0(v),\ldots,r_K(v))$ makes the supremum equal to the expectation
of $\max\{0,\max_l\psi_l(V)\}$: since there are finitely many measurable
branches, choosing the smallest index among the maximizers gives a measurable
selector. This proves the displayed identity and shows that this $\mu$
attains the dual minimum.

The dual objective $\mathcal D$ satisfies
$\mathcal D(\mu)\ge\alpha\|\mu\|_1$ because
$\alpha_j\ge\alpha$ and the pointwise maximum includes zero, while
$\mathcal D(0)=1$
at every prior: the $a_{w,i}$ are nonnegative, so $\max_lA_l=A_K$, and
$\sum_ka_{\gamma,k}=\gamma!\,(K-\gamma)!\,e_\gamma$ together with
$\E_0e_\gamma=\binom K\gamma$ gives
$\E_0A_K=\sum_\gamma w(\gamma)=1$. Every dual minimizer has objective no
larger than $\mathcal D(0)$ and therefore lies in $\|\mu\|_1\le1/\alpha$. The final assertion follows from
the symmetrization and likelihood-ratio-ordering reductions proved for the
two experiments cited in the statement.
\end{proof}

For $\lambda_0$ supported on $M$ distinct atoms $v_1,\ldots,v_M$, this is
ordinary linear-programming duality. Under the point prior $w(2)=1$, the $K=2$ dual becomes
\[
\Pi_2=\min_{\mu_1,\mu_2\ge0}\left\{
2\alpha\mu_1+\alpha\mu_2
+\sum_{a,b}q_0(v_a)q_0(v_b)\max(0,\psi_1,\psi_2)
\right\},
\]
with
\[
\psi_1=\tfrac12v_{(1)}v_{(2)}-\mu_1v_{(2)}-\mu_2,
\qquad
\psi_2=v_{(1)}v_{(2)}-\mu_1\{v_{(1)}+v_{(2)}\}-\mu_2.
\]

The dual is convex and piecewise linear.  One epigraph variable per multiset
reduces $M^K$ labeled tuples to the weak compositions of $K$ into $M$ atom
types; matching feasible primal and dual values and residuals checks numerical
optimality.  The primal $r_l(v)$, not the tie-breaker, specifies action
randomization: order ties with $\xi$, then sample $l\in\{0,\ldots,K\}$ with
probabilities $\{r_0(v),\ldots,r_K(v)\}$.  The dual alone returns the value and
multipliers, not this policy.

At $K=1$ the dual reduces to
\[
\min_{\mu\ge0}\left\{\alpha\mu+
\sum_a q_0(v_a)(v_a-\mu)_+\right\},
\]
the Neyman--Pearson power on the atomic distribution. On a constant likelihood ratio,
for example, the primal rejects with probability $\alpha$. Generic descent
heuristics are unsafe on coarse atomic distributions because they can stop at a kink
above the minimum; the linear programs avoid that ambiguity. When a
continuous oracle distribution is replaced by a fine binning, the resulting value is
a numerical discretization of the oracle benchmark, not an exact
continuous-distribution certificate.

\section{Numerical study}\label{supp:numerics}

This section gives the numerical study summarized in \S\ref{sec:numerics}
of the main text.  It targets four claims: finite-report strictness at
$K=2$; the inverse-square resolution rate; the growing benefit of joint
aggregation under multiplicity; and the different robustness scopes of the
two rules.  Finite report-distribution calculations are deterministic, with
explicit primal--dual checks on selected report laws.  Outer threshold searches and
continuous-oracle values remain numerical unless an exact-rational
certificate is stated.  Supporting tables and computational details are in
\S\ref{supp:additional}.

At a single hypothesis the loss from a threshold report is small: for
$K=1$, $S=2$, and $\alpha=0.05$, exact four-cell calculations give relative
deficits of $1.1\%$ and $1.9\%$ under homogeneous $\mathrm{Beta}(1,2)$ and
$\mathrm{Beta}(1,3)$ alternatives.  Exact-rational
branch and bound certifies each outer value within $2\times10^{-4}$.  An
unbounded one-sided normal-shift example gives $9.7\%$, but its model
lies outside Assumption~\ref{ass:reg}; the full table and closed-form
benchmarks are in \S\ref{supp:supporting-numerics}.

\subsection{Multiplicity and the separation of aggregators at \texorpdfstring{$K=2$}{K=2}}\label{sec:numerics_mult}

For two homogeneous $\mathrm{Beta}(1,2)$ sites at $\alpha=0.05$,
separately optimized one-bit reports lose $1.1\%$ of oracle power at $K=1$
and $8.5\%$ at $K=2$; these are the figures summarized in
\S\ref{sec:numerics_summary} of the main text.
With one hypothesis, a center holding both $p$-values runs
the Neyman--Pearson test on their product likelihood ratio, rejecting when
$4(1-u_1)(1-u_2)$ exceeds the constant that gives the rejection region null
probability $0.05$; its power is $\pi^*_1=0.15954$.  The best threshold-report rule
rejects when both sites report significance, which has null probability
exactly $0.05$ at $t=\sqrt{0.05}=0.224$ and power $0.15778$, so $0.00176$ of
power is lost, or $1.1\%$ of $\pi^*_1$.  With two hypotheses neither side
is a single threshold.  The benchmark is the value of the program that
maximizes average power subject to a family-wise error rate of at most $0.05$
under every configuration of true and false hypotheses, $\pi^*_2=0.1565$.  On the
report side, fixing $t$ turns the transmitted array into four bits, one per
site per hypothesis, whose null distribution is known exactly.  The center
solves the same program under that finitely supported distribution, and
optimizing over $t$ attains $0.1432$.
The loss is the $0.0133$ of Table~\ref{tab:k2}, or $8.5\%$ of $\pi^*_2$.

At $K=2$ the two rules separate, because model-aware aggregation uses the
joint report distribution whereas the null-calibrated rule stays marginal;
Table~\ref{tab:k2} compares their separately optimized threshold reports with
the continuous oracle.  For each
Beta model, exact-rational branch and bound proves that the global
model-aware outer value is no more than $2\times10^{-4}$ above an exactly
feasible rationalized policy.  The decimal-rational candidates and complete
certificate are in \S\ref{supp:additional}.

By Lemma~\ref{lem:oak_auto} the strict loss applies to \emph{every}
finite-range componentwise report design in these models, not only to the
selected thresholds; Corollary~\ref{cor:enclosure_beta} certifies the
explicit witness at every minimizer of the continuous dual for both Beta
models, and Corollary~\ref{cor:beta_oak} does so analytically for
$\mathrm{Beta}(1,2)$.

Recalibrating and reoptimizing the joint cells recovers $81$--$86\%$ of the
large marginal deficit.  The contrast with the displayed $K=1$ designs is
the substantive one: their optimized count tails attain the level exactly,
so the null-calibrated rule loses no aggregation power
(Proposition~\ref{prop:marginalK1}),
whereas here it loses more than five times the loss from coarsening itself.

\begin{table}[htbp]
\caption{Threshold-report deficits for $K=S=2$ and $\alpha=0.05$, with each
method selecting its own thresholds.  Exact-rational branch and bound
certifies each model-aware value within $2\times10^{-4}$ of the global outer
value; it does not enclose the optimizer coordinates.}
\label{tab:k2}
\centering
\begin{tabular}{lccccc}
\toprule
& & \multicolumn{2}{c}{model-aware} & \multicolumn{2}{c}{null-calibrated} \\
\cmidrule(lr){3-4}\cmidrule(lr){5-6}
Alternative density $g$ & $\pi^*_2$ & deficit & $/\,\pi^*_2$ & deficit & $/\,\pi^*_2$ \\
\midrule
$g(u) = 2(1-u)$, $\mathrm{Beta}(1,2)$   & $0.1565$ & $0.0133$ & $8.5\%$  & $0.0716$ & $45.8\%$ \\
$g(u) = 3(1-u)^2$, $\mathrm{Beta}(1,3)$ & $0.2870$ & $0.0178$ & $6.2\%$  & $0.1243$ & $43.3\%$ \\
\bottomrule
\end{tabular}
\end{table}

\subsection{Scaling in the number of sites and heterogeneity}\label{sec:numerics_ext}

For $K=1$ with homogeneous $\mathrm{Beta}(1,2)$ sites, deterministic
calculations confirm Proposition~\ref{prop:marginalK1} at every $S$ from $2$
to $10$: the null-calibrated count rule and the model-aware program agree to
six displayed digits, at a common threshold and with no randomization.  The
relative threshold-report deficit is therefore the same for both, and rises
from $1.1\%$ at $S=2$ to $18.4\%$ at $S=10$.  Within the marginal class, however, the
unanimity design does separate from the optimal count rule.  Its count cutoff
$c=S$ is optimal up to $S=7$; at
$S=8$ the optimum drops to $c=7$ and unanimity falls behind, and by $S=10$ it
reaches only $0.500$ against $0.525$, a relative deficit of $22\%$ against
$18\%$. The additional loss from unanimity at larger $S$ reflects its count
cutoff; the optimized marginal count rule still matches model-aware
aggregation. Table~\ref{tab:scaleS} gives
the complete finite-$S$ table and Proposition~\ref{prop:largeS} the
large-$S$ result showing that the two designs must diverge.  At $K=1$, heterogeneity also
matters: solving the elasticity condition in Proposition~\ref{prop:alloc} for
$\mathrm{Beta}(1,2)$ and $\mathrm{Beta}(1,4)$ sites gives thresholds
$(0.354,0.141)$ and power $0.266$, compared with $0.253$ for the uniform
split, a $5\%$ relative gain from assigning the tighter level to the more
informative site.

\subsection{Refining the report: the inverse-square rate}\label{sec:numerics_rate}

The predicted inverse-square rate is visible at practical resolutions;
Figure~\ref{fig:theory_inverse_square} of the main text shows it for $S=2$
at $K=1$ and $K=2$.  Despite
grid-alignment oscillations,
$m^2\delta_w$ stays bounded and separated from zero.  The fitted exponent is
$2.02$ at $K=1$ over $m\in\{32,\ldots,1024\}$; at $K=2$, a deterministic
calculation over seven resolutions from $m=8$ to $64$ gives $1.998$
(\S\ref{supp:supporting-numerics}).  The $m=2$ point is the equal-width, not threshold-optimized, binary
report, and the display illustrates only that the uniform design attains the
class-optimal rate; the theorem's lower bound covers arbitrary measurable
cells.  As a design guide, against oracle powers $\pi^*_1=0.1595$ and $\pi^*_2=0.1565$, the uniform
$m=8$ report, three bits, leaves relative deficits of $0.72\%$ at $K=1$ and
$0.81\%$ at $K=2$, against $2.9\%$ at two bits and $28\%$ at one, the last
two at $K=2$.  An unbounded one-sided normal example gives an exponent near
$1.4$, confirming that the
bounded-likelihood-ratio condition is substantive.

\subsection{The premise under a non-degenerate prior}\label{sec:numerics_prior}

The framework allows an arbitrary prior $w$, and Lemma~\ref{lem:oak_auto}
covers every prior with $w(K)>0$, but the kink surfaces of the dual
integrand move with $w$, so the branch crossing that drives the size of
the loss must be located afresh at each prior.  Table~\ref{tab:prior} does
so.  For $K=S=2$ homogeneous $\mathrm{Beta}(1,2)$ sites at $\alpha=0.05$,
we solve the prior-weighted dual for a range of $w$, locate its minimizer,
and scan for the explicit witness: a crossing of two positive branches,
interior in the remaining coordinate, on a set of the other coordinate
carrying positive null mass, with a slope jump bounded away from zero.

\begin{table}[htbp]
\caption{Prior-weighted $K=S=2$ oracle at $\alpha=0.05$, homogeneous
$\mathrm{Beta}(1,2)$ sites, as the prior $w$ moves from the full-alternative
point mass towards the sparse end.  The last two columns are the explicit
witness: the null mass of the coordinate values admitting an interior
two-positive-branch crossing, and the smallest slope jump on that set.  The
$w(1)=0$ row is the point prior, whose explicit witness is certified over the
whole dual enclosure in \S\ref{supp:proof-enclosure}.}
\label{tab:prior}
\centering
\begin{tabular}{ccccc}
\toprule
$w(1)$ & $\pi^*_w(\alpha)$ & $\mu^*$ & witness mass & slope jump \\
\midrule
$0$    & $0.1565$ & $(1.256,\,0.226)$ & certified & certified \\
$0.25$ & $0.1403$ & $(0.986,\,0.520)$ & $0.0098$ & $0.076$ \\
$0.50$ & $0.1259$ & $(0.714,\,0.848)$ & $0.0204$ & $0.143$ \\
$0.75$ & $0.1134$ & $(0.458,\,1.165)$ & $0.0290$ & $0.196$ \\
\bottomrule
\end{tabular}
\end{table}

The explicit witness is present at every prior examined, so the geometry
behind the loss is not an artifact of the full-alternative objective.  The
oracle value $\pi^*_w(\alpha)$ decreases across the displayed priors as
weight moves towards $\gamma=1$. Among the three non-degenerate priors in
the table, both the numerical witness mass and the smallest slope jump
increase. At the flat prior for $\mathrm{Beta}(1,3)$, we obtain
$\pi^*_w=0.2282$, minimizer
$(1.061,\,1.546)$, witness mass $0.0127$ and slope jump $0.106$.

The explicit witness is also certified at every flat-prior dual minimizer
for $\theta=2,3$: \S\ref{supp:prior_numerics} gives rational boxes and
positive-mass windows on which all witness inequalities hold uniformly.
The prior-weighted dual and numerical scan are implemented in
\texttt{prior\_weighted\_K2.py}; the arithmetic certificate is
\texttt{oak\_\allowbreak verified\_\allowbreak enclosure.py}.

The relative deficit is larger under the flat prior in these examples.
Recomputing the
threshold-report deficits of Table~\ref{tab:k2} under the flat prior, with the
same threshold search and the prior-weighted finite program, gives relative
deficits of $9.6\%$ for $\mathrm{Beta}(1,2)$ and $9.3\%$ for
$\mathrm{Beta}(1,3)$, against $8.5\%$ and $6.2\%$ at the point prior.  The
selected thresholds are nearly unchanged in these examples, although the
objective depends on the prior. Repeating the uniform-report calculation of
\S\ref{sec:numerics_rate} under the flat prior returns a fitted exponent of
$1.96$ over $m\in\{6,8,10,12\}$, against $2.03$ at the point prior on the
same grid, with $m^2\delta_w$ again bounded and separated from zero.  The
point-prior relative deficits therefore understate the relative losses in
these two flat-prior comparisons.

\subsection{The federated multiplicity advantage}\label{sec:numerics_adv}

Joint model-aware aggregation of one-bit reports gains $62$--$69\%$ over
marginal aggregation at $K=2$ and $330$--$419\%$ at $K=8$.  The deterministic
comparison in Figure~\ref{fig:validation_multiplicity_resolution} uses
homogeneous $\mathrm{Beta}(1,2)$ threshold reports, with each method selecting
its own common threshold, and the gains are over the common displayed
marginal power.  At this
calibration Bonferroni, which is the null-calibrated rule of
\S\ref{sec:marginal}, agrees with Holm, Hommel, and the unanimity rule to
displayed precision; this is a numerical coincidence, not an
identity~\citep{dunn1961multiple,holm1979simple,hommel1988stagewise}.  Table~\ref{tab:advantage} gives the full table, and
\S\ref{supp:additional} a contrasting calibration.

The aggregation advantage also depends on the objective; in these
comparisons it decreases under the flat prior. Under the
flat prior the same comparison gives model-aware gains of $34\%$ at $K=2$
rising to $157\%$ at $K=8$ for $S=2$, and $25\%$ to $102\%$ for $S=3$.  The
gain comes from pooling evidence across simultaneously false hypotheses, and a
prior that places weight on $\gamma=1$ weights configurations in which there
is no second alternative to pool with. In the displayed comparisons, joint
aggregation of coarse reports still outperforms marginal aggregation by a
margin growing in $K$, while the particular multiple depends on the prior.
The comparators still coincide at every prior and every $K$ examined.

\subsection{A synthetic multi-site study: resolution, misspecification, and
dependence}\label{sec:numerics_sim}

In a synthetic study with three heterogeneous normal-shift sites, the
model-aware rule approaches the oracle as the report is refined, while the
null-calibrated rule is held back at $m=4$ by the discreteness of its null
levels.  There are $S=3$ sites, $K=2$ hypotheses and $\alpha=0.05$, site $s$
having one-sided shift $\theta_s$ with
$\theta=(\theta_1,\theta_2,\theta_3)=(1.0,1.5,2.0)$.  The
aggregate log likelihood ratio is exactly Gaussian with variance
$\sigma^2=\sum_s\theta_s^2=7.25$, giving oracle power $0.852$.  Figure~\ref{fig:validation_multiplicity_resolution}
shows deterministic finite-program power across report resolutions, with
Holm applied to the null upper-tail probabilities of the report likelihood
ratios.  The
displayed protocols are not
totally ordered by refinement because the binary maps are optimized
separately; consequently, their powers need not be monotone across rows.  The
null-calibrated rule is additionally constrained at $m=4$ by discreteness:
with three sites and four equally likely cells its attainable null levels are
multiples of $1/64$, so the largest one below $\alpha/K=0.025$ is $0.0156$
and more than a third of the level is left unused.  Exact
data-generating-process, threshold, primal--dual, and discretization checks
are given in \S\ref{supp:sim}.

Under the flat prior the pattern in this study is unchanged and the relative
deficits are again larger. Repeating the experiment under the flat prior, with the thresholds
re-optimized for that objective, moves the oracle to $0.811$ and gives
model-aware powers of $0.659$, $0.629$, $0.736$ and $0.811$ across the same
four resolutions, against $0.717$, $0.703$, $0.797$ and $0.852$ at the point
prior.  The relative deficit is $18.7\%$ against $15.9\%$ for the threshold report and $9.3\%$ against $6.5\%$
at $m=8$, matching what the Beta calculations show in
\S\ref{sec:numerics_prior}.  Re-optimizing the thresholds matters here: the
point-prior design evaluated under the flat objective yields only $0.547$,
below the null-calibrated rule, which is a statement about designing for the
wrong criterion rather than about the optimum.

\begin{figure}[!tbp]
\centering
\setlength{\figwidth}{\textwidth}
\includegraphics[width=\textwidth]{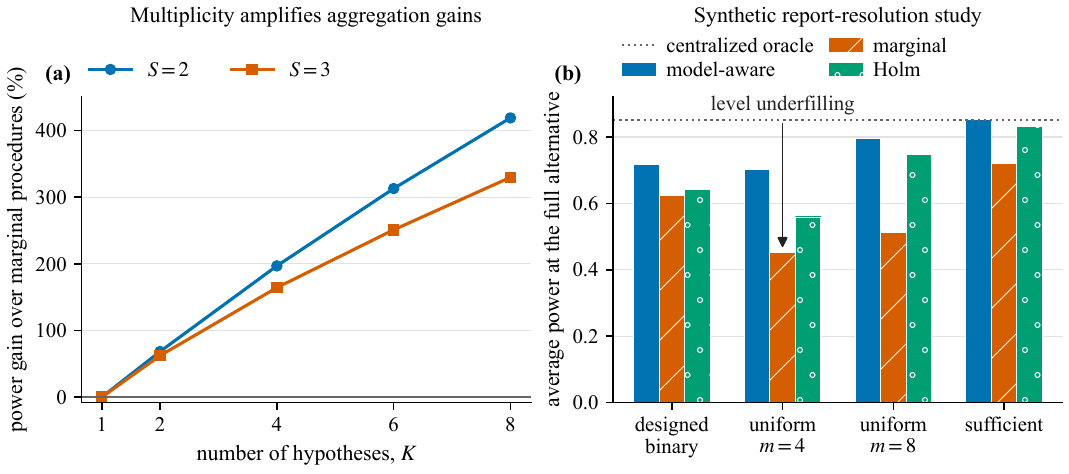}
\caption{Joint aggregation gains grow with multiplicity in (a), and model-aware
power approaches the oracle under uniform refinement in (b). Both panels use
deterministic calculations at $\alpha=0.05$ under the full-alternative objective.
Panel (a) shows percentage power gains over marginal aggregation for homogeneous
$\mathrm{Beta}(1,2)$ threshold reports; the separately optimized rules give a
classwise comparison. Panel (b) uses $S=3$ normal-shift sites with shifts
$(1,1.5,2)$ and $K=2$; marginal is
Fisher--Bonferroni, and Holm uses report-likelihood-ratio-tail $p$-values.
For designed
binary reports, the model-aware and null-calibrated rules select different
thresholds and Holm uses the null-calibrated report distribution; the other designs use
common report distributions. The uniform $m=4$ and $m=8$ maps are nested,
so their model-aware optima cannot decrease with refinement. The designed
binary maps are not nested with these partitions, so the first two designs
need not be ordered in power. The dotted line is centralized-oracle power.}
\label{fig:validation_multiplicity_resolution}
\alttext{Aggregation gains rise from zero at one hypothesis to
$330$--$419\%$ at eight. In the synthetic study, model-aware power reaches
the oracle with sufficient reports. Bar fills distinguish the methods:
solid for model-aware, diagonal for marginal, and dotted for Holm.}
\end{figure}

Table~\ref{tab:simrobust} shows model-aware family-wise error inflation above
$\alpha$ under underestimated effects and positive cross-hypothesis
correlation, while null-calibrated aggregation retains its guarantee.
Here $\hat\theta$ is the shift vector the center assumes when it
solves the program, and $\rho$ is the within-site correlation between the two
hypotheses' $z$-statistics.  Underestimating the effect distribution inflates
model-aware strong FWER to $0.0528$ for threshold reports and $0.0551$ for
$m=4$ reports; both maxima
occur at mixed configurations.  With policies computed under independence,
within-site cross-hypothesis correlation raises the model-aware all-null FWER
to $0.067$--$0.068$.  The null-calibrated rate stays below $0.05$ throughout,
and for the uniform $m=4$ report it does not move with the assumed effect at all,
since its critical value is a functional of the null report distribution alone.  Thus
model-aware aggregation requires a correct or validity-corrected report distribution
and independent hypotheses; null-calibrated aggregation is the default when
only superuniform local $p$-values and per-hypothesis cross-site independence
are defensible.

Both failure modes persist under the flat prior, but their sizes move in
opposite directions, so neither number should be read as objective-free.
Repeating the table with the policy and thresholds re-optimized for the flat
prior, the misspecified threshold-report family-wise error rate rises to
$0.0599$ rather than $0.0528$, while the correlated all-null rate falls to
$0.054$ from $0.067$--$0.068$, consistent with a prior spread over
configurations leaving more error constraints active and so dividing the
level among more of them.  Null-calibrated rates are unchanged at every
prior, since that rejection event consults neither the objective nor the
alternative.

\begin{table}[htbp]
\caption{True strong FWER and full-alternative power in the synthetic study.
Misspecification rows are exact finite sums: model-aware reports stay fixed,
while null-calibrated thresholds are reselected under the assumed model.
Correlation rows use
$4\times10^6$ Rao--Blackwellized replications (standard errors at most
$2.3\times10^{-4}$).  The null-calibrated $m=4$ rate is below nominal because
three sites and four equally likely cells admit only null levels that are
multiples of $1/64$.}
\label{tab:simrobust}
\centering
\begin{tabular}{llcccc}
\toprule
& & \multicolumn{2}{c}{model-aware} & \multicolumn{2}{c}{null-calibrated} \\
\cmidrule(lr){3-4}\cmidrule(lr){5-6}
scenario & report & FWER & power & FWER & power \\
\midrule
correct, independent & threshold & $0.0500$ & $0.717$ & $0.0494$ & $0.624$ \\
$\hat\theta=0.75\,\theta$ & threshold & $0.0528$ & $0.722$ & $0.0494$ & $0.624$ \\
$\hat\theta=0.75\,\theta$ & $m=4$ & $0.0551$ & $0.728$ & $0.0310$ & $0.453$ \\
$\hat\theta=1.25\,\theta$ & threshold & $0.0500$ & $0.700$ & $0.0494$ & $0.624$ \\
$\hat\theta=1.25\,\theta$ & $m=4$ & $0.0500$ & $0.697$ & $0.0310$ & $0.453$ \\
$\rho=0.5$ & threshold & $0.0674$ & $0.722$ & $0.0463$ & $0.624$ \\
$\rho=0.9$ & threshold & $0.0682$ & $0.727$ & $0.0372$ & $0.624$ \\
\bottomrule
\end{tabular}
\end{table}

\section{Computational certificates and diagnostics}
\label{supp:additional}

\subsection{Model-aware protocol}

Algorithm~\ref{alg:modelaware} makes explicit the two distinct sources of
randomization on an atomic report distribution: auxiliary tie-breaking orders equal
likelihood ratios, while the primal probabilities randomize between rejection
counts.  Neither is needed to define the optimal value, but both are needed
to implement a general optimal policy.

\begin{algorithm}[H]
\small
\caption{Model-aware federated FWER testing from threshold reports}
\label{alg:modelaware}
\KwIn{local $p$-values $u_{s,k}$, known densities $g_s$, thresholds $t_s$,
objective prior $w$, and level $\alpha$}
\emph{At every site $s$, in parallel:}\;
transmit $y_{s,k}\gets\mathbf 1\{u_{s,k}\le t_s\}$ for every $k\in\{1,\ldots,K\}$\;
\emph{At the center:}\;
compute $\ell_s(c)=\tilde\nu_{s,1}(\{c\})/\tilde\nu_{s,0}(\{c\})$ on each
positive-null-mass report value\;
form $L_k\gets\prod_s\ell_s(y_{s,k})$ and draw auxiliary tie-breakers
$\xi_k\sim\mathrm{Unif}[0,1]$\;
solve the finite report-distribution dual and primal for objective $\Pi_w$
at level $\alpha$ to obtain multipliers $\mu^*$
and action probabilities $r^*(L)$\;
use $U\sim\mathrm{Unif}[0,1]$ to sample the rejection count from $r^*(L)$,
and reject that many largest $L_k$, using $\xi_k$ only for ties\;
\KwRet{the rejection set}
\end{algorithm}

\subsection{Reader's guide to the results}

Table~\ref{tab:roadmap} lists, for each principal result, the question it
answers, the assumptions it uses, and the scope of its conclusion.  Results stated in full only here are marked as supplementary.

\begin{table}[htbp]
\caption{Principal results: question answered, assumptions, and scope.}
\label{tab:roadmap}
\centering
\begingroup
\footnotesize
\singlespacing
\setlength{\tabcolsep}{3pt}
\renewcommand{\arraystretch}{1.1}
\begin{tabularx}{\textwidth}{@{}
  >{\raggedright\arraybackslash}p{0.13\textwidth}
  >{\raggedright\arraybackslash}p{0.23\textwidth}
  >{\raggedright\arraybackslash}p{0.32\textwidth}
  >{\raggedright\arraybackslash}X@{}}
\toprule
Result & Question answered & Main assumptions & Scope \\
\midrule
Prop.~\ref{prop:oracle} & what is the centralized benchmark? & \ref{ass:model}--\ref{ass:known} & all $K,S$ \\
Prop.~\ref{prop:distribution} & when does federation lose no power? & \ref{ass:model}--\ref{ass:known} (\ref{ass:reg} for (b)) & (a) all $K,S$; (b) $K=1$ \\
Prop.~\ref{prop:converse} & at a single hypothesis, does finite coarsening lose power, and how fast does refinement recover it? & \ref{ass:model}--\ref{ass:reg}; $C^1$ likelihood ratios with $g_s'<0$ on $[0,1)$ for (b) & $K=1$, $S\ge2$: every finite-range design; class-optimal $m^{-2}$ rate \\
Lem.~\ref{lem:uniform_upper} (supp.) & how fast do uniform reports recover power for any $K$? & \ref{ass:model}--\ref{ass:reg}; $C^1$ likelihood ratios & upper bound $Cm^{-2}$ for all fixed $(K,S)$ and every prior \\
Thm.~\ref{thm:oracle_kink} & strict and sharp loss under multiplicity? & \ref{ass:model}--\ref{ass:reg}; $S\ge2$; $w(K)>0$ or OAK; $C^1[0,1]$ and $g_s'<0$ on $[0,1)$ for rate & every finite-range componentwise design has a positive gap; positive threshold-class deficit; class-optimal $m^{-2}$ rate for fixed $K\ge2$ \\
Lem.~\ref{lem:oak_auto} & when is OAK automatic? & \ref{ass:model}--\ref{ass:reg}; $w(K)>0$ & every dual minimizer; can fail when $w(K)=0$ \\
Cor.~\ref{cor:beta_oak} (supp.) & where is the witness explicit? & \ref{ass:model}--\ref{ass:reg}; $K=S=2$; $w(2)=1$; explicit open homogeneous Beta-family region & closed-form witness, no optimizer; $\Theta(m^{-2})$ when $\theta\ge2$ \\
Prop.~\ref{prop:enclosure} (supp.) & can the witness be checked without trusting an optimizer? & convex coercive dual; nonnegative $b_{\gamma,k}$ & encloses every dual minimizer, including ties; any prior $w$ \\
Cor.~\ref{cor:enclosure_beta} (supp.) & does that verify the displayed models? & $K=S=2$; $w(2)=1$; $\mathrm{Beta}(1,\theta)$, $\theta\in\{2,3\}$, $\alpha=0.05$ & witness at every dual optimum, certified by outward integer arithmetic \\
Thm.~\ref{thm:aggregator} & best center rule for fixed reports? & \ref{ass:model}--\ref{ass:known}; fixed maps & every fixed design \\
Prop.~\ref{prop:marginal} & validity without the alternative model? & ordered fixed maps, cross-site independence per hypothesis, superuniform local $p$-values & arbitrary cross-hypothesis dependence \\
Prop.~\ref{prop:marginalK1} & what power does null calibration lose? & \ref{ass:model}; $K=1$; homogeneous non-increasing $g$; common threshold & none with boundary randomization; none deterministically at an exact-level outer optimum \\
Prop.~\ref{prop:reduction} (supp.) & size of the aggregator gap? & \ref{ass:model}--\ref{ass:known}; one fixed report distribution and combining statistic & equals a centralized gap \\
Prop.~\ref{prop:alloc} & optimal local thresholds? & $g_s$ continuous; $G_s>0$; elasticity monotonicity & unanimity threshold reports \\
Prop.~\ref{prop:plugin} (supp.) & unknown local alternatives? & \ref{ass:model}; independent pilot; dominated estimates; TV bound; $(K-1)\varepsilon+\zeta\le\alpha$ & finite-sample corrected validity \\
Prop.~\ref{prop:largeS} (supp.) & many sites? & homogeneous $\mathrm{Beta}(1,\theta)$ & $K=1$ asymptotics \\
\bottomrule
\end{tabularx}
\endgroup
\end{table}

\subsection{Relation to the companion manuscript}

The companion manuscript~\citep{dubey2026esp} studies centralized
optimal-FWER computation; it does not study site-local reporting, report
design, communication loss, or null-calibrated aggregation.  The present paper's
validity and optimality claims rest on the published characterization
of~\citet{rosset2022optimal} and are proved in full here.  In particular,
\S\ref{app:dual} derives the elementary-symmetric-polynomial formulas and the
finite programs used for every reported number from first principles.  The
companion's algorithmic convergence statements are conveniences, not logical
prerequisites for any result here.

\subsection{Aggregation details}

For any report map, absolute continuity of the alternative push-forward with
respect to the null push-forward makes likelihood ratios on zero-null-mass
report values irrelevant; \S\ref{app:dual} handles discrete, continuous, and
mixed report distributions.

For the null-calibrated rule, choose fixed nonnegative non-increasing
functions $e_s$ with $\int_0^1e_s(u)\,\mathrm du\le1$.
Superuniformity of each implied site $p$-value gives
$\E_h e_s\{\hat p_s(y_{s,k})\}\le1$ whenever $h_k=0$.
Cross-site independence for each hypothesis then makes
\[
 E_k=\prod_{s=1}^S e_s\{\hat p_s(y_{s,k})\}
 \quad\text{an e-value:}\quad \E_hE_k\le1\quad(h_k=0).
\]
Markov's inequality and a union bound give strong FWER control when
$H_k$ is rejected for $E_k\ge K/\alpha$, with arbitrary dependence across
hypotheses. For a reference $E_k^0$ from independent uniform site inputs,
exact-null tail calibration is at least as powerful:
$\pr_0(E_k^0\ge E_k)\le1/E_k$ for $E_k>0$, so it includes
every Markov rejection.  The power calibrator $e_s(u)=(1-a)u^{-a}$,
$0<a<1$, gives $E_k=(1-a)^S\exp(aW_k)$, a monotone transform of Fisher's
statistic.  On threshold reports, choosing
$e_s(u)=\mathbf1\{u\le t_s\}/t_s$ gives the unanimity e-value
$\mathbf1\{\text{all sites reject}\}/\prod_s t_s$; at
$\prod_s t_s=\alpha/K$, the Markov and exact-tail rules coincide.

For estimated
report distributions, a parametric $n^{-1/2}$ rate yields the same order as
the corrections of Proposition~\ref{prop:plugin} whenever the induced report
distribution is locally Lipschitz in total variation.

\subsection{Oracle benchmarks and threshold searches}

The entries of Table~\ref{tab:kS2} are exact four-cell linear-program values
against closed-form or Neyman--Pearson oracle benchmarks.  A pair of
threshold reports generates four cells.
Their null masses are products of increments of the uniform distribution,
and their alternative masses are the analogous products of increments of
$G_s$. Ordering the resulting likelihood ratios and exhausting level
$\alpha$ with boundary randomization gives the exact finite linear-program value for a
fixed threshold. Under homogeneous $\mathrm{Beta}(1,2)$ alternatives, let
$c\in(0,1)$ be the oracle cutoff for $(1-u_1)(1-u_2)$. Its power and level are
\[
\pi^*=(1-c^2)+2c^2\log c,
\qquad \alpha=1-c+c\log c.
\]
For the one-sided normal example, $X\sim\mathcal N(2,1)$ under the
alternative and $u=1-\Phi(X)$, giving
\[
G(u)=\Phi\{\Phi^{-1}(u)+2\},
\qquad g(u)=\exp\{-2\Phi^{-1}(u)-2\}.
\]
The numerical oracle agrees with the closed-form or Neyman--Pearson benchmark
within $1.1\times10^{-5}$ in the available $K=1$ checks. Table~\ref{tab:kS2}
uses the exact benchmark where available. In particular, because the sum of
the two sitewise normal log likelihood ratios has variance $8$ under both
distributions and means $-4$ and $4$ under the null and alternative, respectively,
the one-sided normal oracle is
$\Phi\{\Phi^{-1}(\alpha)+\sqrt{8}\}=0.8817090$.

For Table~\ref{tab:k2}, the model-aware value is the minimum of the
elementary-symmetric-polynomial dual on the four-cell distribution. Boundary
randomization is internal to this finite program, so no Monte Carlo is
used. The continuous oracle is approximated by fine discretization. Adaptive
quadrature under the exact likelihood-ratio density gives approximately
$0.156461$ for $\mathrm{Beta}(1,2)$ and $0.286977$ for
$\mathrm{Beta}(1,3)$; the fine-grid
values differ by less than $7\times10^{-6}$ and round identically in
Table~\ref{tab:k2}. Each method's report map is selected separately: the
unanimity thresholds solve
Proposition~\ref{prop:alloc}, whereas the model-aware thresholds are located
by a numerical outer search and then certified below. An initial numerical
search returned $(0.290,0.326)$ for $\mathrm{Beta}(1,2)$, slightly
outperforming the best common threshold found, $0.309$; for
$\mathrm{Beta}(1,3)$ it returned a vector rounding to $(0.247,0.247)$.

The four bounded-Beta outer searches behind Tables~\ref{tab:kS2}
and~\ref{tab:k2} are certified by exact-rational branch and bound in
\texttt{cert\_thresholds.py}. The normal-shift search is checked by the same
bounding construction in double precision and is discussed separately
below.

The certificate combines weak duality, which turns any nonnegative
multiplier into a valid upper bound, with corner bounds on the four cell
masses and likelihood ratios over each threshold box.  First, weak duality: for every
multiplier $\mu\ge0$ and every threshold vector $t$, the power is at most the
dual objective at $\mu$ on the distribution of $t$; optimizing the multiplier
affects tightness, not validity. Second, parameterize the four cells by the decreasing ratio
functions $v^+_s(t)=G_s(t)/t$ and
$v^-_s(t)=\{1-G_s(t)\}/(1-t)$, whose products give the cell likelihood ratios,
and by the monotone cell masses. Over a threshold box, the corners bound
each quantity. With $\overline p_c$ and
$[\underline v_c,\overline v_c]$ denoting the resulting bounds for cell $c$,
the box bound at $K=1$ is
\[
\overline\Pi(\mu;\mathrm{box})
=\alpha\mu+\sum_c \overline p_c\,(\overline v_c-\mu)_+ ,
\]
and at $K=2$, over ordered cell pairs $(c,d)$,
\[
\begin{aligned}
\overline\Pi(\mu;\mathrm{box})
&=2\alpha\mu_1+\alpha\mu_2\\
&\quad+\sum_{c,d}\overline p_c\,\overline p_d\,
\max\Bigl(0,\ \tfrac12\overline v_c\overline v_d
-\mu_1\min(\underline v_c,\underline v_d)-\mu_2,\
\overline v_c\overline v_d-\mu_1(\underline v_c+\underline v_d)-\mu_2
\Bigr).
\end{aligned}
\]
Each summand bounds its integrand term above pointwise on the box because
the integrand is nonnegative and each factor is bounded separately, so
$\overline\Pi$ upper-bounds the dual objective, hence the power, at every $t$
in the
box; no independence between masses and ratios is used.

For the Beta alternatives every box bound and pruning comparison is
evaluated in exact rational arithmetic, so solver error can make pruning
inefficient but cannot cause an invalid pruning decision.  For a
$\mathrm{Beta}(1,\theta)$ alternative with
$\theta\in\{2,3\}$,
\[
v^+(t)=\sum_{j=0}^{\theta-1}(1-t)^j,
\qquad v^-(t)=(1-t)^{\theta-1}.
\]
The initial domain has rational endpoints, every split is dyadic, and these
identities make all cell-mass and likelihood-ratio bounds rational. The code
therefore evaluates every displayed box bound $\overline\Pi$ and every pruning
comparison using \texttt{Fraction} exact arithmetic. A floating-point
epigraph linear program may suggest a useful $K=2$ multiplier, but its binary
floating-point output is converted to an exact nonnegative rational and
$\overline\Pi$ is recomputed exactly.

The incumbent is also exactly feasible. At $K=1$, it is the exact
randomized Neyman--Pearson value at the reported decimal-rational candidate.
At $K=2$, let $\widetilde r_{\ell c}$ be the nonnegative rational
representation of the numerical finite-primal action probability for
rejecting the $\ell$ largest likelihood ratios in joint report cell $c$. The code
enumerates the two one-null configurations and the all-null configuration
using the exact rational cell distributions and uniform tie probabilities. It then
sets
\[
 \varsigma=\min\left\{1,\quad
 \frac{1}{\max_c\sum_{\ell=1}^2\widetilde r_{\ell c}},\
 \min_{\varnothing\ne I\subseteq\{1,2\}}
 \frac{\alpha}{\operatorname{FWER}_I(\widetilde r)}\right\},
\]
where $\operatorname{FWER}_I(\widetilde r)$ is the family-wise error rate of
$\widetilde r$ at the configuration whose true hypotheses are $I$, with a
vacuous ratio omitted when its denominator is zero, and uses
$r_{\ell c}=\varsigma\widetilde r_{\ell c}$. Non-negativity is preserved, every
cellwise positive-action sum is at most one, and every strong-FWER constraint
is at most $\alpha$, all as exact rational comparisons. The remaining action
mass is assigned to rejecting none, and the resulting all-alternative power
$P_{\mathrm{inc}}$ is evaluated exactly.

A box is closed when its exact $\overline\Pi$ is at most
$P_{\mathrm{inc}}+2\times10^{-4}$; otherwise it is split. Once every box is
closed, $\sup_t\Pi^\star(t)\le P_{\mathrm{inc}}+2\times10^{-4}$. This proves the
global value-gap claim, though not an enclosure of the optimizing
coordinates. A dual value at the candidate is only an upper bound and is
not used as the incumbent.

For the unbounded one-sided normal likelihood ratio, an analytic
Neyman--Pearson slab bound excludes $t_s\le10^{-6}$, but the retained-square
CDF and pruning calculations use ordinary double precision without directed
rounding.  That computation indicates the same $2\times10^{-4}$ value gap but
is explicitly not a certificate.  The four Beta searches' exactly feasible
attained powers begin $0.157778640$, $0.283024780$, $0.143116898$,
and $0.269166018$ (the two $K=1$ rows followed by the two $K=2$ rows), within
$5\times10^{-5}$ of the numerical powers used for the displayed deficits.
The complete slab derivation, rational candidates, box counts, and execution
audit are documented with the implementation in \texttt{code/PACE\_README.md}.

For Table~\ref{tab:scaleS}, use $\Gamma(S,r)$ for the Gamma distribution with
shape $S$ and rate $r$. The representation in the proof of
Proposition~\ref{prop:largeS} below gives the centralized oracle power at $S$
homogeneous $\mathrm{Beta}(1,2)$ sites as
\[
\pi^*(S)=\pr\{\Gamma(S,2)\leq\gamma_{S,\alpha}\},
\]
where $\gamma_{S,\alpha}$ is the $\alpha$-quantile of $\Gamma(S,1)$.
A threshold report with a common threshold reduces the orthant to the number of
locally rejecting sites, on which the randomized most-powerful test is
immediate. For homogeneous sites this per-hypothesis report has $S+1$ values;
exchangeability reduces the model-aware calculation from $(S+1)^K$ ordered
tuples to the $\binom{K+S}{S}$ count compositions. A heterogeneous binary
vector can instead have as many as $2^S$ cells. The reported experiments use
selected $(K,S)$ values rather than an exhaustive globally certified search
over $K,S\leq10$.

For the uniform reports in Table~\ref{tab:rate}, least-squares fits of
$\log_2\delta_w$ on $\log_2m$ give slope magnitude $2.02$ at $K=1$ over
$m\in\{32,64,128,256,512,1024\}$ (\texttt{kS2\_rate.py}; Table~\ref{tab:rate}
displays the values up to $m=64$), and $1.998$ at $K=2$ over
$m\in\{8,12,16,24,32,48,64\}$ (\texttt{k2\_sharpness.py}); over the powers
of two displayed in Table~\ref{tab:rate} alone the $K=2$ slope is $1.96$.
Repeating the $K=1$ fit for the unbounded one-sided normal likelihood ratio
gives approximately $1.4$.

The agreement of the marginal procedures in
Table~\ref{tab:advantage} is specific to its calibration. At $\theta=10$,
$K=S=2$, separately optimizing the common threshold gives power $0.7177$ for
Hommel's procedure (equivalently Hochberg's step-up
rule~\citep{hochberg1988sharper} when $K=2$) and
$0.6743$ for the unanimity rule, a difference of $0.0434$.

\subsection{Comparing aggregators on a fixed report distribution}

For a fixed report design, the model-aware and null-calibrated rules are
two centralized procedures on the same induced experiment, so their power
difference is a centralized optimal-versus-marginal gap.  Fix a report
design and write $\mathsf P=(\tilde\nu_0,\tilde\nu_1)$ for its
exchangeable two-group report distribution, with likelihood ratio
$L=\mathrm d\tilde\nu_1/\mathrm d\tilde\nu_0$.  Let $\phi$ be the fixed
combining statistic of \S\ref{sec:marginal}, evaluated at a generic
per-hypothesis report vector $x=(x_1,\ldots,x_S)$, and define its reference
tail $p$-value by
\[
 p_\phi(x)=\pr_{X_0\sim\tilde\nu_0}\{\phi(X_0)\ge\phi(x)\}.
\]
The model-aware rule is
determined by $\mathsf P$ and has optimal power $\Pi_w^\star(\mathsf P)$; the
null-calibrated rule depends on $\tilde\nu_0$ and $\phi$ only.  The distinction below is
between rules on the same reports, not between report classes whose maps are
separately optimized.

\begin{lemma}[The null-calibrated rule is centralized Bonferroni]\label{lem:marginalcentral}
For any fixed report design and fixed statistic $\phi$, the null-calibrated
rule rejects $H_k$ if and only if $p_\phi(\mathbf y_{\cdot,k})\le\alpha/K$,
controls~\eqref{eq:fwer} under the exactly uniform nulls of
Assumption~\ref{ass:model}, and has prior-weighted average power
\[
\Pi_w^{\mathrm{nc}}(\mathsf P,\phi)
\;=\; \pr_{\tilde\nu_1}\!\big\{p_\phi(X)\le\alpha/K\big\},
\]
where $X\sim\tilde\nu_1$; with $(\mathsf P,\phi)$ fixed, $K$ enters only
through the Bonferroni level.  Under merely superuniform local nulls,
Proposition~\ref{prop:marginal} gives the same control for ordered maps and
coordinatewise non-increasing combining statistics.
\end{lemma}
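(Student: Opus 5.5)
The lemma has three parts: the rejection event, strong control, and the power formula. I would take them in that order, reusing Proposition~\ref{prop:marginal} for control.

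\emph{Rejection event.} By definition in \S\ref{sec:marginal}, the null-calibrated rule rejects $H_k$ exactly when the exact-null tail of the combining statistic at the observed report vector is at most $\alpha/K$. Replacing Fisher's $W_k$ by a generic $\phi$ as in Step~4 of the proof of Proposition~\ref{prop:marginal}, that tail is $p_\phi(\mathbf y_{\cdot,k})$. This is the stated equivalence, and the event depends on $\mathbf y_{\cdot,k}$ alone.

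\emph{Control.} Under exactly uniform nulls, a true coordinate $k$ has $\mathbf y_{\cdot,k}\sim\tilde\nu_0$. This holds whatever the other coordinates are, because under Assumption~\ref{ass:model} the per-hypothesis report vectors are independent and each has marginal law $\tilde\nu_{h_k}$. The tail transform $p_\phi(X_0)$ with $X_0\sim\tilde\nu_0$ is superuniform even when atoms are present, by the standard argument
\[
\pr\{p_\phi(X_0)\le x\}=\pr\{\phi(X_0)\ge c_x\}\le x,
\]
where $c_x$ is the smallest attainable cutoff with tail at most $x$. No monotonicity of $\phi$ is needed. The Bonferroni union bound of Step~3 then gives
\[
\fwer_h\le|\mathcal H_0(h)|\,\alpha/K\le\alpha
\]
at every $h$. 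The superuniform case is exactly Proposition~\ref{prop:marginal}, which I cite directly.

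\emph{Power.} At any configuration $h$ and any false coordinate $k$, the event $\{p_\phi(\mathbf y_{\cdot,k})\le\alpha/K\}$ is a function of $\mathbf y_{\cdot,k}\sim\tilde\nu_1$ alone. Its probability is therefore $\pr_{\tilde\nu_1}\{p_\phi(X)\le\alpha/K\}$, free of $h$ and $k$. Then
\[
\E_h\bigl|\mathcal R\cap\{k:h_k=1\}\bigr|=|h|\cdot\pr_{\tilde\nu_1}\{p_\phi(X)\le\alpha/K\},
\]
so each $\Pi_\gamma$ equals this probability. Since $w$ is a probability distribution, $\Pi_w$ equals it too. With $(\mathsf P,\phi)$ fixed, $K$ enters only through the threshold $\alpha/K$, as the lemma states.

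The only delicate point is superuniformity of the tail transform at atoms of $\phi(X_0)$. It follows by taking the largest cutoff whose tail stays at most $x$, so I expect no real obstacle.
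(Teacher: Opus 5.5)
Your proposal is correct and is essentially the paper's argument: the rejection event is the definition with a generic $\phi$ (Step~4 of the proof of Proposition~\ref{prop:marginal}), and control is exact-null tail superuniformity plus the Bonferroni union bound of Step~3. The power formula follows, as in \S\ref{supp:proof-priorfree}, because the rejection event for a false $H_k$ depends only on $\mathbf y_{\cdot,k}\sim\tilde\nu_1$, so every $\Pi_\gamma$, and hence $\Pi_w$, equals $\pr_{\tilde\nu_1}\{p_\phi(X)\le\alpha/K\}$. One small slip: in the atom argument the relevant cutoff is the smallest one whose tail is at most $x$, not the largest, and for non-finite report laws the event $\{p_\phi(X_0)\le x\}$ may be $\{\phi(X_0)>c_x\}$ rather than $\{\phi(X_0)\ge c_x\}$, whose probability is still at most $x$ by a right limit.
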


\begin{proposition}[Fixed-report aggregator gap]\label{prop:reduction}
Under Assumptions~\ref{ass:model}--\ref{ass:known}, for any fixed report
design and fixed statistic, the model-aware minus null-calibrated average
power equals the corresponding centralized gap on $(\mathsf P,\phi)$,
\[
\Pi_w(\text{model-aware}) - \Pi_w(\text{null-calibrated})
\;=\; \Pi_w^{\star}(\mathsf P)-\Pi_w^{\mathrm{nc}}(\mathsf P,\phi)
\;=\; d_w^{\mathrm{nc}}(\mathsf P,\phi)-d_w^{\mathrm{aw}}(\mathsf P),
\]
where $d_w^{\mathrm{aw}}(\mathsf P)=\pi^*_w-\Pi_w^\star(\mathsf P)$ and
$d_w^{\mathrm{nc}}(\mathsf P,\phi)=\pi^*_w-\Pi_w^{\mathrm{nc}}(\mathsf P,\phi)$
are the corresponding fixed-design deficits.
\end{proposition}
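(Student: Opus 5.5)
The identity is essentially bookkeeping: each of the two rules, applied to the fixed report array, is identified with a centralized procedure on the induced experiment $\mathsf P=(\tilde\nu_0,\tilde\nu_1)$, and then $\pi^*_w$ is added and subtracted. Throughout, the report design and the statistic $\phi$ are held fixed, so both rules act on the same report array with the same distribution at every configuration.

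\emph{Step 1: the model-aware rule.} Under Assumptions~\ref{ass:model}--\ref{ass:known}, the report array at configuration $h$ has law $\bigotimes_k\tilde\nu_{h_k}$. This is an exchangeable two-group experiment whose per-hypothesis likelihood ratio is $L$. Theorem~\ref{thm:aggregator}(b) states that the model-aware rule is $\Pi_w$-optimal among all report-measurable procedures satisfying~\eqref{eq:fwer}, and Lemma~\ref{lem:ordered_duality} shows that this optimum is attained. Hence its power is exactly $\Pi_w^\star(\mathsf P)$, the centralized optimal value on $\mathsf P$. By the definition of $d_w$, this gives $d_w^{\mathrm{aw}}(\mathsf P)=\pi^*_w-\Pi_w^\star(\mathsf P)$.

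\emph{Step 2: the null-calibrated rule.} I would invoke Lemma~\ref{lem:marginalcentral}; if it must be proved here, the argument runs as follows. The rejection event for $H_k$ is $\{p_\phi(\mathbf y_{\cdot,k})\le\alpha/K\}$, and it depends only on the reports for hypothesis $k$. Under Assumption~\ref{ass:model} those reports have law $\tilde\nu_1$ whenever $h_k=1$, whatever the other coordinates are. So every false hypothesis is rejected with probability $\pr_{\tilde\nu_1}\{p_\phi(X)\le\alpha/K\}$. It follows that each $\Pi_\gamma$ equals this common value, and therefore so does $\Pi_w=\sum_\gamma w(\gamma)\Pi_\gamma$, using $\sum_\gamma w(\gamma)=1$; this is the same observation made in \S\ref{supp:proof-priorfree}. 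Validity is supplied by Proposition~\ref{prop:marginal} in the exactly uniform case. Thus the null-calibrated power equals $\Pi_w^{\mathrm{nc}}(\mathsf P,\phi)$, and I define $d_w^{\mathrm{nc}}=\pi^*_w-\Pi_w^{\mathrm{nc}}$.

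\emph{Step 3: subtraction.} The first equality is Steps~1 and~2 combined. For the second, write
\[
\Pi_w^\star-\Pi_w^{\mathrm{nc}}=(\pi^*_w-\Pi_w^{\mathrm{nc}})-(\pi^*_w-\Pi_w^\star).
\]

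The only step that needs care is the first: one must check that the power of the model-aware rule equals the supremum and is not merely bounded by it. Both attainment and the full randomization over tie-breaking and rejection counts are part of the hypothesis of Theorem~\ref{thm:aggregator}, so I expect no real obstacle there.
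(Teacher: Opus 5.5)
Your proposal is correct and follows the paper's proof essentially line for line. The paper likewise identifies the model-aware power with $\Pi_w^\star(\mathsf P)$ through Theorem~\ref{thm:aggregator}(b) and the null-calibrated power with $\pr_{\tilde\nu_1}\{p_\phi(X)\le\alpha/K\}$ through Lemma~\ref{lem:marginalcentral}, whose independence from $w$ you justify exactly as in \S\ref{supp:proof-priorfree}; it then obtains the second equality by cancelling the common oracle term $\pi^*_w$.
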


\begin{corollary}[Single hypothesis: ordering and boundary randomization]\label{cor:k1gap}
At $K=1$, put $R_\alpha=\{x:p_\phi(x)\le\alpha\}$.  The gap is
$\Pi_1^{\star}(\mathsf P)-\tilde\nu_1(R_\alpha)$, the
difference between the level-$\alpha$ Neyman--Pearson rule and the fixed
null-calibrated rule.  It vanishes exactly when, for some $\lambda\ge0$,
\[
 \{L>\lambda\}\subseteq R_\alpha\subseteq\{L\ge\lambda\}
 \quad\tilde\nu_0\text{-almost surely},\qquad
 \lambda\{\alpha-\tilde\nu_0(R_\alpha)\}=0.
\]
In particular, matching a likelihood-ratio rejection region at exact level
$\alpha$ suffices; the orderings away from its boundary need not agree.
For homogeneous common-threshold reports with Fisher's count statistic
(or any strictly increasing function of the count), the likelihood ratio is
non-decreasing in the local-rejection count, so the gap is the
boundary-randomization mass alone.  Randomized null calibration removes
that gap at every fixed threshold, and conservative calibration removes it
whenever the selected count tail equals $\alpha$
(Proposition~\ref{prop:marginalK1}).
\end{corollary}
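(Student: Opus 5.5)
The plan is to specialize the fixed-report gap identity of Proposition~\ref{prop:reduction} to $K=1$, and then read off the equality case from the one-dimensional dual of Lemma~\ref{lem:ordered_duality}. At $K=1$ the strong-FWER constraint is a single size constraint. By Theorem~\ref{thm:aggregator}, $\Pi_1^\star(\mathsf P)$ is therefore the level-$\alpha$ Neyman--Pearson power on $(\tilde\nu_0,\tilde\nu_1)$. By Lemma~\ref{lem:marginalcentral} with $K=1$, the null-calibrated power is $\tilde\nu_1(R_\alpha)$. This gives the displayed form of the gap directly.

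For the characterization I will use the $K=1$ dual
\[
\Pi_1^\star(\mathsf P)=\min_{\mu\ge0}\bigl\{\alpha\mu+\E_0(L-\mu)_+\bigr\},
\]
whose minimum is attained by Lemma~\ref{lem:ordered_duality}. Write $\varphi=\mathbf 1_{R_\alpha}$; it has $\E_0\varphi=\tilde\nu_0(R_\alpha)\le\alpha$ by the Bonferroni validity in Lemma~\ref{lem:marginalcentral}. For every $\lambda\ge0$,
\[
\tilde\nu_1(R_\alpha)=\E_0(\varphi L)
=\E_0\{\varphi(L-\lambda)\}+\lambda\E_0\varphi
\le \E_0(L-\lambda)_++\lambda\alpha .
\]
The inequality combines two facts: $\varphi(L-\lambda)\le(L-\lambda)_+$ pointwise, and $\lambda\E_0\varphi\le\lambda\alpha$.

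\emph{Sufficiency.} Suppose the stated inclusions and the complementary-slackness identity hold for some $\lambda$. Then both steps of the inequality are equalities $\tilde\nu_0$-almost surely. On $\{L=\lambda\}$ the integrand $\varphi(L-\lambda)$ vanishes whatever $\varphi$ is, so only the strict sets matter. Hence $\tilde\nu_1(R_\alpha)$ equals the dual objective at $\lambda$, which is at least $\Pi_1^\star$. Since $\tilde\nu_1(R_\alpha)\le\Pi_1^\star$ always, the gap is zero.

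\emph{Necessity.} Suppose the gap is zero. Take $\lambda=\mu^*$, a dual minimizer. The chain is then an equality, which forces two things. First, $\varphi(L-\mu^*)=(L-\mu^*)_+$ almost surely, which gives the two inclusions. Second, $\mu^*\E_0\varphi=\mu^*\alpha$, which is the slackness identity. The remark that only a likelihood-ratio rejection region at exact level matters, and not the ordering away from its boundary, is exactly this characterization.

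For homogeneous common-threshold reports, the report likelihood ratio $\ell(r)$ computed in the proof of Proposition~\ref{prop:marginalK1} is non-decreasing in the count $N$. Fisher's statistic, or any strictly increasing function of the count, makes $R_\alpha=\{N\ge c\}$, where $c$ is the smallest cutoff with null tail at most $\alpha$. This is an upper set of the likelihood-ratio preorder. Take $\lambda=\ell(c-1)$, or the appropriate boundary value when $\ell$ has ties. Then the inclusion conditions hold, and only slackness can fail. Consequently the gap equals the Neyman--Pearson randomized mass on $\{N=c-1\}$, which is the displayed boundary-mass identity in that proof.

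The main obstacle is this last step when $\ell$ is constant across several counts (for example when $G(t)=1$ or $g\equiv1$). There the choice of $\lambda$ and the identification of the randomized boundary cell need care. I would handle these degenerate cases separately, as the proof of Proposition~\ref{prop:marginalK1} does. The final claims, that randomized calibration removes the gap at every fixed threshold and conservative calibration removes it at an exact-level tail, then follow from Proposition~\ref{prop:marginalK1}.
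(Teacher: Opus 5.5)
Your proposal is correct and follows the paper's proof. Both evaluate the Neyman--Pearson dual $\alpha\lambda+\E_0(L-\lambda)_+$ at a minimizer and use it as follows: the two nonnegative terms, hinge and slackness, give necessity, weak duality gives sufficiency, and the homogeneous common-threshold case is delegated to Proposition~\ref{prop:marginalK1}. The tie case you flag needs no separate treatment: $\lambda=\ell(c-1)$ is always a dual minimizer, because its one-sided derivatives are $\alpha-\tilde\nu_0(L>\ell(c-1))\ge0$ and $\alpha-\tilde\nu_0(L\ge\ell(c-1))<0$, and this is also what justifies your ``consequently'' without leaning on the cited identity.
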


\begin{corollary}[Marginal tightening under multiplicity]\label{cor:multiplicity}
For fixed $(\mathsf P,\phi)$, $\Pi_w^{\mathrm{nc}}(\mathsf P,\phi)$ is
non-increasing in the integer $K$, because the Bonferroni level $\alpha/K$
decreases and the rejection event shrinks.
\end{corollary}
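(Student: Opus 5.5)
The claim is that $K\mapsto\Pi_w^{\mathrm{nc}}(\mathsf P,\phi)$ is non-increasing on the positive integers. I would deduce it directly from the closed form in Lemma~\ref{lem:marginalcentral},
\[
 \Pi_w^{\mathrm{nc}}(\mathsf P,\phi)=\pr_{\tilde\nu_1}\{p_\phi(X)\le\alpha/K\},\qquad X\sim\tilde\nu_1 .
\]
The first step is to check that, once the report design and $\phi$ are fixed, nothing on the right depends on $K$ except the cutoff $\alpha/K$. The reference tail $p_\phi(x)=\pr_{X_0\sim\tilde\nu_0}\{\phi(X_0)\ge\phi(x)\}$ is built from the per-hypothesis null report law $\tilde\nu_0=\otimes_s\tilde\nu_{s,0}$ and from $\phi$ alone. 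The law of $X$ is the per-hypothesis alternative report law $\tilde\nu_1$. Neither object refers to the number of hypotheses.

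The prior $w$ also drops out, by the observation in \S\ref{supp:proof-priorfree}. The rejection event for $H_k$ depends only on that hypothesis's own reports, so every false hypothesis is rejected with the same probability at every configuration. Hence $\Pi_\gamma$ is the same for every $\gamma$, and $\Pi_w$ equals this common value for any $w$. This matters because $w$ is a distribution on $\{1,\dots,K\}$ and so changes with $K$; the remark shows the comparison across $K$ is unaffected.

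The second step is event monotonicity. If $K\le K'$ then $\alpha/K'\le\alpha/K$, which gives
\[
\{x:p_\phi(x)\le\alpha/K'\}\subseteq\{x:p_\phi(x)\le\alpha/K\}.
\]
Applying the fixed probability measure $\tilde\nu_1$ to both sides yields $\Pi_w^{\mathrm{nc}}$ at $K'$ is at most its value at $K$. This containment holds for every $x$ and needs no continuity of $p_\phi$; atoms in the report law do not matter.

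There is no real obstacle here. The only point needing care is to state explicitly that $(\mathsf P,\phi)$ is held fixed as $K$ varies, so that $\tilde\nu_0$, $\tilde\nu_1$ and $p_\phi$ are the same functions at every $K$, and that the prior-invariance remark applies at each $K$.
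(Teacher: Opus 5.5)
Your proof is correct and matches the paper's argument. With $(\mathsf P,\phi)$ fixed, $p_\phi$ does not depend on $K$, so the events $\{p_\phi\le\alpha/K\}$ are nested and decreasing in $K$, and their $\tilde\nu_1$-probabilities are non-increasing. Your extra remark is a useful clarification the paper leaves implicit in Lemma~\ref{lem:marginalcentral}: the prior $w$ must change with $K$, but the null-calibrated power does not depend on $w$.
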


\paragraph{Proof of Proposition~\ref{prop:reduction} and
Corollaries~\ref{cor:k1gap}--\ref{cor:multiplicity}.}
By Theorem~\ref{thm:aggregator}(b) the model-aware aggregator is the
$\Pi_w$-optimal level-$\alpha$ family-wise test on the report distribution
$\mathsf P=(\tilde\nu_0,\tilde\nu_1)$, so its average power is
$\Pi_w^{\star}(\mathsf P)$.  By Lemma~\ref{lem:marginalcentral}, for the
separately fixed combining statistic $\phi$, the null-calibrated average
power is
\[
\Pi_w^{\mathrm{nc}}(\mathsf P,\phi)
=\pr_{\tilde\nu_1}\{p_\phi\le\alpha/K\}.
\]
The distribution $\mathsf P$ determines its likelihood ratio $L$ but not the choice of
$\phi$, which is why the second functional is indexed by the pair
$(\mathsf P,\phi)$; the tail function $p_\phi$ is determined by
$\tilde\nu_0$ and $\phi$ alone.  Subtracting gives the first equality in
Proposition~\ref{prop:reduction}.  With the fixed-design deficits
\[
d_w^{\mathrm{aw}}(\mathsf P)=\pi^*_w-\Pi_w^\star(\mathsf P),\qquad
d_w^{\mathrm{nc}}(\mathsf P,\phi)
=\pi^*_w-\Pi_w^{\mathrm{nc}}(\mathsf P,\phi),
\]
the common oracle term cancels, proving the second equality.

For $K=1$ the constraint~\eqref{eq:fwer} is the single size constraint at
level $\alpha$, so a most powerful test is the Neyman--Pearson test on
$\mathsf P$, which by the density-ratio identity
$\mathrm{d}\tilde\nu_1/\mathrm{d}\tilde\nu_0=L$ rejects report regions with
the largest $L$ until the level is exhausted, with boundary randomization.
The null-calibrated rule rejects $R_\alpha$, whose null probability is at
most $\alpha$.  Let $\lambda\ge0$ minimize the Neyman--Pearson dual
$\alpha\lambda+\E_0(L-\lambda)_+$; the minimum is attained because
$\E_0L=1$ and $\alpha>0$ make this finite objective continuous and coercive.
Subtracting its rejection power gives
\[
 \Pi_1^\star(\mathsf P)-\tilde\nu_1(R_\alpha)
 =\lambda\{\alpha-\tilde\nu_0(R_\alpha)\}
 +\E_0\bigl\{(L-\lambda)_+-\mathbf1_{R_\alpha}(L-\lambda)\bigr\}.
\]
Both terms are nonnegative.  The second vanishes exactly when
$R_\alpha$ includes $\{L>\lambda\}$ and excludes $\{L<\lambda\}$ up to
null sets; the first is the stated complementary-slackness condition.
This proves necessity.  Conversely, the stated conditions give
$\tilde\nu_1(R_\alpha)=\alpha\lambda+\E_0(L-\lambda)_+$, so weak duality
proves sufficiency.  Proposition~\ref{prop:marginalK1}
gives the homogeneous common-threshold conclusion because the Fisher
statistic is increasing, and the report likelihood ratio non-decreasing,
in the local-rejection count.  For heterogeneous
sites the two orderings can differ, because the likelihood-ratio weights
$\log\{G_s(t_s)/t_s\}-\log[\{1-G_s(t_s)\}/(1-t_s)]$ involve $g_s$ while the
Fisher weights $-\log t_s$ do not.

For fixed $(\mathsf P,\phi)$ the tail function $p_\phi$ does not change with
$K$.  Since the admissible tail probability
$\alpha/K$ shrinks, the rejection events $\{p_\phi\le\alpha/K\}$ are
therefore nested and decreasing in $K$, and their $\tilde\nu_1$-probabilities
are non-increasing.  This proves Corollary~\ref{cor:multiplicity}.
\hfill$\square$

Holm's procedure on the same null-calibrated $p$-values is at least as
powerful as the Bonferroni step and remains valid under arbitrary
cross-hypothesis dependence, and Hommel's is more powerful still when its
Simes tests of intersection nulls are valid, so the displayed powers lower-bound what the marginal class
can achieve.  The growing gap in
Figure~\ref{fig:validation_multiplicity_resolution}(a) is a numerical
pattern, not a monotonicity theorem.

\subsection{Supporting numerical results}
\label{supp:supporting-numerics}

The numerical values summarized in the main text are deterministic
finite-report calculations, except where a continuous-oracle value is
explicitly described as a numerical quadrature benchmark. Outer searches and finite-dual solves retain the qualifications
stated below; none of these computations substitutes for the corresponding
theorem.

At $K=1$ the selected model-aware and null-calibrated thresholds coincide
at $\hat t=\sqrt{\alpha}$, and the two bounded-Beta designs are certified
within $2\times10^{-4}$ of the global outer optimum.  The normal-shift
calculation indicates the same gap
only in ordinary double precision; because its likelihood ratio diverges at
zero, that row is outside Assumption~\ref{ass:reg} and is a sensitivity check rather
than an instance of the converse or rate theorems.  Boundedness alone is
neither necessary nor sufficient for a small deficit.

\begin{table}[htbp]
\caption{Threshold-report deficits for $K=1$, $S=2$, and $\alpha=0.05$.}
\label{tab:kS2}
\centering
\begingroup
\small
\singlespacing
\setlength{\tabcolsep}{4pt}
\renewcommand{\arraystretch}{1.08}
\begin{tabular}{lccccc}
\toprule
Alternative density $g$ & LR bounded? & $\pi^*$ & deficit & deficit $/\,\pi^*$ & $\hat t$ \\
\midrule
$g(u)=2(1-u)$, $\mathrm{Beta}(1,2)$   & yes & $0.1595$ & $0.0018$ & $1.1\%$ & $0.224$ \\
$g(u)=3(1-u)^2$, $\mathrm{Beta}(1,3)$ & yes & $0.2885$ & $0.0055$ & $1.9\%$ & $0.224$ \\
one-sided normal, mean shift $2$       & no  & $0.8817$ & $0.0852$ & $9.7\%$ & $0.224$ \\
\bottomrule
\end{tabular}
\endgroup
\end{table}

The null-calibrated count rule attains the model-aware value at every
displayed $S$, and unanimity is the selected cutoff only through $S=7$.
Table~\ref{tab:scaleS} restricts both $K=1$ procedures to the
class of threshold reports with a common threshold under homogeneous
$\mathrm{Beta}(1,2)$ alternatives; the two share the first deficit column
(Proposition~\ref{prop:marginalK1}), and the second column is the unanimity
design.  From $S=8$ the selected cutoff drops below $S$, the selected
per-site threshold tightens with it, and unanimity ceases to be optimal.
The rise is finite-sample: Proposition~\ref{prop:largeS} proves that both
relative deficits ultimately vanish.

The shape in the number of sites is not merely empirical: at a single
hypothesis the three miss probabilities obey a sharp large-$S$ trichotomy.

\begin{proposition}[Large-$S$ distribution of the threshold-report deficit]\label{prop:largeS}
Let $K=1$ with $S$ homogeneous sites, alternative distribution function
$G_\theta(u) = 1-(1-u)^\theta$ with $\theta > 1$, at fixed level $\alpha \in
(0,1)$. Write $1-\pi^{*}(S)$, $1-\Pi^{\mathrm{aw}}(S)$, and
$1-\Pi^{\mathrm{un}}(S)$ for the miss probabilities of the centralized oracle,
the best threshold-report design, and the unanimity rule at $t =
\alpha^{1/S}$. As $S \to \infty$:
\begin{itemize}
\item[(a)] $-S^{-1}\log\{1-\pi^{*}(S)\} \to I(\theta) = \theta - 1 -
\log\theta$;
\item[(b)] $1-\Pi^{\mathrm{un}}(S) = \{\log(1/\alpha)\}^{\theta}\,
S^{1-\theta}\,\{1+o(1)\}$: polynomial decay;
\item[(c)] $-S^{-1}\log\{1-\Pi^{\mathrm{aw}}(S)\} \to
J_{\mathrm{thr}}(\theta) = \sup_{t\in(0,1)}
\mathrm{KL}\{\mathrm{Ber}(t) \,\|\, \mathrm{Ber}(G_\theta(t))\}$, with $0 <
J_{\mathrm{thr}}(\theta) < I(\theta)$ and the supremum attained at an interior
$t$.
\end{itemize}
Here $\mathrm{Ber}(p)$ is the Bernoulli distribution with success probability $p$ and
$\mathrm{KL}(P\|Q)$ denotes Kullback--Leibler divergence. Define the onset
of eventual separation by
\[
S^{*}(\alpha,\theta)
:=\min\bigl\{n\in\mathbb N:n\geq1,\quad
\Pi^{\mathrm{aw}}(S)>\Pi^{\mathrm{un}}(S)
\text{ for every integer }S\geq n\bigr\}.
\]
Take $S^*(\alpha,\theta)=\infty$ if this set is empty.
Consequently the model-aware deficit vanishes exponentially while the
unanimity deficit decays only polynomially, so the ratio of the unanimity to
the model-aware miss probability diverges and the two designs separate at
every sufficiently large $S$; thus $S^{*}(\alpha,\theta)$ is finite for
every $(\alpha,\theta)$.  Since the null-calibrated count rule attains
the model-aware value whenever the outer optimum lies on an exact-level kink
(Proposition~\ref{prop:marginalK1}), this is a statement about insisting on
unanimous local rejections, not about marginal aggregation.
Both relative deficits vanish as $S\to\infty$.
\end{proposition}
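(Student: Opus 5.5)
The plan is to reduce all three miss probabilities to sums of i.i.d.\ per-site quantities and then apply Cramér/Chernoff and Stein-type large-deviation arguments. The key identity is that for $g_\theta(u)=\theta(1-u)^{\theta-1}$ the site log likelihood ratio is $\log\theta-(\theta-1)E_s$ with $E_s=-\log(1-u_s)$. Under the null $E_s\sim\mathrm{Exp}(1)$, and under the alternative $E_s\sim\mathrm{Exp}(\theta)$. The centralized Neyman--Pearson test of Proposition~\ref{prop:oracle} at $K=1$ therefore rejects when $\sum_sE_s\le\gamma_{S,\alpha}$, the $\alpha$-quantile of $\Gamma(S,1)$. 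Its miss probability is $\pr\{\Gamma(S,\theta)>\gamma_{S,\alpha}\}$, which is the representation already quoted for Table~\ref{tab:scaleS}.

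\emph{Part (a).} By the central limit theorem, $\gamma_{S,\alpha}=S+O(\sqrt S)$. Cramér's theorem for the mean of $\mathrm{Exp}(\theta)$ variables exceeding $1$ gives the rate $\sup_\lambda\{\lambda-\log(\theta/(\theta-\lambda))\}$ evaluated at $1$, which equals $\theta-1-\log\theta$. The $O(\sqrt S)$ shift of the threshold changes the exponent only by $o(S)$, by continuity of the rate function. As a check, $I(\theta)$ equals the per-site divergence $\mathrm{KL}(\mathrm{Unif}\,\|\,g_\theta)=-\E_0\log g_\theta(U)$, as Stein's lemma predicts.

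\emph{Part (b).} This is a direct computation. The unanimity power is $G_\theta(t)^S=\{1-(1-t)^\theta\}^S$ with $1-t=1-\alpha^{1/S}=\log(1/\alpha)/S+O(S^{-2})$. Hence $S(1-t)^\theta\to0$ for $\theta>1$, and the miss probability is $1-\{1-(1-t)^\theta\}^S=S(1-t)^\theta\{1+o(1)\}$, which is the displayed expression.

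\emph{Part (c).} This part is the main obstacle, because the threshold may depend on $S$. At a fixed $t$, the report reduces to the count $N$, distributed $\mathrm{Bin}(S,t)$ versus $\mathrm{Bin}(S,G_\theta(t))$. The model-aware rule is the level-$\alpha$ Neyman--Pearson count test (Proposition~\ref{prop:marginalK1}), and Stein's lemma gives miss exponent $D(t)=\mathrm{KL}\{\mathrm{Ber}(t)\,\|\,\mathrm{Ber}(G_\theta(t))\}$.

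For the achievability side I will fix a maximizer $t^*$ and use it for all $S$. For the converse I need a bound uniform in $t$. I will use the standard inequality $\beta\ge e^{-S(D(t)+\varepsilon)}\,[1-\alpha-\pr_0\{\sum_s\log\ell_t(y_s)>S(D(t)+\varepsilon)\}]$, where $\ell_t$ is the per-site report likelihood ratio. The probability on the right is then controlled by Chebyshev, provided the null variance of the per-site log report likelihood ratio is bounded uniformly over $t\in(0,1)$. That variance is $t(1-t)\bigl[\log\{G_\theta(t)/t\}-(\theta-1)\log(1-t)\bigr]^2$. It is bounded on $(0,1)$ because $G_\theta(t)/t\in[1,\theta]$ and $(1-t)\log^2(1-t)\to0$. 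The converse therefore yields $\liminf_S -S^{-1}\log\beta\le\sup_tD(t)+\varepsilon$, with the supremum also taken over $S$-dependent thresholds.

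For interior attainment, $D$ is continuous and strictly positive on $(0,1)$. It tends to $0$ as $t\downarrow0$, since $G_\theta(t)\sim\theta t$, and as $t\uparrow1$, since both Bernoulli parameters tend to $1$ and the $t\log\{t/G_\theta(t)\}$ and $(1-t)\log\{(1-t)^{1-\theta}\}$ terms vanish. So the supremum $J_{\mathrm{thr}}(\theta)$ is positive and attained at an interior $t$. The bound $J_{\mathrm{thr}}<I$ follows from strict data processing: the binary quantization of $U$ is not sufficient, because $g_\theta$ is nonconstant on each cell. This gives $D(t)<I(\theta)$ for every $t$, and the inequality remains strict at the interior maximizer.

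Finally, I will assemble the conclusions. The model-aware miss probability decays exponentially while the unanimity miss probability decays only polynomially, so their ratio diverges. Hence $\Pi^{\mathrm{aw}}(S)>\Pi^{\mathrm{un}}(S)$ for all large $S$, and $S^*(\alpha,\theta)<\infty$. Both relative deficits, $\{\pi^*(S)-\Pi(S)\}/\pi^*(S)$, are bounded by the corresponding miss probability divided by $\pi^*(S)$. Since $\pi^*(S)\to1$, both relative deficits tend to zero.
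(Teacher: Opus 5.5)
Your proposal is correct and is essentially the paper's proof. You use the Gamma representation with a Cramér sandwich around $\gamma_{S,\alpha}/S\to1$ for (a), and the direct expansion of $1-\{1-(1-t)^\theta\}^S$ for (b). For (c) you use a fixed interior maximizer for achievability, invoking Stein's lemma where the paper runs Chebyshev on the count cutoff plus a binomial Chernoff bound. The converse is the same change of measure plus Chebyshev, with a uniformly bounded null variance of the one-bit log likelihood ratio, and strict data processing gives $J_{\mathrm{thr}}<I(\theta)$.

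Two small repairs are needed.

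\begin{itemize}
\item The paper's converse allows unequal local thresholds $t_1,\ldots,t_S$, since the class $\mathcal T^{\mathrm{thr}}$ is indexed by a threshold vector. Your argument covers this once $SD(t)$ is replaced by $\sum_sD(t_s)\le SJ_{\mathrm{thr}}$; your uniform variance bound already controls the resulting non-identical sum.
\item Take $\ell_t$ null-over-alternative, so that its null mean is $D(t)$. Your bound then holds for every large $S$ and yields $\limsup_S\{-S^{-1}\log\beta\}\le J_{\mathrm{thr}}+\varepsilon$. That $\limsup$ is what the limit requires; the $\liminf$ you wrote is not enough.
\end{itemize}
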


This is the classical one-bit decentralized-detection exponent at $K=1$;
the present paper's primary regime is finite $S$ with family-wise coupling
across $K$ hypotheses~\citep{tenney1981detection,tsitsiklis1993decentralized}.

\paragraph{Proof of Proposition~\ref{prop:largeS}.}
Write $G = G_\theta$ and $g=G'$. Let $Q_0$ and $Q_1$
denote the corresponding one-site null and alternative distributions, respectively,
and let $N$ denote the number of locally rejecting sites,
$N \sim \mathrm{Bin}(S, t)$ under the null and
$\mathrm{Bin}(S, G(t))$ under the alternative at a common local threshold $t$.
Throughout,
\[
\mathrm{KL}\{\mathrm{Ber}(p)\|\mathrm{Ber}(q)\}
=p\log\frac pq+(1-p)\log\frac{1-p}{1-q}.
\]

\emph{(a).} Under the null,
$-\sum_s\log(1-u_s)\sim\Gamma(S,1)$ in the shape--rate
parameterization; under the alternative it is $\Gamma(S,\theta)$. Hence
$1-\pi^{*}(S) = \pr(Y > \gamma_{S,\alpha})$, where $Y$ is a sum of $S$
independent exponentials of rate $\theta$ and $\gamma_{S,\alpha}$, the
$\alpha$-quantile of a Gamma distribution with shape $S$ and rate $1$, satisfies
$\gamma_{S,\alpha}/S \to 1$ by the central limit theorem. For every fixed
$x>1/\theta$, Cram\'er's theorem for the exponential sample mean gives
\[
-\frac1S\log\pr(Y/S>x)\ \longrightarrow\
I_\theta(x):=\theta x-1-\log(\theta x).
\]
To handle the moving threshold, fix
$\eta\in(0,1-1/\theta)$. Eventually
$1-\eta<\gamma_{S,\alpha}/S<1+\eta$, and hence
\[
\pr(Y/S>1+\eta)
\le \pr(Y>\gamma_{S,\alpha})
\le \pr(Y/S>1-\eta).
\]
Therefore
\[
I_\theta(1-\eta)
\le \liminf_{S\to\infty}-\frac1S\log\pr(Y>\gamma_{S,\alpha})
\le \limsup_{S\to\infty}-\frac1S\log\pr(Y>\gamma_{S,\alpha})
\le I_\theta(1+\eta).
\]
Letting $\eta\downarrow0$ and using continuity squeezes the exponential rate to
$I_\theta(1)=\theta-1-\log\theta$.

\emph{(b).} Write $\ell_\alpha=\log(1/\alpha)$.  At $t = \alpha^{1/S}$,
$1-t = \ell_\alpha S^{-1}\{1+O(S^{-1})\}$, so
$1-G(t) = (1-t)^\theta = \ell_\alpha^\theta S^{-\theta}\{1+O(S^{-1})\}$ and
\[
1-\Pi^{\mathrm{un}}(S) \;=\; 1 - \{1-(1-t)^\theta\}^S
\;=\; S\,(1-t)^\theta\,\{1+O(S(1-t)^{\theta})\}
\;=\; \ell_\alpha^\theta S^{1-\theta}\{1+o(1)\},
\]
the middle expansion holding because $S(1-t)^\theta \to 0$ for $\theta > 1$.

\emph{(c), achievability.} The map $t \mapsto
\mathrm{KL}\{\mathrm{Ber}(t) \| \mathrm{Ber}(G(t))\}$ is continuous on
$(0,1)$, positive there since $G(t) > t$, and tends to $0$ at both endpoints,
so the supremum $J_{\mathrm{thr}}$ is attained at an interior $t^{\dagger}$.
Reject when $N \ge r_S$ at $t \equiv t^{\dagger}$, with $r_S$ minimal subject
to $\pr_0(N \ge r_S) \le \alpha$; Chebyshev gives $r_S \le S t^{\dagger} +
c\sqrt{S}+1$ with $c = \{t^{\dagger}(1-t^{\dagger})/\alpha\}^{1/2}$.
Conversely, if $r_S/S\le t^{\dagger}-\varepsilon$ along a subsequence for
some $\varepsilon>0$, the law of large numbers under the null gives
$\pr_0(N\ge r_S)\to1>\alpha$, contradicting the defining size bound. Hence
$r_S/S\to t^{\dagger}$. A boundary-randomized level-$\alpha$ count test is
at least as powerful, so with $\gamma_S=r_S/S\to t^{\dagger}<G(t^{\dagger})$,
the binomial Chernoff bound gives
\[
\begin{aligned}
1-\Pi^{\mathrm{aw}}(S) \;&\le\; \pr_{G}(N < r_S) \;\le\;
\exp\big[-S\,\mathrm{KL}\{\mathrm{Ber}(\gamma_S)\,\|\,
\mathrm{Ber}(G(t^{\dagger}))\}\big] \\
&= \exp\big[-S\,\{J_{\mathrm{thr}} + o(1)\}\big].
\end{aligned}
\]

\emph{(c), converse.} A threshold-report design uses local thresholds
$t_1,\dots,t_S$, possibly depending on $S$, and a center rule $\phi$ with
$\E_0(\phi) \le \alpha$, measurable in the $S$ transmitted bits and the
center's randomization. Let $Z = \sum_s Z_s$ be the log likelihood ratio of
the null against the alternative bit distribution; $\E_0(Z_s) =
\mathrm{KL}\{\mathrm{Ber}(t_s)\|\mathrm{Ber}(G(t_s))\} \le J_{\mathrm{thr}}$,
and $\mathrm{var}_0(Z_s)$ is uniformly bounded in $t_s$. Indeed, for
$t\in(0,1)$ the one-site null-against-alternative log likelihood ratio takes
the values
\[
a_\theta(t)=\log\frac{t}{G(t)},\qquad
b_\theta(t)=\log\frac{1-t}{1-G(t)}=(1-\theta)\log(1-t)
\]
with null probabilities $t$ and $1-t$, so
\[
\mathrm{var}_0(Z_s)
=t_s(1-t_s)\{a_\theta(t_s)-b_\theta(t_s)\}^2.
\]
As $t\downarrow0$, $G(t)=\theta t+O(t^2)$,
$a_\theta(t)=-\log\theta+O(t)$, and $b_\theta(t)=O(t)$, so the variance is
$O(t)$. As $t\uparrow1$, writing $x=1-t$ gives
$a_\theta(t)=\log\{(1-x)/(1-x^\theta)\}=O(x)$ and
$b_\theta(t)=(1-\theta)\log x$, so the variance is
$O\{x\log^2x\}\to0$. Thus it extends continuously by zero at both endpoints,
and
\[
\sigma^2_\theta:=\sup_{0\le t\le1}\mathrm{var}_0(Z_s)<\infty.
\]
For $\varepsilon > 0$, changing measure on $\{Z \le
S(J_{\mathrm{thr}}+\varepsilon)\}$ and applying Chebyshev,
\[
\begin{aligned}
1 - \Pi \;=\; \E_1(1-\phi) \;&\ge\;
\exp\{-S(J_{\mathrm{thr}}+\varepsilon)\}
\big\{\E_0(1-\phi) - \pr_0\big(Z > S(J_{\mathrm{thr}}+\varepsilon)\big)\big\} \\
&\ge\;
\exp\{-S(J_{\mathrm{thr}}+\varepsilon)\}
\big\{1 - \alpha - \sigma^2_\theta/(\varepsilon^2 S)\big\},
\end{aligned}
\]
so $\liminf_{S\to\infty} S^{-1}\log\{1-\Pi^{\mathrm{aw}}(S)\} \ge -J_{\mathrm{thr}} -
\varepsilon$ for every $\varepsilon > 0$. Combining this converse with the
achievability bound and then letting $\varepsilon\downarrow0$ proves the
limit in part~(c). Finally $J_{\mathrm{thr}} < I$: for
each fixed $t$ the two-cell partition $\{u \le t\}, \{u > t\}$ is not
sufficient for $Q_0$ against $Q_1$, since $g$ is non-constant on a side,
so the data-processing inequality is strict,
$\mathrm{KL}\{\mathrm{Ber}(t)\|\mathrm{Ber}(G(t))\} <
\mathrm{KL}(Q_0\|Q_1) = -\int_0^1 \log g(u)\,\mathrm{d}u
= \theta - 1 - \log\theta
= I(\theta)$, and the supremum is attained at an interior point where the
inequality is strict.

\emph{Onset.} The weak inequality $\Pi^{\mathrm{aw}}(S)\ge\Pi^{\mathrm{un}}(S)$
holds for every $S$, because the unanimity rule at
$t=\alpha^{1/S}$ is one feasible threshold-report design. Parts (b)--(c)
give strict inequality for all sufficiently large $S$, so $S^*$ is finite.
\hfill$\square$

\begin{table}[htbp]
\caption{Deficits of threshold reports with a common threshold for $K=1$, homogeneous
$\mathrm{Beta}(1,2)$ sites, and $\alpha=0.05$.  The model-aware and
null-calibrated values agree to six digits at every $S$ shown; the selected
count cutoff is $c=S$ through $S=7$ and then $7,8,9$ at $S=8,9,10$.}
\label{tab:scaleS}
\centering
\begingroup
\small
\singlespacing
\setlength{\tabcolsep}{5pt}
\renewcommand{\arraystretch}{1.05}
\begin{tabular}{cccccc}
\toprule
& & \multicolumn{2}{c}{model-aware $=$ null-calibrated} & \multicolumn{2}{c}{unanimity} \\
\cmidrule(lr){3-4}\cmidrule(lr){5-6}
$S$ & $\pi^*$ & deficit & $/\,\pi^*$ & deficit & $/\,\pi^*$ \\
\midrule
$2$  & $0.1595$ & $0.0018$ & $1.1\%$  & $0.0018$ & $1.1\%$  \\
$3$  & $0.2258$ & $0.0087$ & $3.8\%$  & $0.0087$ & $3.8\%$  \\
$4$  & $0.2931$ & $0.0212$ & $7.2\%$  & $0.0212$ & $7.2\%$  \\
$5$  & $0.3595$ & $0.0382$ & $10.6\%$ & $0.0382$ & $10.6\%$ \\
$6$  & $0.4236$ & $0.0583$ & $13.8\%$ & $0.0583$ & $13.8\%$ \\
$7$  & $0.4846$ & $0.0798$ & $16.5\%$ & $0.0798$ & $16.5\%$ \\
$8$  & $0.5417$ & $0.0998$ & $18.4\%$ & $0.1018$ & $18.8\%$ \\
$9$  & $0.5946$ & $0.1091$ & $18.4\%$ & $0.1232$ & $20.7\%$ \\
$10$ & $0.6431$ & $0.1184$ & $18.4\%$ & $0.1433$ & $22.3\%$ \\
\bottomrule
\end{tabular}
\endgroup
\end{table}

Table~\ref{tab:rate} uses equal-width uniform reports; in particular, its
$m=2$ row is not the threshold-optimized binary design.  Finite report-distribution
duals are solved to numerical tolerance, and the $K=2$ continuous-oracle
benchmark is the adaptive-quadrature value $\pi_2^*\approx0.156461$.
The table is a diagnostic for the inverse-square exponent.

\begin{table}[htbp]
\caption{Uniform-report model-aware deficits for $S=2$, homogeneous
$\mathrm{Beta}(1,2)$ alternatives, and $\alpha=0.05$.}
\label{tab:rate}
\centering
\begingroup
\small
\singlespacing
\setlength{\tabcolsep}{5pt}
\renewcommand{\arraystretch}{1.06}
\begin{tabular}{ccccc}
\toprule
& \multicolumn{2}{c}{$K=1$} & \multicolumn{2}{c}{$K=2$} \\
\cmidrule(lr){2-3}\cmidrule(lr){4-5}
$m$ & $\delta_1(\mathcal T^{(m)})$ & $m^2\delta_1$
    & $\delta_2(\mathcal T^{(m)})$ & $m^2\delta_2$ \\
\midrule
$2$  & $4.70\times10^{-2}$ & $0.188$ & $4.40\times10^{-2}$ & $0.176$ \\
$4$  & $6.42\times10^{-3}$ & $0.103$ & $4.47\times10^{-3}$ & $0.071$ \\
$8$  & $1.14\times10^{-3}$ & $0.073$ & $1.27\times10^{-3}$ & $0.082$ \\
$16$ & $4.83\times10^{-4}$ & $0.124$ & $3.19\times10^{-4}$ & $0.082$ \\
$32$ & $8.23\times10^{-5}$ & $0.084$ & $8.71\times10^{-5}$ & $0.089$ \\
$64$ & $2.00\times10^{-5}$ & $0.082$ & $2.10\times10^{-5}$ & $0.086$ \\
\bottomrule
\end{tabular}
\endgroup
\end{table}

For Table~\ref{tab:advantage}, each method uses its own numerically selected
common disclosure threshold, so the comparison is classwise rather than the
fixed-$(\mathsf P,\phi)$ identity of Proposition~\ref{prop:reduction}.  Values
come from the deterministic composition reduction, with finite duals solved
to numerical tolerance.  At this calibration, the marginal column is the
common displayed power of Bonferroni (the null-calibrated rule of
\S\ref{sec:marginal}), Holm, Hommel, and the unanimity rule.
This agreement is numerical, not an
identity, as the $\theta=10$ calibration above demonstrates.

\begin{table}[htbp]
\caption{Full-alternative power for homogeneous $\mathrm{Beta}(1,2)$
threshold reports at $\alpha=0.05$.  Gain is
$(\text{optimal}-\text{marginal})/\text{marginal}$.}
\label{tab:advantage}
\centering
\begingroup
\small
\singlespacing
\setlength{\tabcolsep}{4pt}
\renewcommand{\arraystretch}{1.06}
\begin{tabular}{cccccccc}
\toprule
& \multicolumn{3}{c}{$S=2$} & & \multicolumn{3}{c}{$S=3$} \\
\cmidrule(lr){2-4}\cmidrule(lr){6-8}
$K$ & marginal & optimal & gain & & marginal & optimal & gain \\
\midrule
$1$ & $0.158$ & $0.158$ & $0\%$   & & $0.217$ & $0.217$ & $0\%$   \\
$2$ & $0.085$ & $0.143$ & $69\%$  & & $0.125$ & $0.202$ & $62\%$  \\
$4$ & $0.045$ & $0.132$ & $197\%$ & & $0.069$ & $0.183$ & $164\%$ \\
$6$ & $0.030$ & $0.125$ & $313\%$ & & $0.048$ & $0.170$ & $251\%$ \\
$8$ & $0.023$ & $0.120$ & $419\%$ & & $0.037$ & $0.161$ & $330\%$ \\
\bottomrule
\end{tabular}
\endgroup
\end{table}

The synthetic study in Table~\ref{tab:simpower} has $S=3$ one-sided
normal-shift sites, $\theta=(1.0,1.5,2.0)$, $K=2$, $\alpha=0.05$, and
$\sigma^2=\sum_s\theta_s^2=7.25$, giving centralized-oracle power
$\pi_2^*=0.852$; \S\ref{supp:sim} gives the computational details.
Thresholds are selected separately for model-aware and Fisher--Bonferroni
aggregation, respectively $(0.816,0.437,0.142)$ and $(0.766,0.347,0.094)$,
and LR-tail--Holm uses the latter threshold-report distribution,
so the threshold row is not a fixed-report dominance comparison; the
multilevel rows use uniform $p$-value bins, and finite resolution need not
order the displayed procedures.

\begin{table}[htbp]
\caption{Full-alternative average power in the synthetic multi-site study.
The comparison procedures use Fisher--Bonferroni or likelihood-ratio-tail
$p$-values followed by Holm.}
\label{tab:simpower}
\centering
\begingroup
\small
\singlespacing
\setlength{\tabcolsep}{6pt}
\renewcommand{\arraystretch}{1.08}
\begin{tabular}{lccc}
\toprule
report & model-aware & Fisher--Bonf. & LR-tail--Holm \\
\midrule
threshold (designed) & $0.717$ & $0.624$ & $0.642$ \\
uniform $m=4$       & $0.703$ & $0.453$ & $0.559$ \\
uniform $m=8$       & $0.797$ & $0.513$ & $0.748$ \\
sufficient          & $0.852$ & $0.719$ & $0.831$ \\
\bottomrule
\end{tabular}
\endgroup
\end{table}

Overestimating the effect leaves the model-aware family-wise error rate at
$0.0500$ and lowers power, a pointwise numerical finding rather than a
general monotonicity claim.  For the null-calibrated uniform report, the map
and its null calibration are unchanged, so power and error rates are
unchanged.  For the threshold report, the thresholds are reselected under
$\hat\theta$ and then calibrated against their exactly known null law.
Its power changes from $0.6244$ under the correct model to $0.6240$ for
underestimation and $0.6239$ for overestimation; all round
to $0.624$ in Table~\ref{tab:simrobust}, and FWER remains $0.0494$.
This sensitivity enters through report design, not null calibration.  The two
overestimation rows of Table~\ref{tab:simrobust} complete the
misspecification sweep; \S\ref{supp:sim} describes the exact evaluation.

\subsection{Details of the synthetic multi-site study}
\label{supp:sim}

The study of \S\ref{sec:numerics_sim} (script
\texttt{simulation\_study.py}) uses $S=3$ one-sided normal-shift sites
with effects $\theta=(1.0,1.5,2.0)$ and $K=2$. Because
$\log g_s(u)=\theta_s\Phi^{-1}(1-u)-\theta_s^2/2$, the aggregate
log likelihood ratio is exactly $N(-\sigma^2/2,\sigma^2)$ under the null and
$N(\sigma^2/2,\sigma^2)$
under the alternative, where $\sigma^2=\sum_s\theta_s^2$.
The sufficient-report row and the oracle are computed from a $2000$-cell
equal-null-mass binning of this exact distribution.
Refining the binning from $1000$ to $4000$ cells changes the
value by less than $10^{-5}$ ($0.852242$, $0.852249$, $0.852249$), so the
benchmark is stable well beyond the displayed precision.
The null-calibrated thresholds maximize $\prod_s G_s(t_s)$
subject to $\prod_s t_s=\alpha/K$ (Proposition~\ref{prop:alloc}), and an unconstrained
search over all eight report patterns and all admissible critical values
returns the same design; the
model-aware thresholds come from a Nelder--Mead search over the
$8$-cell distribution and are not certified. Model-aware powers are bracket-checked
dual values; the study's exact $K=2$ primal (an explicit linear program
over per-multiset action probabilities $r_1,r_2$ with the two strong-FWER
constraints $\E_0\{r_1v_{(2)}+r_2(v_{(1)}+v_{(2)})\}\le2\alpha$ and
$\E_0(r_1+r_2)\le\alpha$) reproduces the dual value to $10^{-8}$ with
constraint residuals below $10^{-10}$.
Holm uses conservative report-likelihood-ratio-tail $p$-values for every row
of Table~\ref{tab:simpower}.  For the threshold row it shares the
Fisher--Bonferroni design and is not separately threshold-optimized.  The
$m$-level and sufficient rows apply all three procedures to the same report
protocol.

For the model-aware misspecification experiment the center solves the program under the
rescaled shift vectors $\hat\theta=0.75\,\theta$ and $\hat\theta=1.25\,\theta$,
with the report thresholds left at the values selected under the true
$\theta$, and the induced policy's true error rates are computed exactly.
The null-calibrated threshold rule instead reselects its thresholds under
$\hat\theta$; its reference null law is recalibrated for that design without
using an alternative model.  The uniform report map is fixed for both rules.
For each ordered report pair, the policy's rejection
probabilities follow from the primal actions and the ranking by assumed
likelihood ratio, with exact ties, if any, split uniformly.
Summing these probabilities under the global-null, mixed, and full-alternative
product distributions gives the strong family-wise error rate and power as
deterministic finite sums, conditional on the numerically solved policy
(constraint residuals below $10^{-10}$). In the inflated cases the
maximum over configurations is attained at a mixed configuration; the
all-null rate is numerically $0.0500$ for each optimized policy, and
misspecification leaves that constraint's exactly known null report distribution
unchanged.  Fisher--Bonferroni attains per-hypothesis null sizes
$1/64=0.015625$ at $m=4$ and $7/512=0.013671875$ at $m=8$, below
$\alpha/K=0.025$.  The next $m=8$ Fisher tail is $13/512>0.025$, so the
unused level is larger than at $m=4$.  The remaining sufficient-report power
gap reflects Fisher aggregation with marginal Bonferroni correction.

The correlation experiment draws, within each site, jointly Gaussian
$z$-statistics for the two hypotheses with correlation $\rho$
(independent across sites).
It applies the fixed correct-model policies and estimates error rates by
averaging the policy's conditional rejection probabilities given the observed
report pair (Rao--Blackwellization).
Thus Monte Carlo noise enters only through report-cell frequencies:
$N=4\times10^6$ replications, seed $20260723$, standard errors at most
$2.3\times10^{-4}$. At $\rho=0$ all corresponding estimates lie within two
standard errors of their deterministic values, validating the pipeline.

\section{Application and sensitivity analyses}\label{supp:applications}

\subsection{FeTS/MedPerf experiment: design and validity}
\label{supp:fets}

\paragraph{Provenance and analysis population.}
The FeTS 2022 Task~2 benchmark used the MedPerf platform to evaluate
brain-tumor segmentation algorithms at geographically distributed
institutions~\citep{zenk2025fets,karargyris2023medperf}.  The test images and reference
segmentations remained with the data owners and were inaccessible to
the Task~2 organizer during the decentralized evaluation; algorithm
containers traveled to the data, and the
resulting performance metrics were returned.  The subsequently released
case-level metrics permit a reanalysis under counterfactual reporting protocols.
The repository verifies both the source archive and its
\texttt{task2\_results.csv}; file names, hashes, and the commands that reproduce the analysis are in
\texttt{data/README.md}.

The released data set contains 2,625 retained cases, 32 institution
identifiers, and results for 41 models.  The accompanying source metadata labels six
institutions (IDs 27--32) as seen during training and the other 26 as unseen.
We exclude the six seen institutions.  At unseen institution 36 (identifiers
are not contiguous), all three Dice
values for model~8 are missing for all 81 cases because its evaluation failed; we
exclude that institution before forming the family so that every claim uses
one common, complete set of sites.  This leaves $S=25$ institutions and
2,178 cases, with institution sizes from 9 to 250 (median 83).  The
included institutions are relabeled as $s=1,\ldots,S$.  The FeTS article
reports that 1,201 cases with shareable reference segmentations from
16 institutions were visually screened, and that 125 cases
with major annotation errors were removed from its final analysis.  We use
the authors' published final-analysis data set; we neither reconstruct the
unreleased exclusions nor treat their quality-control step as a prospective
eligibility rule.  Accordingly, for the inferential statements below, each
retained set of case-level evaluations is assumed to represent i.i.d.\ patient
draws from the post-quality-control institutional evaluation population.
That sampling statement, including distinctness of the sampled units, is an
assumption, not a consequence of the released data set.

The primary family contains the five official Task~2 submissions: models
8, 10, 11, 12, and 54.  Model~10 is focal because its documented method
specifically targets test-time distribution shift; it is compared with each
model
$c\in\mathcal C=\{8,11,12,54\}$ for enhancing tumor (ET), tumor core (TC),
and whole tumor (WT).  Thus the fixed family has $K=4\times3=12$ claims.
The released metrics for model~54 are the corrected version: the FeTS
supplement notes that a bug affected its official challenge submission and
was fixed later.  The family definition is scientifically motivated but was
constructed retrospectively after publication of the benchmark; the
selection qualification below is therefore essential.

\paragraph{Local and global hypotheses.}
Let $D_{s,i,a,r}$ be the released Dice score for case $i$ at institution $s$,
model $a$, and region $r$.  For $c\in\mathcal C$, define
\[
 X_{s,i,c,r}=\mathbf 1\{D_{s,i,10,r}>D_{s,i,c,r}\},
 \qquad
 \theta_{s,c,r}=\pr\{D_{s,i,10,r}>D_{s,i,c,r}\}.
\]
Equality at the released numeric precision is a non-win.  The local null and
the $K$ global nulls are
\[
 H_{s,c,r}:\theta_{s,c,r}\leq\tfrac12,
 \qquad
 H_{c,r}=\bigcap_{s=1}^{25}H_{s,c,r}.
\]
If the retained cases at institution $s$ represent i.i.d.\ draws from the
stated population and the prediction pipelines are fixed and case-local,
meaning that each prediction depends on the evaluation data only through
the current case,
then, conditionally on its case count $n_s$,
\begin{equation}
 B_{s,c,r}=\sum_{i=1}^{n_s}X_{s,i,c,r}
 \sim \operatorname{Bin}(n_s,\theta_{s,c,r}),
 \qquad
 p_{s,c,r}=\pr\{\operatorname{Bin}(n_s,1/2)\geq B_{s,c,r}\}.
 \label{eq:fets_local}
\end{equation}
The displayed upper tail is an exact finite-sample, generally conservative
binomial majority-win $p$-value: under the composite null
$\theta_{s,c,r}\leq1/2$, stochastic ordering of the binomial count makes
$p_{s,c,r}$ superuniform.  No continuous model for Dice differences is used.

This formulation also isolates the test-time-adaptation issue.  The FeTS
methods description states that model~10 recalculates batch-normalization
statistics at prediction time with batch size one, making the operation
similar to instance normalization.  We therefore treat its prediction as a
function of the current case.  Exactness of~\eqref{eq:fets_local} nevertheless
requires that the evaluated implementation carry no prediction-relevant
state from one case to the next.  If that condition or the within-institution
i.i.d.\ sampling condition fails, the primary binomial $p$-values need not be valid; the
model-10-excluded sensitivity below avoids the model-10-specific condition.

For each fixed $(c,r)$, validity of the site product additionally requires
mutual independence of the 25 institutional $p$-values under $H_{c,r}$.
Dependence across the 12 claims is unrestricted.  In particular, the same
released cases supply all model comparisons, ET and TC are nested
within WT, and each comparison shares model~10.  These dependencies are why this experiment
uses only the null-calibrated rule of Proposition~\ref{prop:marginal}; it
makes no use of the model-aware independence or common-alternative
assumptions.

\paragraph{Counterfactual reports and their validity.}
We set $\alpha=0.05$ and write $\beta=\alpha/K$ for the per-hypothesis
level.  Each report is read as the site $p$-value it implies.  A designed
threshold report transmits $\mathbf 1\{p\leq t\}$ with
$t=\beta^{1/S}=0.8031405$, at which unanimity has probability $\beta$ under
the independent-uniform reference; its implied site $p$-value is $t$ on a local
rejection and $1$ otherwise.  A uniform $m$-level report transmits the cell
index $c_m(p)=\min\{1+\lfloor mp\rfloor,m\}\in\{1,\ldots,m\}$, whose implied
site $p$-value is the cell's upper edge
\begin{equation}
 \hat p_m(p)=c_m(p)/m .
 \label{eq:fets_qm}
\end{equation}
The full report transmits $p$ itself. Equation~\eqref{eq:fets_qm} and the
threshold and full-report definitions give $\hat p\geq p$ pointwise in every
case, so $\hat p$ is superuniform whenever $p$ is; the
exact binomial $p$-values used here are superuniform under $H_{c,r}$.

The center uses the exact-null tail calibration of
Proposition~\ref{prop:marginal}.  For the threshold and full reports the
statistic is Fisher's combination
$W_{c,r}=-\sum_{s=1}^{25}\log\hat p_{s,c,r}$; its uniform-reference
distribution is determined by a binomial local-rejection count for the
threshold report and is $\chi^2_{2S}/2$ for the full report.
For uniform $m$-cell reports, numerical convolution uses integer score units
at resolution $\delta=10^{-5}$, and the observed statistic uses exactly the
same units:
\[
 Q^{(m)}_{c,r}=\sum_{s=1}^{25}
 \left\lfloor\frac{-\log\hat p_{m}(p_{s,c,r})}{\delta}\right\rfloor,
 \qquad W^{(m)}_{c,r}=\delta Q^{(m)}_{c,r}.
\]
The reference law of $Q^{(m)}$ is the convolution of $S$ independent
copies of the $m$ equally likely integer scores
$\lfloor-\log(c/m)/\delta\rfloor$, $c=1,\ldots,m$.
Writing $q_{\beta,m}$ for the smallest nonnegative integer with
$\pr_0(Q^{(m)}\ge q_{\beta,m})\le\beta$, the center rejects when
$Q^{(m)}_{c,r}\ge q_{\beta,m}$.  The corresponding lattice cutoffs
$\tau_{\beta,m}=\delta q_{\beta,m}$ are
\[
\begin{array}{lcccccc}
\text{report} & m=2 & m=4 & m=8 & m=16 & m=32 & m=64\\
\tau_{\beta,m} & 13.170 & 21.827 & 27.968 & 32.179 & 35.004 & 36.863
\end{array}
\]
with $\tau_\beta=40.170$ for the full report and, for the threshold report, the
count cutoff $c=25$; the reference-null levels are $0.00204$, $0.00409$,
$0.00417$, $0.00417$, $0.00417$ and $0.00417$ for $m=2,\ldots,64$ in that
order, and $0.00417$ for both the full and the threshold report, against
$\beta=0.0041\overline{6}$.  Rounding changes the score by
$0\le W_{c,r}-W^{(m)}_{c,r}<S\delta$ and preserves its monotonicity in
each local $p$-value.  The reference calibration and the observed comparison
both use $Q^{(m)}$, so the tail corresponds to the statistic actually tested.
In particular, the binary rule is exactly a count cutoff of $20$ out of
$25$, independently checked by
$\pr\{\mathrm{Bin}(25,1/2)\ge20\}=0.0020386577$.
A union bound over the true global hypotheses gives
strong FWER control at $\alpha$ with no assumption on dependence across
$(c,r)$.

\paragraph{Results.}
The designed threshold report rejects five claims and the equal-width
binary report eight, and the difference comes entirely from where the single
cell boundary is placed.  In Table~\ref{tab:fets_resolution}, an entry $c$--$r$ denotes rejection of
$H_{c,r}$.  The designed threshold report and the uniform $m=2$ report are
both binary, but the designed threshold $t=0.803$ requires unanimity,
whereas the equal-width binary report permits $20$ of $25$ local rejections.
The two binary protocols therefore
admit different patterns of local evidence.  Uniform refinements need
not give nested rejection sets, though here they happen to be nested.  The
table is an observed-data comparison of report protocols, not a power
estimate.  To display the complete rejection sets compactly, write
$\mathcal A_c=\{c\text{--ET},c\text{--TC},c\text{--WT}\}$ for
$c\in\{8,11,12,54\}$ and
$\mathcal B_{11}=\{11\text{--ET},11\text{--TC}\}$.

\begin{table}[htbp]
\caption{FeTS report-resolution experiment at $\alpha=0.05$.  The primary
column gives all rejected model-10-versus-comparator claims in the fixed
$K=12$ family.  The sensitivity column gives only the rejection count for
the tie-robust $K=18$ family of pairwise comparisons among the other four
official models, defined below.  Every row is a separate
counterfactual FWER procedure.}
\label{tab:fets_resolution}
\centering
\begingroup
\small
\setlength{\tabcolsep}{4pt}
\begin{tabular}{lrrl}
\toprule
report & primary count & sensitivity count & primary rejection set \\
\midrule
designed threshold & 5 & 4 & $\mathcal A_8\cup\{12\text{--ET},12\text{--WT}\}$ \\
uniform $m=2$ & 8 & 14 & $\mathcal A_8\cup\mathcal A_{12}\cup\{11\text{--ET},54\text{--ET}\}$ \\
uniform $m=4$ & 10 & 15 & $\mathcal A_8\cup\mathcal A_{12}\cup\mathcal A_{54}\cup\{11\text{--ET}\}$ \\
uniform $m=8$ & 11 & 16 & $\mathcal A_8\cup\mathcal A_{12}\cup\mathcal A_{54}\cup\mathcal B_{11}$ \\
uniform $m=16$ & 11 & 17 & $\mathcal A_8\cup\mathcal A_{12}\cup\mathcal A_{54}\cup\mathcal B_{11}$ \\
uniform $m=32$ & 11 & 17 & $\mathcal A_8\cup\mathcal A_{12}\cup\mathcal A_{54}\cup\mathcal B_{11}$ \\
uniform $m=64$ & 11 & 17 & $\mathcal A_8\cup\mathcal A_{12}\cup\mathcal A_{54}\cup\mathcal B_{11}$ \\
full $p$-value & 12 & 17 & $\mathcal A_8\cup\mathcal A_{11}\cup\mathcal A_{12}\cup\mathcal A_{54}$ \\
\bottomrule
\end{tabular}
\endgroup
\end{table}

Pooled case-weighted summaries are retained in \texttt{data/README.md} for
scale, but are deliberately omitted here: they ignore site heterogeneity,
have no role in the tests, and Dice differences are technical segmentation
metrics rather than clinical effect sizes.

\paragraph{Arbitrary-cross-site-dependence sensitivity.}
A Bonferroni partial-conjunction calculation on the full local $p$-values
removes the cross-site independence that the product analysis relies on,
and it also quantifies how many included institutions show an advantage.  Define
\[
 N_{c,r}=\sum_{s=1}^{S}\mathbf 1\{\theta_{s,c,r}>1/2\},
 \qquad
 H^{[\ell]}_{c,r}:N_{c,r}\leq \ell-1,
 \qquad \ell=1,\ldots,S.
\]
Thus $H^{[\ell]}_{c,r}$ asserts that fewer than $\ell$ of the $S=25$
institution-specific nulls are false.  Order the local values as
$p_{(1),cr}\leq\cdots\leq p_{(S),cr}$ and set
\begin{equation}
 q_{\ell,c,r}=\min\{1,(S-\ell+1)p_{(\ell),cr}\},
 \qquad
 \bar q_{\ell,c,r}=\max_{1\leq j\leq\ell}q_{j,c,r}.
 \label{eq:fets_pc}
\end{equation}
Under arbitrary dependence, the first quantity is the Bonferroni
partial-conjunction $p$-value~\citep{benjamini2008screening}.  Indeed, under
$H^{[\ell]}_{c,r}$, choose a set $I$ of $S-\ell+1$ true local hypotheses.  Its
complement contains only $\ell-1$ indices, so
$p_{(\ell),cr}\geq\min_{s\in I}p_{s,c,r}$.  For $0\leq x<1$, marginal
superuniformity and the union bound therefore give
\[
 \pr\{q_{\ell,c,r}\leq x\}
 \leq \sum_{s\in I}\pr\{p_{s,c,r}\leq x/(S-\ell+1)\}
\leq x.
\]
This argument imposes no condition on dependence among institutions.

To select $\ell$ while retaining simultaneous coverage, we use the prefix
maximum in~\eqref{eq:fets_pc} and report
\begin{equation}
 \underline N_{c,r}
 =\max\bigl(\{0\}\cup
   \{\ell:\bar q_{\ell,c,r}\leq\alpha/K\}\bigr),
 \qquad K=12.
 \label{eq:fets_pc_bound}
\end{equation}
The prefix operation in~\eqref{eq:fets_pc_bound} is essential because the
raw $q_{\ell,c,r}$ need not be
monotone in $\ell$.  If the true count is $N_{c,r}=n_0$ but
$\underline N_{c,r}>n_0$, then necessarily $n_0<S$ and
$\bar q_{n_0+1,c,r}\leq\alpha/K$.  The null $H^{[n_0+1]}_{c,r}$ is true, and
$q_{n_0+1,c,r}\leq\bar q_{n_0+1,c,r}\leq\alpha/K$.  A union bound over the $K$
claims consequently gives
\[
 \pr\{\underline N_{c,r}\leq N_{c,r}\text{ for every }(c,r)\}
 \geq 1-\alpha,
\]
with arbitrary dependence both across institutions and across claims.

\begin{table}[htbp]
\caption{Family-wise simultaneous lower bounds on the number of the 25
included institutions satisfying $\theta_{s,c,r}>1/2$, at $\alpha=0.05$ over
the fixed $K=12$ family.  This full-report partial-conjunction analysis
requires marginal validity of the local $p$-values but permits arbitrary
dependence across institutions and claims.}
\label{tab:fets_partial_conjunction}
\centering
\begin{tabular}{lrrr}
\toprule
comparator & ET & TC & WT \\
\midrule
8  & 20 & 23 & 24 \\
11 &  8 &  1 &  1 \\
12 & 15 & 16 & 21 \\
54 &  8 &  4 &  4 \\
\bottomrule
\end{tabular}
\end{table}

All twelve bounds are positive, and ten are at least four.  These counts
refer only to the included institutions; they do not establish
transportability to another hospital.  This is a separate inferential
procedure, not an additional report-resolution row, and it does not validate
the Fisher combination under dependent institutional samples.

\paragraph{Model-10-excluded sensitivity.}
To remove reliance specifically on our case-local interpretation of
model~10's published test-time-normalization description, we exclude model~10 and form all six
unordered pairs among models
$\{8,11,12,54\}$ and all three regions, giving $K=18$.  For a pair $(a,b)$
at a site, let $B_a$ and $B_b$ count strict wins by $a$ and $b$ among all
$n_s$ cases; ties enter neither total and are therefore failures in
both directions.  Write
\[
\begin{aligned}
 p_a&=\pr\{\operatorname{Bin}(n_s,1/2)\geq B_a\},&
 p_b&=\pr\{\operatorname{Bin}(n_s,1/2)\geq B_b\},\\
 p^{\pm}&=\min\{1,2\min(p_a,p_b)\}.&&
\end{aligned}
\]
For each site, define
$H^{\pm}_{s,a,b,r}:\pr(D_{s,i,a,r}>D_{s,i,b,r})\leq1/2\ \text{and}\
\pr(D_{s,i,b,r}>D_{s,i,a,r})\leq1/2$, and define the corresponding global null
$H^{\pm}_{a,b,r}=\bigcap_s H^{\pm}_{s,a,b,r}$.  Under the local composite null,
both one-sided values are valid, and Bonferroni's inequality makes
$p^{\pm}$ valid even though the two counts are dependent.  This sensitivity
still assumes that the four remaining released pipelines are fixed and
case-local and that patient sampling has the independence properties stated
above; the FeTS article establishes only that model~10 was the sole official
submission described as targeting dataset shift.  The construction differs
from the ordinary two-sided sign test
that discards ties: it preserves the estimand of a strict majority among
\emph{all} sampled units.  Applying the same report protocols with
$t=(0.05/18)^{1/25}$ and per-hypothesis level $0.05/18$ gives the sensitivity
counts in Table~\ref{tab:fets_resolution}.  Rejecting $H^{\pm}_{a,b,r}$ means
that one direction has a strict-majority advantage at one or more included
institutions; it does not identify the institution or establish a common
direction across institutions.

\paragraph{Exhaustive ordered-Dice family sensitivity.}
To reduce the effect of choosing a focal model, comparator, and direction
within the benchmark, we form every ordered pair $(a,b)$, $a\ne b$, among
the five official submissions and cross it with the three Dice regions.  The
resulting $K=5\times4\times3=60$ family uses the one-sided local nulls
\[
 H_{s,a,b,r}:\pr\{D_{s,i,a,r}>D_{s,i,b,r}\}\leq\tfrac12,
 \qquad
 H_{a,b,r}=\bigcap_{s=1}^{25}H_{s,a,b,r},
\]
with equality at the released precision counted as a non-win.  We apply the
same upper binomial tail as in~\eqref{eq:fets_local}, with $a$ in place of
model~10 and $b$ in place of $c$, and apply the product aggregation separately
at each report resolution.  The recalibrated threshold is
$t=(0.05/60)^{1/25}$ and the per-hypothesis level is $0.05/60$.

\begin{table}[htbp]
\caption{Exhaustive ordered-Dice family sensitivity.  The first row is the
number rejected among all $K=60$ ordered official-model--region claims.  The
second row counts only the 12 model-10 directions in the same $K=60$
correction.  Each column is a separate counterfactual FWER procedure under
the product analysis's stated validity conditions.}
\label{tab:fets_exhaustive}
\centering
\small
\setlength{\tabcolsep}{4pt}
\begin{tabular}{lrrrrrrrr}
\toprule
family & threshold & $m=2$ & $m=4$ & $m=8$ & $m=16$ & $m=32$ & $m=64$ & full \\
\midrule
all 60 claims & 13 & 22 & 23 & 27 & 28 & 28 & 29 & 29 \\
focal 12 claims &  5 &  8 &  8 & 11 & 11 & 11 & 11 & 11 \\
\bottomrule
\end{tabular}
\end{table}

All five focal threshold-report rejections persist after correction over
this larger family, as do eleven of the twelve focal full-report rejections;
the one that does not is $11$--WT, the weakest of them.
The intermediate-resolution focal counts need not all persist, as the second
row shows.  The construction is exhaustive only within the universe of
ordered Dice comparisons among the five official submissions.  Opposite
directions are separate claims because site-specific advantages may differ
in direction; the focal robustness comparison uses only the model-10
directions.

\paragraph{Released-score and exact-profile audits.}
Recomputing Dice from the released confusion counts changes only two of the
$12\times2{,}178$ primary win indicators; all displayed rejection sets and
partial-conjunction bounds are unchanged.  One pair among 2,625 released cases
has an identical complete metric profile, and deleting either member
again leaves every displayed conclusion unchanged.  The full audit, including
the affected cases and variables, is in \texttt{data/README.md}.

\paragraph{Interpretive limits.}
Rejection of $H_{c,r}$ refutes the intersection of the 25 local
no-majority-advantage nulls.  It supports an advantage at one or more
included institutions; it does not establish superiority at every
institution, transportability to a new hospital, patient benefit, deployment
safety, or clinical utility.  The decentralized evaluation images remain protected, but this reanalysis
uses publicly released metrics and makes no privacy or differential-privacy
claim.  If its displayed family and calibration are treated as fixed in
advance, each product-aggregation row of
Tables~\ref{tab:fets_resolution} and~\ref{tab:fets_exhaustive} separately has
the stated FWER guarantee under valid local $p$-values and cross-site
independence.  The partial-conjunction bounds in
Table~\ref{tab:fets_partial_conjunction} instead permit arbitrary cross-site
dependence but still require marginal validity of the local $p$-values.
Neither the released-score audit nor the deletion sensitivity verifies
within-institution i.i.d.\ sampling or case-local execution.
There is no level-$0.05$ guarantee for choosing a resolution
after viewing the rows, for taking the union of their rejection sets, or for
the retrospective analyst selection of FeTS, the cohort, Dice and the
strict-win estimand, the focal model, or the claim family. Within its defined
ordered-Dice universe, the $K=60$ sensitivity addresses only focal-model,
comparator, direction, and claim-family selection. The experiment is therefore a
reproducible, retrospective illustration of the effect of report richness,
not a prospective clinical
validation study.

\subsection{Social Sciences Replication Project: report richness and
selection-aware interpretation}\label{supp:ssrp-analysis}

The Social Sciences Replication Project~\citep[SSRP;][]{camerer2018evaluating} included 21 social-science
experiments published in \emph{Nature} and \emph{Science} between 2010 and
2015. Each was replicated by an independent laboratory under a protocol
registered before data collection; 13 were classified as replicated.
For each study the published data contain two $p$-values, one from the
original publication and one from the Stage-1 replication. These data have
the same $K=21$, $S=2$ per-study report structure as the framework: the
original laboratory and the replicating laboratory each hold evidence about
the same study, and a center working from the published data sees each
laboratory only through a per-study report. The model and validity
assumptions require separate scrutiny below.

We use the two-sided Fisher-$z$-based $p$-values in the
\texttt{ReplicationSuccess} data set derived from the Open Science Framework;
the pinned extraction script, extracted data, and analysis code
accompany the paper, and, because
the data are public, the exercise illustrates restricted disclosure rather
than a setting where pooling is impossible. A companion paper instead
studies a centralized optimal-FWER procedure on the published rounded
replication vector~\citep{dubey2026esp}.

The framework separates report richness from the conditions needed for
valid inference. First, the
report resolution of \S\ref{sec:reports}: the same two laboratories can
be read at different disclosure levels, from a one-bit report per study to
the full $p$-value, and Table~\ref{tab:ssrp-supp} evaluates both ends.
Second, validity bookkeeping under cross-study dependence: the
null-calibrated rule of \S\ref{sec:marginal} controls the family-wise error
rate across all twenty-one studies in finite samples whenever each
laboratory's null $p$-values are superuniform and the two laboratories are
independent for each study. Dependence \emph{across} studies may be arbitrary,
accommodating shared journals, subject pools, and research practices.
Third, and most important here, the framework's
explicitly stated conditions make visible which analyses of this data set
carry a guarantee and which do not.

The condition that fails for the original $p$-values is superuniformity
under the null, because these studies entered the collection for having
been published as significant.  The methods accept a site's
$p$-value as input when, under no effect, it is uniformly distributed or
merely superuniform (Proposition~\ref{prop:marginal}). The Stage-1 replication studies were
prospectively conducted under registered protocols, but preregistration by
itself does not prove superuniformity. The replication-only guarantee below
is therefore conditional on the validity of those $p$-values. The original
$p$-values do not qualify: every original $p$-value entering
Table~\ref{tab:ssrp-supp} is below $0.05$ by construction, so, conditional
on membership in this list, an original $p$-value is not superuniform under
its null. Any row of Table~\ref{tab:ssrp-supp} that uses the original
$p$-values (the threshold-report row, the two full-report Fisher rows, and
the partial-conjunction row) therefore lacks
an unconditional family-wise guarantee for these data. We report those
rows only as \emph{idealized model-based sensitivity analyses}: they
show what the methods would return if both sites supplied valid, mutually
independent null $p$-values, but they are not certifications for the
selected literature.

\begin{table}[htbp]
\caption{SSRP findings at $\alpha=0.05$.  Methods using the
publication-selected originals are idealized sensitivity analyses, not FWER
certifications.  The replication-only row controls the Stage-1 family if its
$p$-values are valid; it neither tests partial conjunction nor certifies a
replicated effect.  The complete study-level output is generated by
\texttt{code/camerer\_application.py}.}
\label{tab:ssrp-supp}
\centering
\begingroup
\small
\singlespacing
\setlength{\tabcolsep}{5pt}
\begin{tabularx}{\textwidth}{@{}p{0.25\textwidth}cX@{}}
\toprule
Analysis & Rejections & Status and key interpretation \\
\midrule
threshold report, unanimity & 12 & idealized selected-original sensitivity \\
full report, Fisher--Bonferroni & 11 & idealized; the null-calibrated rule at $\alpha/K$ \\
partial conjunction & 3 & idealized; Aviezer, Hauser, and Wilson \\
full report, Fisher--Holm & 13 & idealized; Ramirez--Beilock illustrates one-site dominance \\
Stage-1 Bonferroni & 6 & conditional guarantee; Aviezer, Gneezy, Hauser, Kovacs, Morewedge, and Wilson \\
\bottomrule
\end{tabularx}
\endgroup
\end{table}

Rejecting a common-status global null is evidence against absence at both
laboratories, whereas formal replication requires rejecting the
partial-conjunction null that at least one laboratory has no
effect~\citep{benjamini2008screening,bogomolov2018testing}, and the selected
originals invalidate the latter guarantee here.  The evidence against a
common-status null can be driven by one site.  For superuniform sitewise
$p$-values, $\max_s u_{s,k}$ is valid for the partial-conjunction null because
it dominates the $p$-value of whichever site is null, so Bonferroni at
$\alpha/K$ on that maximum controls the family-wise rate; a same-direction
claim would additionally require directional hypotheses.

Under the idealized two-valid-site reading (Table~\ref{tab:ssrp-supp}), the
one-bit count exceeds the full-report Bonferroni count, which is not a
contradiction: the two are different center rules, and the
threshold $t=(\alpha/K)^{1/2}$ is tuned to make unanimity exactly an
$\alpha/K$ event while Fisher spreads its weight over both laboratories.
These counts therefore do not rank report richness, and Fisher combination
can be driven by one site.  We do not fit the model-aware rule to 21
publication-selected pairs.

The selection-robust conclusion excludes the originals: Bonferroni rejects
six Stage-1 replication-site nulls with family-wise control under arbitrary
cross-study dependence, conditional on superuniform Stage-1 $p$-values.
This conclusion concerns the replication experiments; it
does not test partial conjunction or by itself certify a replicated effect.

\end{document}